\DeclareMathOperator{\tr}{tr}
\renewcommand{\d}{{\mathrm{d}}}
\newcommand{\deltaD}[2]{\delta^{(#1)} (#2)}
\newcommand{\MyMatrix}{\vcenter{\hbox{$\displaystyle
  \begin{pmatrix}
        \partial_1 \left( \bm{v}_1 \cdot \nabla \lambda_1 \right)  &  \partial_1 \left( \bm{v}_2 \cdot \nabla ( \lambda_2 + \lambda_3 ) \right) & \partial_1 \left( \bm{v}_3 \cdot \nabla ( \lambda_2 + \lambda_3 ) \right)\\
        \partial_2 \left( \bm{v}_1 \cdot \nabla \lambda_1 \right)  &  \partial_2 \left( \bm{v}_2 \cdot \nabla ( \lambda_2 + \lambda_3 ) \right) & \partial_2 \left( \bm{v}_3 \cdot \nabla ( \lambda_2 + \lambda_3 ) \right)\\
        \partial_3 \left( \bm{v}_1 \cdot \nabla \lambda_1 \right)  &  \partial_3 \left( \bm{v}_2 \cdot \nabla ( \lambda_2 + \lambda_3 ) \right) & \partial_3 \left( \bm{v}_3 \cdot \nabla ( \lambda_2 + \lambda_3 ) \right)
    \end{pmatrix}$}}}
\newtheorem{theorem}{Theorem}
\definecolor{veryLightGray}{rgb}{0.95, 0.95, 0.95}
\title{A New Recipe for Caustic Pancakes: On the Reality of Walls in the Cosmic Web}
\author[a,b,1]{Benjamin Hertzsch\orcidlink{0009-0006-6231-8905},\note{Corresponding author.}}
\author[a]{Job Feldbrugge  \orcidlink{0000-0003-2414-8707}}
\author[a]{Maé Rodriguez\orcidlink{0000-0001-5417-9665},}
\author[b]{Rien van de Weygaert \orcidlink{0000-0001-8379-1263},}
\affiliation[a]{School of Physics and Astronomy, University of Edinburgh\\
Edinburgh, United Kingdom}
\affiliation[b]{Kapteyn Institute of Astronomy, University of Groningen\\
Groningen, Netherlands}
\emailAdd{benjamin.hertzsch@ed.ac.uk}
\abstract{The caustic skeleton model is a mathematically rigorous framework for studying the formation history of the emerging cosmic web from the caustics in the underlying dark matter flow. In a series of two papers, we use constrained $N$-body simulations to investigate the different cosmic web environments. For the current study, we focus on the cosmic walls. We derive the conditions of the centres of proto-walls and analyse their evolution with $N$-body simulations. Next, we investigate the statistical properties of Zel'dovich pancakes by studying the number density of the cosmic wall centres in scale space and, for the first time, we calculate the Lagrangian-space volume of cosmic walls. Finally, we infer the mean density and velocity fields and the distribution of haloes around cosmic walls with a suite of physically realistic dark-matter-only simulations. We compare the cosmic walls obtained with the caustic skeleton framework with previously proposed saddle point conditions on the primordial potential and density perturbation.}
\begin{document}
\maketitle
\flushbottom



\section{Introduction}

This study presents an elaborate treatment and discussion on the role, nature and infrastructure of walls within the cosmic web.
It entails a profound physical description based on the \textit{caustic skeleton} model of the cosmic web
\cite{ArnoldShandarinZeldovich1982, Feldbrugge+2018, FeldbruggeWeygaert2023}, in which features and components of the cosmic large-scale structure are
identified with (Lagrangian) phase-space singularities in the cosmic matter distribution. This analytical non-linear model for the
formation and hierarchical evolution of the cosmic web determines the mass elements in the primordial matter distribution that get
assembled in the walls of the cosmic web. It provides a complete specification of the complex geometric structure of the
cosmic web through a set of conditions on the eigenvalues and eigenvectors of the primordial tidal and deformation field.
By means of an extensive set of (non-linear) constrained simulations of structure formation \cite{FeldbruggeWeygaert2023}, we delve into
the detailed formation history, complex matter distribution in and around of the emerging walls in the matter distribution, as well
as the intricate interior structure and halo population of the walls in the cosmic web. 

\vskip 0.5truecm

\begin{figure*}
    \centering
    \includegraphics[width=\textwidth]{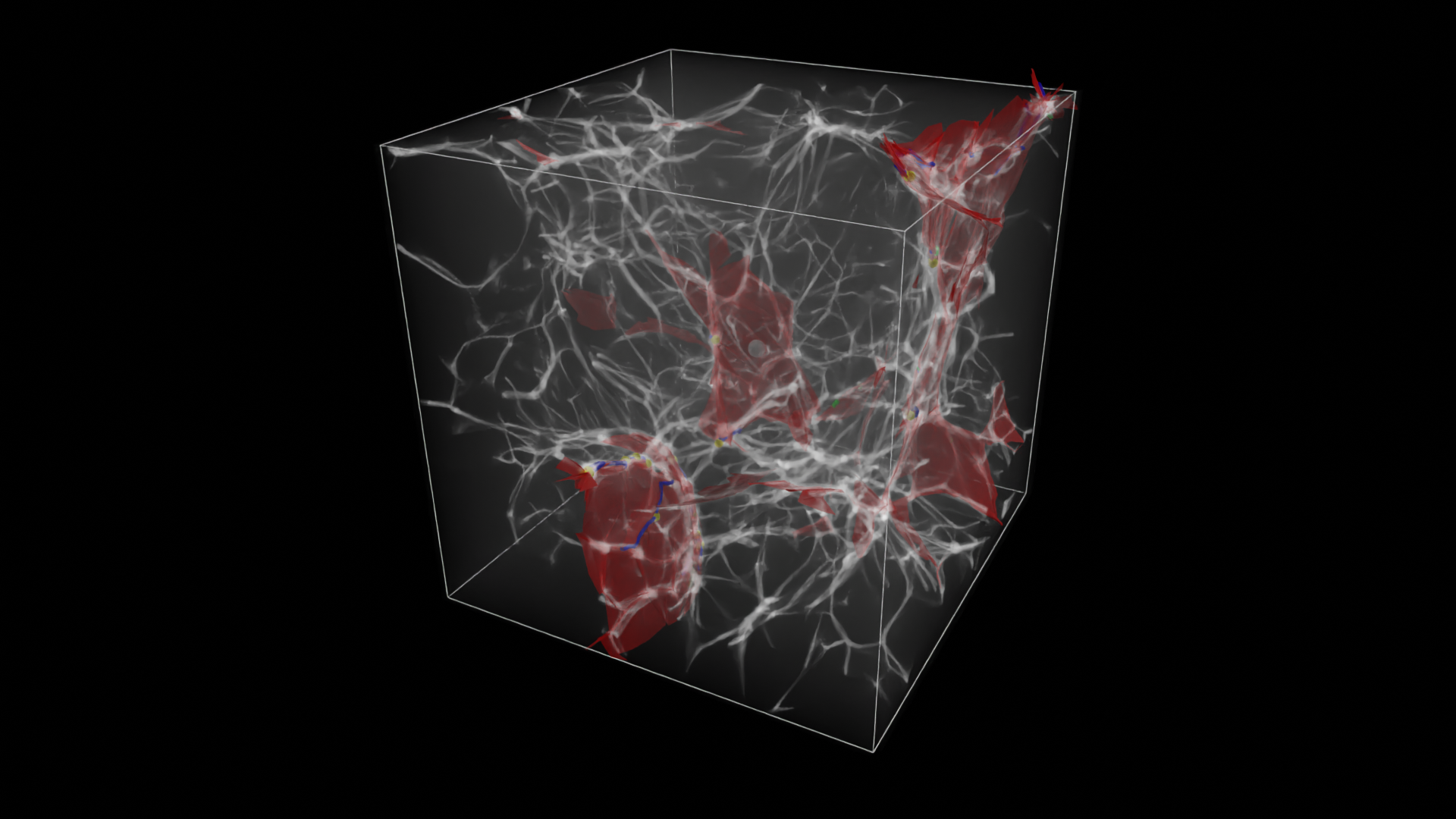}
    \cprotect\caption{High-resolution density field from a $256^3$-particle constrained simulation of cosmic wall formation from the caustic skeleton model. Shown is a volume of side length $50 \,h^{-1}\textrm{Mpc}$, with the large-scale cusp sheets, swallowtail and umbilic filaments and butterfly clusters in red, blue, green and yellow respectively. In the central white point we imposed the novel constraint presented in this paper, which hence resides at the centre of the simulated cosmic wall. A fly-through and rotating view animation is provided on the additional materials page at \verb|benhertzsch.github.io/papers/2025_Cosmic_Walls|.}
    \label{fig:sim_256_A3_3D}
\end{figure*}

\subsection*{Walls and the Cosmic Web}
On scales of a few to hundreds of megaparsecs, dark matter, gas, and galaxies are woven into the \textit{cosmic web}, a complex
spatial network consisting of superdense clusters, elongated filaments, and flattened walls that bound the near-empty void regions
\cite{Zeldovich1970, Einasto1977, BondKofmanPogosyan1996, WeygaertBond2008, Cautun+2014,EinastoM2025}. Representing the fundamental spatial
organisation of matter on these scales, its complexity is considerably enhanced by its intrinsic multiscale nature, featuring objects over a considerable range of spatial scales and densities. The filaments are the most visually outstanding features of the cosmic web, 
in which around $50\%$ of the mass and galaxies in the Universe reside \cite{Libeskind+2017, Ganeshaiah+2021}. On the other hand, almost $80\%$ of the
cosmic volume belongs to the interior of voids, see e.g. \cite{Cautun+2014, Ganeshaiah+2018}. By contrast, the walls in the
cosmic web are far less prominent, containing merely $\sim 24\%$ of the dark and baryonic matter, distributed over a volume
of about $18\%$ of the volume of the present-day Universe \cite{Cautun+2014, Libeskind+2017}. They form thin membranes
that separate the voids and are interspersed and surrounded by filaments.

In the observational reality, the existence and structure of the cosmic web has been revealed in most detail by maps of the nearby cosmos produced by large galaxy redshift surveys. Starting from first revelation of the web-like arrangement of galaxies by the CfA2 survey \citep[e.g.][]{LapparentGellerHuchra1986}, subsequent surveys such as 2dFGRS, the SDSS, the 2MASS and GAMA redshift surveys~\citep{Colless+2003, SDSS2004, Huchra+2012,GAMA2015} established the web-like arrangement of galaxies as a fundamental characteristic of cosmic structure. Maps of the galaxy distribution at larger cosmic depths, such as VIPERS~\citep{Vipers2013}, showed its existence over a sizeable fraction of cosmic time.

For understanding the process of cosmic structure formation, insight into the structure and evolution of walls is of pivotal
importance. Seemingly less visually outstanding than the filaments, it is the walls that play a defining role in the assembly of the cosmic web. They are the first genuine structures that emerge in the Universe, as was pointed out by Zel'dovich and collaborators
\citep{Zeldovich1970,Doroshkevich1970, ArnoldShandarinZeldovich1982, ShandarinZeldovich1989}. His \textit{pancake theory} stipulated how the contraction and collapse of overdensities first proceeds towards flattened wall-like morphologies, and subsequently towards the development of filaments and clumps, all embedded and mutually connected in a volume-filling network. Although the multiscale reality
of the current plausible $\Lambda$CDM cosmology is more multifaceted than the original pancake models, in nearly all viable
cosmologies cosmic walls constitute a major morphological element of the cosmic matter distribution. 

Notwithstanding their central role in the formation of structure, in the observational reality walls are not easy to recognise.
Their low surface density renders them more difficult to discern than the more massive and prominent filaments. When
seeking to identify them in the distribution of galaxies in (magnitude-limited) galaxy surveys, this is augmented by the fact
that walls are mainly populated by low-mass haloes \cite{Hahn+2007, Cautun+2014, AlonsoEardleyPeacock2015, MetukiLibeskindHoffman2016, Libeskind+2017}, within which, eventually, only faint galaxies are assembled \cite{Metuki+2015}. 
The sparsity of high-luminosity objects in the walls makes these notoriously difficult to outline in cosmic surveys, which may
explain why they frequently overlooked in studies of cosmic large-scale structure.

While the major share of walls in the web-like spatial arrangement of galaxies in the cosmic web may escape attention, galaxy
surveys have revealed the most striking representatives in the Local Universe. The earliest recognised example of a large wall-like
structure in the Local Universe is that of the Local Supercluster, of which the Virgo Cluster is the central mass concentration. de
Vaucouleurs (1958) \cite{Vaucouleurs1958} recognised the flattened arrangement of nearby galaxies on the sky, culminating in the definition of the Supergalactic Coordinate System. As galaxy redshift surveys started to map the spatial galaxy distribution in the Local Universe to further distances, two major wall-like structure have been identified, namely the Perseus-Pisces supercluster \cite{GiovanelliHaynes1986} and the
CfA2 Great Wall. The latter is also known as Coma Great Wall and was detected by the second CfA Redshift Survey
\cite{GellerHuchra1989}. Perhaps the best illustration of the structure, interior and boundary of a Local Universe wall is the Perseus-Pisces (PP) wall at a distance of around $75\,h^{-1}\textrm{Mpc}$, of which one of its boundary ridges is the massive PP chain,
the prime and canonical example of a dominant cosmic filament. The 21cm survey by Giovanelli \& Haynes \cite{GiovanelliHaynes1986}
produced an impressive face-on map of the surface density of galaxies detected via their 21cm HI line emission, nicely
outlining the PP wall with a size of around $50 \times 100 \,h^{-1}\textrm{Mpc}$, surrounded by dense filamentary ridges. The map reveals the
intimate structural affinity of walls with the filaments defining its boundary, as well as its interior substructure. For a recent up-to-date map, meticulously illustrating the intricate dark matter distribution implied by the 2M++ galaxy survey,
see Hidding et al. (2016) \cite{HiddingWeygaertShandarin2016}. 

The most impressive and outstanding examples out to a few hundred Megaparsec are the Sloan Great Wall, discovered in the
Sloan Digital Sky Survey (SDSS) \cite{Gott+2005, Einasto+2011} and, most recently, the SDSS-III's Baryon Oscillation Spectroscopic
Survey (BOSS) identified the BOSS Great Wall \cite{Lietzen+2016, Einasto+2017}; for an excellent recent review of its supercluster
complexes see \cite{Einasto2025}. These superstructures are not so much single, more or less smooth, flattened objects but rather
flattened assemblies whose interior is marked by a highly complex interior structure of filaments and clumps, interspersed by
low-density regions. Face-on maps of the implied interior density provide a good impression of this, see e.g. the maps in the
PhD thesis of Platen (2009) \cite{Platen2009} and \cref{fig:SDSS_maps} of the present study.

There are indications for the existence
of even larger flattened assemblies in the galaxy distribution, stretching out to many hundreds of Megaparsec. Tully (1986)
\cite{Tully1986} pointed out that extending along the plane of the Local Supercluster, one could recognise a flattened
distribution of galaxies and clusters out to even $0.1c \approx 360 \,h^{-1}\textrm{Mpc}$, with a similar flattened configuration pointed out at various
occasions by Peebles and discussed in detail in his recent study \cite{Peebles2023}. In fact, the extension of flattened
configurations may be a direct reflection of the geometric properties of the cosmic web, yielding assemblies of walls
that are geometrically aligned over vast distances. The recent identification of a range of such superstructures in the
distribution of X-ray clusters of galaxies, most notably the Quipu complex \cite{Boehringer2025}, may provide a telling
illustration of such geometrically outlined features in the cosmic matter distribution.  

Current Stage-IV surveys such as the Dark Energy Spectroscopic Instrument (DESI) \cite{DESI2016} or Euclid \cite{Euclid2024}
are observing our Universe to an unprecedented level of detail and are expected to uncover ever more walls in our close and
intermediate cosmic neighbourhood. These and the other structural elements of the present-day, highly non-Gaussian cosmic web contain
substantially more cosmological information than the non-linear power spectrum alone, see e.g.
\cite{Bermejo+2024, SunseriBayerLiu2025}. To fully profit from the wealth of incoming data, a better theoretical understanding of
the identity and formation of the cosmic web is essential. One relevant issue is that we do not have a complete picture of
how current and upcoming galaxy surveys provide a biased view of the underlying multiscale web-like structure of the
cosmic (dark) matter distribution. As important is the absence of a physically motivated, profound and unequivocal agreement
on what defines a filament or a wall. A variety of methods and techniques have been developed and
forwarded to identify and classify web-like structures in cosmological computer simulations and galaxy surveys (see
\cite{Libeskind+2017} for an overview and comparison of currently well-known methods). Having allowed considerable progress
with respect to the properties and evolution of the cosmic web, most techniques have a largely heuristic character and
are limited to the specific cosmological context of the $N$-body simulations to which they are applied. A versatile
and general theoretical understanding of cosmic web formation still needs to be established. 
  
\subsection*{Cosmic Web and Tidal Forces}  
In the state-of-the-art paradigm of late-time cosmology, the present-day cosmic web originated from the gravitationally driven
growth of primordial Gaussian density fluctuations \cite{Peebles1980,ShandarinZeldovich1989}. Understanding from which primordial
configurations in the initial Gaussian random field these structures and objects have emerged, and how different global cosmological
conditions including the nature of dark matter, dark energy, and the mass of neutrinos have impacted their structure and
evolution, is of key importance. According to the prevailing view, they originated from quantum noise during the inflationary era.

A vast array of computer simulations, along with detailed maps of the current galaxy distribution, have indicated that the subsequent
formation and evolution of structure proceeds via a salient spatial web-like pattern, consisting of a complex, intricate and
multiscale network. It marks the dynamical transition at which we see the emergence, presence and proliferation of distinct and
recognizable non-linear cosmic structure, assembled in a complex spatial pattern. It means that web-like structures represent a key
stage on the path towards the subsequent fully non-linear development, and the condensation of the rich variety of (astrophysical)
objects populating our current universe. It is the direct causal link between the primordial matter distribution and the emerging
components of the cosmic web which singles them out as ideal probes of the cosmic structure formation process. At the current epoch, we find that the transition from primordial to web-like structure is occurring at Megaparsec scales, rendering the cosmic web observed
in cosmic surveys as the fossils of cosmic structure formation.

The cosmic web has emerged out of a long phase of linear evolution of the primordial density and velocity perturbations before turning into a more advanced non-linear stage involving contraction and collapse of the growing mass inhomogeneities. At the heart of the emergence of the web-like structure of the mildly non-linear mass distribution is the anisotropy of the gravitational force field generated by
the inhomogeneous mass distribution. The gravitational tidal force field effects the deformation of the matter distribution and the
contraction and collapse of matter in increasingly anisotropic, flattened and/or elongated, features. This had first been recognised, and accurately described in the mildly non-linear stage, by the Zel'dovich formalism \citep{Zeldovich1970}. Hence, by implication, the key to modelling and understanding the structure of the cosmic web is the structure of the gravitational tidal force field. Its
seminal role in shaping the anisotropic wall-like and filamentary structures in the cosmic web has been recognised for in a range of studies \cite{Zeldovich1970, BondKofmanPogosyan1996, Hahn+2007, WeygaertBond2008b, Feldbrugge+2018,  ParanjapeHahnSheth2018, Paranjape2021}, recently followed
by the detailed study by Kugel \& van de Weygaert (2024) \cite{KugelWeygaert2024} on the connection between the tidal force field and the various morphological components of the cosmic web. Interesting new insights on this have been obtained in the context of the \textit{Caustic Skeleton} theory of the cosmic web, the subject of the present study. Feldbrugge et al. (2018) \cite{Feldbrugge+2018} demonstrated that a full understanding of the cosmic web structure is obtained through the spatial characteristics of the \textit{eigenvalue} and \textit{eigenvector} fields of the cosmic tidal force field. Moreover, underlining this is the realization that the embryonic outline of the cosmic web, in particular its filamentary and wall-like network, can already be seen in the primordial tidal eigenvalue field
\cite{Wilding2022, FeldbruggeYanWeygaert2023, FeldbruggeWeygaert2023, FeldbruggeWeygaert2024, RamWeygaertFeldbrugge2025}.

\subsection*{Phase Space Dynamics and Walls in the Caustic Skeleton}
In recent years, the dominant position of the tidal force field in laying out the spatial structure of the cosmic web has led to
the formulation of an analytical theory for the fully non-linear evolution of the cosmic web, the \textit{caustic skeleton} formalism, by
Feldbrugge and collaborators (2018) \cite{Feldbrugge+2018}. It entails an elaboration of the early work by Arnol'd et al. (1982)
\cite{ArnoldShandarinZeldovich1982}, and the recent two-dimensional analysis by Hidding et al. (2013) \cite{Hidding+2013}, in which
the morphological elements of the cosmic web are identified with the formation of singularities in the mass
distribution. The caustic skeleton is based on the realisation that a physically motivated model for the structure and evolution of the cosmic web
should assess the evolving cosmic matter distribution in the full six-dimensional phase-space. 

The build-up of structure in the universe goes along with the migration of mass \cite{Peebles1980}. The structure, evolution and
fate of the corresponding mass flows is therefore a key aspect for understanding how the primordial mass distribution gets
gradually folded into the richly structured pattern of the cosmic web. To some extent, one may argue that insight into the
cosmic mass flows is as, if not more, essential for understanding the emergence of structure in the Universe than that of
the mass distribution itself. In fact, while the \textit{Zel'dovich pancake model} and the surprising accuracy and versatility of
the \textit{Zel'dovich formalism} is widely known, less known is the fact that Zel'dovich already recognised that the  
formation of structures goes along with \textit{shell-crossing} of the cosmic flows, and hence with 
corresponding emergence of multistream regions. A full phase-space description allows one to take this realisation into account.

By appreciating that the phase-space structure of the evolving cosmic mass distribution is directly related to the
identification of the structural components of the cosmic web, several structure identification formalisms have been developed \cite{Shandarin2011, AbelHahnKaehler2012, Neyrinck2012, Shandarin2011, Feldbrugge2024}. Given their solid physical
underpinning, they may possibly represent the most profound classification method of the cosmic web as are currently available, see also \cite{Libeskind+2017}. When the initial conditions are known, these methods yield an identification of the matter streams building
up the hierarchical structure \cite{Sheth2004, ShethWeygaert2006, Shen+2006, AragonCalvo+2010} in our Universe. This allows for the
definition of objective physical criteria for what constitutes the various structural elements of the cosmic web.

The phase-space evolution of the cosmic matter distribution, in particular when restricting it to the pressureless dark matter distribution,
can be seen as the folding of an initially smooth and featureless dark matter sheet. Following shell-crossing, we
see the formation of corresponding multistream regions. Dependent on the (evolving) geometric complexity of these regions, a
hierarchy of emerging non-linear structures can be identified. In the cosmological context, their geometry
can be directly related to the nature, identity and morphology of the emerging web-like features. \textit{Catastrophe theory} \cite{Thom1972, Zeeman1977, Arnold1975, Saunders1980, Arnold1992, ArnoldGuseinZadeVarchenko2012}
is the full mathematical framework that defines and classifies these stable folding configurations.

The phase-space-based \textit{caustic skeleton} description of the evolving web-like pattern in the cosmic matter
distribution is codified in terms of a complete set of caustic conditions \cite{Feldbrugge+2018}. The key insight and
central aspect is that these caustic conditions, for the full set of $A$, $D$ (and in higher-dimensional settings also $E$) Lagrangian singularities\footnote{We remark here that caustic skeleton, as an application of catastrophe theory (see in particular \cite{Arnold1975}), is a mathematically general formalism that holds for arbitrary space dimensions. In the physical case of the three-dimensional Universe, the $A$ and $D$ catastrophes are stable, constituting an exhaustive list of seven elementary catastrophes in accordance with Thom's morphogenetic definition \cite{Thom1972}.}, were shown
to be fully specified by the \textit{eigenvalues} and \textit{eigenvectors} of the
\textit{primordial tidal field},  i.e. of the Hessian of the gravitational potential field. This opened the path towards the
practical calculation and implementation of the caustic conditions to any viable three-dimensional cosmological matter
field, and its extension and elaboration into a
fully fledged analytical model for cosmic web formation. 

Within the context of the caustic skeleton theory, the
walls are the first objects to condense out of the primordial mass distribution:
In the language of catastrophe theory, the flattened, ellipsoidal overdensities undergo a cusp catastrophe and the incoming mass streams fold into a \textit{three-streaming} configuration. In the cosmological context, the resulting multistream regions are known as the 
\textit{Zel'dovich pancakes} \cite{ArnoldShandarinZeldovich1982, ShandarinZeldovich1989}, see \cref{fig:caustics,fig:cusp_sketch}. It is important to realise that a pancake or wall is
not simply a flat overdensity: its oblate nature is inherently inside-out, as it is created by matter streams folding onto each other.
The flattened overdensities that we know as the present-day cosmic walls are the result of gravitational phase-mixing in these
multistream regions. In other words, the Zel'dovich pancakes\footnote{It is important to remark here that the early theory of a pancaking Universe from structure-less multistream regions is obsolete. Our investigation is not confined to the limitations of the early Russian-school models and we use the term ``pancake'' in its modern form: The caustic skeleton gives a central stage to the multistream regions and cusp sheets as the lowest-order catastrophes, but is fully sensitive to their scale-space nature and their seeding of higher collapse, yielding the intricate web-like substructure in the physically realistic cosmic walls.} are the progenitors of the cosmic walls, the \textit{proto-walls}.
The caustic skeleton formalism \cite{ArnoldShandarinZeldovich1982, Hidding+2013, Feldbrugge+2018} recognises this process and formalises the definition of cosmic walls. Their growing network is crucial to the build-up of the entirety of the cosmic web, as filaments and clusters form from the hierarchy of higher caustics within those pancake-shaped three-stream regions \cite{ArnoldShandarinZeldovich1982, Hidding+2013, Feldbrugge+2018}.

The subject of the present study is a detailed study of the structure of walls in the context of the \textit{Caustic Skeleton}
theory of the cosmic web. It involves a fundamentally expanded vision on how \textit{pancakes} form and evolve, and
what their resulting structure and substructure is. While the original Zel'dovich pancake theory assumed pancake walls to smooth structural entities, it was limited to cosmologies
in which structure evolved at one scale or cosmologies whose primordial field with power over a (highly) restricted range of scales.
In $\Lambda$CDM cosmology, and any currently viable cosmological model, structure emerged at a wide range of scales and evolves in a
hierarchical process in which small structures condense first, and subsequently merge into ever larger entities. The \textit{caustic
  skeleton} model allows to take account of this, as it takes account of the multiscale nature of the primordial tidal field.
Hence, it enables a detailed study of the evolution of walls in the cosmic web, taking into account their hierarchical
build-up and assess the resulting complex interior structure, in particular that of filamentary tendrils, clumps and haloes. 

In summary, this work entails a detailed investigation on a rigorous mathematical footing of the dynamics that underlie
the formation and build-up of the cosmic walls. In doing so, we complete and extend the Zel'dovich pancake model as part of the more
general caustic skeleton formalism, and embed it into the modern view on the physical cosmic web.

\begin{figure*}
    \centering
    \includegraphics[width=\textwidth]{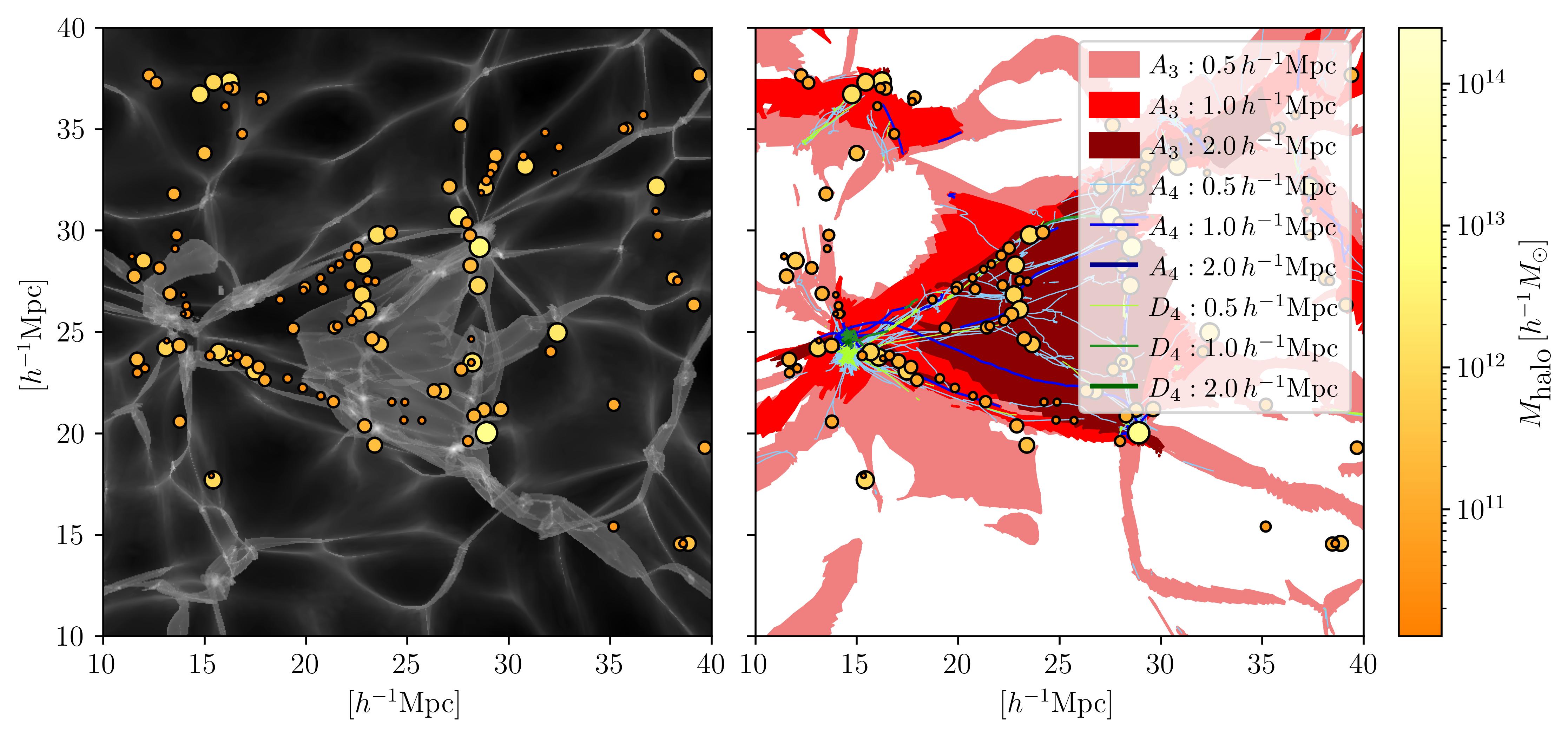}
    \caption{Haloes identified in a high-resolution constrained simulation of cosmic wall formation. The left panel shows the density field sliced along the face of the wall, along with the haloes identified in a thin volume (thickness $\epsilon = 1.5 \,h^{-1}\textrm{Mpc}$) around the wall, projected into the same slice. The right panel shows the caustic skeleton evaluated at different length scales $\sigma$ (see \cref{subsec:theory-scale_space}), with the $A_3$ cusp walls in red, the $A_4$ swallowtail filaments in blue and the $D_4$ umbilic filaments in green respectively.}
    \label{fig:haloes_introduction}
\end{figure*}

\subsection*{Key results}

\paragraph{Caustic skeleton constraint}  Within the caustic skeleton framework, the cosmic walls are associated to the cusp caustic and algebraically defined by the manifold of points in Lagrangian space for which the directional derivative of the first eigenvalue field in the direction of the corresponding eigenvector field vanishes, i.e.,  $\bm{v}_1 \cdot \nabla \lambda_1=0$, with the eigenvalue and eigenvector fields $\lambda_i$ and $\bm{v}_i$. Extending the 2D study of \cite{FeldbruggeWeygaert2024} to the 3D case, we identify on these manifolds the \textit{wall centres} as the points at which the cusp sheet is maximally expanding. This is given by the vanishing of two additional directional derivatives, i.e., explicitly, $\bm{v}_2 \cdot \nabla (\lambda_2 + \lambda_3)=\bm{v}_3 \cdot \nabla (\lambda_2 + \lambda_3)=0$. Using non-linear constrained Gaussian random field theory \cite{FeldbruggeWeygaert2023}, we sample Gaussian initial conditions with a combination of a Hamiltonian Monte Carlo \cite{Duane+1987} sampler and the Bertschinger-Hoffman-Ribak algorithm \cite{Bertschinger1987, HoffmanRibak1991, WeygaertBertschinger1996} and implement these into specialised constrained simulations of cosmic wall formation. In doing so, we provide a rigorous and reproducible recipe for the physically realistic Zel'dovich pancake, enabling direct applications in future studies of galaxy formation within cosmic walls.

\paragraph{Dark matter structure of Cosmic Walls} We evolve the constrained initial conditions in dark-matter-only $N$-body simulations and analyse the morphology of the resulting cosmic walls in the present-day Universe (see \cref{fig:sim_256_A3_3D}). Our proposed wall centre constraint successfully sets up the cosmic walls as three-streaming regions traced out by the cusp sheet. The physically realistic walls are interspersed by filaments and clusters, and connected to the global cosmic web. This reflects the inherent connectivity of the caustic network beyond the multiplicity of clusters and filaments \cite{CodisPogosyanPichon2018}, and underlines the essential role of the walls for the formation of filaments and clusters \cite{ArnoldShandarinZeldovich1982, Feldbrugge+2018}. The simulated walls are moderately overdense with a density contrast of about $\rho / \bar{\rho}\approx 2\textrm{--}10$, which is consistent with the existing literature \cite{Forero-Romero+2009, AragonCalvo+2010,  ShandarinSalmanHeitmann2012, Hoffman+2012, Cautun+2014, Libeskind+2017}, but augments the same by the more rigorous criterion of shell-crossing into a planar three-streaming configuration. In addition to the density field, we study the mean velocity field around the simulated walls and find clear evidence for a dipolar pattern of mass being transported from the cosmic voids into the growing wall, in accordance with the paradigm of mass transport in the cosmic web \cite{Cautun+2014}.

\paragraph{Cosmic Wall Formation times} To simulate physically realistic walls, we evaluate the characteristic formation time of the walls as a function of their length scale. To this end, we employ Rice's formula and calculate the formation statistics from the primordial fields. Imposing different formation times and length scales in the constrained simulations, we study the influence of the parameters on the resulting dark matter density field that characterises the walls. We find that walls that formed earlier have accumulated more mass and are thicker than walls that formed more recently. Further, their spatial extent is in clear proportionality to the length scale at which the caustic constraint is imposed. With this paper, we also present the generalised Rice's formula that enables the evaluation of arbitrary geometric expectation values beyond the number density of point-like constraints. Using this formula, we investigate for the first time the Lagrangian-space area density of the cosmic walls as a function of cosmic time for different wall length scales. Our calculation constitutes the first quantitative analysis of the wall network build-up in Lagrangian space, and forms an important step towards a full theoretical understanding of the formation of the entirety of the cosmic web in scale space.

\paragraph{Cosmic Wall Halo Population} In a suite of high-resolution simulations, we investigate the embedding of haloes in the cosmic walls. We find that their distribution is highly non-uniform and characterised by filamentary patterns within the cusp sheet (see \cref{fig:haloes_introduction}). The caustic skeleton reveals that these are traced by the small-scale swallowtail and umbilic caustic forming within the large-scale wall sheet. The filamentary distributions of haloes in our constraint simulations directly reflects the observational reality of the galaxy population in the physical walls in our Local Universe, notably the Pisces-Perseus Supercluster \cite{GiovanelliHaynes1986}, the Coma Wall \cite{GellerHuchra1989}, the Sloan Great Wall \cite{Gott+2005, Einasto+2011} and the BOSS Great Wall \cite{Lietzen+2016, Einasto+2017}. Their intricate substructure \cite{Platen2009, Platen+2011, Einasto+2011, Einasto+2017, EinastoM2025} may be explained by the scale-space nature of the dark matter flow singularities. In agreement with previous studies \cite{Hahn+2007, Cautun+2014, AlonsoEardleyPeacock2015, MetukiLibeskindHoffman2016, Libeskind+2017}, we find that the wall haloes are less massive than those residing in the denser filaments and clusters. In combination, our results establish the caustic skeleton as the first theoretical foundation for the elusive nature of the cosmic walls in observational cosmology.

\paragraph{Alternative Wall Constraints} For a fair comparison with the caustics-based analysis, we run suites of constrained simulations for the conventional saddle point constraints in the primordial potential perturbation $\phi$ and density perturbation $\delta$. We find that neither of these is as successful at simulating cosmic walls as our proposed wall centre constraint based on the singularities in the dark matter flow. Concretely, while the $\phi$ saddle points induce the required planar geometry of the Eulerian density field, the overdensities do not generally shell-cross into multistreaming cosmic walls. Whereas for the $\delta$ saddle points, while the constraint results in a highly multistreaming configuration, the morphology of the Eulerian density field typically resembles an extended cluster rather than a sheet-like wall. In light of these shortcomings, our results demonstrate that the proposed Zel'dovich pancake recipe significantly improves upon conventional Lagrangian-space saddle-point analyses.

\begin{table}[H]
    \centering
    \begin{tabular}{c c}
    \hline
    symbol & meaning \\
    \hline
    $\bm{q}$ & Lagrangian coordinates \\
    $\bm{x}$ & Eulerian coordinates \\
    $\bm{s}_t(\bm{q})$ & displacement field \\
    $\nabla_{\bm{q}} \bm{s}_t(\bm{q})$ & deformation tensor \\
    \rowcolor{veryLightGray}
    $\sigma$ & smoothing scale \\
    \rowcolor{veryLightGray}
    $b(t)$ & linear growing mode \\
    \rowcolor{veryLightGray}
    $b_c$ & constraint growing mode \\
    $\phi(\bm{q})$ & primordial potential perturbation \\
    $\Psi(\bm{q})$ & primordial displacement potential $\Psi \propto \phi$ \\
    $\delta(\bm{q})$ & primordial density perturbation $\delta = \nabla^2 \Psi$ \\
    $\rho(\bm{x})$ & dark matter density at current time \\
    \rowcolor{veryLightGray}
    $M_{\textrm{halo}}$ & halo mass \\
    \rowcolor{veryLightGray}
    $\rho(N_{\textrm{halo}})$ & halo number density \\
    \rowcolor{veryLightGray}
    $\rho(M_{\textrm{halo}})$ & halo mass density \\
    $P(k)$ & power spectrum of  $\Psi$ \\
    $\sigma_i^2$ & generalised moments of $P(k)$ \\
    $t_{ij \ldots k}$ & Cartesian field derivatives $\partial_i \partial_j \ldots \partial_k \Psi$ \\
    $T_{ij \ldots k}$ & eigenframe field derivatives $\partial_i \partial_j \ldots \partial_k \Psi$ \\
    $\Sigma_{a b}$ & covariance matrix of $a$th- and $b$th-order derivs. $t_{ij \ldots k}$ \\
    \rowcolor{veryLightGray}
    $(\alpha, \beta, \gamma)$ & Euler angles in $ZYZ$-convention \\
    \rowcolor{veryLightGray}
    $R^{(a, \ldots, b)}(\alpha, \beta, \gamma)$ & rotation matrix of $a$th- through $b$th-order derivs. $t_{ij \ldots k}$\\
    $\mu_i(\bm{q}, t),\, \bm{v}_i(\bm{q}, t)$ & eigenvalues/-vectors of $\nabla_{\bm{q}} \bm{s}_t(\bm{q})$ \\
    $\lambda_i(\bm{q}),\, \bm{v}_i(\bm{q})$ & eigenvalues/-vectors of $\mathcal{H} \Psi(\bm{q})$ \\
    \rowcolor{veryLightGray}
    $A_2$ & fold caustic \\
    \rowcolor{veryLightGray}
    $A_3$ & cusp caustic \\
    \rowcolor{veryLightGray}
    $A_4$ & swallowtail caustic \\
    \rowcolor{veryLightGray}
    $A_5$ & butterfly caustic \\
    \rowcolor{veryLightGray}
    $D_4$ & umbilic caustic \\
    \rowcolor{veryLightGray}
    $D_5$ & parabolic umbilic caustic \\
    $\phi^{(+--)}$ & saddle point of type $(+--)$ in $\phi$ \\
    $\delta^{(++-)}$ & saddle point of type $(++-)$ in $\delta$ \\
    \rowcolor{veryLightGray}
    $\delta^{(n)}_{D}(\cdot)$ & $n$-dimensional Dirac-$\delta$ function \\
    \rowcolor{veryLightGray}
    $\mathcal{H}(\cdot)$ & Hessian $\mathcal{H}(\cdot) = [\partial \partial (\cdot)]_{ij}$ \\
    \hline \\
    \end{tabular}
    \caption{Table of symbols used throughout this article.}\label{tabel:symbols}
\end{table}

\section*{Outline}
In \cref{sec:theory}, we review the caustic skeleton framework, focusing on the formation of multi-stream regions and the Zel'dovich approximation. Next, in \cref{sec:A3_centre_constraint}, we derive a new set of conditions for the identification of progenitors of cosmic walls in terms of eigenvalue and eigenvector fields of the primordial deformation tensor. We  translate these conditions into a constraint on the local geometry of the primordial fields in \cref{sec:eigenframe}. Subsequently, in \cref{sec:recipe}, we develop a method to generate constrained Gaussian initial conditions subject to the caustic constraints using a Hamiltonian Monte Carlo scheme and the Bertschinger-Hoffman-Ribak algorithm. Using this implementation, in sections \ref{sec:sims} and \ref{sec:haloes}, we run suites of constrained $N$-body simulations and study the morphology of the dark matter fields and embedded halo distributions in the walls of the present-day cosmic web. In \cref{sec:sims}, we also consider the formation time of the cosmic walls and investigate the influence of the physical constraint parameters on the resulting wall morphology. A comparison with simulations from conventional saddle point constraints is given in \cref{sec:alternatives}. Here, we analyse a set of $N$-body simulations constrained on saddle points in the primordial gravitational potential and the primordial density perturbation respectively. We compare the resulting structures with the caustic walls obtained in \cref{sec:sims} and discuss the shortcomings of the conventional saddle point constraints. Finally, the results of the present article are summarised in \cref{sec:conclusion} and an outlook to future work is given.


\section{The caustic skeleton of cosmological structure formation}\label{sec:theory}

The cosmic web is an intricate and multiscale geometric pattern that emerges through non-linear gravitational collapse from the primordial conditions. In this section, we discuss the inherently \textit{multistreaming} nature of the cosmic web and review the build-up of its geometric backbone in the \textit{caustic skeleton} formalism.

\begin{figure*}
    \includegraphics[width=\textwidth]{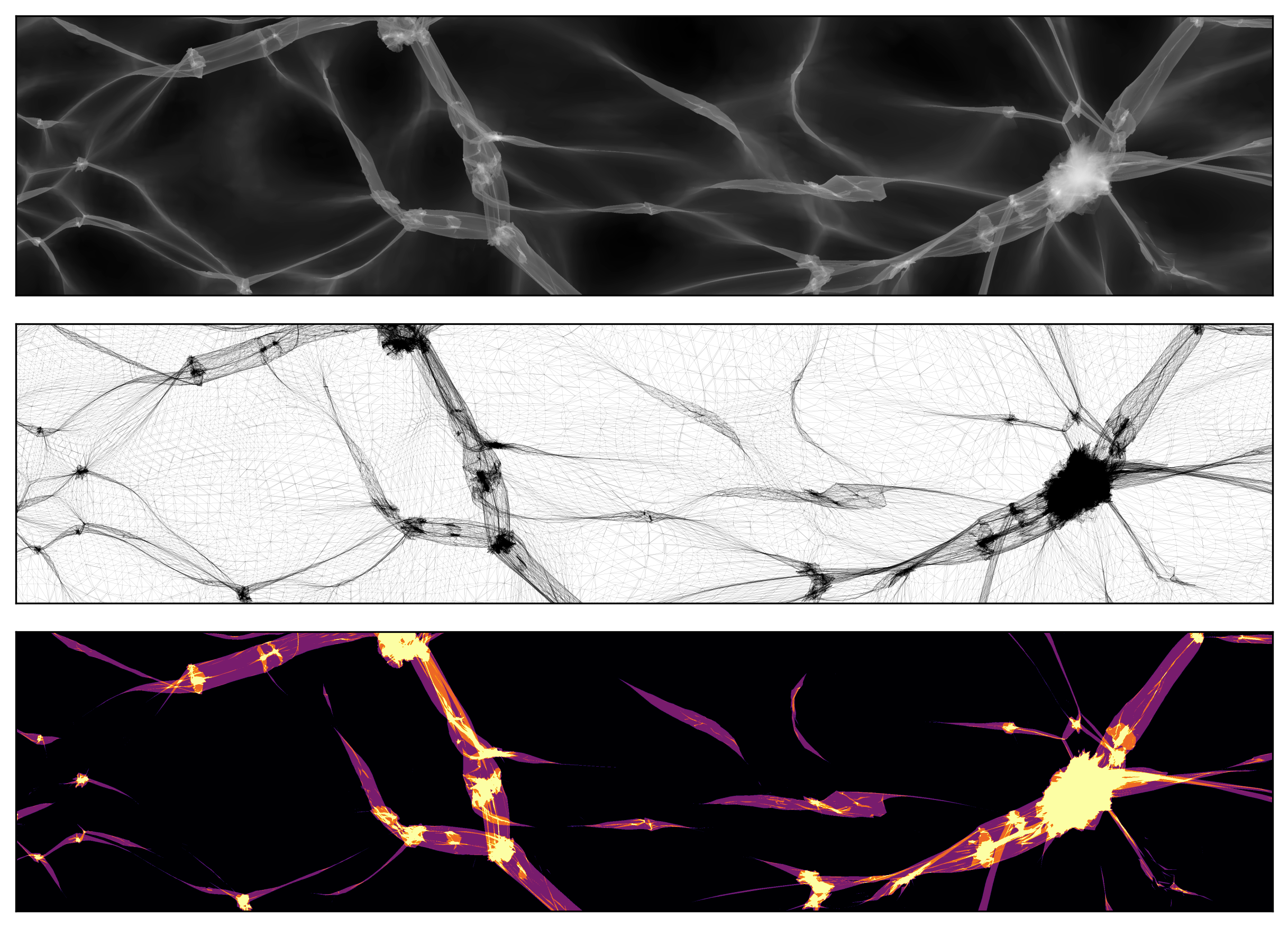}
    \caption{Different views of the cosmic web from slices of width $70\,h^{-1
    }\textrm{Mpc} $ and height $20\,h^{-1
    }\textrm{Mpc}$ slices through a $256^3$-particle $N$-body simulation in a box of side length $100\,h^{-1
    }\textrm{Mpc}$. The upper panel shows the density field, the middle shows a slice of through the folding particle mesh and the lower panel shows corresponding the number of streams coming into the Eulerian positions. The upper and lower panels were evaluated using the PS-DTFE method.}
    \label{fig:sim_256_triple_plot}
\end{figure*}

\subsection{The multistreaming cosmic web}\label{subsec:theory-cosmic_web}

Understanding how the cosmic web emerges and evolves over cosmic time one of the major theoretical challenges of late-time cosmology. The intricate, multiscale pattern of cosmic voids, walls, filaments and clusters forms as the solution to a set of simple equations of motion (EoMs) through non-linear collapse from the primordial random field conditions. In an FLRW-$\Lambda$CDM Universe in the Newtonian limit, the EoMs in comoving coordinates are given by the Poisson equation and Newton's law \cite{BertschingerGelb1991, Peebles1994},
\begin{align}
    \nabla^2_{\bm{x}} \Phi &= 4 \pi \bar{\rho}(t) \delta(t)  \\
    \ddot{\bm{x}} + 2 H(t) \dot{\bm{x}} &= -\frac{1}{a(t)^2} \nabla_{\bm{x}} \Phi \,,
\end{align}
with the comoving coordinates $\bm{x}$, the Hubble function $H(t) = \dot{a}(t)/a(t)$, the scale factor $a(t)$, the gravitational potential $\Phi$ and the density contrast $\delta(t) = \frac{\rho(t) - \bar{\rho}(t)}{\bar{\rho}(t)}$ over the mean cosmic density $\bar{\rho}(t)$.

$N$-body simulations are a highly useful tool for reproducing the plethora of structures observed on the sky in targeted numerical experiments. This is achieved by the discretised, numerical approximation to the solutions to the governing EoMs for a set of particles that trace the continuous matter fields, see e.g. \cite{BertschingerGelb1991}. Hence, while cosmological surveys observe the physical cosmic web in its current state, $N$-body codes allow for structure formation experiments under variation of input parameters and, notably, the identification of the initial conditions out of which particular late-time structures form. Nonetheless, despite the wide use of $N$-body simulations in numerical applications, structure formation is yet to be understood on a rigorous theoretical footing. Our present study is built on the realisation that this is only possible by considering the phase-space evolution of the dark matter fluid and assessing the full hierarchy of features arising from the singularities in its flow.

Cosmological structure formation is suitably captured in the Lagrangian fluid formalism, where a particle evolves from an initial (\textit{Lagrangian}) position $\bm{q}$ to a final (\textit{Eulerian}) position $\bm{x}_t(\bm{q})$ through a displacement field $\bm{s}_t(\bm{q})$ as
\begin{equation}
    \bm{x}_t(\bm{q}) = \bm{q} + \bm{s}_t(\bm{q})\,.
    \label{eq:Lagrangian_formalism}
\end{equation}
By the conservation of mass, the density $\rho$ at a position $\bm{x}$ follows from the expansion/contraction of a congruence of paths
\begin{equation}
    \begin{split}
    \rho \left( \bm{x}, t \right)
    &= \sum_{\bm{q} \in \bm{x}_t^{-1}(\bm{x})} \frac{\rho(\bm{q}) }{|\det \nabla_{\bm{q}} \bm{x}_t(\bm{q})|} \\
    &= \sum_{\bm{q} \in \bm{x}_t^{-1}(\bm{x})} \frac{\rho(\bm{q}) }{|\det (I +\nabla_{\bm{q}} \bm{s}_t(\bm{q}))|} \\
    &= \sum_{\bm{q} \in \bm{x}_t^{-1}(\bm{x})} \frac{\rho(\bm{q}) }{|1 + \mu_1(\bm{q},t)| |1 + \mu_2(\bm{q},t)| |1 + \mu_3(\bm{q},t)|} \,
    \end{split}
    \label{eq:Eulerian_density}
\end{equation}
with the deformation tensor $\nabla_{\bm{q}} \bm{s}_t$ and the associated space- and time-dependent eigenvalues $\mu_i(\bm{q}, t)$ and eigenvectors $\bm{v}_i(\bm{q}, t)$ defined by the characteristic equation
\begin{equation}
    (\nabla_{\bm{q}} \bm{s}_t) \bm{v}_i(\bm{q}, t) = \mu_i(\bm{q}, t) \bm{v}_i(\bm{q}, t)  \,. 
\end{equation}

\Cref{eq:Eulerian_density} encapsulates a crucial concept: Several Lagrangian coordinates $\bm{q}$ may stream into the same Eulerian coordinate $\bm{x}$ and contribute to the observed density field. This is well illustrated in \cref{fig:sim_256_triple_plot}, where the overlapping streams of the particle mesh in the middle panel contribute to the overdensities of the collapsed structures observed in the upper panel. Regions with several incoming streams are referred to as \textit{multistream regions}, and emerge when the particle trajectories cross. As the structures emerge from single-streaming initial conditions, it can be observed that the foldings of the particle mesh always result in an odd number of incoming streams at any Eulerian coordinate. The single-streaming volume has not undergone shell-crossing and thus corresponds to the uncollapsed cosmic voids. The emerging cosmic web, on the other hand, is an inherently multistreaming structure, as is illustrated by the number of streams field in \cref{fig:sim_256_triple_plot}. Throughout this article, we will further elaborate on the central role of multistreaming to the characteristic geometry of the cosmic web elements, and the cosmic walls in particular.

\begin{figure*}
    \includegraphics[width=\textwidth]{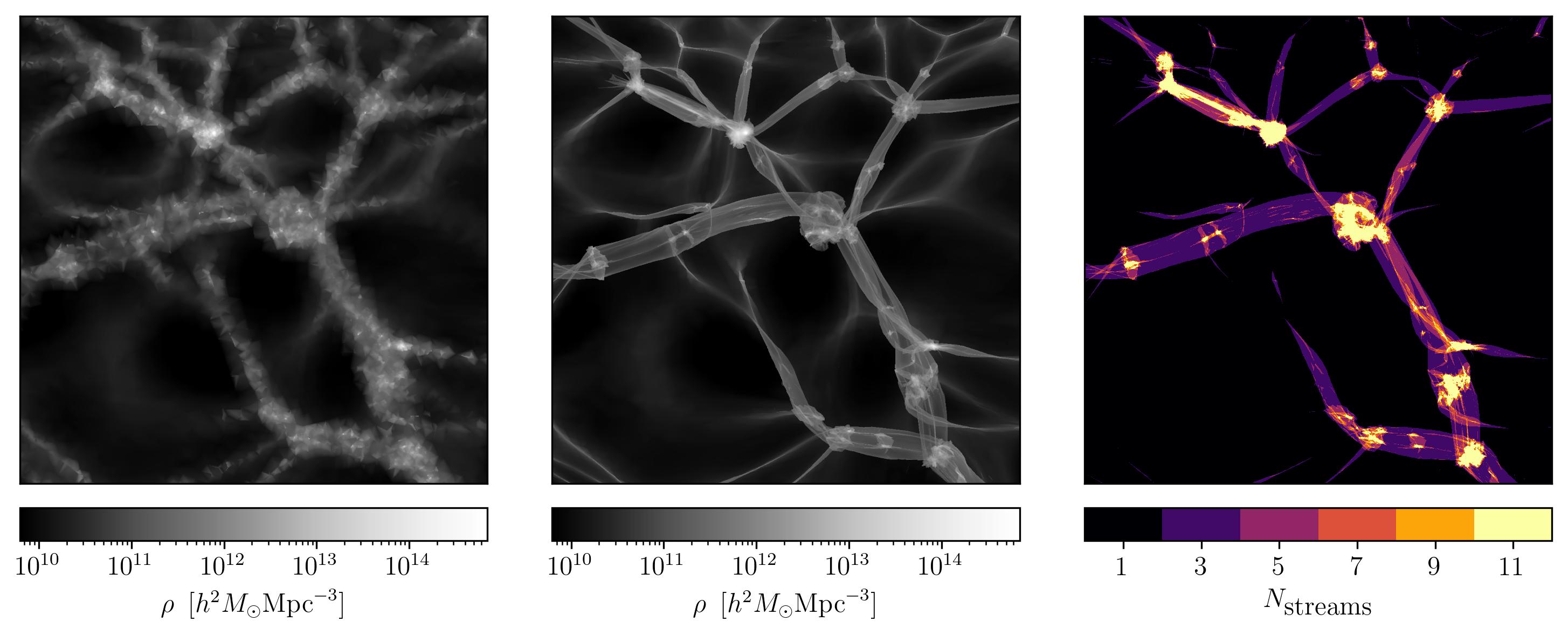}
    \caption{Comparison of the density estimates for zoomed region of the simulation of \cref{fig:sim_256_triple_plot} with the ordinary DTFE method (left panel) and the PS-DTFE method (middle panel). The corresponding number of streams evaluated from the PS-DTFE method is displayed in the right panel.}
    \label{fig:DTFE_comparison}
\end{figure*}

Numerical investigations of structure formation do not only depend on high-resolution $N$-body simulations, but also on accurate methods to infer the smooth, continuous matter fields traced by the discretised simulation particles. A commonly used method for the estimation of continuous density and velocity fields from the particle positions and velocities is the \textit{Delaunay tessellation field estimator} (DTFE) \cite{BernardeauWeygaert1996, SchaapWeygaert2000, Schaap2007}, which uses the Delaunay tessellations of the Eulerian particle coordinates and velocities to obtain smooth and adaptive field estimates. However, the standard DTFE method is ignorant to the multistreaming nature of the cosmic web imprinted in the phase space of Lagrangian and Eulerian coordinates. In this work, we therefore use the recently proposed \textit{phase-space Delaunay tessellation field estimator} (PS-DTFE) \cite{Feldbrugge2024} method, which uses a tessellation of the Lagrangian coordinates and subsequent evolution into Eulerian space to smoothly and adaptively evaluate \cref{eq:Eulerian_density}. \textit{Phase space} here refers to the combination of Lagrangian and Eulerian information through the respective particle coordinates. The PS-DTFE method extends previously proposed phase-space field estimators \cite{Shandarin2011, ShandarinSalmanHeitmann2012, AbelHahnKaehler2012, Hahn+2015} and preserves the adaptive and mass-conserving nature of the DTFE method while providing a significantly better density estimate in the multistream regions. Fig. \ref{fig:DTFE_comparison} exemplarily shows how the cosmic elements are resolved much more finely by the PS-DTFE than the DTFE field estimate, as the latter is blurry in the multistream regions that make up the overdense cosmic web elements. A stable and efficient implementation of the PS-DTFE method is provided by the publicly available \verb|julia| package \verb|PhaseSpaceDTFE.jl|\cprotect\footnote{Hosted at 
\verb|github.com/jfeldbrugge/PhaseSpaceDTFE.jl| and available for installation from the \verb|julia| package manager.} \cite{FeldbruggeHertzsch2025}, which we use for all field evaluations throughout this article.

\subsection{Caustic skeleton theory \& cosmological collapse}
\label{subsec:theory-caustic_skeleton}

We now come back to the Eulerian density, \cref{eq:Eulerian_density}, and discuss a second crucial concept encapsulated in the equation. We observe that the density diverges whenever one of terms $1 + \mu_i(\bm{q},t)$ vanishes. Geometrically, these \textit{shell-crossing} events occur when the particle sheet folds over and the qualitative nature of the flow changes into a higher-streaming configuration. This can be clearly observed it
\Cref{fig:sim_256_triple_plot}, where the overdensities in the upper panel correspond to the particle mesh foldings in the middle panel. In mathematical terms, the shell-crossing events are known as \textit{catastrophes} or \textit{caustics}.

\subsubsection{The role of caustics in large-scale structure formation}
\label{subsubsec:theory-caustics}

\begin{figure*}
    \centering
    \begin{subfigure}[b]{0.3\textwidth}
        \includegraphics[width=\textwidth]{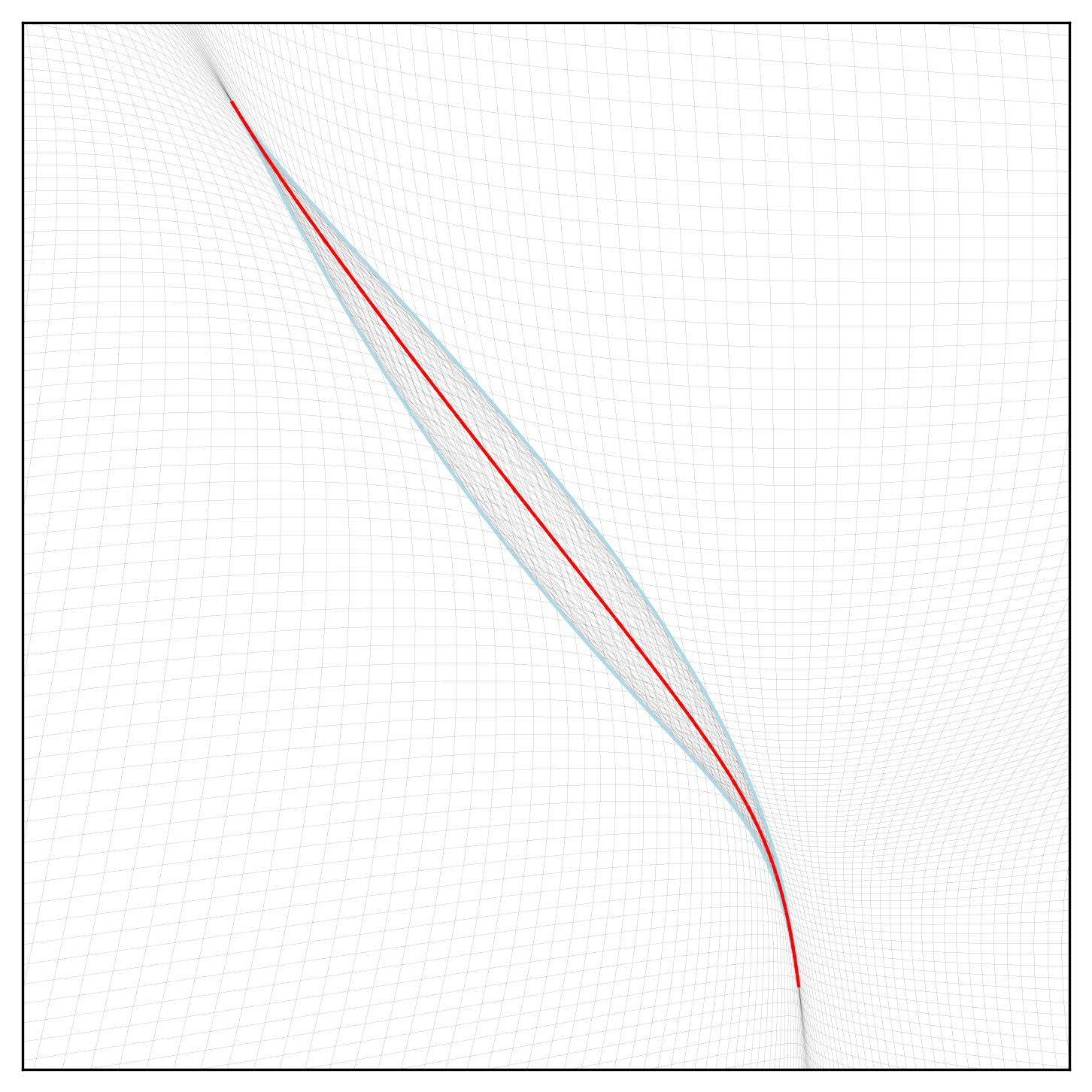}
        \caption{$A_3$ cusp caustic}
    \end{subfigure}
    \hfill
    \begin{subfigure}[b]{0.3\textwidth}
        \includegraphics[width=\textwidth]{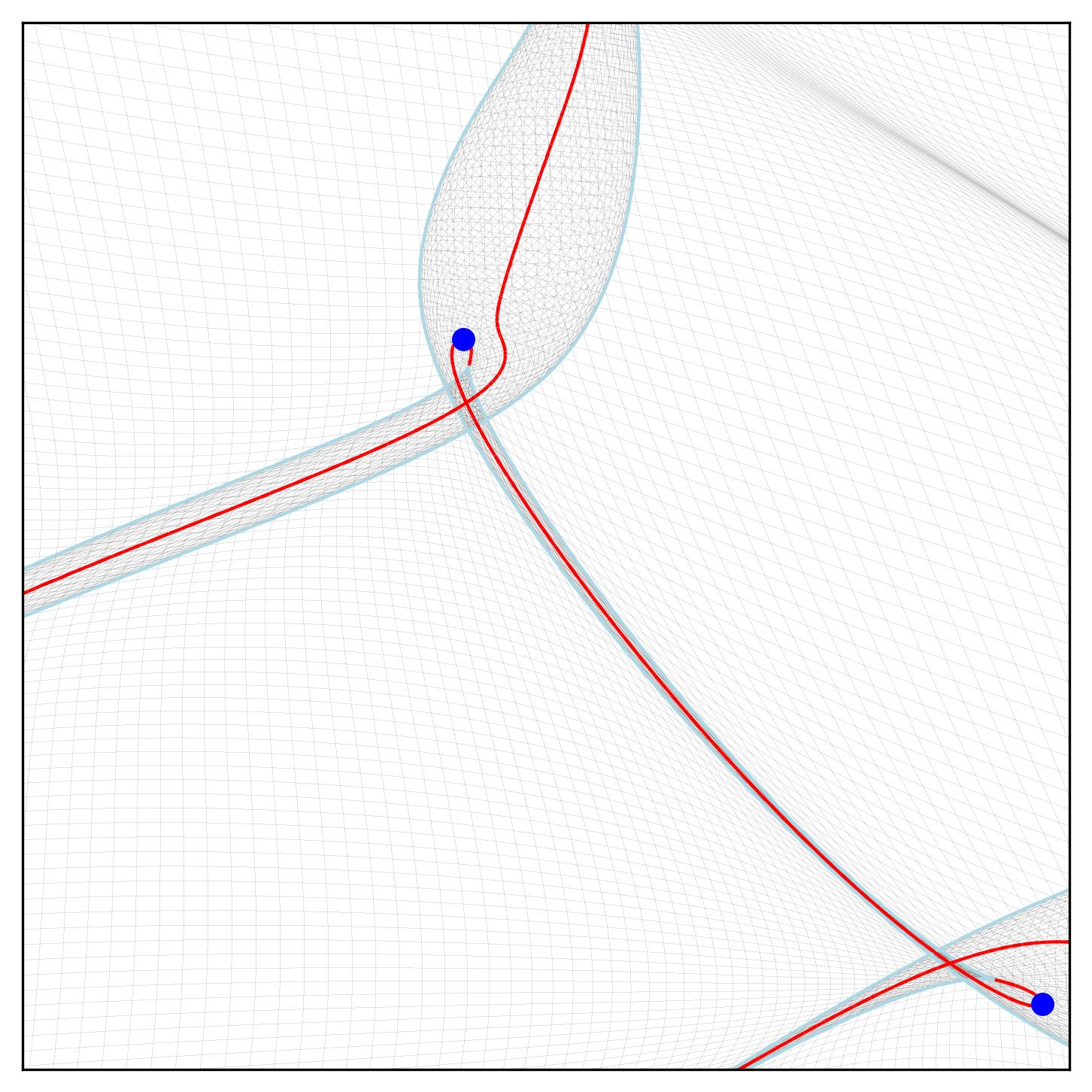}
        \caption{$A_4$ swallowtail caustic}
    \end{subfigure}
    \hfill
    \begin{subfigure}[b]{0.3\textwidth}
        \includegraphics[width=\textwidth]{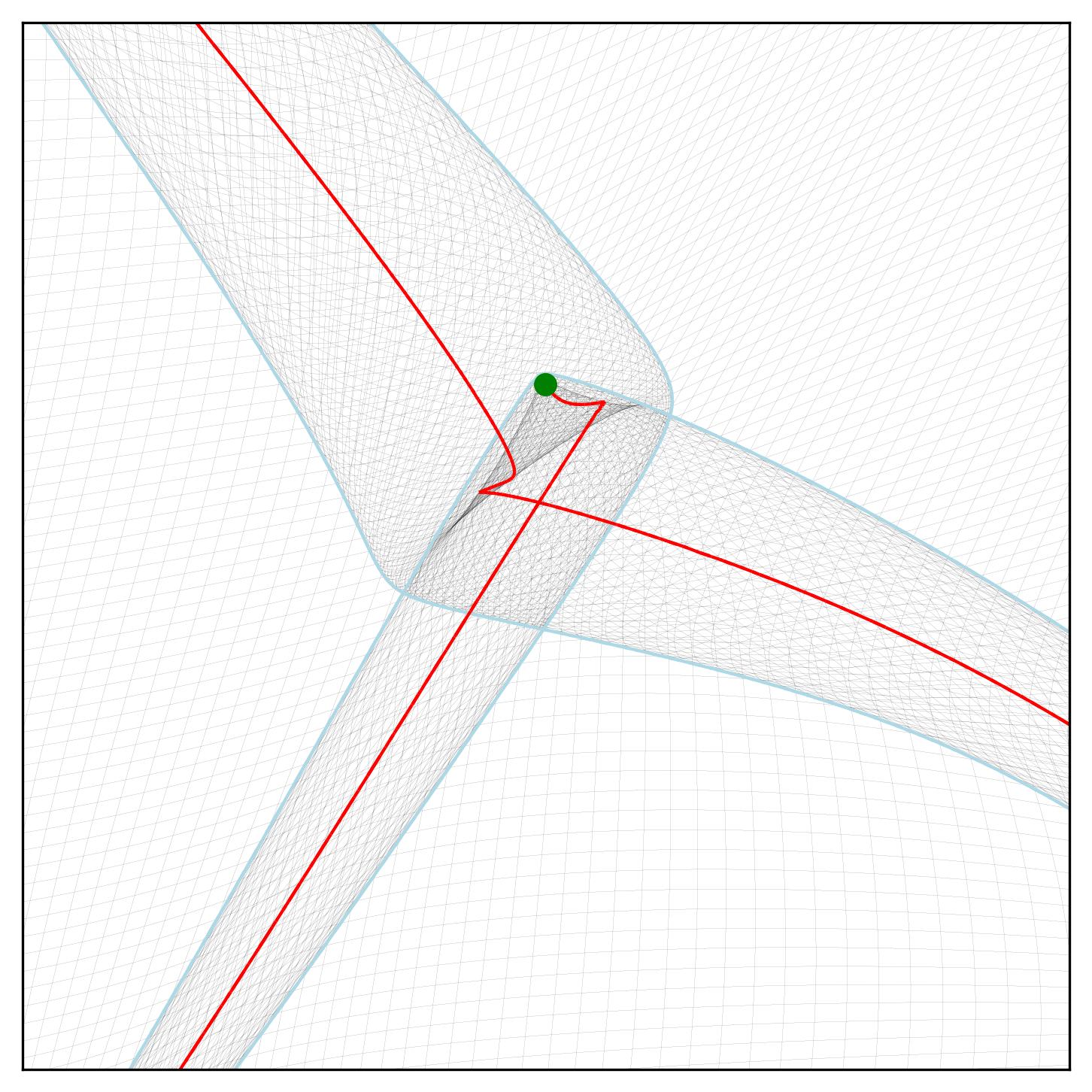}
        \caption{$D_4$ umbilic caustic}
    \end{subfigure}
    \cprotect\caption{Caustics making up the structural elements of the cosmic web, illustrated using a 2D mesh simulation of cosmic structure formation. The left panel shows a multistream region (Zel'dovich pancake) bisected by the cusp line, corresponding to a wall in the 3D cosmic web. The middle panel shows two swallowtail caustics emerging from a cusp sheet folding onto itself. The right panel shows the umbilic caustic, corresponding to a filamentary junction of three incoming walls. For an animation of the Zel'dovich pancake formation in a 2D mesh simulation, see \verb|benhertzsch.github.io/papers/2025_Cosmic_walls/multistream.mp4|.}
    \label{fig:caustics}
\end{figure*}

\textit{Caustic skeleton theory} \cite{ArnoldShandarinZeldovich1982, Arnold1986, Feldbrugge+2018, FeldbruggeWeygaert2023} classifies the shell-crossing events using catastrophe theory \cite{Thom1972, Arnold1992, ArnoldGuseinZadeVarchenko2012} and associates the different caustics to the structural elements of the cosmic web (multistream regions, walls, filaments and clusters) by differentiating their unique formation processes. The description of the cosmic web in terms of the caustics of the underlying dark matter flow was first brought forward in Arnol'd et al. (1982) \cite{ArnoldShandarinZeldovich1982} for structure formation in a two-dimensional universe. Feldbrugge et al. (2018) \cite{Feldbrugge+2018} extended this study to the three-dimensional case and provided a mathematically rigorous and parameter-free classification of the collapse processes of the dark matter sheet occurring in our Universe.

The key result of  \cite{Feldbrugge+2018} is an exhaustive list of seven \textit{caustic conditions} which identify the cosmic web elements through non-linear algebraic conditions of the eigenvalue and eigenvector fields  $\mu_i(\bm{q},t)$ and $\bm{v}_i(\bm{q},t)$ of the deformation tensor $\nabla_{\bm{q}} \bm{s}_t$. The key role of the eigenvalue fields has been recognised as early as \cite{Zeldovich1970}; since then, the community has traditionally assumed a naive association of the sheet-, line- and point-like cosmic web elements to $1, 2$ and $3$ terms respectively vanishing in the denominator of \cref{eq:Eulerian_density}. In particular, by relating the deformation tensor $\nabla_{\bm{q}} \bm{s}_t$ to the tidal shear tensor $\mathcal{H} \phi$ of the primordial potential perturbation $\phi$, \cite{Hahn+2007} first proposed the following explicit identification of the emerging structure in terms of the eigenvalues $\lambda_i$ of $\mathcal{H} \phi$:
\begin{equation*}
    \begin{split}
        \mathrm{cluster} \quad & \lambda_1 > \lambda_2 > \lambda_3 >  0  \\
        \mathrm{filament} \quad & \lambda_1 > \lambda_2 > 0 > \lambda_3 \\
        \mathrm{wall} \quad & \lambda_1 > 0 > \lambda_2 > \lambda_3 \\
        \mathrm{void} \quad &  0 > \lambda_1 > \lambda_2 > \lambda_3
    \end{split}
\end{equation*}
This classification (known as the \textit{T-web formalism}) and variations thereof have since been applied in numerous analyses of the cosmic web \cite{AragonCalvo+2007, Forero-Romero+2009, Hoffman+2012, Libeskind+2017, AycoberryBarthelemyCodis2024}. However, structure formation research has so far been largely oblivious of the role of the corresponding eigenvector fields, which are equally central to the geometric analysis of the dark matter flow. In \cref{app:caustic_skeleton-general}, we summarise the caustic conditions along with their geometric correspondence in the cosmic web, as described in detail in \cite{Feldbrugge+2018}. For more detailed visualisations of the associated structures in a two- and three-dimensional universe, we refer to \cite{Hidding+2013, FeldbruggeWeygaert2023} and \cite{Feldbrugge+2018} respectively.

\begin{figure}
    \centering
    \includegraphics[width=0.8\columnwidth]{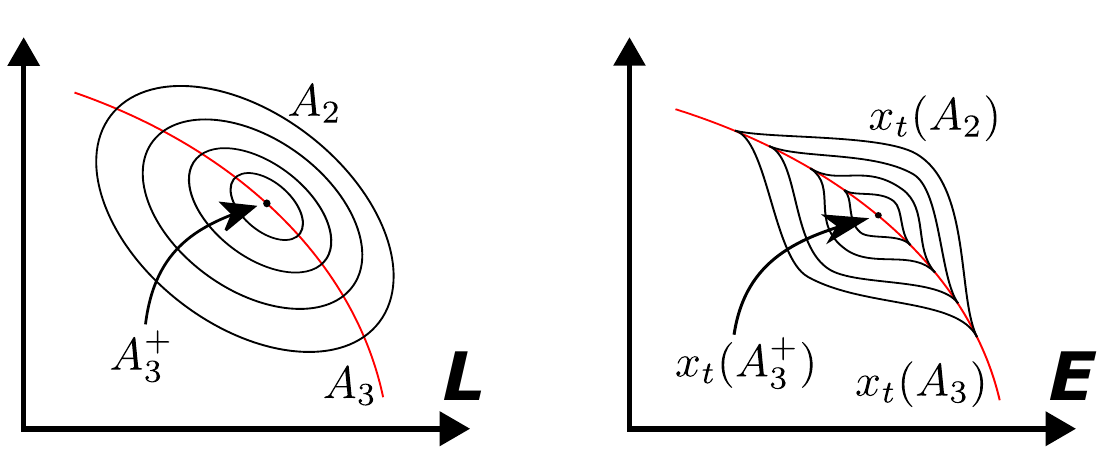}
    \cprotect\caption{Structure the emerging cusp sheet in Lagrangian (left) and Eulerian (right) space. In Lagrangian space, the multistream region is an ellipsoidal that concentrically emanates from the $A_3^+$ point. In Eulerian space, the edge points (ring in 3D) exhibit the characteristic cusp geometry and trace out the cusp sheet (red) over time.  For three-dimensional view of the Zel'dovich pancake, see \verb|benhertzsch.github.io/papers/2025_Cosmic_walls/pancake.mp4|.}
    \label{fig:cusp_sketch}
\end{figure}

The caustic skeleton model is not a theory of isolated geometric objects, but rather a formalism describing the formation history of the entirety of the emerging, evolving and merging cosmic web. The caustics discussed \cref{app:caustic_skeleton-general} do not exist in isolation, but are intricately connected to each other through the various collapse phenomena. Firstly, it can be observed that any multistream region is necessarily bisected by a cusp sheet. A multistream region emerges in a local maximum of the first eigenvalue field, $\nabla \lambda_1(\bm{q}, t) = \bm{0}$. Being a critical point of $\lambda_1(\bm{q}, t) $, this so-called $A_3^+$ point \cite{Feldbrugge+2018} identically fulfills the cusp condition, \cref{eq:A3}, and thus forms the origin of the emerging cusp sheet (see \cref{fig:cusp_sketch}). Similarly, the merger of two pancakes occurs in the $A_3^-$ saddle point \cite{Feldbrugge+2018} of $\lambda_1(\bm{q}, t)$. This point again fulfills \cref{eq:A3} and therefore constitutes the merger of through two cusp sheets. \Cref{fig:eigenvalue_visualisation} visualises the first eigenvalue field along with the $A_3^{\pm}$ points for caustics of the Zel'dovich approximation, as will be discussed in the the following section.

The two filament families are then related to the emerging network of cusp walls. Firstly, the swallowtail filament occurs when the cusp sheet folds onto itself and thus appears either as an overdense ``kink'' within the wall sheet or an elongated structure at its boundary. Similarly, the umbilic filament unfolds into three outgoing cusp sheets, and so exists only in connection with three cosmic walls. The swallowtail and butterfly clusters then form when the respective filaments fold onto themselves, and thus cannot exist in isolation from the extended cosmic web elements. We will investigate the two filament families in more detail in the follow-up article to the present study.

From the preceding discussion, it is clear that the entirety of the cosmic web needs to be considered as a network of merging caustics, the lowest of which is given by the cusp sheet tracing out a three-stream region corresponding to a cosmic wall. The elements of the cosmic web form in the same order as in the Zel'dovich theory \cite{Zeldovich1970}, namely that the formation of walls precedes that of filaments, which in turn precedes that of clusters. Note that this in contrast to the seminal cosmic web study by Bond et al. (1996) \cite{BondKofmanPogosyan1996}, which investigated the weaving of the overdense network from the peak-patch picture of the ZA-mapped density field. Here, the authors concluded that the first objects to form are the highly dense clusters, followed by filaments and walls forming latest and playing only a minor role in the current-time web. However, the more sophisticated, dynamical picture of the morphogenesis of singularities reveals that this conclusion is incorrect on geometric grounds: Highly-streaming filaments and clusters can only be formed out of lower-streaming configurations. The three-streaming cusp sheets therefore take a central stage in the build-up of the entirety of the overdense filaments and clusters. In this context, the caustic skeleton formalism also significantly enhances the traditional notion of the connectivity of the cosmic web from the morphology of the primordial density perturbation $\delta$ \cite{CodisPogosyanPichon2018}. In particular, \cite{CodisPogosyanPichon2018} defines connectivity through the multiplicity of filaments and clusters while being fully agnostic to the existence of walls within the cosmic web. Clearly, this does not reflect the crucial role that the walls take in embedding the filaments, superdense clusters and eventually the observable galaxies.

The caustic skeleton formalism unambiguously reveals that the cosmic walls, consisting of  sheet-like three-stream regions,  are indeed real structures in the physical cosmic web. This motivates our study of the phenomenology of cusp sheets in our Universe, which we shall develop for the remainder of this article.

\subsubsection{Evaluating caustics in the Zel'dovich approximation}
\label{subsubsec:theory-caustics_ZA}

\begin{table}
	\centering
	\begin{tabular}{lccr}
		\hline
		name & symbol & caustic condition & cosmic web element\\
		\hline
		fold & $A_2$ & $\lambda_1 = b_c^{-1}$ & multistream region\\
        cusp & $A_3$ & $\bm{v}_1 \cdot \nabla \lambda_1 = 0$ & wall\\
        swallowtail & $A_4$ & $\bm{v}_1 \cdot \nabla \left( \bm{v}_1 \cdot \nabla \lambda_1 \right) = 0$ &  filament\\
        butterfly & $A_5$ & $ \bm{v}_1 \cdot \nabla\left(\bm{v}_1 \bm{\cdot} \nabla \left( \bm{v}_1 \cdot \nabla \lambda_1 \right) \right) = 0$ &  cluster \\
        umbilic & $D_4$ & $\lambda_1 = \lambda_2 = b_c^{-1}$ & filament\\
        parabolic umbilic & $D_5$ & $ \bm{v}_1 \cdot \nabla \left(\lambda_1 - \lambda_2\right) = \bm{v}_2 \cdot \nabla \left(\lambda_1 - \lambda_2\right)  = 0$ & cluster \\
		\hline
	\end{tabular}
    \caption{Summary of the caustic conditions in the Zel'dovich approximation with the correspondence to the structural elements of the cosmic web in the three-dimensional Universe.}
    \label{tab:caustic_skeleton}
\end{table}

Up to this point, the caustic skeleton been presented in mathematical generality. We made no assumptions about the deformation tensor $\nabla_{\bm{q}}\bm{s}_t(\bm{q})$ and the associated eigenvalue and eigenvector fields. Written in their general form, the caustic conditions enable the mathematically rigorous and unambiguous phase-space identification of the emerging large-scale structure from the eigenfields.

However, physical studies of the cosmic web often aim at relating the late-time structures to the primordial conditions out of which they emerge. This is achieved through analytical models such as Lagrangian Perturbation Theory (LPT) \cite{Bouchet+1995, Catelan1995, RampfHahn2021}, which approximate the non-perturbative $N$-body dynamics up to the mildly non-linear evolution. In relation to the caustic skeleton, the seminal \textit{Zel'dovich approximation} (ZA) \cite{Zeldovich1970} allows one to translate the caustic conditions into algebraic conditions on the configuration of the primordial potential perturbation $\phi(\bm{q})$, thus enabling the identification of the approximate Lagrangian-space caustics only from the primordial fields, without the need for computationally expensive $N$-body simulations.

The ZA is the first-order LPT solution, and thus represents the simplest perturbative Lagrangian model of large-scale structure formation. In the ZA, the displacement field is given by the gradient map
\begin{equation}
    \bm{x}(\bm{q},t) = \bm{q} - b(t) \nabla \Psi(\bm{q}) \,.
    \label{eq:ZA}
\end{equation}
Here, the \textit{primordial displacement potential} $\Psi(\bm{q})$ is related to the primordial potential perturbation $\phi$ by the proportionality \cite{Hidding+2013, FeldbruggeWeygaert2024}
\begin{equation}
    \Psi(\bm{q}) = \frac{2}{3 H_0^2 \Omega_m} \phi(\bm{q}) = \frac{2}{3 b(t) a(t)^2 H_0^2 \Omega_m} \phi_{\textrm{lin}}(\bm{q}) \,,
    \label{eq:displacement_potential}
\end{equation}
where in the second equality we used the linearly extrapolated potential perturbation $\phi_{\textrm{lin}}$ at time $t$.

The temporal dependence is absorbed by the linear growing mode  $b(t)$, which gives the solution for the linear growth of perturbations through the defining equation \cite{Peebles1994}
\begin{equation}
    \ddot{b}(t) + 2 H(t) \dot{b}(t) - 4 \pi G \rho_0 b(t)=0 \,,
\end{equation}
with the boundary conditions $b(0)=0$ and $b(t_0)=1$, the Hubble function $H(t)$,  the primordial mean density $\rho_0$ and the gravitational constant  $G$. Throughout this article, we will not work in physical time $t$, but express the temporal dependence of the caustic skeleton directly in terms of the growing mode $b(t)$.

In the ZA, the deformation tensor $M(\bm{q}, t)= \nabla_{\bm{q}}\bm{s}_t$ becomes
\begin{equation}
    M(\bm{q}, t) = - b(t) \mathcal{H}\Psi(\bm{q}) 
\end{equation}
and is thus proportional to the scaled tidal shear tensor $\mathcal{H} \Psi(\bm{q})$ of the primordial displacement potential $\Psi$. The relevant fields now become the eigenvalue and eigenvector fields $\lambda_i(\bm{q})$ and $\bm{v}_i(\bm{q})$ of $\mathcal{H} \Psi(\bm{q})$, which are functions of space only. In the ZA, the caustic conditions of \cref{subsec:theory-caustic_skeleton} therefore simplify to a set of algebraic conditions \cite{Feldbrugge+2018} of the space-dependent fields $\lambda_i(\bm{q})$ and $\bm{v}_i(\bm{q})$ and the growing mode $b(t)$, which we list in \cref{app:caustic_skeleton-ZA} and summarise in \cref{tab:caustic_skeleton}.

\begin{figure*}
    \centering
    \begin{subfigure}[t]{0.32\textwidth}
        \includegraphics[width=\textwidth]{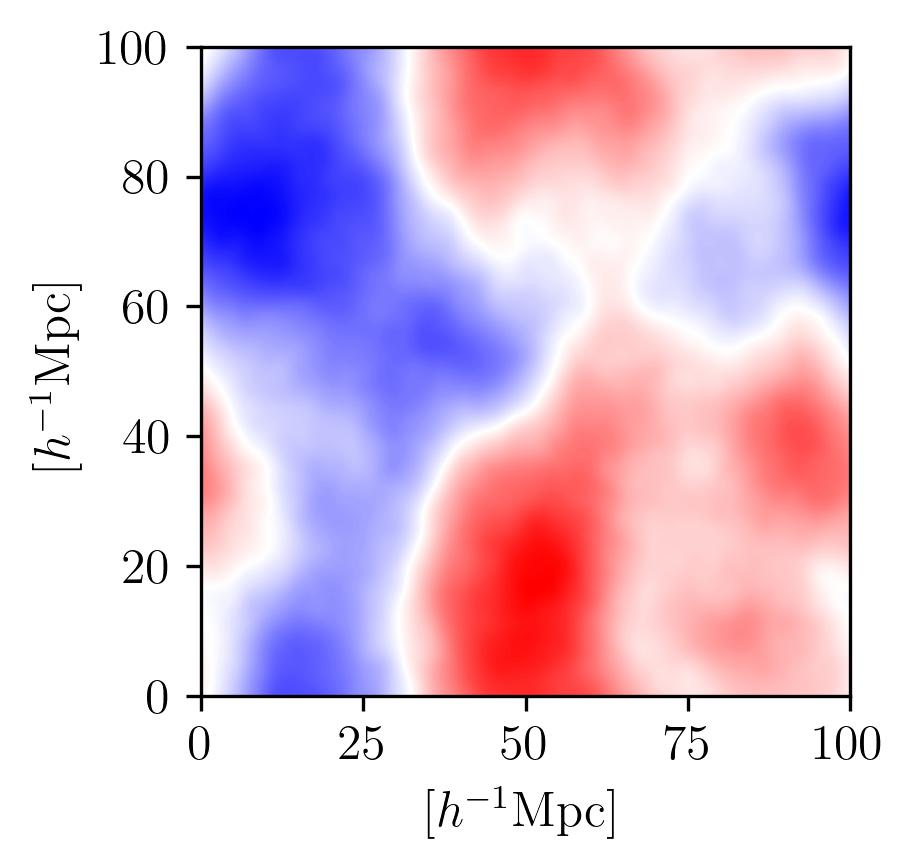}
        \caption{slice through $\Psi$}
    \end{subfigure}
    \hfill
    \begin{subfigure}[t]{0.32\textwidth}
        \includegraphics[width=\textwidth]{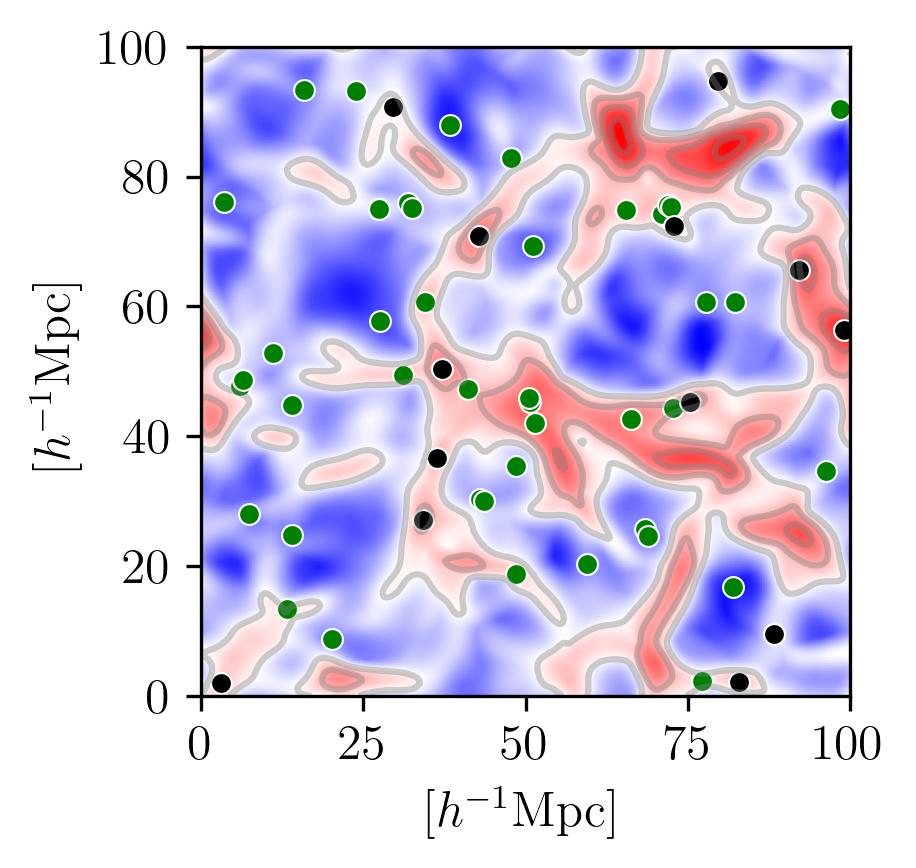}
        \caption{slice through $\lambda_1$}
    \end{subfigure}
    \hfill
    \begin{subfigure}[t]{0.31\textwidth}
        \includegraphics[width=\textwidth]{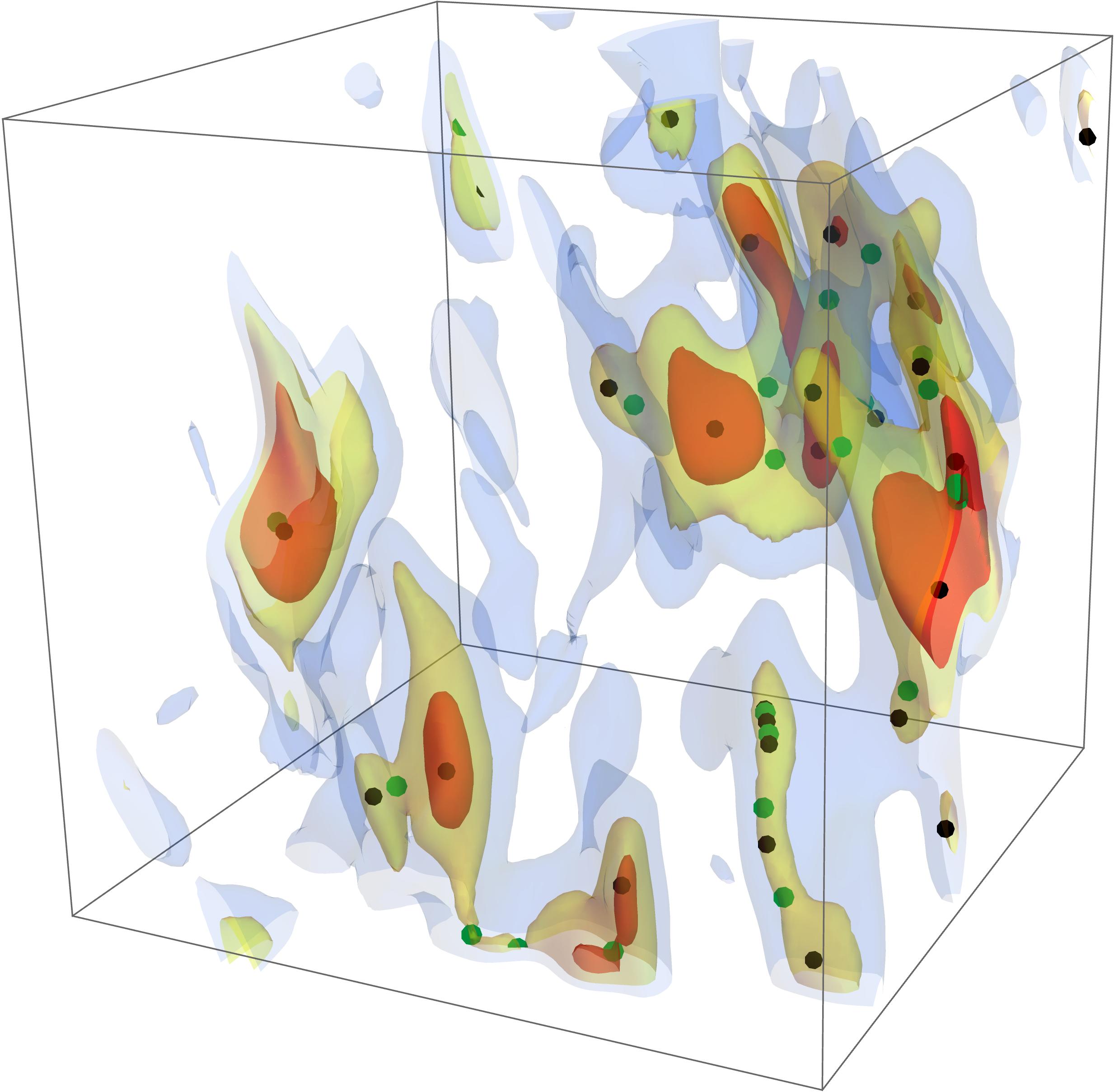}
        \caption{isocontours of $\lambda_1$}
    \end{subfigure}
    \caption{Primordial fields making up the embryonic cosmic web. The left panel shows a slice through a random realisation of the primordial displacement potential $\Psi$, smoothed at $\sigma = 1.0 \,h^{-1}\textrm{Mpc}$ (see \cref{subsec:theory-scale_space}). The middle panel shows the corresponding slice through first eigenvalue field $\lambda_1$. The right panel shows a 3D zoomed view of the isocontours of the first eigenvalue field, demonstrating the growth of multistream regions by the fold condition, \cref{eq:A2}. The black and green dots are the local maxima and saddle points of $\lambda_1$, in which the multistream regions emerge and merge respectively.}
    \label{fig:eigenvalue_visualisation}
\end{figure*}

 It is well known that the ZA fails when shell-crossing occurs, as secondary collapse is ignored and particles keep following to their linear trajectories, making shell-crossed structures grow to infinity. Despite this failure, the Lagrangian-space caustics identified from the ZA are a suitable approximation to the Lagrangian-space caustics from first collapse appearing in the non-perturbative $N$-body dynamics. The reasoning here is that the non-linear $N$-body caustics occur only when the linear theory predicts a configuration sufficiently close to shell-crossing. In other words, the ZA gives the dominant contribution to the large-scale collapse, with the non-linear dynamics affecting only negligible deviations of the linearly predicted collapse time and Lagrangian coordinates. Phase mixing from secondary collapse does not significantly affect the spinal outline of the cosmic web, but merely manifests itself in higher overdensities of the structures traced out by the caustics calculated from the linear theory.
 
 Having identified the ZA caustics in Lagrangian space, these may be mapped to Eulerian space by any suitable method to trace the cosmic web in its final configuration. In this work, we choose to map the ZA caustics forward with the non-perturbative $N$-body displacement field $\bm{s}_t$. The upper panel of \Cref{fig:sim_256_caustics} illustrates how aptly the evolved ZA caustics trace out the overdense structure of the cosmic web, with the filaments and walls being in stunning agreement with the caustics characterising the various particle mesh foldings (see lower panel).

\subsection{Cosmic web in scale space}
\label{subsec:theory-scale_space}

\begin{figure*}[htp]            \includegraphics[width=0.95\textwidth]{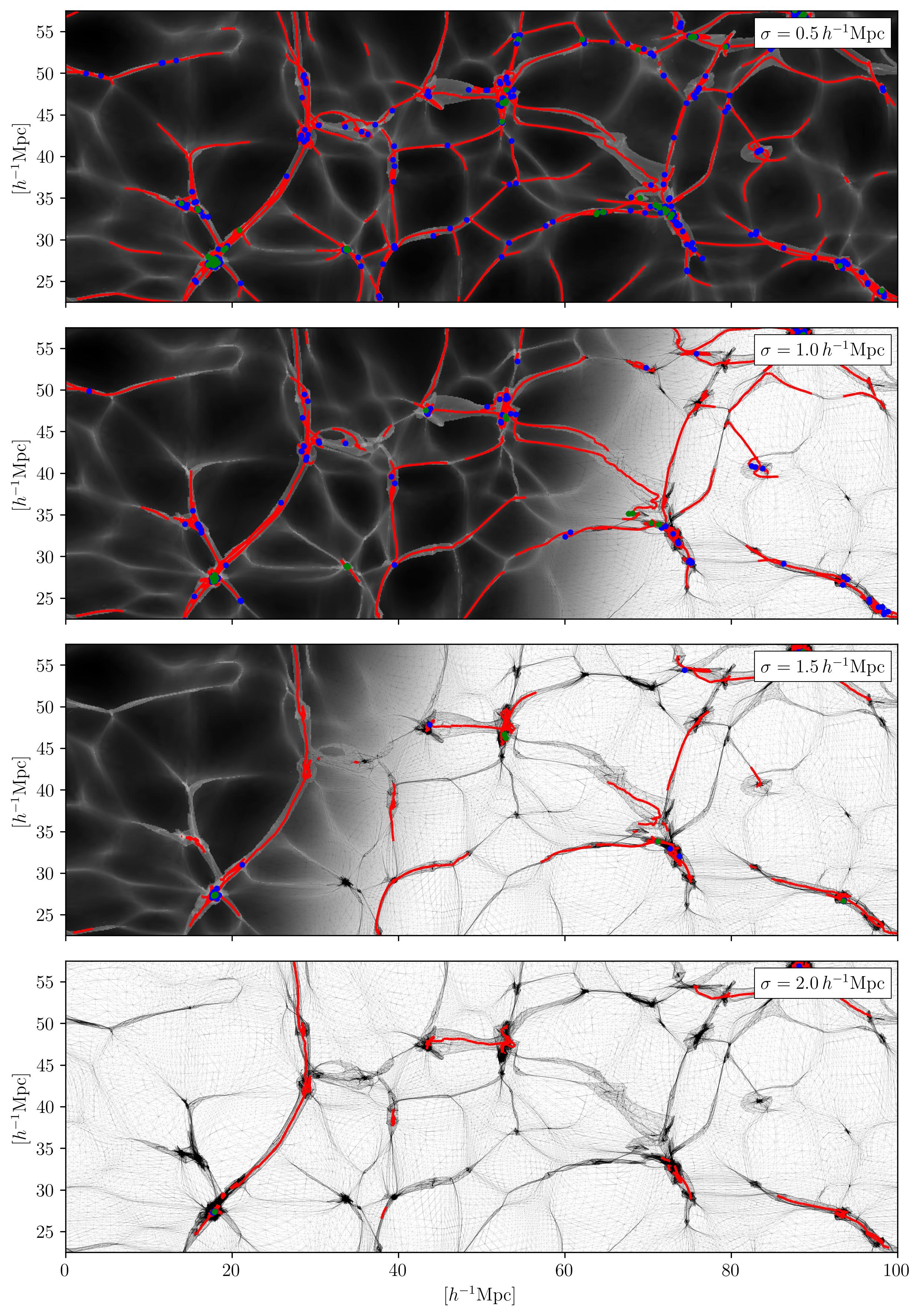}
    \caption{Slices through the density field smoothly blending over into the particle mesh of a $256^3$-particles $N$-body simulation. Superimposed are slices of the caustic skeleton evaluated at different smoothing scales $\sigma$, with the the red lines corresponding to slices through the cusp walls and the blue and green dots being slices through the swallowtail and umbilic filaments respectively.}
    \label{fig:sim_256_caustics}
\end{figure*}

The caustic conditions discussed in the previous sections provide a parameter-free description of the emerging cosmological structure from the primordial potential perturbation $\phi$. Observations of the cosmic microwave background \cite{Planck2018} strongly suggest that $\phi$ is aptly described by a random field with near-scale fluctuation spectrum. Due to its near-scale invariant nature, the primordial initial conditions seed collapse phenomena at different length scales. This leads to the concept of \textit{scale space}, motivating the study of the cosmic web not only over time, but also over physical length scales.

The relevance of scales for the hierarchical nature of structure formation has been appreciated early on, notably in the seminal works of Zel'dovich (1970) \cite{Zeldovich1970}, Doroshkevich (1970) \cite{Doroshkevich1970} and Doroshkevich et al. (1978) \cite{DoroshkevichShandarinSaar1978}. The famous Press-Schechter formalism \cite{PressSchechter1974} pioneered a quantitative treatment of the mass-dependence of cosmological density clustering from excursion sets of the Eulerian density field. This was further solidified in Bond et al. (1991) \cite{Bond+1991} and extended to anistotropic collapse in Sheth et al. (2001) \cite{ShethMoTormen2001}. In the groundbreaking work of Bond et al. (1996) \cite{BondKofmanPogosyan1996}, these notions were formalised to the emergence of a multiscale cosmic web over a range of cosmic time scales, with the relation to peak patches in the Lagrangian space being first investigated in \cite{BondMyers1996}. Ever since these studies, the scale-space treatment of the hierarchical cosmic web has become the established standard in the community, with numerous modern studies investigating the multiscale properties of cosmic web and its influence on the embedded galaxies \cite{AragonCalvo+2010, Subhonenko+2011, Jaber+2023, AragonCalvo2024}. From the Lagrangian perspective, the scale-space nature of the primordial fields has been studied primarily with regard to the critical points and the Morse-Smale complex \cite{Cadiou+2020}. In terms of cosmic web identification, the multiscale morphology filter (MMF) \verb|NEXUS(+)| \cite{AragonCalvo+2007, Cautun+2012} considers the scale-space morphology of e.g. the Eulerian density field to identify the web-like structure from the classification over a range of length scales. This also highlights the practical meaning of scale dependence for cosmological surveys, as these are not equally sensitive to all length scales, halo masses and galaxy luminosities. Instead, sureys typically aim at identifying the most dominant (largest-scale) features of the cosmic web. A scale-space treatment of the observable data effectively filters out the  small-scale structure to retrieve the large-scale web. This is of particular relevance in the context of scale-dependent halo bias \cite{Paranjape+2013} over the dark matter density field, though we leave these issues aside for our present study.

In this work, we are interested in the scale-space cosmic web as identified through the caustic skeleton formalism from the morphology of the primordial fields. To study the structure emerging at a physical length scale $\sigma$, it is customary to smooth the initial conditions at the scale $\sigma$ to obtain smoothed initial conditions. Starting from the primordial displacement potential $\Psi$, this is done through a convolution with the window function $W_{\sigma}(\bm{q})$,
\begin{equation}
    \Psi_{\sigma}(\bm{q}) = \int \d \bm{q}^{\prime} \Psi(\bm{q} - \bm{q}^{\prime} ) W_{\sigma}(\bm{q}^{\prime} ) \,.
    \label{eq:smoothing_real_space}
\end{equation}
One common choice for the window function is the spherical top-hat function \cite{Peebles1994}, which is uniform inside a sphere of radius $\sigma$ and vanishing elsewhere,
\begin{equation}
    W_{\sigma}(\bm{q}) = \frac{3}{4 \pi \sigma^3} \Theta(r - \|\bm{q}\|) \,.
    \label{eq:window_function_top_hat}
\end{equation}
In this work, we choose to work instead with the Gaussian kernel  \cite{Peebles1994} smoothed on scale $\sigma$,
\begin{equation}
    W_{\sigma}(\bm{q}) = \frac{1}{2 \pi \sigma^2} e^{-\frac{\bm{q}^2}{2 \sigma^2}} \,.
    \label{eq:window_function}
\end{equation}
The Gaussian kernel is an alternative popular choice, and more suitable in this work for the numerical evaluation of the integrals in \cref{sec:recipe}. We do not expect the results of this article to be qualitatively different from those that would be obtained with the top-hat function.

\Cref{eq:smoothing_real_space} is a non-local operation that averages out the field's local information over that of nearby points. The physical reasoning for this approach is that while the large-scale structure is expected to influence the embedded small-scale structure, the small-scale dynamics are assumed not to significantly back-react on the large-scale structure. This assumption was checked in an approximate calculation in \cite{Peebles1980} and systematically verified using $N$-body simulations in Little et al. (1991) \cite{LittleWeinbergPark1991}\footnote{While not commonly appreciated as such, Little et al. (1991) \cite{LittleWeinbergPark1991} actually lays the numerical foundation for any investigation of structure formation from $N$-body simulations. By demonstrating that the backreaction of small scales on the large scales is negligible, the authors demonstrated that the cosmic web may be effectively simulated in finite-resolution simulations that are blind to the below-grid-resolution noise. This is in opposition to e.g. turbulence simulations. These results also demonstrate that the bottom-up model of structure formation cannot explain the cosmic web if the effective treatment is to hold. Instead, the well-established bottom-up clustering of haloes has to the married to the top-down model of the large-scale web assembly, together forming the modern model of hierarchical large-scale structure formation.}. The negligible backreaction of the small scales on the large scales allows for an effective treatment of the cosmic web at different length scales: The smoothed potential $\Psi_{\sigma}(\bm{q})$ contains the information about the structure forming at a length scale $\sigma$, and the entirety of the cosmic web is imprinted in the initial conditions smoothed over a range of cosmological length scales.

The caustic skeleton is an intrinsically multiscale formalism due to the near-scale invariant nature of the primordial fluctuation spectrum \cite{Feldbrugge+2018, FeldbruggeYanWeygaert2023, FeldbruggeWeygaert2023}. The network of caustics may be understood as a fractal-like structure that mirrors the near self-similarity of the cosmic web \cite{Einasto+2020, Einasto2025} over a range of cosmological scales. The cosmic web elements at a length scale $\sigma$ are identified by evaluating the caustic conditions on the smoothed initial conditions $\Psi_{\sigma}(\bm{q})$. \Cref{fig:sim_256_caustics} clearly illustrates how the large-scale features of the cosmic web are captured by the caustics evaluated at a large smoothing scale ($\sigma = 2.0 \, h^{-1}\textrm{Mpc}$). Moreover, the upper panels reveal how the caustic skeleton evaluated on smaller scales captures increasingly finer details of the cosmic web. In the limit of vanishing smoothing, $\sigma \rightarrow 0$, the caustics recover sensitivity to all shell-crossing that can be numerically resolved on the grid of the simulation. More phsyically, the caustic formalism in principle may be applied down to the scale at which the collisionless fluid dark matter assumption breaks down. This is a truly remarkable result: With the calculation of the caustic skeleton from the ZA, one could have expected that the caustics only offer a useful description at the largest scales. However, we clearly find that the caustic skeleton offers us more than we bargained for, in that the entirety of the cosmic web is identified from the caustic conditions evaluated on the primordial field $\Psi_{\sigma}$ at small $\sigma$. In forthcoming work \cite{FeldbruggeHertzschWeygaert2025}, we will further illustrate this aspect of the caustic skeleton formalism, as well as time-dependence of the caustic network making up the connected cosmic web.

Beyond the identification of the cosmic web in scale space, the caustic skeleton also offers an unprecedented analytical view on the formation times of the different cosmic web environments on varying length scales from the statistics of the primordial fields \cite{FeldbruggeYanWeygaert2023, FeldbruggeWeygaert2024}. Within the present article, in \cref{subsec:sims-formation_time}, we will place particular emphasis on the formation times of the Zel'dovich pancakes on different length scales. Doing so, we provide a new quantitative understanding of the hierarchical formation of the cosmic walls. Note at this point that the Zel'dovich theory timeline \cite{Zeldovich1970} of first wall, then filament, then cluster formation is to be understood in scale-space. We will further elaborate on this timeline in \cref{subsec:sims-formation_time} and in the follow-up paper on cosmic filament formation. The constrained simulations of \cref{subsec:sims-fields} will reveal how the scale-space caustic constraints are reflected in the dark matter fields arising from the simulations of cosmic wall formation. Finally, \cref{sec:haloes} will directly demonstrate the observational impact of the caustic network, as we show, for the first time, how the scale-space caustics are traced by haloes of different mass scales. 


\section{A new condition for cosmic walls}
\label{sec:A3_centre_constraint}

The preceding section discussed in detail how the cosmic web emerges as a multistreaming structure under gravitational attraction and how its spinal outline is traced by the caustics of the dark matter flow occurring on different length scales. In this article, we are interested in the properties of cosmic walls, and want to use these notions to set up specialised simulations of their formation process. To this end, we extend the caustic skeleton formalism and derive a novel constraint that allows for an unprecedented analytical treatment of cosmic walls and their implementation into physically realistic constrained simulations.

In principle, one can construct initial conditions that obey the caustic conditions of \cref{tab:caustic_skeleton}. The relevant constraint for the formation of a cosmic wall from a cusp sheet emerging at shell-crossing time $b_c$ is
\begin{equation}
    \lambda_1 =\frac{1}{b_c} \qquad \bm{v}_1 \cdot \nabla \lambda_1 =0 \,.
    \label{eq:cusp_constraint}
\end{equation}
If one naively imposes this constraint into the initial conditions at some Lagrangian coordinate $\bm{q}$, the Eulerian coordinate $\bm{x}_t(\bm{q})$ will indeed be part of a cusp sheet forming in the simulation. However, there is a caveat that makes this approach impractical to systematic constraint simulations: Firstly, the non-linear collapse of clusters and filaments compresses the Lagrangian volume into a few compact Eulerian structures. Consequently, a generic point on the Lagrangian cusp sheet is generally biased towards the higher collapse of the filaments and clusters bounding a cosmic wall, and does therefore not generally close to the center of the Eulerian cusp sheet. Moreover, the ZA generally fails in the highly non-linear cluster regions. If the point $\bm{q}$ is mapped into a cluster, the imposed cusp constraint from the ZA may therefore fail and not actually impose a cusp-like geometry in the non-linear $N$-body dynamics. These considerations make it unfeasible to infer physical observables of cosmic walls from simulations of the naive cusp constraint \cref{eq:cusp_constraint}.

Analogous arguments were first brought forward in \cite{FeldbruggeWeygaert2024}, which studied 2D simulations of cosmic filaments. Here, the Eulerian bias towards higher collapse was avoided by imposing a  novel constraint for the approximate centre of a cosmic filament. Following the same arguments, we now derive a condition for the formation of a \textit{cosmic wall centre}.

One candidate for the centre of the wall is the $A_3^{+}$ point \cite{Hidding+2013, Feldbrugge+2018}, the maximum of the first eigenvalue field restricted to the cusp sheet, which we discussed in \cref{subsec:theory-caustic_skeleton}. Under the ZA, the $A_3^{+}$ point is given by
\begin{equation}
    A_3^{+} :\quad \nabla \lambda_1 = 0 \quad \mathcal{H}\lambda_1 \,\,\mathrm{neg. \,def.} \quad \mathrm{lying \, in} \quad A_3 \,.
    \label{eq:A3p_condition} 
\end{equation}
Being a local maximum, this cusp point is the first in the multistream region to shell-cross, and thus forms the centre of emergence of the cusp sheet. However, when mapped into Eulerian space, the $A_3^{+}$ point may still be biased towards the non-linear collapse of the bounding filaments and clusters. This is because \cref{eq:A3p_condition} is sensitive only to the first eigenvalue field $\lambda_1$, which merely specifies the contraction rate along the collapse direction $\bm{v}_1$, but is agnostic to the dynamics within the cusp sheet.

\begin{figure*}
    \centering
    \includegraphics[width=0.65\textwidth]{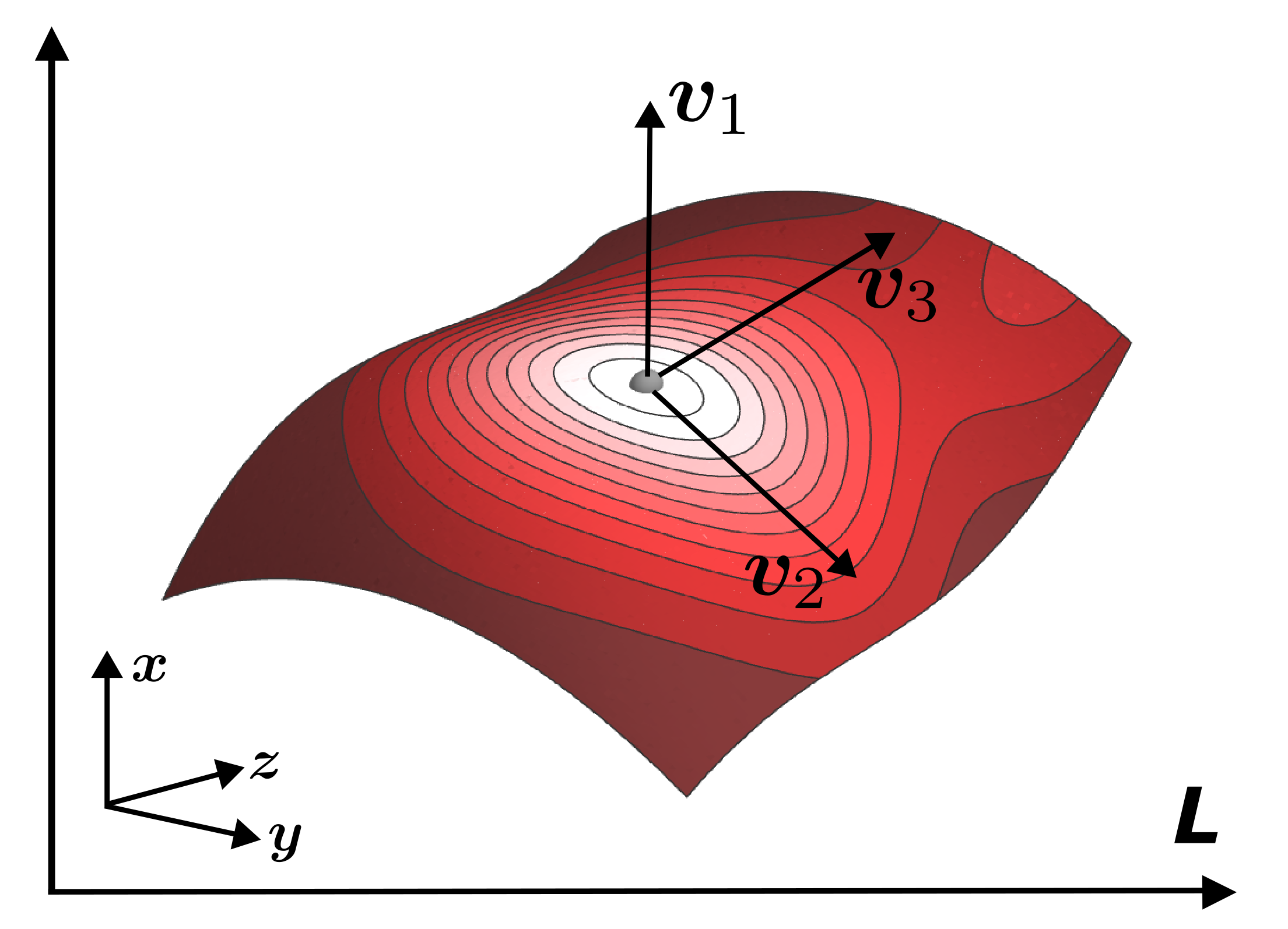}
    \caption{Sketch of the Lagrangian cusp sheet with isocontours of the contraction rate $\kappa$ and the $A_3$ centre point shown by the grey sphere. The isocontours are shown in different hues of red, with white corresponding to maximal expansion. The first eigenvector $\bm{v}_1 \approx \bm{\hat{x}}$ is normal to the cusp sheet spanned by $\bm{v}_2$ and $\bm{v}_3$ in the constraint point. The isocontours are anisotropic, as $\kappa = \lambda_2 +\lambda_3$ is a random field sliced by the $A_3$ surface. Note that the $A_3$ centre point need not be a critical point (in the curvature sense) of the cusp sheet.}
    \label{fig:stretch_sketch}
\end{figure*}

To obtain a point that lies well within the emerging wall object, we therefore propose to consider the stretch of the cusp plane in Lagrangian space and identify the wall centre as the sheet's maximally stretching point. The orientation of the cusp sheet may in principle be identified directly from the eigenfields. However, to significantly simplify the calculation, we note that the normal vector $\bm{n}_c$ of the cusp sheet is on average aligned with the first eigenvector,
\begin{equation}
    \bm{n}_c \approx \bm{v}_1 \,.
\end{equation}
We will quantitatively validate this claim in \cref{subsec:recipe-orientation}. Upon assuming the mean alignment, $\bm{n}_c = \bm{v}_1$, the cusp sheet is spanned by the $\bm{v}_2 \bm{v}_3$-plane. We then consider a unit area on the Lagrangian cusp sheet as
\begin{equation}
    A_L = \| \bm{v}_2 \times \bm{v}_3 \| \,.
\end{equation}
To maximise the stretch of the cusp sheet, we wish to maximise the magnitude of the area when evaluated on the Eulerian-evolved vectors. Analogously to the derivation presented in \cite{FeldbruggeWeygaert2024}, we evolve the vectors with the gradient matrix $\nabla \bm{x}_t$, so that the area in Eulerian space is given by
\begin{equation}
    A_E = \| \left((\nabla \bm{x}_t) \bm{v}_2 \right) \times \left( (\nabla \bm{x}_t)  \bm{v}_3 \right) \| \,.
\end{equation}
Using the ZA, we write $\nabla\bm{x}_t =I -b(t)\mathcal{H}\Psi$, and by the definition $(\mathcal{H} \Psi) \bm{v}_i = \lambda_i  \bm{v}_i $, the eigenvectors $\bm{v}_{2}, \bm{v}_{3}$ are stretched into
\begin{equation}
    (\nabla \bm{x}_t) (\bm{v}_i) = (1 - b(t) \lambda_i)  \bm{v}_i \,.
\end{equation}
The Eulerian area therefore becomes
\begin{equation}
    \begin{split}
        A_E &= \| (1 - b(t) \lambda_2)  \bm{v}_2 \times (1 - b(t) \lambda_3)  \bm{v}_3   \| \\
        & = 1- b(t) (\lambda_2 + \lambda_3) + \mathcal{O}(b(t)^2) \,,
    \end{split}
\end{equation}
where we have used the normalisation $\|\bm{v}_2 \times \bm{v}_3\| = 1$. Truncating the expansion at linear order in the growing mode $b(t)$, we find that the contraction rate in the cusp sheet is given by
\begin{equation}
    \kappa = \lambda_2 + \lambda_3\,,
    \label{eq:kappa}
\end{equation}
where positive eigenvalues $\lambda_{2},\lambda_{2} $ correspond to contraction and negative $\lambda_{2},\lambda_{3}$ correspond to expansion. The maximal stretch point is obtained by minimising $\kappa$ in the $\bm{v}_2\bm{v}_3$-plane. This is achieved by requiring the $\bm{v}_{2}$- and $\bm{v}_{3}$-directional derivatives of $\kappa$ to vanish,
\begin{equation}
    \bm{v}_2 \cdot \nabla \left(\lambda_2 + \lambda_3 \right) = 0 \qquad
    \bm{v}_3 \cdot \nabla \left(\lambda_2 + \lambda_3 \right) = 0 \,.
    \label{eq:kappa_derivs_vanish}
\end{equation}
To assert that the critical point is a local minimum, we evaluate the second-order directional derivative matrix
\begin{equation}
    M = \begin{pmatrix}
        \bm{v}_2^\mathrm{T} \mathcal{H}(\lambda_2 + \lambda_3) \bm{v}_2 & \bm{v}_2^\mathrm{T} \mathcal{H}(\lambda_2 + \lambda_3) \bm{v}_3 \\
        \bm{v}_3^\mathrm{T} \mathcal{H}(\lambda_2 + \lambda_3) \bm{v}_2 & \bm{v}_3^\mathrm{T} \mathcal{H}(\lambda_2 + \lambda_3) \bm{v}_3
    \end{pmatrix}
    \label{eq:l2_l3_stretch_M}
\end{equation}
and require that $M$ be positive definite, i.e. that both eigenvalues of $M$ be positive.
To obtain a cosmic wall realisation that is expanding rather than contracting towards secondary collapse in the ZA, we further propose to restrict the constraint to negative second eigenvalues, $\lambda_2 < 0$. In practice, this filtering is a weak constraint that does not qualitatively change the discussion of this article. We relax the assumption $\lambda_2 < 0$ in \cref{app:contracting_walls} and briefly discuss the implications for simulations of contracting walls.

To summarise, we propose the following \textit{comic wall centre condition} (or \textit{$A_3$ centre condition}) to be studied for the remainder of this article:
\begin{framed}
\begin{equation}
    \begin{split}
    &\lambda_1 = \frac{1}{b_c} \qquad \lambda_2 < 0 \qquad M\,\text{ pos. def.}\\
    &\bm{v}_1 \cdot \nabla \lambda_1 = 0 \qquad \bm{v}_2 \cdot \nabla \left(\lambda_2 + \lambda_3 \right) = 0 \qquad
    \bm{v}_3 \cdot \nabla \left(\lambda_2 + \lambda_3 \right) = 0
    \end{split}
    \label{eq:A3_centre_constraint}
\end{equation}
\end{framed}

We note for completeness that the proposed conditions assert that the wall centre lives far away from cosmic filaments. By requiring $\lambda_2 < 0$, it is evident that the $D_4$ umbilic condition of \cref{tab:caustic_skeleton} is not fulfilled. Furthermore, it can be noted that the $A_4$ swallowtwail condition can be written as $\bm{v}_1 \cdot \nabla \left( \bm{v}_1 \cdot \nabla \lambda_1 \right) = \bm{v}_1 \cdot \bm{n}_c = 0 $ with the normal vector $\bm{n}_c$ of the cusp sheet. From the average alignment $\bm{n}_c \approx \bm{v}_1$, it is seen that the $A_4$ condition is suppressed in the $A_3$ centre point.

At this point, we briefly refer to the discussion of \cref{app:caustic_skeleton-ZA}. Here, it is found that the naive cusp constraint, \cref{eq:cusp_constraint}, constitutes a single constraint equation and thus algebraically defines a two-dimensional object. With the two additional constraints of \cref{eq:kappa_derivs_vanish}, the $A_3$ centre condition now represents three constraint equations, which geometrically define a point-like structure, as is required for consistency of the preceding derivation.

The proposed wall centre constraint has two physical parameters: Firstly, the growing mode $b_c$ at which collapse is imposed, and secondly, the smoothing scale $\sigma$ corresponding to the length scale at which the caustic is measured. \Cref{subsubsec:sims-stochastic_geometry-N} will study the statistics of \cref{eq:A3_centre_constraint} in this two-dimensional parameter space $(\sigma, b_c)$.


\section{Cosmic walls from primordial field derivatives}
\label{sec:eigenframe}

We intend to impose the cosmic wall centre condition derived above into the primordial fields to run specialised simulations of cosmic wall formation. Unfortunately, there is no direct way to set up constraint fields obeying a particular configuration of the eigenvalue and eigenvector fields, and a further step of preparation is needed. We therefore now investigate the translation of the eigenfield conditions into constraints on the local geometry of the primordial displacement potential $\Psi$ by means of the \textit{eigenframe transformation}. 

\subsection{The eigenframe transformation}
\label{subsec:eigenframe}

 It is clear that the wall centre condition, \cref{eq:A3_centre_constraint}, expressed in terms of the eigenfields, constrains the local geometry of $\Psi(\bm{q})$ in the vicinity of the constraint point $\bm{q}$. By expanding $\Psi$ in a Taylor series, one may express the eigenfield configuration through the coefficients of this expansion. We define the \textit{Cartesian field derivatives} of $\Psi$ as
\begin{equation}
     t_{ij \ldots k}(\bm{q}) = \frac{\partial}{\partial x_i} \frac{\partial}{\partial x_j} \ldots  \frac{\partial}{\partial x_k} \Psi(\bm{q})
     \label{eq:field_derivatives}
\end{equation}
and aim at translating \cref{eq:A3_centre_constraint} into a condition on these variables $t_{ij \ldots k}$.

The key quantity for the cosmic wall condition is the scaled tidal tensor $\mathcal{H}\Psi$, which is given in terms of the second-order field derivatives as
\begin{equation}
    \mathcal{H}\Psi = [ \partial_i \partial_j \Psi ]_{ij} = \begin{pmatrix}
        t_{11} & t_{12} & t_{13} \\
        t_{12} & t_{22} & t_{23} \\
        t_{13} & t_{23} & t_{33} \\
    \end{pmatrix}
    \label{eq:Hphi_cartesian}
\end{equation}
and we have used the commutativity $\partial_x \partial_y =  \partial_y \partial_x$ etc.. The eigenvalues $\lambda_i$ are given by the roots of the characteristic cubic polynomial
\begin{equation}
    \begin{split}
    \chi(\lambda) &= -\lambda^3 + \tr(\mathcal{H} \Psi) \lambda^2 - \frac{1}{2} \left( \left( \tr \mathcal{H} \Psi \right)^2 - \tr ((\mathcal{H} \Psi)^2) \right) \lambda + \det \mathcal{H} \Psi \\
    & = \left( \lambda - \lambda_1 \right)\left( \lambda - \lambda_2 \right)\left( \lambda - \lambda_3 \right) \,,
    \end{split}
    \label{eq:characteristic_polynomial}
\end{equation}
where in the second line we factorise $\chi(\lambda)$ by the eigenvalue solutions $\lambda_{1,2,3}$. In the first line, the polynomial is expanded in terms of the principal invariants of $\mathcal{H}\Psi$. These are given by
\begin{equation}
    \begin{split}
    \tr(\mathcal{H} \Psi) &= t_{11}+t_{22}+t_{33} \\
    \frac{1}{2} \left( \left( \tr \mathcal{H} \Psi \right)^2 - \tr ((\mathcal{H} \Psi)^2) \right) &= t_{11}t_{22} + t_{11}t_{33} + t_{22}t_{33} - t_{12}^2 -  t_{13}^2 - t_{23}^2 \\
    \det \mathcal{H} \Psi &= t_{11}t_{22}t_{33} + 2 t_{12}t_{13}t_{23} - t_{11}t_{23}^2 - t_{22} t_{13}^2 - t_{33} t_{12}^2 \,.
    \end{split}
    \label{eq:invariants_cartesian}
\end{equation}
With the solutions $\lambda_i$ being the cubic roots of \cref{eq:characteristic_polynomial}, it is apparent that the eigenvalue fields $\lambda_i(\bm{q})$ are highly non-linear functions of the second-order field derivatives $t_{ij}(\bm{q})$. The associated eigenvectors $\bm{v}_i(\bm{q})$ are obtained from the $\lambda_i(\bm{q})$. The proposed wall condition, \cref{eq:A3_centre_constraint}, can be directly evaluated in terms of these eigenfields and their derivatives. However, the nonlinearity of \cref{eq:characteristic_polynomial} makes it difficult to relate these constraints to the local geometry of $\Psi$ expressed in a Taylor series of the Cartesian field derivatives $t_{ij \ldots k}$. Moreover, the ordering $\lambda_1 \geq \lambda_2\geq \lambda_3$ of the solutions $\lambda_i$ is not manifestly apparent, but needs to be manually imposed. 

When calculating properties of the eigenvalue fields $\lambda_i$, it is therefore convenient the rotate the coordinates into a frame in which the tidal tensor $\mathcal{H}\Psi$ is diagonal. The rotation of the second-order derivatives was first used in Doroshkevich (1970) \cite{Doroshkevich1970} for the derivation of the famous \textit{Doroshkevich formula}  \cite{Doroshkevich1970, Bardeen+1985}, which we will discuss in \cref{subsec:recipe-grf_theory}. In Feldbrugge \& van de Weygaert (2023) \cite{FeldbruggeWeygaert2023}, the rotation was generalised to the \textit{eigenframe transformation} of the eigenvalue and eigenvector fields and their derivatives in the two-dimensional case. The results were further used in 2D cosmic web studies in \cite{FeldbruggeYanWeygaert2023, FeldbruggeWeygaert2024}. Following up on these works, we now derive the eigenframe transformation for the three-dimensional case.

Geometrically, the rotation of the coordinate system is achieved through application of a standard 3D rotation matrix $R(\alpha, \beta, \gamma)$ with Euler angles $(\alpha, \beta, \gamma )$; see \cref{app:derivs_rotation}. The Euler angles of the spatial rotation are chosen such that $\mathcal{H}\Psi$ transforms with the usual matrix transformation law (cf. \cref{app:derivs_rotation}) into the diagonal form
\begin{equation}
    \begin{pmatrix}
        \lambda_1 & 0 & 0 \\
        0 & \lambda_2 & 0  \\
        0 & 0 & \lambda_3
    \end{pmatrix} = R(\alpha, \beta, \gamma) ^{\mathrm{T}} \begin{pmatrix}
        t_{11} & t_{12} & t_{13} \\
        t_{21} & t_{22} & t_{23} \\
        t_{31} & t_{32} & t_{33} \\
    \end{pmatrix} R(\alpha, \beta, \gamma) \,.
    \label{eq:eigendecomposition}
\end{equation}
In the rotated frame, the eigenvectors coincide with the coordinate axes,
\begin{equation}
    \bm{v}_1 =
    \begin{pmatrix}
        1 \\ 0 \\ 0
    \end{pmatrix} \qquad
    \bm{v}_2 =
    \begin{pmatrix}
        0 \\ 1 \\ 0
    \end{pmatrix} \qquad
    \bm{v}_3 =
    \begin{pmatrix}
        0 \\ 0 \\ 1
    \end{pmatrix} \,.
    \label{eq:eigenframe_eigenvectors}
\end{equation}
In the diagonal tidal tensor, the diagonal entries correspond to the eigenvalues $\lambda_i$, while the off-diagonal terms vanish. Expressing the rotated second-order derivatives as $T_{ij}$, we define the \textit{eigenframe} \cite{FeldbruggeWeygaert2023} to be given by the second-order derivatives obeying the \textit{eigenframe condition} 
\begin{equation}
    (T_{11},\, T_{12}, \,T_{13},\, T_{22},\, T_{23},\, T_{33}) = (\lambda_1,\, 0,\, 0,\, \lambda_2,\, 0,\, \lambda_3)
    \label{eq:eigenframe_condition}
\end{equation}
Here, the explicit ordering $\lambda_1 \geq \lambda_2 \geq \lambda_3$ is chosen for convenience of the following discussion.

From \cref{eq:eigenframe_condition} and \cref{eq:eigenframe_eigenvectors}, it is evident that the eigenvalue and eigenvector fields are linear functions in the eigenframe variables $T_{ij}$. However, the caustic conditions of \cref{subsec:theory-caustic_skeleton} and \cref{sec:A3_centre_constraint} generally involve derivatives of the eigenfields, and care must be taken when evaluating these in the eigenframe. As the eigenvalues are expressed in terms of the second-order derivatives $T_{ij}$, derivatives of the eigenfields correspond to higher-order field derivatives, which can be considered a generalisation of the eigenvalues. In analogy to the Cartesian field derivatives, \cref{eq:field_derivatives}, we denote the \textit{eigenframe derivatives} as $T_{ij \ldots k}$ (with capital $T$).

The non-linear relations between the Cartesian derivatives $t_{ij\ldots k}$ and the eigenframe derivatives $T_{ij\ldots k}$ are derived from the non-linear equalities under the defining rotation, \cref{eq:eigendecomposition}. In principle, the relations may be directly derived from the trigonometric expressions of the rotation of the higher-order derivatives. However, \cite{FeldbruggeYanWeygaert2023} finds that this geometric approach is not feasible in practice, as the resulting system of equations cannot be solved in reasonable computation time. Instead,  \cite{FeldbruggeYanWeygaert2023} proposes an algebraic approach to the construction of the \textit{eigenframe solution}, and discusses the details for the 2D case. We now repeat the analysis for the 3D case and construct the eigenframe solution relevant to this article. As the proposed wall centre constraint, \cref{eq:A3_centre_constraint}, depends on up to second-order derivatives $\partial_i \partial_j \lambda_i$ of the eigenvalue fields, we shall investigate the eigenframe solution in second- up to fourth-order variables $T_{ijkl}$. We show in \cref{sec:sims} that these are the necessary derivatives to infer the statistics of the Zel'dovich pancake formation time, as well as its orientation in constraint simulation.

\begin{framed}
To derive the transformation $t_{ij \ldots k} \rightarrow T_{ij\ldots k}$ under the diagonalisation  \cref{eq:eigendecomposition}, one starts by considering the invariants of $\mathcal{H} \Psi$ in the eigenframe :
\begin{equation}
    \begin{split}
    \tr(\mathcal{H} \Psi) &= \lambda_{1}+\lambda_{2}+\lambda_{3}\\
    \frac{1}{2} \left( \left( \tr \mathcal{H} \Psi \right)^2 - \tr ((\mathcal{H} \Psi)^2) \right) &= \lambda_1 \lambda_2 + \lambda_1 \lambda_3 + \lambda_2 \lambda_3 \\
    \det \mathcal{H} \Psi &= \lambda_1 \lambda_2 \lambda_3 \,.\\
    \end{split}
    \label{eq:invariants_eigenframe}
\end{equation}
By rotational invariance, the quantities of \cref{eq:invariants_eigenframe} are equal to those of \cref{eq:invariants_cartesian} expressed in the Cartesian frame. That is,
\begin{align}
    \lambda_{1}+\lambda_{2}+\lambda_{3} &= t_{11}+t_{22}+t_{33} \label{eq:equality_1} \\
    \lambda_1 \lambda_2 + \lambda_1 \lambda_3 + \lambda_2 \lambda_3 &= t_{11}t_{22} + t_{11}t_{33} + t_{22}t_{33} - t_{12}^2 -  t_{13}^2 - t_{23}^2 \label{eq:equality_2} \\
    \lambda_1 \lambda_2 \lambda_3 &= t_{11}t_{22}t_{33} + 2 t_{12}t_{13}t_{23} - t_{11}t_{23}^2 - t_{22} t_{13}^2 - t_{33} t_{12}^2\,.
    \label{eq:equality_3}
\end{align}
The eigenframe solution for the eigenvalues is obtained by applying the set of first through third-order differential operators
\begin{equation}
    \{ \partial_x, \partial_y, \partial_z, \partial_x^2, \partial_{x}\partial_y, \partial_y^2, \ldots ,\partial_{z}^3\}
\end{equation}
on the left and right hand side of \cref{eq:equality_1}, \cref{eq:equality_2} and \cref{eq:equality_3} respectively to obtain the system of equations
\begin{equation}
    \begin{split}
        \partial_x \left(\lambda_{1}+\lambda_{2}+\lambda_{3} \right) &= \partial_x \left(  t_{11}+t_{22}+t_{33} \right) \\
        \partial_x \left(\lambda_1 \lambda_2 + \lambda_1 \lambda_3 + \lambda_2 \lambda_3  \right) &= \partial_x \left(  t_{11}t_{22} + t_{11}t_{33} + t_{22}t_{33} - t_{12}^2 -  t_{13}^2 - t_{23}^2 \right) \\
        \partial_x \left(\lambda_1 \lambda_2 \lambda_3 \right) &= \partial_x \left(  t_{11}t_{22}t_{33} + 2 t_{12}t_{13}t_{23} - t_{11}t_{23}^2 - t_{22} t_{13}^2 - t_{33} t_{12}^2 \right) \\
        \partial_y \left(\lambda_{1}+\lambda_{2}+\lambda_{3} \right) &= \partial_y \left(  t_{11}+t_{22}+t_{33} \right) \\
        \ldots \\
        \partial_z^3 \left( \lambda_1 \lambda_2 \lambda_3 \right) &=  \partial_z^3 \left( t_{11}t_{22}t_{33} + 2 t_{12}t_{13}t_{23} - t_{11}t_{23}^2 - t_{22} t_{13}^2 - t_{33} t_{12}^2 \right) \,.
    \end{split}
    \label{eq:eigenframe_system_1}
\end{equation}
We now substitute the eigenframe conditions, \cref{eq:eigenframe_condition}, and solve for the unique solution of this system of equations. This solution gives derivatives of the eigenvalues, $\partial_i \ldots \partial_j \lambda_k$, in terms of the eigenframe variables $T_{ij \ldots k}$. Noting the commutativity $\partial_i \partial_j \ldots \partial_k = \partial_j \partial_i \ldots \partial_k$, we express the solution in the set of independent derivatives $T_{ij \ldots k}$ only.

As stressed in the preceding sections, the caustic conditions depend not only on the eigenvalue fields but also on the eigenvector fields. To express derivatives of the latter in the eigenframe variables, we now derive their corresponding solution by relating the eigenvectors to the eigenvalues using the characteristic equation \begin{equation}
    (\mathcal{H} \Psi) \bm{v}_i = \lambda_i \bm{v}_i \,.
    \label{eq:characteristic_equation}
\end{equation}
We apply the first and second-order derivatives  $\{\partial_x, \ldots, \partial_z^2\}$ to obtain the system
\begin{equation}
    \begin{split}
        \partial_x \left( (\mathcal{H} \Psi) \bm{v}_1  \right) &= \partial_x \left( \lambda_1 \bm{v}_1  \right) \\
        \partial_x \left( (\mathcal{H} \Psi) \bm{v}_2  \right) &= \partial_x \left( \lambda_2 \bm{v}_2  \right) \\ 
        \partial_x \left( (\mathcal{H} \Psi) \bm{v}_3  \right) &= \partial_x \left( \lambda_3 \bm{v}_3  \right) \\
        \ldots \\
        \partial_z^3 \left( (\mathcal{H} \Psi) \bm{v}_3  \right) &= \partial_z^3 \left( \lambda_3 \bm{v}_3  \right)
        \,,
    \end{split}
    \label{eq:eigenframe_system_2}
\end{equation}
where on the left hand side we express $\mathcal{H}\Psi$ in terms of the field derivatives. By solving this system of equations subject to \cref{eq:eigenframe_condition} and \cref{eq:eigenframe_eigenvectors}, we wish to obtain the expressions for the derivatives of the eigenvectors in the eigenframe variables.

It turns out, however, that the solution is not unique. The reason is that the eigenvectors may be arbitrarily scaled and will still fulfil the characteristic equation,  \cref{eq:characteristic_equation}. To obtain the unique solution, we therefore also consider the orthonormality conditions
\begin{equation}
    \bm{v}_i \cdot \bm{v}_j = \delta_{ij}
\end{equation}
and again apply the first through second-order derivatives $\{\partial_x, \ldots, \partial_z^2\}$ to obtain
\begin{equation}
    \begin{split}
        \partial_x \left( \bm{v}_1 \cdot \bm{v}_1\right) &= 0 \\
        \partial_x \left( \bm{v}_1 \cdot \bm{v}_2\right) &= 0 \\
        \ldots \\
        \partial_z^2 \left( \bm{v}_3 \cdot \bm{v}_3\right) &= 0 \,.
    \end{split}
    \label{eq:eigenframe_system_3}
\end{equation}
Finally, the eigenframe solution for the derivatives of the eigenvalues and eigenvectors is obtained by solving the full system of equations \cref{eq:eigenframe_system_1}, \cref{eq:eigenframe_system_2} and \cref{eq:eigenframe_system_3} subject to the eigenframe conditions \cref{eq:eigenframe_condition} and \cref{eq:eigenframe_eigenvectors}. 

The construction above gives the eigenframe solution up to the second-order derivatives of the eigenvalue fields, $\partial_i \partial_j \lambda_k$, or equivalently in up to the fourth-order field derivatives $T_{ijkl}$. The full list of relevant relations up to the chosen order is given in \cref{app:eigenframe}. This concludes the derivation of the eigenframe solution.
\end{framed}

Any differential equation on the eigenfields may be evaluated in terms of the field derivatives $T_{ij \ldots k}$.
Exemplarily, for the first eigenvalue $\lambda_1$, we find the following the $x$-derivatives:
\begin{equation*}
    \lambda_1 = T_{11} \quad \partial_x \lambda_1 = T_{111} \quad \partial_x^2 \lambda_1 = T_{1111} + \frac{2 T_{112}^2}{T_{11}-T_{22}} + \frac{T_{113}^3}{T_{11}-T_{33}}
\end{equation*}
The first equation is the definition of the eigenframe variable $T_{11}$. In the second equation, the expression for the first derivative $\partial_x \lambda_1$ is the same as would be expected for the Cartesian variable. However, it would be fallacy to assume all expressions to behave as trivially, as the eigenframe expressions generally pick up non-linear corrections. This is seen here, for instance, in the second derivative $\partial_x^2 \lambda_1$ not being equal to $T_{1111}$. Care must therefore be taken when evaluating general functions of the eigenvalue and eigenvector fields in the diagonalised frame.

For the remainder of this paper, the caustic constraints and related expressions will be evaluated in the eigenframe variables (denoted by capital $T_{ij\ldots k})$. The solution listed in \cref{app:eigenframe} allows the reader to confirm the expressions used in the following sections.

\subsection{Cosmic wall centres in the eigenframe}
\label{subsec:cosmic_walls_eigenframe}

With the eigenframe solution derived in the preceding section, we can now translate the cosmic wall centre condition, \cref{eq:A3_centre_constraint}, from a configuration of the eigenvalue fields into a constraint on the local geometry of the primordial displacement potential $\Psi$ expressed in the field derivatives $T_{ij\ldots k}$. Using the solution listed in \cref{app:eigenframe}, the reader can confirm that of the extremal stretch condition in the cusp sheet is given by
\begin{equation}
    \begin{split}
    &T_{11} = b_c^{-1} \qquad T_{12}=0 \qquad  T_{13}=0\qquad  T_{23}=0\\
    &\qquad T_{111}=0 \qquad T_{222}+T_{233}=0  \qquad  T_{223} + T_{333}=0
    \end{split} \,
    \label{eq:A3_constraint_eigenframe_}
\end{equation}
with the ordering $T_{11}  \geq T_{22} \geq T_{33}$ of the eigenvalues. The second-order derivative matrix \cref{eq:l2_l3_stretch_M} of the stretch in the cusp sheet becomes
\begin{equation}
    M = \begin{pmatrix}
        \frac{2 T_{122}^2}{T_{22}-T_{11}}+\frac{2 T_{123}^2}{T_{33}-T_{11}}+T_{2222}+T_{2233} & \frac{2 T_{122} T_{123}}{T_{22}-T_{11}}+\frac{2 T_{123}T_{133}}{T_{33}-T_{11}}+T_{2223}+T_{2333} \\
        \frac{2 T_{122} T_{123}}{T_{22}-T_{11}}+\frac{2 T_{123}T_{133}}{T_{33}-T_{11}}+T_{2223}+T_{2333} &
        \frac{2 T_{123}^2}{T_{22}-T_{11}}+\frac{2 T_{133}^2}{T_{33}-T_{11}}+T_{2233}+T_{3333}
    \end{pmatrix}
    \label{eq:l2_l3_strech_M_eigenframe}
\end{equation}
and we again restrict the sampling to positive definite $M$ and negative second eigenvalue, $T_{22} < 0$. The cosmic wall centre constraint in the eigenframe therefore reads
\begin{framed}
\begin{equation}
    \begin{split}
    \mathcal{C} = \biggl( &T_{11} = b_c^{-1} \qquad T_{12}=0 \qquad T_{13}=0 \qquad T_{23}=0 \\
    &\qquad T_{111}=0 \qquad T_{222}+T_{233}=0 \qquad T_{223} + T_{333}=0\\
    & \qquad T_{22} < 0 \qquad M\, \mathrm{\, pos. \, def.} \biggr)
    \end{split}
    \label{eq:A3_constraint_eigenframe}
\end{equation}
\end{framed}

\section{A recipe for pancakes}
\label{sec:recipe}

In this section, we will explain in detail how \cref{eq:A3_constraint_eigenframe} can be implemented into random primordial conditions to set up constrained simulations of cosmic walls.

\subsection{Elements of Gaussian random field theory}
\label{subsec:recipe-grf_theory}

Observations of the cosmic microwave background (CMB) \cite{Planck2018} strongly suggest that the  primordial potential perturbation $\phi$ can be accurately described as a homogeneous and isotropic Gaussian random field (GRF). A GRF $f$ is completely characterised by its power spectrum $P(\bm{k})$, defined as
\begin{equation}
    \langle \tilde{f} (\bm{k}) \tilde{f}^* (\bm{k'})\rangle = (2\pi)^3 P(\bm{k}) \delta_D^{(3)}(\bm{k'}-\bm{k}) \,,
    \label{eq:power_spectrum}
\end{equation}
with the Fourier expansion $\hat{f}(\bm{k})$ of the field given by
\begin{equation}
    \hat{f}(\bm{k}) = \int \d \bm{x} f(\bm{x}) e^{i \bm{k} \cdot \bm{x}} \qquad f(\bm{x}) = \int \frac{\d \bm{k}}{(2\pi)^3} \hat{f}(\bm{k}) e^{-i \bm{k} \cdot \bm{x}} \,.
\end{equation}
In the homogeneous and isotropic case, the power spectrum depends only on the magnitude of $\bm{k}$, that is $P(\bm{k}) = P(k)$. Throughout this article, $P(k)$ will denote the power spectrum of the primordial displacement potential $\Psi \propto \phi$. We take $\phi$ to be described by a GRF with Eisenstein-Hu power spectrum \cite{EisensteinHu1998} in a spatially-flat $\Lambda$CDM cosmology with parameters
\begin{equation}
    \{\Omega_0,\, \Omega_b,\, h,\, \sigma_8 \}  = \{0.308,\, 0.0482,\, 0.672,\,  0.9\}\,
    \label{eq:cosmology}
\end{equation}
with the total matter and baryonic matter densities $\Omega_0$ and $\Omega_b$, the Hubble parameter $h$ and the normalisation constant $\sigma_8$ of the linear-theory power spectrum \cite{Peebles1994}. For details on the standard algorithm to generate random field realisations obeying a particular power spectrum, we refer to \cite{FeldbruggeWeygaert2023}.

As was discussed in \cref{subsec:theory-scale_space}, we wish to study the scale-space caustic skeleton by smoothing the real-space field $\Psi(\bm{q})$ on varying scales $\sigma$. Plugging \cref{eq:smoothing_real_space} into \cref{eq:power_spectrum}, one finds that the power spectrum of to the smoothed displacement potential $\Psi_{\sigma}(\bm{q})$ is given by
\begin{equation}
    P_{\sigma}(k) = P(k) e^{-\sigma^2 k^2 } \,.
    \label{eq:smoothed_power_spectrum}
\end{equation}
All the scale-space statistical information of $\Psi$ is encapsulated in the smoothed power spectrum $P_{\sigma}(k)$ over a range of cosmologically relevant scales $\sigma$.

In \cref{sec:eigenframe}, we discussed in detail how the cosmic wall centre condition may be translated into a constraint on the Taylor expansion of the field $\Psi$ in the vicinity of the constraint point $\bm{q}$, which we express in terms of the field derivatives
\begin{equation}
     t_{ij \ldots k}(\bm{q}) = \frac{\partial}{\partial x_i} \frac{\partial}{\partial x_j} \ldots  \frac{\partial}{\partial x_k} \Psi(\bm{q}) \,.
\end{equation}
To study the statistical properties of the field derivatives $\bm{t}(\bm{q})$ in the random primordial conditions, we now discuss the relevant results of Gaussian random field theory. As we are only interested in the derivatives at the constraint point $\bm{q}$, we omit the argument in the following and evaluate all properties at the same point.

The field derivatives $\bm{t}$ of the GRF $\Psi$ form a multivariate normal distribution of zero mean. The covariance matrix of $\bm{t}$ can be readily evaluated from the field's power spectrum $P(k)$. To this end, it is useful to define the generalised moments
\begin{equation}
    \begin{split}
    \sigma_i^2 
    &= \frac{1}{(2\pi)^3} \int \d \bm{k} \|\bm{k}\|^{2i} P_{\sigma}(\bm{k})\\ 
    &= \frac{1}{2\pi^2} \int \d k k^{2i +2} P_{\sigma}(k) \,,
    \end{split}
    \label{eq:generalised_moments}
\end{equation}
where the field's isotropy was used in the second line to perform the integration in spherical coordinates. The generalised moments here are evaluated on a smoothed power spectrum $P_{\sigma}(k)$ with some scale $\sigma$, so that the scale-space statistical information of $\Psi$ is absorbed into these generalised moments. However, we will omit an explicit $\sigma$-subscript in the variables $\sigma_i$ for the sake of clarity. 

The covariance between two field derivatives is calculated in Fourier space by expanding $\Psi(\bm{q})$ in the Fourier components $\tilde{\Psi}(\bm{k})$ and evaluating the defining expectation value. Exemplarily, the covariance between the field and its second-order $x$-derivative is given by
\begin{equation}
    \begin{split}
    \left< \partial_x^2 \Psi(\bm{q}) \Psi^*(\bm{q})  \right>
    &= \left< \frac{1}{(2\pi)^6} \int \d \bm{k}\partial_x^2 \left( \tilde{\Psi}(\bm{k}) e^{-i \bm{k} \cdot \bm{q}} \right) \int \d \bm{k^{\prime}} \tilde{\Psi}(\bm{k^{\prime}}) e^{i \bm{k^{\prime}} \cdot \bm{q}} \right> \\
    &= \frac{1}{(2\pi)^3} \int \d \bm{k} (- i k \cos \theta )^2  \left< \tilde{\Psi}(\bm{k})  \tilde{\Psi}^{*}(\bm{k^{\prime}}) \right>
    \delta_D^{(3)}( \bm{k^{\prime}} -\bm{k}) \\
    &= \frac{1}{(2\pi)^3} \iiint \d \phi \, \d \theta \, \d k \cos^2\theta \sin^2\theta \cos^2\phi k^4 P_{\sigma}(k) \\
    &= -\frac{\sigma_1^2}{3}
    \end{split}
\end{equation}
and we used \cref{eq:generalised_moments} for the definition of $\sigma_1$ in the last line.

Generalising this calculation, one finds that the covariance between two arbitrary field derivatives at $\bm{q}$ is given by
\begin{equation}
    \begin{split}
        &\left< \partial_x^{(l_1)} \partial_y^{(m_1)}  \partial_z^{(n_1)} f(\bm{q}) \partial_x^{(l_2)} \partial_y^{(m_2)}  \partial_z^{(n_2)}  f^*(\bm{q})  \right> = \\
        &\quad \frac{1}{2}(-1)^{l_1 + m_1 + n_1} i^{l_1 + m_1 + n_1 + l_2 + m_2 + n_2} \sigma_{(l_1 + m_1 + n_1 + l_2 + m_2 + n_2)/2}^2\times
        \\
        &\quad 
        \int_{0}^{2\pi} \d \phi \int_{0}^{\pi} \d \theta (\sin \theta)^{1+l_1 + m_1 + l_2 + m_2} (\cos \phi)^{l_1 + l_2} (\sin \phi)^{m_1 + m_2}  (\cos \theta)^{n_1 + n_2}  \,.
    \end{split}
    \label{eq:deriv_covariance}
\end{equation}

Note that an immediate consequence of \cref{eq:deriv_covariance} is that all covariances with any odd pairing $l_1 +l_2$, $m_1 +m_2$ or $n_1 +n_2$ at the same point $\bm{q}$ identically vanish.

In \cref{app:eigenframe}, it was discussed that the cosmic wall centre constraints lives in the space of the second up to fourth-order field derivatives. We denote and order these as
\begin{equation}
    \begin{split}
    \bm{t}^{(2)} &= \{t_{11}, t_{12}, t_{13}, t_{22}, t_{23}, t_{33} \} \\
    \bm{t}^{(3)} &= \{t_{111}, t_{112}, t_{113}, t_{122}, t_{123}, t_{133}, t_{222}, t_{223}, t_{233}, t_{333} \} \\
    \bm{t}^{(4)} &= \{ t_{1111}, t_{1112}, t_{1113}, t_{1122}, \ldots t_{3333} \}
    \end{split}
\end{equation}
with the full statistic being the second, third and fourth-order derivatives
\begin{equation}
    \bm{t} = \{t^{(2)}, t^{(3)}, t^{(4)}\}=\{ t_{11}, t_{12}, \ldots, t_{3333} \} \,.
    \label{eq:t_statistic}
\end{equation}
$\bm{t}$ is a $31$-dimensional random vector that is distributed with a multivariate normal distribution of zero mean with the covariance matrix $\Sigma$ having the block structure
\begin{equation}
    \Sigma =  \Sigma^{(2,3,4)} =  \begin{pmatrix}
        \Sigma_{22} & 0 & \Sigma_{24} \\
        0 & \Sigma_{33} & 0 \\
        \Sigma_{24}^{\mathrm{T}} & 0 & \Sigma_{44}
    \end{pmatrix} \,.
    \label{eq:cov_2t4}
\end{equation}
The probability density is given by the multivariate normal distribution
\begin{equation}
    p(\bm{t}) \d \bm{t} = \frac{1}{\sqrt{(2\pi)^{31} |\det \Sigma|}} \exp \left( - \frac{1}{2} \bm{t}^{\mathrm{T}} \Sigma \bm{t} \right)  \d \bm{t} 
    \label{eq:distribution_Cartesian}
\end{equation}
with the covariance matrix $\Sigma$ given by \cref{eq:cov_2t4}.

Having discussed the statistics of the field derivatives $t_{ij\ldots k}$, it is now instructive to investigate the distribution of the second-order derivatives $t_{ij}$ and the corresponding eigenvalue fields $\lambda_i$. The latter is given by the famous \textit{Doroshkevich formula} \cite{Doroshkevich1970}, which we shall now rederive in brevity; for more details, we refer the reader to Bardeen et al. (1986) \cite{Bardeen+1985}. The probability density function of the second-order derivatives $t_{ij}$ is given by
\begin{equation}
    p(\bm{t}^{(2)}) = \frac{1}{(2\pi)^{6/2} (\det \Sigma_{22})^{1/2}} \exp \left(-\frac{1}{2} (\bm{t}^{(2)})^{\mathrm{T}} \Sigma_{22} \bm{t}^{(2)} \right)
    \label{eq:distribution_derivs_t2}
\end{equation}
To derive the distribution of the corresponding eigenvalues, as in \cref{sec:eigenframe}, one starts again with the eigendecomposition, \cref{eq:eigendecomposition}. This transformation corresponds to a change of variables $\{t_{11}, t_{12},t_{13},t_{22},t_{23},t_{33}\} \rightarrow (\lambda_1, \lambda_2, \lambda_3, \alpha, \beta, \gamma)$ with the Euler angles $(\alpha, \beta, \gamma)$. The integration measure changes to \cite{Bardeen+1985}
\begin{equation}
    \prod_{i, j \geq i} \d t_{ij} = |\lambda_1 - \lambda_2||\lambda_1 - \lambda_3||\lambda_2 - \lambda_3|\d \lambda_1 \d \lambda_2 \d \lambda_3 \d \alpha \wedge \d \beta \wedge \d \gamma \,,
\end{equation}
Here, the terms $|\lambda_i -\lambda_j|$ arise as the Jacobian determinant of the rotation by the Euler matrix. The probability measure in the eigenvalues is obtained by integrating out the Euler angles $(\alpha, \beta, \gamma)$ to account for all possible orientations of the diagonalised frame. Appendix B of \cite{Bardeen+1985} discusses in detail that this integration yields the volume $2\pi^2$ of the unit three-sphere. The Jacobian of the eigendecomposition \cref{eq:eigendecomposition} is therefore given by
\begin{equation}
    J = 2 \pi^2 |\lambda_1 - \lambda_2||\lambda_1 - \lambda_3||\lambda_2 - \lambda_3|
    \label{eq:Jacobian}
\end{equation}
The entries of the rotated deformation tensor, $t_{11} = \lambda_1, t_{12} = 0$ etc., are then inserted into the exponent of \cref{eq:distribution_derivs_t2}, which becomes
\begin{equation}
    - \frac{1}{2} (\bm{t}^{(2)})^{\mathrm{T}} \Sigma_{22} \bm{t}^{(2)} =
    - \frac{1}{10 \sigma_2^2}\left( 3\lambda_1^2+3\lambda_2^2+3\lambda_3^2 + 2 \lambda_1 \lambda_2  + 2 \lambda_1 \lambda_3  + 2 \lambda_2 \lambda_3   \right)
\end{equation}
Finally, the Doroshkevich formula for the distribution of the eigenvalue fields is given by \cite{Doroshkevich1970, Bardeen+1985}
\begin{equation}
    \begin{split}
    p(\lambda_i) \d \lambda_i &= \frac{1}{(2\pi)^{6/2} (\det \Sigma_{22})^{1/2}}  2 \pi^2 |\lambda_1 - \lambda_2||\lambda_1 - \lambda_3||\lambda_2 - \lambda_3| \\
    &\qquad \times \exp \left(  - \frac{1}{10 \sigma_2^2}\left( 3\lambda_1^2+3\lambda_2^2+3\lambda_3^2 + 2 \lambda_1 \lambda_2  + 2 \lambda_1 \lambda_3  + 2 \lambda_2 \lambda_3   \right) \right) \d \lambda_i
    \end{split}
    \label{eq:Doroshkevich_formula}
\end{equation}
As expected, the distribution of the eigenvalues is non-Gaussian. This is consistent with the discussion of \cref{sec:theory}, where \cref{fig:eigenvalue_visualisation} revealed how the non-linear nature of the eigenvalue fields as the cubic roots of $\mathcal{H}\Psi$ manifests itself in a filamentary pattern, thus giving rise to the progenitors of the multistream regions that delineate the cosmic web.

The eigenframe transformation of \cref{sec:eigenframe} generalises the rotation \cref{eq:eigendecomposition} to the higher-order field derivatives. In addition the second-order field derivatives, we now to study the third and fourth-order eigenframe derivatives $T_{ij\ldots k}$ and derive the distribution of the 31-dimensional statistic
\begin{equation}
    \bm{T} = \{ T_{11}, T_{12}, \ldots, T_{3333} \}
    \label{eq:T_statistic}
\end{equation}
Under the rotation of \cref{eq:eigendecomposition} with the Euler angles $(\alpha, \beta, \gamma)$, the higher derivatives transform non-trivially with the generalised rotation matrices presented in \cref{app:derivs_rotation}. From these matrices, one can evaluate the Jacobian of the higher-derivatives rotation, with both $t_{ijk} \rightarrow T_{ijk}$ and $t_{ijkl} \rightarrow T_{ijkl}$ and contributing a unit factor.
The conceptual difference of the higher-deriatives rotation to \cref{eq:eigendecomposition} is that the eigenframe condition, \cref{eq:eigenframe_condition}, constrains the off-diagonal variables $T_{ij}$ to vanish. This algebraic constraint necessitated the change of variables to the Euler angles $(\alpha, \beta, \gamma)$, and thus induces the Jacobian terms $|\lambda_i - \lambda_j|$ above. As there are no algebraic constraints on the higher derivatives, the transformation here simply becomes a rotation in a suitable representation, which can contribute but a factor of unity.

The full Jacobian of the eigenframe transformation is the therefore  same as \cref{eq:Jacobian} derived above for the Doroshkevich formula. Inserting this Jacobian and the eigenframe variables $T_{ij \ldots k}$ into \cref{eq:distribution_Cartesian}, the distribution of the second- through fourth-order derivatives $\bm{T}$ in the eigenframe is given by
\begin{equation}
    p(\bm{T}) \d \bm{T} = \frac{1}{\sqrt{(2\pi)^{31} |\det \Sigma|}}(T_{11}-T_{22})(T_{11}-T_{33})(T_{22}-T_{33}) \exp \left(-\frac{1}{2} \bm{T}^{\mathrm{T}} \Sigma \bm{T} \right) \d \bm{T}
    \label{eq:T_distribution}
\end{equation}
with the ordering $T_{11} \geq T_{22} \geq T_{33}$ and the covariance matrix \cref{eq:cov_2t4}.

\subsection{Constraint sampling — local geometry}
\label{subsec:recipe-sampling}

Having discussed the statistics of the field derivatives $T_{ij\ldots k}$, we discuss now the implementation of the $A_3$ centre constraint, \cref{eq:A3_constraint_eigenframe}, into a primordial displacement potential $\Psi$. To this end, we follow \cite{FeldbruggeWeygaert2023}, which prescribes a two-step procedure for the generation of constrained random initial conditions.

\begin{figure}
    \centering
    \includegraphics[width=0.45\columnwidth]{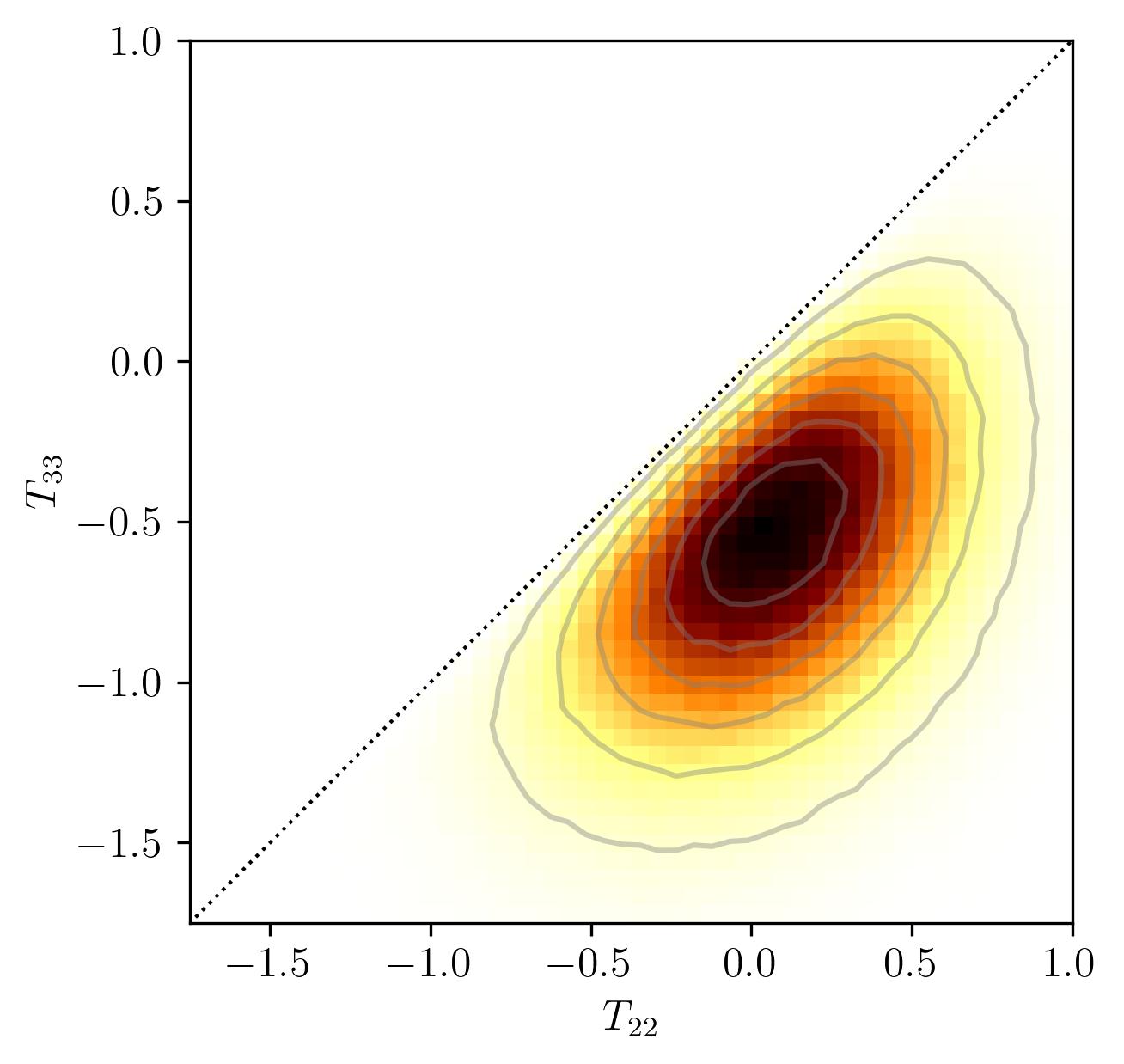}
    \caption{Histogram of the marginal distribution of the second and third eigenvalue field, $\lambda_2 = T_{22}$ and  $\lambda_3 = T_{33}$, evaluated over a sampled chain of wall centre constraint realisations, \cref{eq:A3_constraint_eigenframe}, for $(\sigma, b_c) = (2.0, \,0.8)$.}
    \label{fig:sample_histogram}
\end{figure}

The first step concerns the morphology of the wall from the field derivatives and defines the local geometry of $\Psi$ around the constraint point $\bm{q}$. The cosmic wall centre condition derived above gives a number of unconstrained variables $(T_{22}, T_{33}, T_{112}, \ldots, T_{3333})$ in the space of second- through fourth-order field derivatives. In the terminology of \cite{FeldbruggeWeygaert2023}, \cref{eq:A3_constraint_eigenframe} defines the constraint manifold $\mathcal{C}$, with the unconstrained variables corresponding to an infinite number of possible wall morphologies. A single \textit{constraint realisation} $\bm{C} \in \mathcal{C}$ is obtained by drawing a sample of the constraint statistic 
\begin{equation}
    \begin{split}
        \bm{Y} = (&b_c^{-1},\,0,\,0,\,T_{22},\,0,\,0, \\
        &0,\,T_{112},\, T_{113},\, T_{122},\, T_{123},\, T_{133},\, T_{222},\, T_{223},\, -T_{223},\, -T_{233}, \\
        &T_{1111},\, T_{1112},\,T_{1113},\,\ldots,\, T_{3333})
    \end{split}
    \label{eq:constrained_statistic}
\end{equation}
from the distribution given by \cref{eq:T_distribution}. Note that the parameters $(\sigma, b_c)$ are explicitly here through the fold constraint level $T_{11} = b_c^{-1}$ in \cref{eq:constrained_statistic} and the generalised moments $\sigma_i$ evaluated at $\sigma$, which enter the covariance matrix, \cref{eq:cov_2t4}. The sampling from the non-Gaussian distribution may be performed by any suitable method such as rejection sampling or Hamiltonian Monte Carlo (HMC) \cite{Duane+1987}. We choose the latter and obtain a chain of samples $\{\bm{C}_i\}$ by a Hamiltonian walk in the potential $ U(\bm{Y})$ given by the negative log-likelihood of the constraint distribution,
\begin{equation}
    \begin{split}
        U(\bm{Y}) &= - \ln \mathcal{L}(\bm{Y})\\
        &=\frac{1}{2}\bm{Y}^{\mathrm{T}} \Sigma \bm{Y} - \ln(T_{11}-T_{22}) - \ln(T_{11}-T_{33}) - \ln(T_{22}-T_{33})\\
        &=\frac{1}{2}\bm{Y}^{\mathrm{T}} \Sigma \bm{Y} - \ln(b_c^{-1}-T_{22}) - \ln(b_c^{-1}-T_{33}) - \ln(T_{22}-T_{33}) \,,
    \end{split}
    \label{eq:sampling_potential}
\end{equation}
where $\Sigma$ is given by \cref{eq:cov_2t4} evaluated for some smoothing scale $\sigma$ and we have omitted an irrelevant constant.
We use a stable and efficient implementation of the No-U-Turn-Sampling (NUTS) HMC algorithm \cite{HoffmanGelman2014} provided by the \verb|AdvancedHMC| library \cite{Xu+2020} in the \verb|julia| language. With this setup, a chain of order $10^5$ constraint realisations is obtained within less than a minute of computation time on a modern CPU.

Having obtained a chain of samples $\{\bm{C}_i\}$, the additional conditions $T_{22} < 0$ and $M$, \cref{eq:l2_l3_strech_M_eigenframe}, being positive definite are straightforwardly enforced by filtering out those samples in $\{\bm{C}_i\}$ that violate these conditions.

\subsection{Constraint implementation — global geometry}
\label{subsec:recipe-HR}

In the second step of the constraint implementation, one extends the local geometry of the wall to the global geometry of the simulation box. To this end, one constructs \textit{constrained field realisation} $\Psi_c$ by imposing a sampled constraint realisation $\bm{C}$ into a random realisation of the primordial displacement potential $\Psi$.

Having completed the sampling step of the previous section, the chosen sample $\bm{C}$ now defines the Taylor expansion of $\Psi$ around the constraint point $\bm{q}$. $\bm{C}$ therefore constitutes a linear constraint, assigning 31 numerical values to the second- through fourth-order field derivatives $\bm{t}$. We impose the constraint into $\Psi$ by means of the well-known Bertschinger-Hoffman-Ribak (BHR) algorithm \cite{Bertschinger1987, HoffmanRibak1991, WeygaertBertschinger1996}, which exploits the linearity of the constraint to construct the mean field $\bar{\Psi}$ satisfying  $\bm{C}$ and add to this a residual field $\delta \Psi$ fulfilling the required Gaussian random statistics \cite{Bertschinger1987}; for details, see also \cite{FeldbruggeWeygaert2023}.

We reiterate here that the initial conditions are set up in a two-step process. Catastrophe theory determines the conditions on the local potential underlying the wall formation. The random morphology of the wall progenitor is obtained by randomly sampling a constraint realisation obeying these conditions. The constraint is then extended to the simulation box by imposing it into a random primordial potential realisation. While we stress that this is a fully Lagrangian treatment, our procedure yields a realistic and random wall that is woven into a random global cosmic web under the evolution into Eulerian space.

The exemplary constraint simulation shown in \cref{fig:sim_256_A3_3D} demonstrates that the resulting constrained potential $\Psi_c$ forms the initial conditions for the formation of a cosmic wall, with the caustic skeleton evaluated on the smoothed field $\Psi_{c, \sigma}$ yielding a cusp sheet centred on the constraint point $\bm{q}$.

\subsection{Constraint orientation}
\label{subsec:recipe-orientation}

\begin{figure}
    \centering
    \includegraphics[width=0.45\columnwidth]{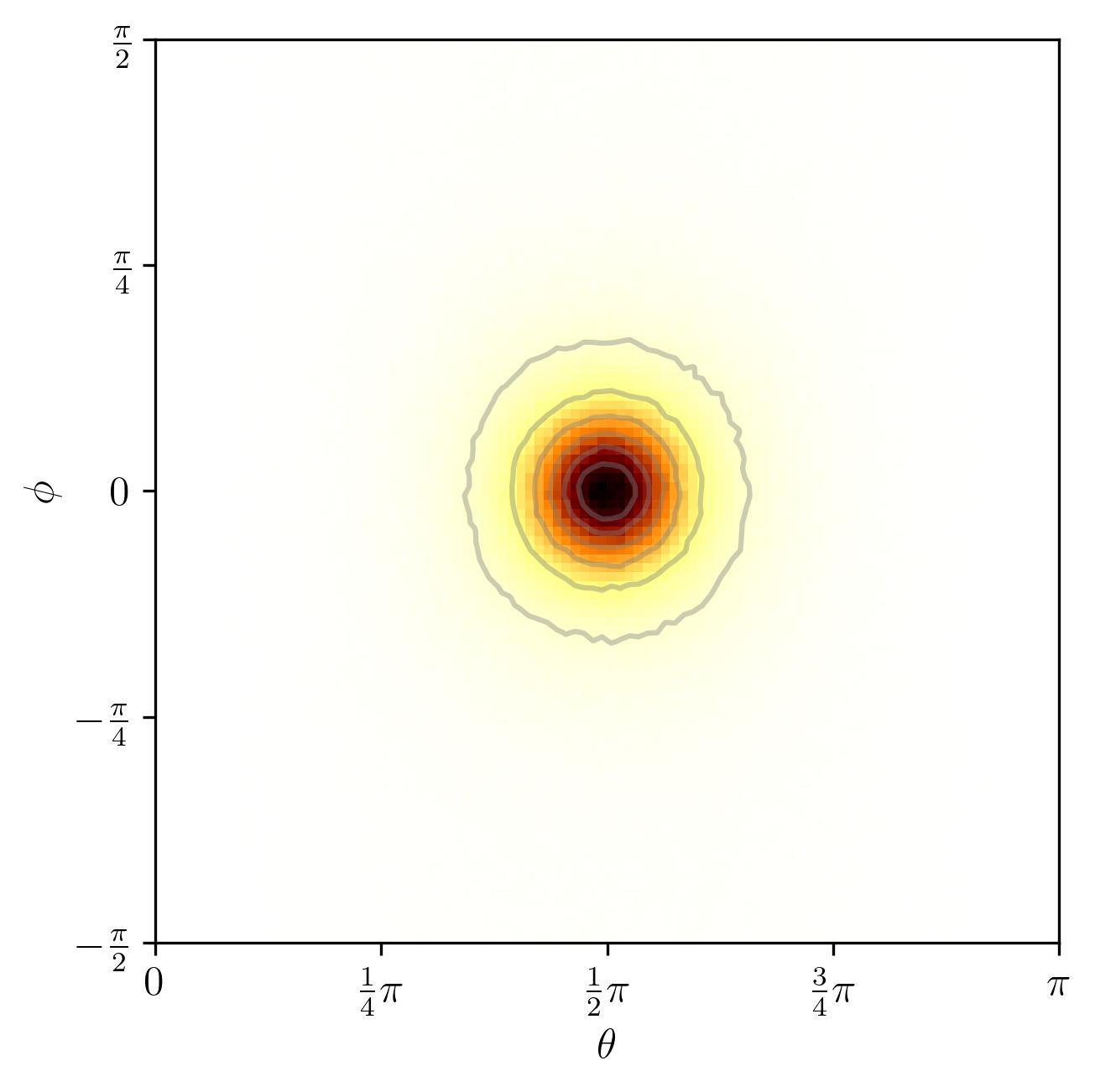}
    \caption{Histogram of the angular distribution of $\bm{n}_c$ from the wall centre constraint sampled for $(\sigma, b_c) = (2.0, \,0.8)$, with the usual convention for azimuthal and polar angles $\theta$ and $\phi$. The distribution is sharply centered about the mean orientation along the $x$-axis.}
    \label{fig:nc_histogram}
\end{figure}

The results of the preceding sections can be directly applied to run specialised $N$-body simulations of cosmic wall formation. To reliably infer mean fields through field stackings, however, it is essential that the simulated cosmic walls be accurately oriented. Conveniently, the constraint algebra enables a rigorous procedure for the orientation of the emerging walls, first in Lagrangian space and subsequently in Eulerian space, as we shall now explain.

\paragraph*{Lagrangian orientation} In Lagrangian space, the orientation of a constraint realisation $\bm{T}$ is obtained from the algebra of the field derivatives in the eigenframe by evaluating the gradients of the relevant caustic conditions. Concretely, the normal vectors $\bm{n}_c$, $\bm{n}_f$ to the cusp sheet and the fold line at the instant $b_c$ of shell-crossing are given by
\begin{align}
    \bm{n}_c &= \nabla \left(\bm{v}_1 \cdot \nabla \lambda_1\right) = \begin{pmatrix}
        \frac{3 T_{112}^2}{T_{11}-T_{22}}+\frac{3
   T_{113}^2}{T_{11}-T_{33}}+T_{1111} \\
        \frac{3 T_{112} T_{122}}{T_{11}-T_{22}}+\frac{3
   T_{113} T_{123}}{T_{11}-T_{33}}+T_{1112} \\
        \frac{3 T_{112}
   T_{123}}{T_{11}-T_{22}}+\frac{3 T_{113} T_{133}}{T_{11}-T_{33}}+T_{1113}
    \end{pmatrix}
    \label{eq:n_cusp} \\
    \bm{n}_f &= \nabla \lambda_1 = (T_{111}, T_{112}, T_{113})^{\rm{T}} \,.
    \label{eq:n_fold}
\end{align}
The vector $\bm{n}_c$ fixes the normal orientation of the cusp sheet, while  $\bm{n}_f$ defines the direction of shell-crossing at $b_c$. By specifying the direction of growth of the cusp sheet, $\bm{n}_f$ breaks the rotational symmetry of the cusp constraint in the wall's tangent plane. Both $\bm{n}_c$ and $\bm{n}_f$ are evaluated on the sampled eigenframe variables $\bm{C}$, and we take the vectors to be normalised in the following. From the eigenframe definition, \cref{eq:eigenframe_condition}, and the imposed constraint, \cref{eq:A3_constraint_eigenframe}, $\bm{n}_c$ is expected to be distributed around the mean alignment with the $x$-axis, $\bm{n}_c \approx \bm{\hat{x}}$. \Cref{fig:nc_histogram} confirms this claim from a chain of sampled constraint realisations.

From the condition $T_{111} = 0$ in \cref{eq:A3_constraint_eigenframe}, one finds that $\bm{n}_f$ is of the general form $\bm{n}_f = (0, u, v)$. While $\bm{n}_f$ is by construction normal to the fold line, it is therefore not in general orthogonal to $\bm{n}_c$, of which no components are identically vanishing. To obtain an orthonormal basis system, we therefore introduce a tangent vector to the cusp sheet, $\bm{t}_c$, by projecting $\bm{n}_f$ in the subspace normal to $\bm{n}_c$, i.e.
\begin{equation}
    \bm{t}_c \equiv \bm{n}_c - \text{proj}_{\bm{n}_c} \bm{n}_f \,.
\end{equation}
As this vector is by construction tangent to the cusp sheet, we take it to be one of of the two tangent vectors spanning the Lagrangian wall plane. The basis of orientation vectors is completed by introducing a second tangent vector $\bm{t}_f$ given by
\begin{equation}
    \bm{t}_f \equiv \bm{n}_c \times \bm{t}_c \,.
\end{equation}
From the construction of the normal and tangent vectors, it is easily seen that  $\bm{t}_f$ is the tangent vector to the fold line, with $\{\bm{t}_c, \bm{t}_f\} = \text{span}(A_3)$ spanning the cusp sheet.

\begin{figure}
    \centering
    \begin{subfigure}[b]{0.35\textwidth}
        \includegraphics[width=\textwidth]{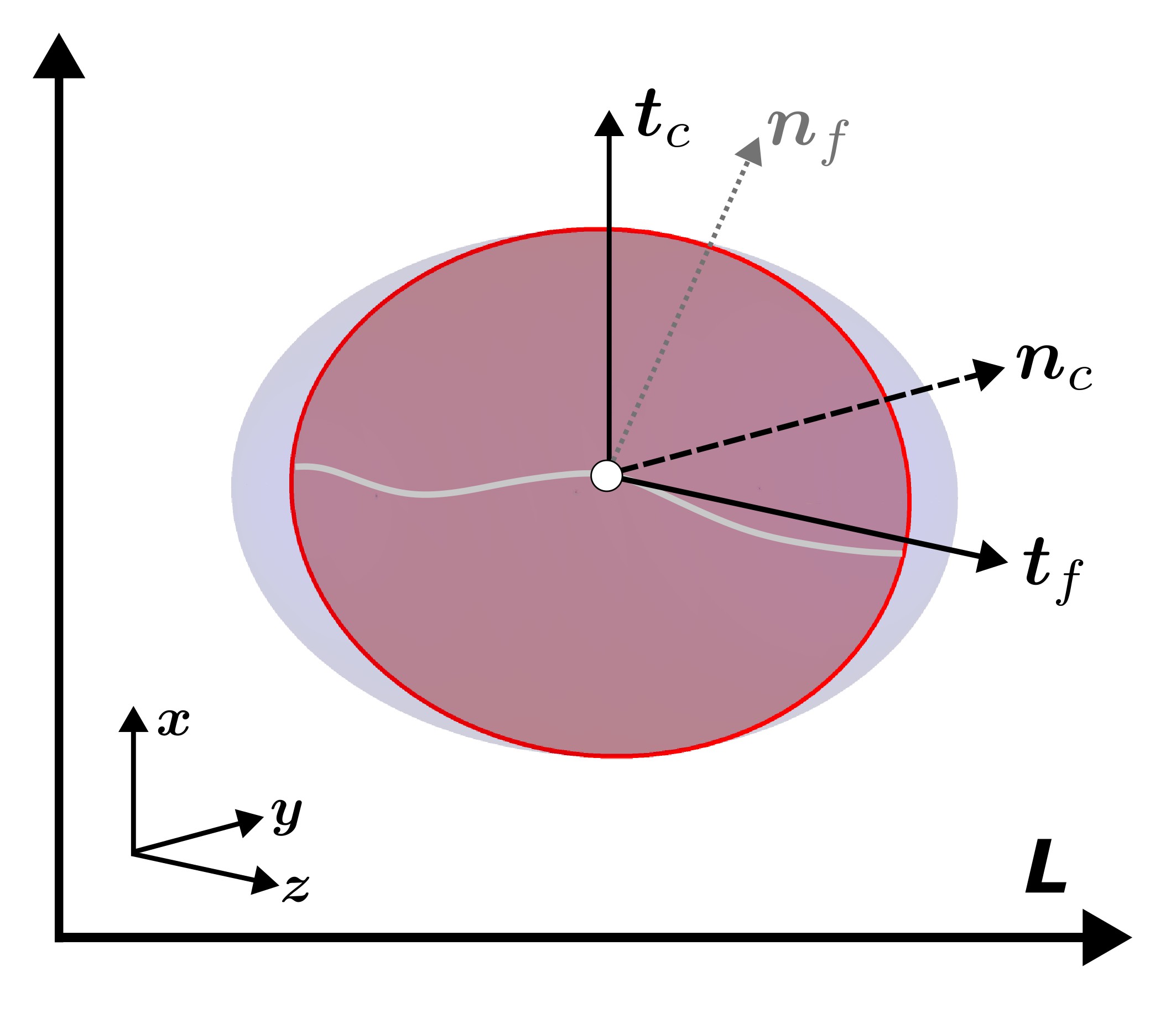}
        \caption{Lagrangian space}
    \end{subfigure}
    \hspace{0.1\textwidth}
    \begin{subfigure}[b]{0.35\textwidth}
    \includegraphics[width=\textwidth]{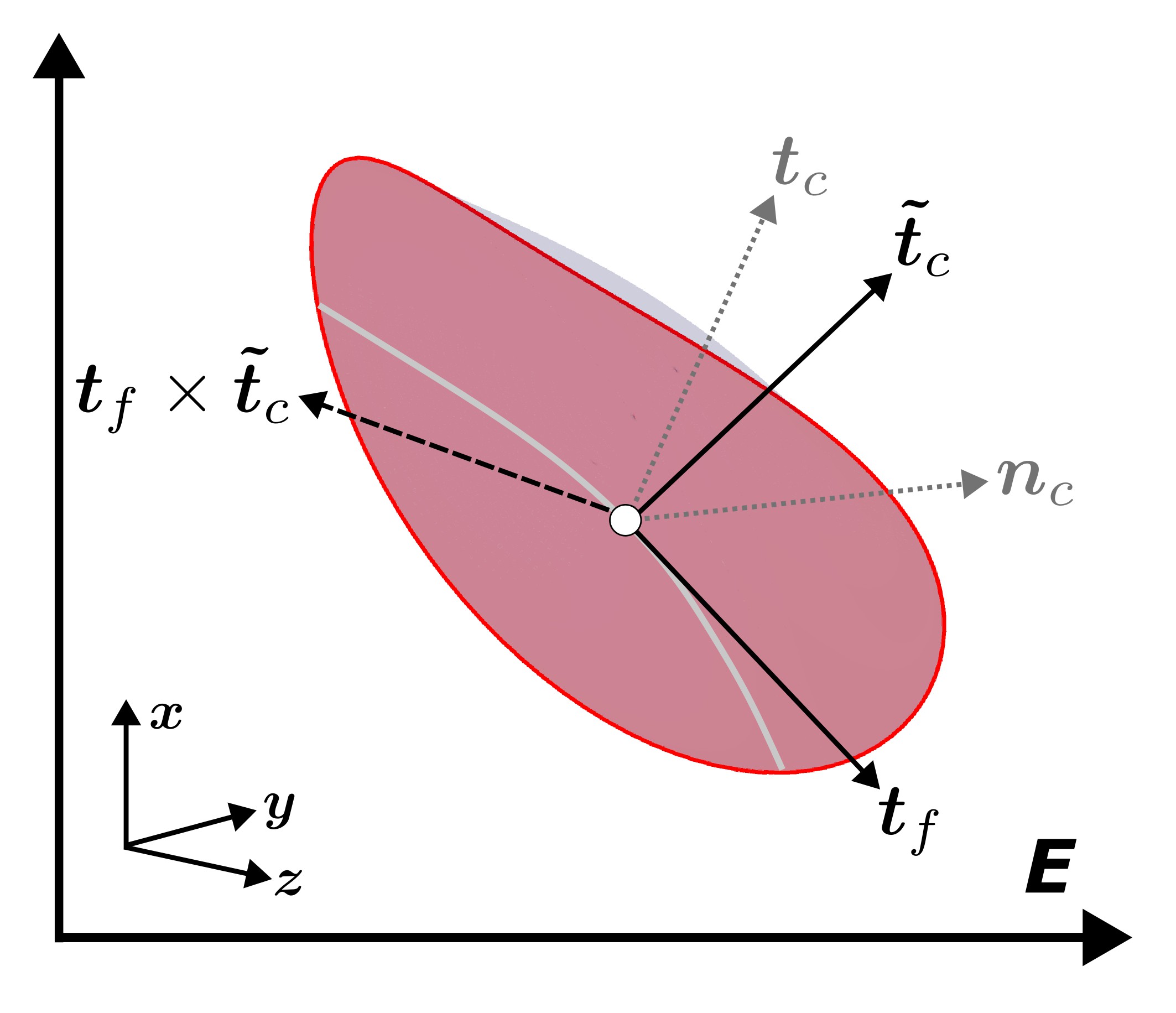}
        \caption{Eulerian space}
    \end{subfigure}
    \caption{Sketch of the cusp sheet orientation in Lagrangian space (left panel) and after evolution into Eulerian space (right panel).}
    \label{fig:orientation_sketches}
\end{figure}

The right-handed set of vectors
$\{\bm{n}_c, \bm{t}_c, \bm{t}_f\}$ is evaluated on the sampled constraint realisation $\bm{C}$. We then use the algorithm of presented in \cref{app:euler_angles} to calculate the Euler angles $(\alpha, \beta, \gamma)$ that rotate the system $\{\bm{n}_c, \bm{t}_c, \bm{t}_f\}$ into a desired orientation, which we choose to be the Cartesian basis $\{\bm{\hat{x}}, \bm{\hat{y}}, \bm{\hat{z}}\} = \{(1,0,0)^{\mathrm{T}}, (0,1,0)^{\mathrm{T}}, (0,0,1)^{\mathrm{T}}\}$. Under this rotation of the coordinate system, the field derivatives $T_{ij\ldots k}$ transform with the generalized rotation matrices presented in \cref{app:derivs_rotation}; see also \cite{FeldbruggeWeygaert2023}. The constraint $\bm{C}$ is therefore oriented by rotating it with the transformation matrix $R^{(2,3,4)}(\alpha, \beta, \gamma)$ for the second- through fourth-order derivatives,
\begin{equation}
     \bm{\tilde{C}} = R^{(2,3,4)}(\alpha, \beta, \gamma) \bm{C}
     \label{eq:Lagrangian_c_rotation} \,.
\end{equation}
Upon imposing the rotated constraint $\bm{\tilde{C}}$ into $\Psi$, the resulting constrained potential $\Psi_c$ yields a cusp sheet with the desired orientation  $\{\bm{n}_c, \bm{t}_c, \bm{t}_f\} = \{\bm{\hat{x}}, \bm{\hat{y}}, \bm{\hat{z}}\}$ centred on $\bm{q}$.

For completeness, note that the rotation of \cref{eq:Lagrangian_c_rotation} changes the eigenframe of the constraint, as the eigenvectors of $\bm{\tilde{C}}$ are not the the same as those of $\bm{C}$ for which the eigenframe was constructed. The expressions \cref{eq:n_cusp} and \cref{eq:n_fold} therefore no longer hold for the rotated constraint $\bm{\tilde{C}}$.

\paragraph*{Eulerian orientation} Having fixed the Lagrangian orientation to $\{\bm{n}_{c}, \bm{t}_{c}, \bm{t}_{f}\} = \{\bm{\hat{x}}, \bm{\hat{y}}, \bm{\hat{z}}\}$, the evolution of the vector basis into Eulerian space now gives the orientation of the cosmic wall in the Eulerian coordinate configuration. A Lagrangian vector $\bm{v}_L$ is evolved into Eulerian space by
\begin{equation}
    \bm{v}_E = \nabla_{\bm{q}}\, \bm{x}_{\sigma, t}(\bm{q}) \cdot \bm{v}_L = \left( I + \nabla_{\bm{q}}\, \bm{s}_{\sigma, t}(\bm{q}) \right)\cdot \bm{v}_L \,,
\end{equation}
where $I$ is the identity matrix and the subscript $\sigma$ denotes smoothing of the coordinates and displacement field. The reason for choosing the smoothed deformation tensor $\nabla_{\bm{q}}\, \bm{s}_{\sigma, t}(\bm{q})$ is that the small-scale non-linear dynamics generally induce creases on the cusp sheet. The unsmoothed displacement field $\bm{s}_{t}(\bm{q})$ is sensitive to this small-scale structure, and so would yield the orientation of the constraint point along the small-scale creases. But the wall is constructed on the physical length scale $\sigma$, on which it forms a smooth sheet-like object. This large-scale orientation is captured by the smoothed displacement field $\bm{s}_{\sigma, t}(\bm{q})$, which is hence the relevant quantity to use in our procedure. The same argument was brought forward in \cite{FeldbruggeWeygaert2024} for the 2D case.

To infer the Eulerian orientation, it is necessary to evolve the Lagrangian tangent rather than normal vectors, as the normalised Eulerian-evolved vectors $\{\bm{n}_{c, E}, \bm{t}_{c,E}, \bm{t}_{f,E}\}$ do not generally form an orthonormal basis.\footnote{In a generic Hamiltonian flow (including $N$-body simulations), the gradient matrix $\nabla_{\bm{q}} \bm{x}_t(\bm{q})$ is a symmetric, but not necessarily orthogonal matrix. Under action of a symmetrix matrix $\bm{A}$, the orthogonality of two orthogonal vectors $\bm{w}_1, \bm{w}_2$ with $\bm{w}_1 \cdot \bm{w}_2 =0$ is not generally preserved:
\begin{equation}
    \bm{w}_1^{\prime} \cdot \bm{w}_2^{\prime} = (A \bm{w}_1)^{\rm{T}} (A \bm{w}_1) \neq \bm{w}_1 \cdot \bm{w}_2
\end{equation}}
In particular, $\bm{n}_{c, E}$ is not generally normal to the Eulerian cusp sheet, and $\bm{t}_{c,E} \cdot \bm{t}_{f,E} \neq 0$ in general. However, the evolved vector $\bm{t}_{f, E}$ is still by construction tangent to the Eulerian auxiliary fold line, $A_{2, E}$, and $\{\bm{t}_{c, E}, \bm{t}_{f, E}\} = \text{span}(A_{3, E})$ still span the tangent plane of the Eulerian cusp sheet. To obtain an orthonormal basis of the cusp sheet, we therefore proceed as in the Lagrangian case and introduce a new basis vector $\bm{\tilde{t}}_{c, E}$ by rejecting $\bm{t}_{c, E}$ from $\bm{t}_{f, E}$,
\begin{equation}
    \bm{\tilde{t}}_{c, E} \equiv \bm{t}_{c, E} - \text{proj}_{\bm{t}_{f, E}} \bm{t}_{c, E} \,.
\end{equation}
The normal of the Eulerian cusp sheet is then obtained as
\begin{equation}
    \bm{\tilde{n}}_{c, E} = \bm{t}_{f, E} \times \bm{\tilde{t}}_{c, E} \,.
\end{equation}
The geometry of these orientation vectors is visualised in the lower panel of \cref{fig:orientation_sketches}.

The system $\{\bm{\tilde{n}}_{c, E}, \bm{\tilde{t}}_{c, E}, \bm{f}_{f, E}\}$ is now used to orient the Eulerian wall object. Again, applying the algorithm of \cref{app:euler_angles}, we calculate the Euler angles $(\alpha, \beta, \gamma)$ that rotate the vector basis into the desired orientation, which we again choose as $\{(1,0,0)^{\mathrm{T}}, (0,1,0)^{\mathrm{T}}, (0,0,1)^{\mathrm{T}}\}$. Using these Euler angles, the wall is oriented by rotating the Eulerian space coordinates $\bm{x}$ about the constraint point $\bm{x}_t(\bm{q})$ with the standard Euler matrix $R(\alpha, \beta, \gamma)$. Schematically,
\begin{equation}
    \bm{\tilde{x}} = R(\alpha, \beta, \gamma) \bm{x} \,.
\end{equation}
The rotated coordinates $\bm{\tilde{x}}$ are then translated such that the Eulerian constraint point $\bm{\tilde{x}}_t(\bm{q})$ be at the centre of the simulation box. Finally, the density and velocity fields, halo positions and other observables are evaluated in the transformed coordinates. However, we will for simplicity omit the tilde symbol and refer to the transformed coordinates simply as ``Eulerian coordinates'', with the meaning being clear from the context.

\subsection{Summary}
\label{subsec:recipe-summary}

\begin{figure*}
    \includegraphics[width=\textwidth]{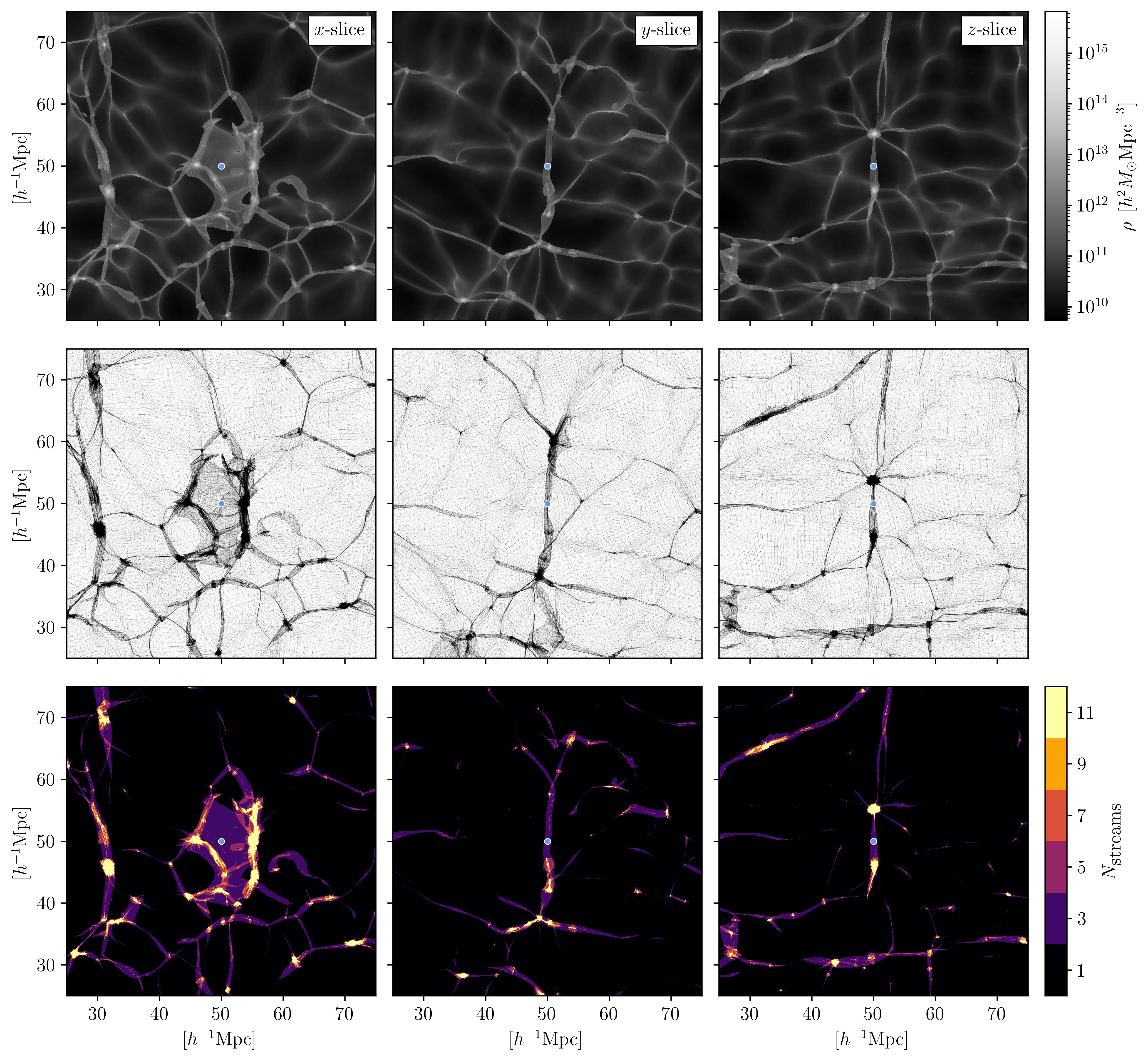}
    \caption{Slices through a high-resolution constraint simulation of cosmic wall formation, shown in 3D in \cref{fig:sim_256_A3_3D}. The constraint realisation was sampled for parameters $(\sigma, b_c) = (2.0\, h^{-1}\textrm{Mpc}, \,0.8)$. The left, middle and right panel correspond to the $x$-, $y$- and $z$-slices through the constraint point, slicing the wall along its face and through its major and minor axis respectively. The upper, middle and lower panel row show the density field, the folding particle mesh and the number of streams field respectively, with the constraint point displayed by the solid blue dot.}
    \label{fig:sim_256_A3}
\end{figure*}

We now summarise the previous sections into a rigorous and reproducible \textit{recipe for pancakes}. For a chosen pair of physical parameters $(\sigma, b_c)$, a cosmic wall centre constraint simulation is run as follows:
\begin{enumerate}
    \item Sample a large number of $A_3$ constraint realisations for parameters $(\sigma, b_c)$ using HMC.
    \item Generate an unconstrained primordial displacement potential $\Psi$.
    \item Orient a sampled constraint realisation $\bm{C} = (T_{11}, \ldots, T_{3333})$ in Lagrangian space. Impose the rotated constraint into $\Psi$ using the BHR algorithm to obtain the constrained potential  $\Psi_c$.
    \item Set $\Psi_c$ as the initial conditions for an $N$-body simulation and run the simulation to the current cosmological time.
    \item From the simulation output, determine the wall orientation Eulerian space. Evaluate the physical observables (density and velocity fields, halo positions, ...) in the transformed coordinates.
\end{enumerate}
When running a suite of simulations for fixed $(\sigma, b_c)$, step (i) is only done once and the different constraint realisations are drawn from the sampled chain $\{ \bm{C}_i\}$. 

Finally, \cref{fig:sim_256_A3} illustrates how the procedure detailed throughout this section results in a physically realistic cosmic wall centred around the constraint point. The $x$-, $y$- and $z$-slices clearly show a sheet-like overdensity that is accurately oriented in Eulerian space. The corresponding number of streams in bulk of the wall plane is three, in accordance with the cusp sheet construction. The particle mesh folding reveals that the overdense region indeed corresponds to an overlap of incoming mass stream. The simulated wall does not form a structureless object, but it is interspersed by filaments and embedded in the global cosmic web, thus reflecting the connectivity of the caustic network that we discussed in \cref{subsec:theory-caustic_skeleton}. Evidently, \cref{fig:sim_256_A3} confirms the validity of our procedure, which we shall now apply to systematic constrained simulation suites of cosmic wall formation.


\section{Constraint simulations of cosmic walls}
\label{sec:sims}
Using the recipe for pancakes, we study the properties of cosmic walls with a series of 3D dark-matter-only constrained simulations. We first evaluate the typical formation time of the walls for varying length scales. Subsequently, we study the constrained realisations and estimate the mean density and velocity fields.

\subsection{When do cosmic walls form?}
\label{subsec:sims-formation_time}

The caustic skeleton does not only provide a rigorous geometric understanding of the multistreaming structure, but also offers valuable insights into the statistical properties of the emerging cosmic web elements from the primordial conditions.

\subsubsection{Stochastic geometry}
\label{subsubsec:sims-stochastic_geometry}

Stochastic geometry is a branch of mathematics that studies the properties and behaviour of random geometric structures. Ever since the seminal study of Bardeen et al. (1986) \cite{Bardeen+1985} on the peaks in the primordial density field, the statistical properties and stochastic geometry \cite{Adler1981, AdlerTaylor2009} of random fields have been integral to theoretical cosmology. A crucial tool for these investigations is the famous Rice's formula \cite{Rice1945, Longuet-Higgins1957, Bardeen+1985, Adler1981, AdlerTaylor2009}, expressing the number density $\mathcal{N}$ of points satisfying the condition $\bm{f}(\bm{x}) = (f_1(\bm{x}), f_2(\bm{x}), f_3(\bm{x})) = \bm{0}$ as the expectation value
\begin{equation}
    \mathcal{N} = \left\langle  |\det \nabla \bm{f}| \delta^{(3)}_D(\bm{f})\right\rangle \,,
    \label{eq:Rice_formula}
\end{equation}
where the $\delta^{(n)}_D$ denotes the $n$-dimensional Dirac-$\delta$ function. 

In this article, we are not only interested in the density of points, but extend Rice's formula to the hypervolume density of ($3-k$)-manifolds defined by $k$ conditions on the primordial density field. Hypervolume densities have been featured in earlier studies \cite{Longuet-Higgins1957, NovikovColombiDore2006, Sousbie+2008, Pogosyan+2009} focusing on the length densities of curves in two- and three-dimensional embedding space. The derivation of the extended Rice's formula was however of an ad-hoc nature. We here give a general derivation of the extended Rice's formula for the hypervolume density of $(n-k)$-dimensional manifolds in $n$-dimensional embedding space. For a more detailed treatment of this extension of Rice's formula, we refer to the Azaïs \& Wschebor (2009) \cite{AzaïsWschebor2009}.

\begin{theorem}
    The hypervolume density $\mathcal{V}$ of a $k$-dimensional constraint $\bm{f}=\bm{0}$ in $n$-dimensional space is given by
    \begin{equation}
        \mathcal{V} = \left\langle |J_k \bm{f}| \delta_D^{(k)}(\bm{f})\right\rangle\,,
        \label{eq:generalised_Rice_formula}
    \end{equation}
    where the generalized Jacobian $|J_k \bm{f}|$ is defined as $|J_k \bm{f}(\bm{x})| = \left(\det J J^T \right)^{1/2}$ with $J = \nabla \bm{f} (\bm{x}).$
        
\end{theorem}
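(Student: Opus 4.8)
The plan is to prove \cref{eq:generalised_Rice_formula} as a Kac--Rice-type identity, obtained from Federer's co-area formula applied to the (almost surely smooth) random map $\bm{f}$, combined with a mollification of the Dirac delta and an exchange of expectation with spatial integration. The key deterministic tool is the co-area formula: for a locally Lipschitz map $\bm{f}\colon\mathbf{R}^n\to\mathbf{R}^k$ with $k\le n$, and any non-negative Borel function $g$,
\begin{equation}
    \int_{\mathbf{R}^n} g(\bm{x})\,|J_k\bm{f}(\bm{x})|\,\d\bm{x}
    = \int_{\mathbf{R}^k}\!\left(\int_{\bm{f}^{-1}(\bm{y})} g(\bm{x})\,\d\mathcal{H}^{n-k}(\bm{x})\right)\d\bm{y}\,,
\end{equation}
where $\mathcal{H}^{m}$ denotes the $m$-dimensional Hausdorff measure and $|J_k\bm{f}|=(\det JJ^{\mathrm{T}})^{1/2}$ with $J=\nabla\bm{f}$. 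The factor $|J_k\bm{f}|$ is exactly the rate at which $J$, restricted to the $k$-dimensional orthogonal complement of its kernel, rescales $k$-dimensional volume, which is precisely why this particular generalised Jacobian appears.

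\textbf{Deterministic step.} First I would fix a realisation of $\bm{f}$ and a bounded test region $V\subset\mathbf{R}^n$, and let $\{\rho_\varepsilon\}_{\varepsilon>0}$ be a family of smooth mollifiers on $\mathbf{R}^k$ converging to $\delta_D^{(k)}$. Inserting $g(\bm{x})=\mathbf{1}_V(\bm{x})\,\rho_\varepsilon(\bm{f}(\bm{x}))$ into the co-area formula gives
\begin{equation}
    \int_V \rho_\varepsilon(\bm{f}(\bm{x}))\,|J_k\bm{f}(\bm{x})|\,\d\bm{x}
    = \int_{\mathbf{R}^k}\rho_\varepsilon(\bm{y})\,\mathcal{H}^{n-k}\!\big(\bm{f}^{-1}(\bm{y})\cap V\big)\,\d\bm{y}\,.
\end{equation}
Provided $\bm{0}$ is a regular value of $\bm{f}$, so that $\bm{y}\mapsto\mathcal{H}^{n-k}(\bm{f}^{-1}(\bm{y})\cap V)$ is finite and continuous near $\bm{y}=\bm{0}$, the right-hand side converges as $\varepsilon\to0$ to $\mathcal{H}^{n-k}(\{\bm{f}=\bm{0}\}\cap V)$. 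This identifies
\begin{equation}
    \mathcal{H}^{n-k}\!\big(\{\bm{f}=\bm{0}\}\cap V\big)
    = \int_V |J_k\bm{f}(\bm{x})|\,\delta_D^{(k)}(\bm{f}(\bm{x}))\,\d\bm{x}\,,
\end{equation}
the right-hand side being the standard shorthand for that $\varepsilon\to0$ limit.

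\textbf{Probabilistic step.} Next I would take the expectation over the random field. Using Tonelli's theorem to exchange $\langle\cdot\rangle$ with the integral over $V$, and then invoking the statistical homogeneity of $\bm{f}$ (inherited from that of $\Psi$), which makes $\langle|J_k\bm{f}(\bm{x})|\,\delta_D^{(k)}(\bm{f}(\bm{x}))\rangle$ independent of $\bm{x}$, one obtains
\begin{equation}
    \big\langle\mathcal{H}^{n-k}(\{\bm{f}=\bm{0}\}\cap V)\big\rangle
    = |V|\,\big\langle|J_k\bm{f}|\,\delta_D^{(k)}(\bm{f})\big\rangle\,.
\end{equation}
Dividing by the volume $|V|$ exhibits the left-hand side as the hypervolume density $\mathcal{V}$ of the constraint surface and yields \cref{eq:generalised_Rice_formula}. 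As a consistency check, setting $k=n$ makes $J$ square, so $|J_n\bm{f}|=|\det\nabla\bm{f}|$ and the $0$-dimensional Hausdorff measure is the counting measure, recovering the ordinary Rice formula \cref{eq:Rice_formula}.

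\textbf{Main obstacle.} The delicate part is entirely the probabilistic bookkeeping behind the two formal limits above. One must show that (i) almost surely $\bm{0}$ is a regular value of $\bm{f}$ --- equivalently $J=\nabla\bm{f}$ has full rank $k$ at every point of $\{\bm{f}=\bm{0}\}$ --- so that this set is a.s. an $(n-k)$-manifold of locally finite $\mathcal{H}^{n-k}$-measure; this is a Bulinskaya-type argument using the non-degeneracy of the joint law of $(\bm{f},\nabla\bm{f})$ together with the boundedness of the density of $\bm{f}(\bm{x})$. And one must justify (ii) interchanging the $\varepsilon\to0$ limit with the expectation, which requires a uniform-integrability bound on $\int_V|J_k\bm{f}|\rho_\varepsilon(\bm{f})\,\d\bm{x}$, following from moment bounds on $\nabla\bm{f}$ and a uniform bound on the conditional density of $\bm{f}(\bm{x})$ near the origin. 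I expect (ii) to be the principal technical hurdle. Both points hold under hypotheses that are satisfied here owing to the smoothness and non-degeneracy of the Gaussian field $\Psi$ --- and hence of the polynomial-in-derivatives constraint map $\bm{f}$ --- and are treated in full rigour in Azaïs \& Wschebor (2009) \cite{AzaïsWschebor2009}, whose theorem I would invoke for the precise regularity conditions rather than reprove from scratch.
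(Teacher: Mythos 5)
Your proof follows essentially the same route as the paper's: apply the (generalised) coarea formula with the Dirac delta as the test function to express the hypervolume of $\bm{f}^{-1}(\bm{0})$ as an integral over the base space, then take the expectation and divide by the volume. The only difference is that you make the formal substitution $g=\delta_D^{(k)}\circ\bm{f}$ rigorous via mollification and flag the regularity and uniform-integrability conditions (deferred to Azaïs \& Wschebor), which the paper's proof passes over silently; the argument is correct.
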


\begin{proof}
    Consider a vector-valued function $\bm{f}: \Omega \to \mathbb{R}^k$ defined on an $n$-dimensional compact manifold $\Omega \subset \mathbb{R}^n$. The set of points 
    \begin{align}
        \bm{f}^{-1}(\bm{0}) = \{\bm{x} \in \Omega\, |\, \bm{f}(\bm{x}) = 0\}
    \end{align} 
    form an $(n-k)$-dimensional variety with the hypervolume
    \begin{align}
        V_{\bm{f}} = \int_{\bm{f}^{-1}(\bm{0})} \mathrm{d}\bm{\sigma}
    \end{align}
    in terms of the induced metric $\mathrm{d}\bm{\sigma}$. Now, using the generalised coarea formula, 
    \begin{align}
        \int_\Omega g(\bm{x}) |J_k \bm{f}(\bm{x})| \mathrm{d}\bm{x} = \int_{\mathbb{R}^k}\left(\int_{\bm{f}^{-1}(\bm{t})} g(\bm{x}) \mathrm{d} \bm{\sigma}\right)\mathrm{d}\bm{t}\,,
    \end{align}
    for some scalar function $g:\Omega \to \mathbb{R}$ \cite{Federer1959}, the hypervolume $V_{\bm{f}}$ assumes the form
    \begin{align}
        V_{\bm{f}} = \int_\Omega |J_k \bm{f}(\bm{x})| \delta_D^{(k)}(\bm{f}(\bm{x})) \mathrm{d}\bm{x}\,,
    \end{align}
    upon the substitution $g(\bm{x}) = \delta_D^{(k)}(\bm{f}(\bm{x}))$.
    When the function $\bm{f}$ is a random field, the expectation value of the hypervolume is
    \begin{align}
        \langle V_{\bm{f}} \rangle &= \left\langle\int_\Omega |J_k \bm{f}(\bm{x})| \delta_D^{(k)}(\bm{f}(\bm{x})) \mathrm{d}\bm{x}\right\rangle\\
        &= \left\langle |J_k \bm{f}| \delta_D^{(k)}(\bm{f})\right\rangle V(\Omega)\,,
    \end{align}
    with the volume of the base space $V(\Omega)$. This proves the generalized Rice formula for hypervolume densities, as $\mathcal{V} = \langle V_{\bm{f}}\rangle / V(\Omega)$.
\end{proof}

In three-dimensional space, we find the volume densities:
\begin{enumerate}
    \item As the Jacobian $|J_3 \bm{f}| = |\det(\nabla f_1, \nabla f_2, \nabla f_3)|$, we recover Rice's formula for the point density defined by three conditions $\bm{f}(\bm{x}) = (f_1(\bm{x}), f_2(\bm{x}), f_3(\bm{x}))=\bm{0}$ 
    \begin{align}
        \mathcal{N} = \left\langle |\det \nabla \bm{f}| \delta_D^{(3)}(\bm{f})\right\rangle\,.
        \label{eq:sg_formula_point}
    \end{align} 
    \item As the Jacobian $|J_2 \bm{f}| = \|\nabla f_1 \times \nabla f_2\|$, the curve length density defined by two conditions $\bm{f}(\bm{x}) = (f_1(\bm{x}), f_2(\bm{x}))=\bm{0}$ is given by 
    \begin{align}
        \mathcal{L} = \left\langle \| \nabla f_1 \times \nabla f_2\| \delta_D^{(2)}(\bm{f})\right\rangle\,.
        \label{eq:sg_formula_line}
    \end{align}
    This equation was previously used in \cite{Sousbie+2008, Pogosyan+2009}. 
    \item As $|J_1 f| = \|\nabla f\|$, the area density defined by a single condition $f(\bm{x}) = 0$ is given by
    \begin{align}
        \mathcal{A} = \left\langle \|\nabla f\| \delta_D^{(1)}(f)\right\rangle\,.
        \label{eq:sg_formula_surface}
    \end{align}
    This formula has not yet been presented in earlier studies on the cosmic web.
\end{enumerate}

\subsubsection{Number density of wall centres}
\label{subsubsec:sims-stochastic_geometry-N}

Rice's formula is well known in cosmology and has been extensively to used to study critical points of (Gaussian) random fields, in particular with regard to maxima, minima and saddle points of the primordial potential perturbation $\phi$ and corresponding density perturbation $\delta$. With the understanding of the cosmic wall centre expressed as a special point on the cusp sheet, Rice's formula now allows us to derive the number density of the proposed $A_3$ centre constraint. We are interested in this quantity as a function of the two parameters $(\sigma, b_c)$ and find that the number density $\mathcal{N}_{\sigma}(b_c)$ of wall centres at scale $\sigma$ and emerging at growing mode $b_c$ is given by

\begin{equation}
    \begin{split}
    \mathcal{N}_{\sigma}(b_c) = \frac{1}{b_{c}^2}
    &\scaleleftright[1.75ex]{<}
{\left| \det \MyMatrix \right|  \vphantom{\begin{bmatrix}1\\1\\1\\1\end{bmatrix}} }
{}  \\[2pt]
    &  \quad\quad \quad \times \,\,\,  \scaleleftright[1.75ex]{.}
    {\begin{array}{c}
       \delta_D^{(1)}(\lambda_1 - b_c^{-1}) \,\,\, \delta_D^{(1)}(\bm{v}_2 \cdot \nabla (\lambda_2 + \lambda_3)) \,\,\,  \delta_D^{(1)}(\bm{v}_3 \cdot \nabla (\lambda_2 + \lambda_3)) \\[4pt]
       \quad  \Theta(-\lambda_2) \,\,\,  \Theta(\lambda_2-\lambda_3) \,\,\, \Theta(M \textrm{ pos. def.})
     \end{array} \vphantom{\begin{bmatrix}1\\1\\1\\1\end{bmatrix}}}
    {>} \,.
    \end{split}
    \label{eq:A3_Rice_formula}
\end{equation}

The functional determinant $\det \nabla \bm{c}$ weighs the configurations of the directional derivatives of the eigenvalue fields satisfying the caustic conditions that are imposed by the Dirac-$\delta$ functions. The prefactor $1/b_{c}^2$ arises as the Jacobian from the coordinate transformation $\lambda_1  \rightarrow b_c = \lambda_1^{-1}$. 

Due to the absolute value of the determinant, \cref{eq:A3_Rice_formula} cannot be calculated analytically. We therefore evaluate the expectation value numerically on a large number of constraint samples. To this end, we translate \cref{eq:A3_Rice_formula} into an expression of the eigenframe field derivatives $T_{ij\ldots k}$ of second through fourth order. To account for the possible orientations of the diagonalised frame, as discussed in \cref{subsec:eigenframe}, the functional determinant is multiplied by the Jacobian \cref{eq:Jacobian}.

Following \cite{FeldbruggeYanWeygaert2023}, the number density can then be evaluated by writing the constraint statistic, \cref{eq:constrained_statistic}, as $\bm{Y} = (\bm{Y}_1, \bm{Y}_2)$ with the free statistic $\bm{Y}_1$ and the conditioned statistic $\bm{Y}_2$. The latter is given by
\begin{equation}
    \bm{Y}_2 = (T_{11}, T_{12}, T_{13}, T_{23}, T_{111}, T_{222}, T_{223}, T_{233}, T_{333})
\end{equation}
and takes the constraint value $\bm{y}_2$:
\begin{equation}
    T_{11} = b_c^{-1},\, T_{12}=0,\, T_{13}=0,\, T_{23}=0, T_{111}=0,\, T_{222}+T_{233}=0,\, T_{223} + T_{333}=0 \,.
    \label{eq:constraint_value}
\end{equation}
The free statistic $\bm{Y}_1 = \bm{Y} \backslash \bm{Y}_2$ is the complement set of variables. Using the definition of conditional probabilities, the probability of the statistic $p_{\bm{Y}}(\bm{Y})$ is given by $p_{\bm{Y}}(\bm{Y}) = p_{\bm{Y}_1|\bm{Y}_2}(\bm{Y}_1|\bm{Y}_2) p_{\bm{Y}_2}(\bm{Y}_2) $.  Inserting this expression and the Jacobian $J$, \cref{eq:Jacobian}, into Rice's formula gives
\begin{equation}
    \langle J | \det(\ldots ) | \Theta(\ldots )  \delta_D(\bm{Y}_2 - \bm{y}_2) \rangle_{\bm{Y}} = p_{\bm{Y}_2}(\bm{y}_2) \langle J | \det(\ldots ) | \Theta(\ldots ) \,\big|\, \bm{Y}_2 = \bm{y}_2 \rangle_{\bm{Y}_1 | \bm{Y}_2} \,,
\end{equation}
where on the left side we have taken the expectation value over the unconditioned variables and on the right side we have evaluated the Rice's formula on the conditioned variables. We do this numerically on a large number of constraint realisations obtained as HMC samples of the constraint statistic given by \cref{eq:constrained_statistic}, see \cref{subsec:recipe-sampling}.

From the covariance matrix \cref{eq:deriv_covariance}, the reader can confirm that the constraint probability $p_{\bm{Y_2}}(\bm{y}_2)$ is given by
\begin{align*}
    p_{\bm{Y}_2}(\bm{y}_2) &= \int \mathrm{d} T_{11}\ldots \mathrm{d}  T_{3333} \, p_{\bm{Y_2}} (T_{11}, \ldots, T_{3333}) \, \delta^{(1)}_D(T_{12})\delta^{(1)}_D(T_{13})\delta^{(1)}_D(T_{23})  \delta^{(1)}_D(T_{111}) \\
    &\qquad\qquad\qquad \times \delta^{(1)}_D(T_{222}+T_{233})\delta^{(1)}_D(T_{223}+T_{333}) \\
    &= \sqrt{\frac{21}{2}} \frac{2625}{64 \pi^{7/2} \sigma_2^4 \sigma_3^3} e^{-\frac{5}{2 \sigma_2^2 b_c^2}} \,.
\end{align*}
Note that the constraint probability depends on both the value of the growing mode $b_c$ and the smoothing scale $\sigma$ at which the generalised moments $\sigma_i$ are evaluated.

\begin{figure}
    \centering
    \includegraphics[width=0.6\columnwidth]{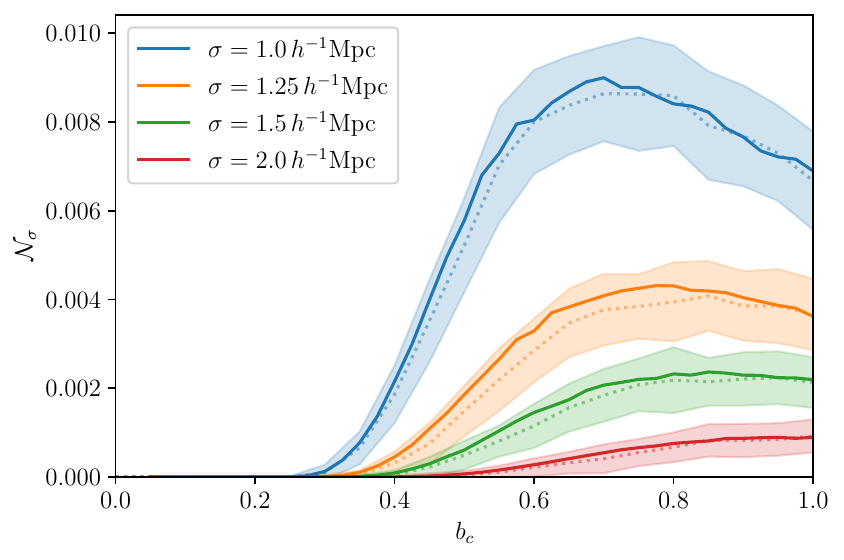}
    \caption{Number density of the $A_3$ centres as a function of $b_c$ for vayring $\sigma$, calulated from Rice's formula (solid lines), and evaluated from numerical measurements of random field realisations (dotted lines with $1\sigma$-bands). For the latter, the sample mean and variance were evaluated for 50, 75, 100 and 200 field realisations of a $100 \,h^{-1}\textrm{Mpc}$ box with $128^3, \,64^3,\, 64^3$ and $64^3$ particles for $\sigma = 1.0, \, 1.25,\, 1.5,\, 2.0 \,h^{-1}\textrm{Mpc}$ respectively.}
    \label{fig:A3_number_density}
\end{figure}

\Cref{fig:A3_number_density} shows the number density of the wall centre constraint as a function of the growing mode for a set of different smoothing scales  $\sigma = \{1.0,\, 1.25,\, 1.5,\, 2.0\}$. It is apparent that the Rice's formula evaluation is in excellent agreement with the measurement of $A_3$ centre points on random field realisations. The small systematic shift to higher Rice's formula values at earlier times may be attributed to missing detections of $A_3$ centre points from the numerical evaluation of the caustic skeleton on a finite-resolution grid. We have confirmed that higher grid resolutions ameliorate this effect.

The number densities are not constant, but, for each smoothing scale, exhibit a characteristic peak in formation time. These peaks corresponds to the growing modes $b_c$ at which most $A_3$ centre points are produced in random fields. Our calculations demonstrate that smaller-scale cusp caustics form earlier (significant number density at $b_c \approx 0.5$ for $\sigma=1.0\,h^{-1}\textrm{Mpc}$), whereas larger-scale cusp caustics form later (significant number density at $b_c \approx 0.8$ for $\sigma=2.0\,h^{-1}\textrm{Mpc}$). This is consistent with the hierarchical assembly of the cosmic web --- in particular the network of cosmic walls --- through the merger of smaller-scale elements. The characteristic formation peaks correspond to physically realistic parameter configurations $(\sigma, b_c)$  that we will impose into the constrained simulations in \cref{subsec:sims-fields}. Note that the amplitudes of the number density curves for different $\sigma$-values do not represent relative likelihoods of the constraint smoothing scale. The smaller values for large $\sigma$  simply reflect the fact that fewer large-scale walls fit into a unit cosmological volume.

The early formation peak of small-scale walls is consistent the Zel'dovich model \cite{Zeldovich1970} of pancakes forming as the first structural elements of the cosmic web. In the follow-up article on cosmic filament formation, we will calculate the number densities for the analogous centre points of the swallowtail and umbilic filaments. Doing so, we will quantitatively investigate the relative formation times of the walls and filaments making up the scale-space cosmic web.

\subsubsection{Area density of cosmic walls}
\label{subsubsec:sims-stochastic_geometry-A}

While the number density of the $A_3$ centre points is the key quantity to identify realistic configurations $(\sigma, b_c)$ for constraint simulations, the hierarchical build-up of the wall network is imprinted in the growth of wall area over cosmic time and length scales. For the first time, we now calculate the Lagrangian-space area density $\mathcal{A}_{\sigma}(b_c)$ of the cosmic walls produced at time $b_c$ for a fixed smoothing scale $\sigma$.

By identifying the walls as cusp sheets, and using \cref{eq:sg_formula_surface}, we find that the wall area density is given by
\begin{equation}
    \mathcal{A}_{\sigma}(b_c)= \left\langle \| \nabla \left( \bm{v}_1 \cdot \nabla \lambda_1 \right) \, \| \, \delta^{(1)}_D(\bm{v}_1 \cdot \nabla \lambda_1) \Theta \left( \lambda_1 - b_c^{-1} \right)\right\rangle \,,
\end{equation}
which we compute the from the statistics of the smoothed displacement potential $\Psi_{\sigma}$. As before, we evaluate the expression in the eigenframe and numerically calculate the expectation value on samples of the second up to fourth-order field derivatives.

\Cref{fig:A3_area} shows the area densities $\mathcal{A}_{\sigma}(b_c)$ from Rice's formula and numerical measurements on random field realisations. Clearly, the Rice's formula evaluations are consistent within $1 \sigma$ of the sample variance of the random field measurements. Again, we the systematic shift towards earlier peaks in the Rice's formula can be attributed to the numerical limitations of the cusp sheet identifications on the finite-resolution grid. As the cusp sheets are calculated as isocontours on the primordial fields, the emerging sheets can only be identified once their sizes exceed the grid resolution. Similarly, their area is overestimated by the numerical triangulation once the smooth sheets have grown to considerable sizes. We have confirmed that both issues are ameliorated by finer grid resolutions, at the cost of longer computation times.

\begin{figure}
    \centering
    \includegraphics[width=0.6\columnwidth]{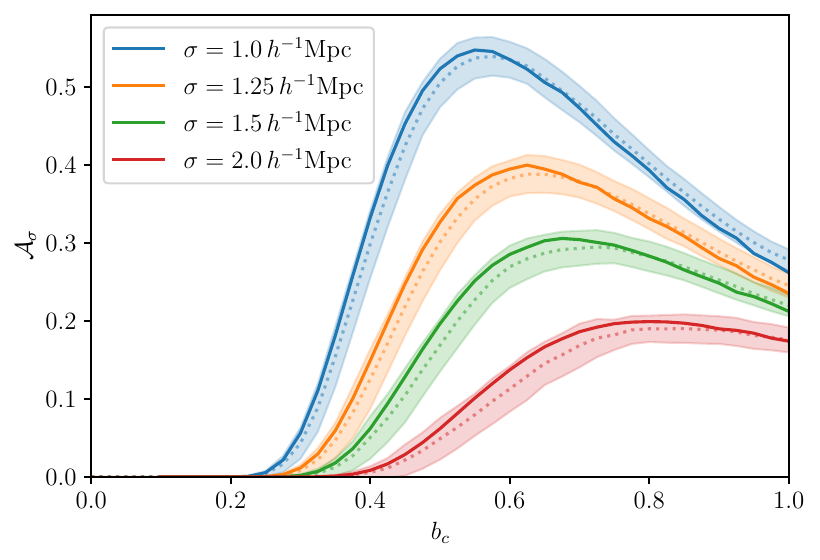}
    \caption{Area density produced by the Lagrangian $A_3$ sheets a function of $b_c$ for vayring $\sigma$, calulated from Rice's formula (solid lines), and evaluated from numerical measurements of random field realisations (dotted lines with $1\sigma$-bands). For the latter, the sample mean and variance were evaluated for 50, 75, 100 and 200 field realisations of a $100 \,h^{-1}\textrm{Mpc}$ box with $128^3, \,64^3,\, 64^3$ and $64^3$ particles for $\sigma = 1.0, \, 1.25,\, 1.5,\, 2.0 \,h^{-1}\textrm{Mpc}$ respectively.}
    \label{fig:A3_area}
\end{figure}

Again, we find that the area densities for different smoothing scales $\sigma$ each have a characteristic peak growing mode $b_c$. As expected, the curves are qualitatively similar to the number densities shown in \cref{fig:A3_number_density}. We find that the area accumulated by small-scale cusp sheets peaks at early times ($b_c = 0.5$ for  $\sigma=1.0\,h^{-1}\textrm{Mpc}$), whereas large-scale cusp sheets form at late times ($b_c = 0.7$ for  $\sigma=2.0\,h^{-1}\textrm{Mpc}$). Wall formation is negligible on cosmological scales at earlier times than $b_c \lesssim 0.2$. To our knowledge, this constitutes the first analytical quantification of the hierarchical assembly of the wall network from the primordial conditions. In the follow-up paper, we will analogously calculate the length density of the filaments defined the swallowtail and umbilic caustic conditions. Moreover, in future work, we plan to extend our calculation to the corresponding area and length densities in Eulerian space.

\subsection{Dark matter fields from cosmic wall simulations}
\label{subsec:sims-fields}

We implement the pancake recipe to run a suite of dark-matter-only constrained simulations of cosmic wall formation. To this end, we use the \verb|Gadget-4| \cite{SpringelPakmorZier+2021} code and set up simulations with $64^3$ particles in a box of size $100 \,h^{-1}\textrm{Mpc}$. We evolve the particles to the current cosmological time and infer the resulting density and velocity fields. We focus first on a single and physically realistic configuration $(\sigma, b_c) = (2.0 \,h^{-1}\textrm{Mpc},\, 0.8)$, which corresponds approximately to the peak of the number density from Rice's formula evaluated at the chosen length scale. To simulate a large number of random walls at this parameter point, we run a suite of 100 constraint simulations from randomly drawn constraint realisations imposed into primordial field realisations according to \cref{subsec:recipe-sampling} and \cref{subsec:recipe-HR}.

\subsubsection{Random wall realisations}

In \cref{fig:A3_realisations}, we show the density fields for three exemplary wall realisations of our suite of 100 simulations; the remaining realisations are available from the additional materials webpage \verb|benhertzsch.github.io/papers/2025_Cosmic_Walls|. As in \cref{fig:sim_256_A3}, the density fields are shown  in the $x$-, $y$- and $z$-slice, corresponding to the cut along the face of wall and the two directions through the wall's plane. Being physically realistic and random objects, the simulated walls are solid structures in the three-dimensional density field, and generally curve into or out of the sliced $x$-plane. This is to be considered when evaluating the PS-DTFE density field slices, as these are more sensitive to the small-scale curving than traditional DTFE estimates. However, at the chosen resolution of $64^3$ particles in a box of size $100 \,h^{-1}\textrm{Mpc}$, the curving of the wall does not significantly impact the density slices, and the walls appear as extended objects in the $x$-slice and line-like objects in the $y$- and $z$-slices. Clearly, this is consistent with the identification of accurately oriented sheet-like objects in the three-dimensional space, as was shown in \cref{fig:sim_256_A3_3D} with the constraint point being approximately at the visual centre of the cusp sheet. Moreover, it is apparent that the extent of the wall in the $y$-slice is generally larger than than in the $z$-slice (as will become clearer in the subsequent section on median fields). This is a consequence of the chosen ordering $\lambda_2 > \lambda_3$ and the according orientation of the oval-like wall sheet.

It is evident that the walls are overdense structures in the cosmic background. The typical density values are about $\rho \approx 10^{12} \,h^2 M_{\odot} \textrm{Mpc}^{-3}$ in the wall planes. The simulated objects are therefore about $10$ times as massive as the cosmic mean density of $\bar{\rho} \approx 8.3 \cdot 10^{10}\,h^2 M_{\odot} \textrm{Mpc}^{-3}$, which is somewhat larger than but of the same order of magnitude as the density contrast of $2\textrm{--}3$ reported in the literature on cosmic walls \cite{Forero-Romero+2009, AragonCalvo+2010,  ShandarinSalmanHeitmann2012, Hoffman+2012, Cautun+2014, Libeskind+2017}. Note that these studies are based traditional cosmic web identification methods and --- with the exception of \cite{ShandarinSalmanHeitmann2012} --- do not employ phase-space field estimators for the respective density field evaluations. Considering the sensitivity of the PS-DTFE method to the density divergences from small-scale caustics within the large-scale walls, we conclude that our results are consistent with the literature. It would be interesting to repeat the preceding analyses with the PS-DTFE method: We expect that by taking into account their multistreaming nature, the reported density contrasts of cosmic walls would be shifted to slightly larger values.

While we omit the corresponding number of streams fields here, we have checked that the walls are indeed multistreaming structures.  We find that the wall planes are mostly made of three-streaming volume, as was shown for a single high-resolution simulation in \cref{fig:sim_256_A3}. This confirms the successful simulation of a cusp sheet resulting in a multistreaming, sheet-like overdensity.

The interior structure of the simulated walls demonstrates that the objects do not exist in isolation of the other cosmic web elements. Instead, small-scale creases within the cusp sheet lead to the formation of filaments and clusters of different length scales, in accordance with the caustic skeleton theory we discussed in \cref{subsubsec:theory-caustics}. In particular, while the cusp sheets can in principle grow as three-streaming regions into the cosmological background, we observe that the majority of walls are bounded by massive filaments and clusters that emerge out of the cusp sheet into a higher-streaming configuration and hence accumulate mass. This is particularly clear in the second field realisation of \cref{fig:A3_realisations}, and is similarly seen in the high-resolution simulation of \cref{fig:A3_realisations}. In both cases, line-like caustics transverse the plane of the wall, and manifest themselves as highly overdense filaments ($\rho \approx 10^{13} \textrm{--} 10^{14} \,h^2 M_{\odot} \textrm{Mpc}^{-3} $) that can be identified as line-like structures in the $x$-slice. Within these filaments, massive node-like clusters can form, with density values reaching up to about $\rho \approx 10^{15}\, h^2 M_{\odot} \textrm{Mpc}^{-3} $. These structures naturally arise in our constraint simulations of wall formation because the different caustics form a structural network whose connectivity is intrinsic to the emerging cosmic web. This confirms the discussion of \cref{subsubsec:theory-caustics}: The walls are indeed real structures in the cosmic web, and their multistreaming volume forms the embedding for the higher collapse into filaments and clusters.

\begin{figure*}
    \includegraphics[width=\textwidth]{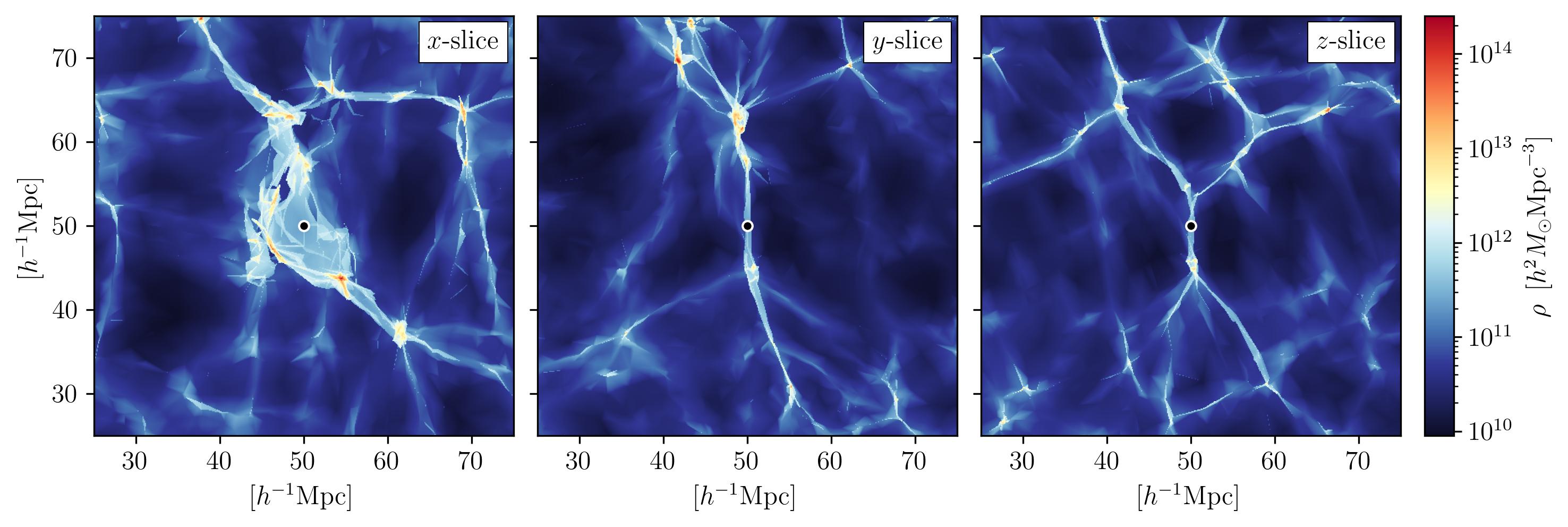}
    \includegraphics[width=\textwidth]{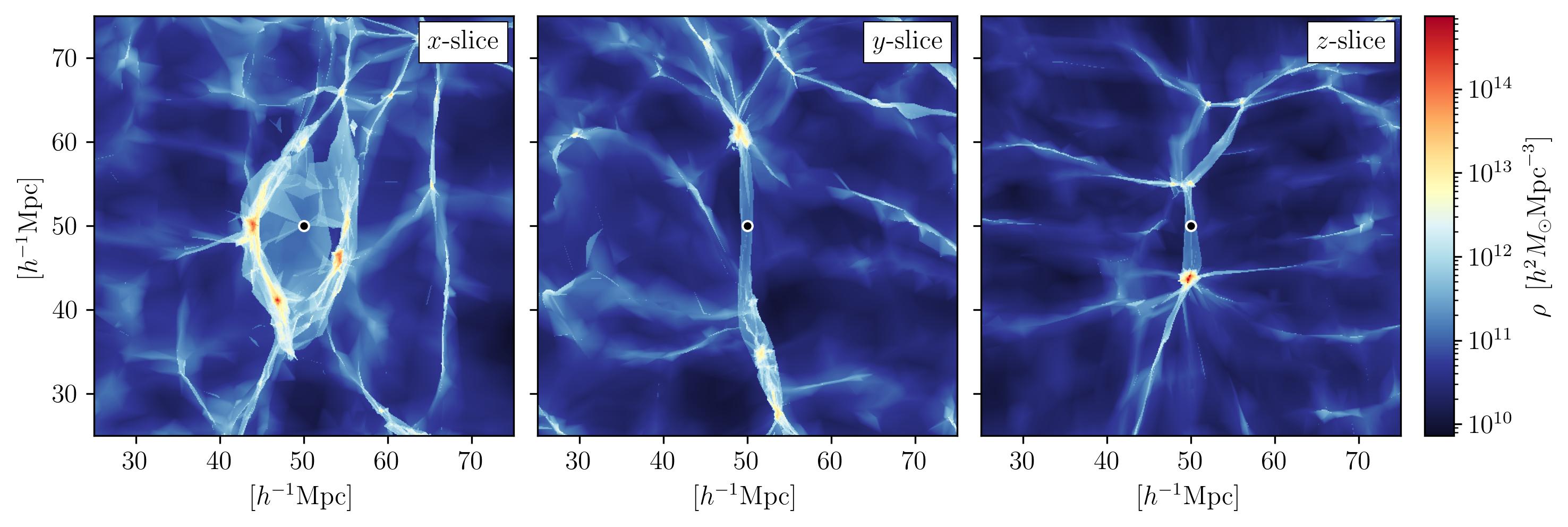}
    \includegraphics[width=\textwidth]{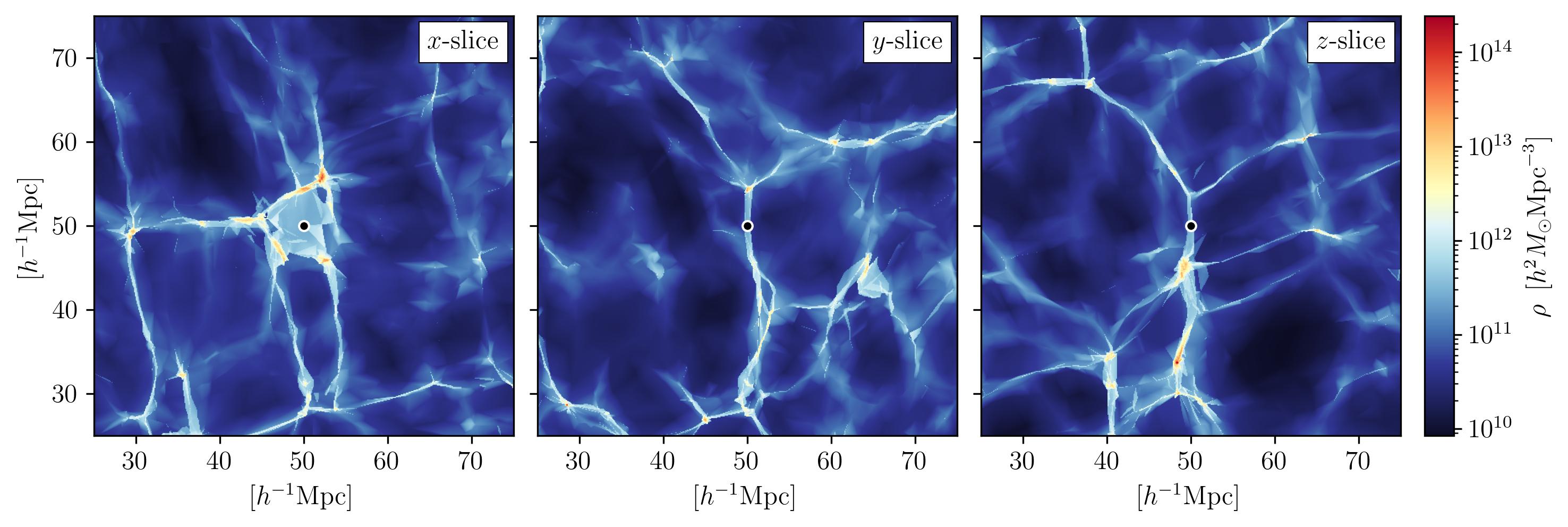}
    \caption{Exemplary density field realisations of the wall centre constraint for $(\sigma, b_c) = (2.0\, h^{-1}\textrm{Mpc}, \,0.8)$. The slices in the left, middle and right columns are as in \cref{fig:sim_256_A3}.}
    \label{fig:A3_realisations}
\end{figure*}

\begin{figure*}
    \includegraphics[width=\textwidth]{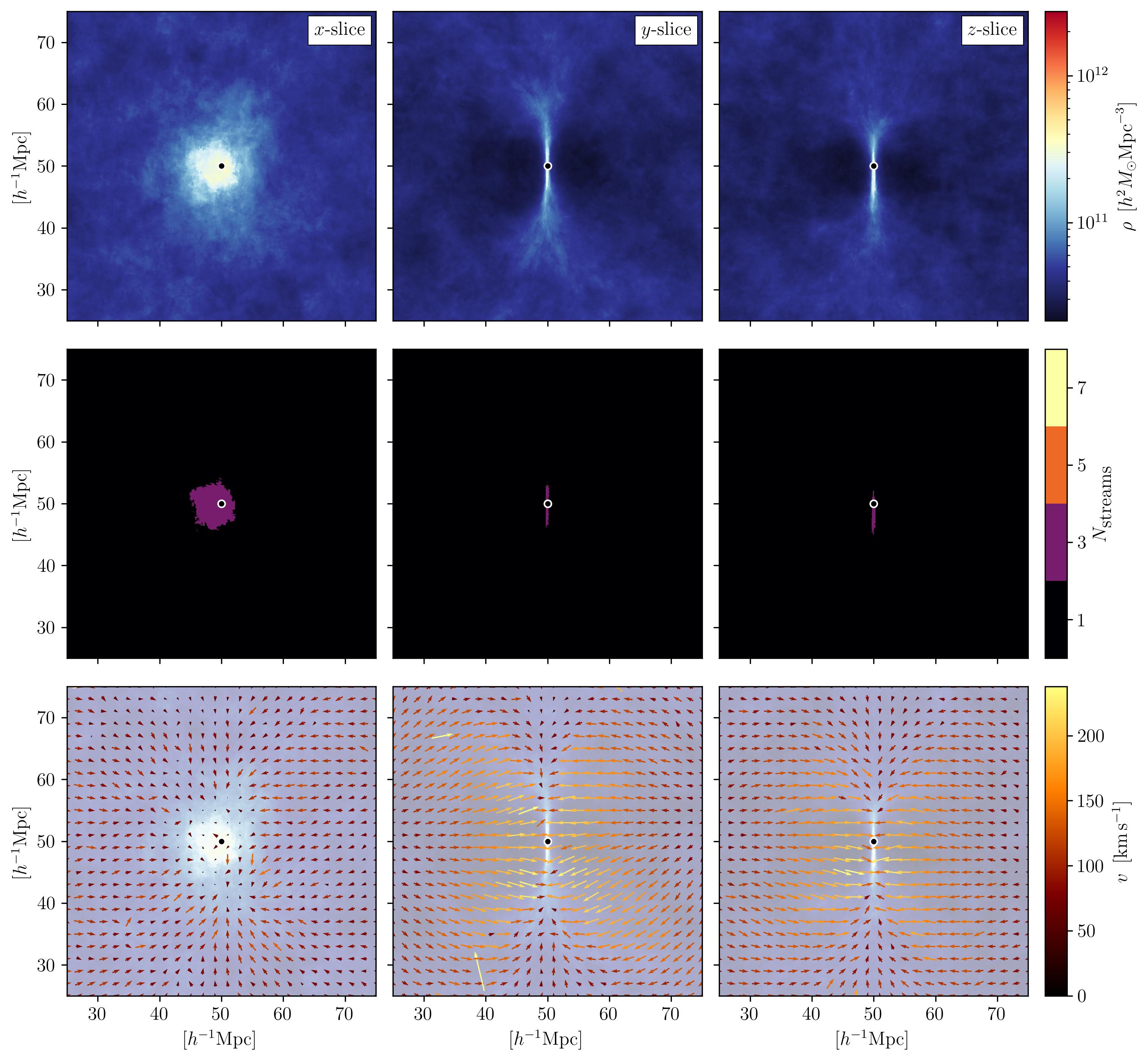}
    \caption{Median or mean fields for the $A_3$ wall centre constraint with $(\sigma, b_c) = (2.0\, h^{-1}\textrm{Mpc}, \,0.8)$. The upper row shows the median density field, the middle show shows the median number of streams field and the lower row shows the average velocity projected into the slicing plane and superimposed on the median density.}
    \label{fig:A3_constraint_mean_field}
\end{figure*}

\subsubsection{Median and mean fields}

We now study the characteristic median or mean fields inferred from the the random wall realisations in our simulation suite. The upper panel of \cref{fig:A3_constraint_mean_field} shows the median density fields, which we obtain by evaluating the median from the density fields for all random wall realisations. As was already remarked in \cite{FeldbruggeWeygaert2024}, the reason for taking the median is that naive average would be sensitive to  the density divergences at the caustics. The median is therefore a more appropriate measure of the characteristic density distribution. We find that the characteristic density distribution is in accordance with the imposed wall geometry, as the constraint results in a planar overdensity that separates two underdense voids. The characteristic extent for the constraint imposed at $\sigma = 2.0 \,h^{-1}\textrm{Mpc}$ is about $10 \,h^{-1}\textrm{Mpc}$. This illustrates that the proposed wall centre constraint influences the local environment beyond the smoothing scale, which is in contrast to traditional constraints in the primordial density perturbation that we will discuss in \cref{subsec:alternatives-density}. The median fields demonstrate clearly that the wall's extent in the $y$-slice is larger than in the $z$-slice. As remarked earlier, this reflects the ordering $\lambda_2>\lambda_3$ of the eigenvalue fields and the accurate orientation of the resulting oval-like cosmic walls. We find that the median density in the wall plane is about $\bar{\rho}_{\textrm{wall}} \approx 1.5 \cdot 10^{11} \,h^2 M_{\odot} \textrm{Mpc}^{-3}$, which is in quantitative agreement with the random realisations shown in \cref{fig:A3_realisations}. With the mean cosmic density being about $\bar{\rho} \approx 8.3 \cdot 10^{10} \,h^2 M_{\odot} \textrm{Mpc}^{-3}$, the averaged wall density is therefore about 2 times that of cosmic background, which is consistent with the literature on the density of cosmic walls in the $\Lambda$CDM cosmic web \cite{Forero-Romero+2009, AragonCalvo+2010,  ShandarinSalmanHeitmann2012, Hoffman+2012, Cautun+2014, Libeskind+2017}.

The corresponding median number of streams is shown in the second row of \cref{fig:A3_constraint_mean_field}. It is evident that the mean cosmic wall is a three-stream region, with the characteristic extent being again about $10 \, h^{-1}\textrm{Mpc}$, in agreement with the median density field. The three-streaming region is planar and thus consistent with the construction of a cusp sheet in the (on average) single-streaming cosmic background. Note that more overdense, smaller-scale structures within the wall, as observed in \cref{fig:sim_256_A3} and \cref{fig:A3_realisations}, generally correspond to higher-streaming configurations. These are not captured by the median number of streams field, which confirms our construction of random realisations of typical (i.e. three-streaming) cusp sheets. We will come back to the median number of streams field and discuss analogous arguments in \cref{sec:alternatives}.

The lower panel of \cref{fig:A3_constraint_mean_field} shows the mean velocity field, which we obtain by taking the mean of the PS-DTFE stream-mass weighted velocity fields for each simulation in our suite. Note that while the PS-DTFE velocity estimate for a single simulation is trivial in the single-streaming regions, the situation is more complicated in the multistreaming regions, where several streams with different velocities coincide. By evaluating the stream-mass weighted mean of these stream velocities, we obtain a physically realistic estimate for the overall dark matter velocity in the multistreaming regions. Surprisingly, we find that the mean velocity field inferred from all the simulations in our suite is in remarkable agreement with the imposed planar wall geometry. One could have expected that turbulence from phase mixing in the cosmic web makes the velocity fields much more sensitive to shell-crossing, thus spoiling any significant structure in the mean fields. Nevertheless, the mean velocity exhibits a clear geometric pattern, with turbulence contributing only to minor noise and a few outliers from evaluations close to caustics. In the $x$-slice of \cref{fig:A3_constraint_mean_field}, particles move with moderate velocities of about $50 \,\textrm{km}\,\textrm{s}^{-1}$ concentrically towards the wall. The central overdensity acts as a sink absorbing matter from the cosmic background. Within the shell-crossed walls, the multistreaming particles move more randomly due to phase mixing and the contributions of various matter streams. This manifests itself in random and close to vanishing mean velocity vectors in the left panel of the figure. The consistency of the velocity field with the imposed wall geometry is most prominently seen in the $y$- and $z$-slices. Here, the flow exhibits a clear dipolar pattern, with the matter being transported from the voids into the planar overdensity. The flow is particularly oriented towards the cusp points out of which the multistream region emerges. The mean velocity normal to the wall is higher than within the wall's plane (i.e. in the $x$-slice), with the void matter being transported towards the wall at up to $200 \,\textrm{km}\,\textrm{s}^{-1}$, consistent with previous studies on transport in the cosmic web, e.g. \cite{Cautun+2014}. In accordance with the oval-like shape of the wall seen in the mean density field, the high-velocity region in the $y$-slice reaches further than in the $z$-slice. However, in all slices, it is surprising to see that while the magnitudes of the velocities falls off away from the wall, the mean cosmic flow is still oriented in the dipolar pattern towards the central region. This observation demonstrates that cosmic walls influence their tidal environment well beyond the extent of the overdense multistream region.

\begin{sidewaysfigure}
    \centering
    \includegraphics[width=\textwidth]{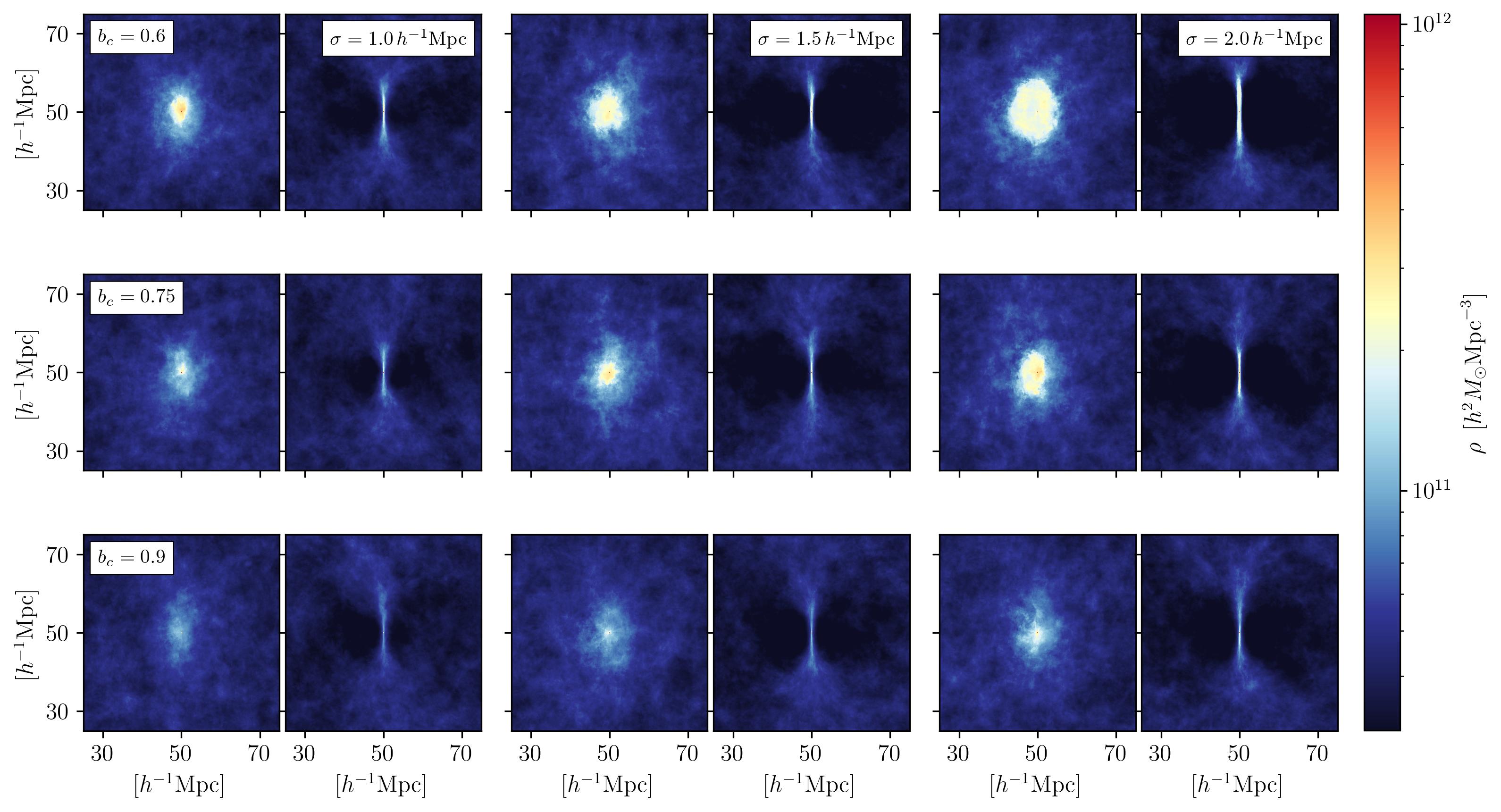}
    \caption{Median density fields for the $A_3$ wall centre constraint for varying parameters $(\sigma, b_c)$. The upper, middle and lower row correspond to $b_c = 0.6, \,0.75,\, 0.9$ respectively and the left, middle and right column correspond to $\sigma=1.0,\, 1.5,\, 2.0 \, h^{-1}\textrm{Mpc}$ respectively.}
    \label{fig:A3_constraint_parameter_space}
\end{sidewaysfigure}

\subsubsection{Walls in parameter space $(\sigma, b_c)$}

Having discussed the mean fields of a suite of constrained wall simulations for parameters $(\sigma, b_c) = (2.0\,h^{-1}\textrm{Mpc}, \,0.8)$, we conclude this section with a discussion of walls in the parameter space $(\sigma, b_c)$, corresponding to the length scale and formation time of the wall. To this end, we vary the smoothing scale $\sigma = \{1.0\,h^{-1}\textrm{Mpc}, \,1.5\,h^{-1}\textrm{Mpc},\, 2.0\,h^{-1}\textrm{Mpc}\}$ and the growing mode $b_c=\{ 0.6, \, 0.75, \, 0.9\}$. For each combination $(\sigma, b_c)$, we run a suite of 100 simulations and evaluate the median density field, the $x$- and $y$-slices of which are plotted in the panels of \cref{fig:A3_constraint_parameter_space}. From the Rice's formula calculations of \cref{subsec:sims-formation_time}, the configurations on the diagonal from the upper left to the lower right correspond approximately to the most likely parameter configurations, in the sense of peaking number densities. We make the field realisations for these available on the additional materials webpage.

The mean fields in \cref{fig:A3_constraint_parameter_space} reveal that the wall morphology is indeed dependent on the constraint parameters. At fixed $b_c$, constraints imposed on a larger smoothing scale $\sigma$ result in more extended overdensities. Similarly, at fixed $\sigma$, later formation times $b_c$ result in less dense objects. This is because shortly after the shell-crossing event creating the cusp sheet, the wall is only moderately overdense. With more time elapsing after the wall formation, matter inflow and phase mixing lead to higher overdensities within the planar structure. This is particular pronounced in the upper left panels with parameters $(\sigma, b_c)=(1.0 \,h^{-1}\textrm{Mpc},\, 0.6)$. With the constraint imposed at a small smoothing scale, matter clusters on smaller scales, with phase mixing resulting in high overdensities. The walls formed at $(\sigma, b_c)=(2.0 \,h^{-1}\textrm{Mpc},\, 0.6)$ in turn are not as overdense, but significantly thicker, which is again understood by the inflow of matter onto the planar structure. However, despite these consistent observations, the partial degeneracy between $\sigma$ and $b_c$ makes it challenging to relate the parameter space of inferred wall density fields to traditional cosmic web identification methods such as \verb|DisPerSE| \cite{Sousbie2011, Sousbie2011b} or \verb|NEXUS(+)| \cite{AragonCalvo+2007, Cautun+2012}. Being morphology filters, these do not take into account the formation times of the cosmic web elements, but only classify their observed density fields on different length scales. We leave an application of such methods to our constrained simulations to future studies.


\section{Haloes embedded in cosmic walls}
\label{sec:haloes}

In the preceding section, we have studied cosmic walls from an unprecedented analytical point of view using constrained simulations. Our results on their formation times and characteristic dark matter density and velocity fields significantly enhance our understanding the build-up of the large-scale wall network. Yet, in their physical reality, the cosmic walls are notoriously difficult to observe in cosmic surveys. The reason for this is that the low surface density of cosmic walls (as opposed to the highly dense filaments and ultra-dense clusters) induces the formation of low-luminosity galaxies, which are challenging to detect over the luminous background of the cosmic web. Indeed, numerous numerical studies \cite{Hahn+2007, Cautun+2014, Metuki+2015, AlonsoEardleyPeacock2015, MetukiLibeskindHoffman2016, Libeskind+2017} found that the haloes residing within cosmic walls cover a lower mass range than those residing in the more overdense filaments in clusters, and are hence populated by fainter galaxies \cite{Metuki+2015}. Using the caustic skeleton formalism, we now revisit these analyses and systematically address the physical reality of cosmic walls from the distribution and masses of their embedded dark matter haloes. Using a suite of high-resolution constrained simulations, for the first time, we relate the properties of the haloes to the the substructure of the caustic skeleton within the large-scale cosmic web. As we shall see, the formalism yields a theoretical foundation for the previous numerical observations, and encourages future studies of galaxy formation in the Zel'dovich pancakes.

We run a suite of 20 high-resolution constrained simulations with $256^3$ particles in a box of $50 \, h^{-1} \textrm{Mpc}$ side length each.
With this resolution, each simulation particle has a mass of $6.4 \cdot 10^8 \, h^{-1} M_{\odot}$. We identify the haloes with \verb|Gadget-4|'s \verb|Subfind| algorithm \cite{SpringelPakmorZier+2021}, which extends the traditional friends-of-friends (FoF) method by identifying gravitationally bound (sub-)haloes within the FoF groups. This choice is appropriate for our study, as we are primarily interested in gravitationally bound small-mass objects, rather than extended matter clumps. Consisting of a few tens to order $10^6$ bound particles, the identified haloes cover a mass range of about $10^{10}\textrm{--}10^{14} \,h^{-1
} M_{\odot}$, which is well within the intermediate to upper range of relevant halo masses for astrophysical studies.

\subsection{Filamentary patterns of haloes in walls}

\begin{figure*}
    \centering
    \includegraphics[width=0.9\textwidth]{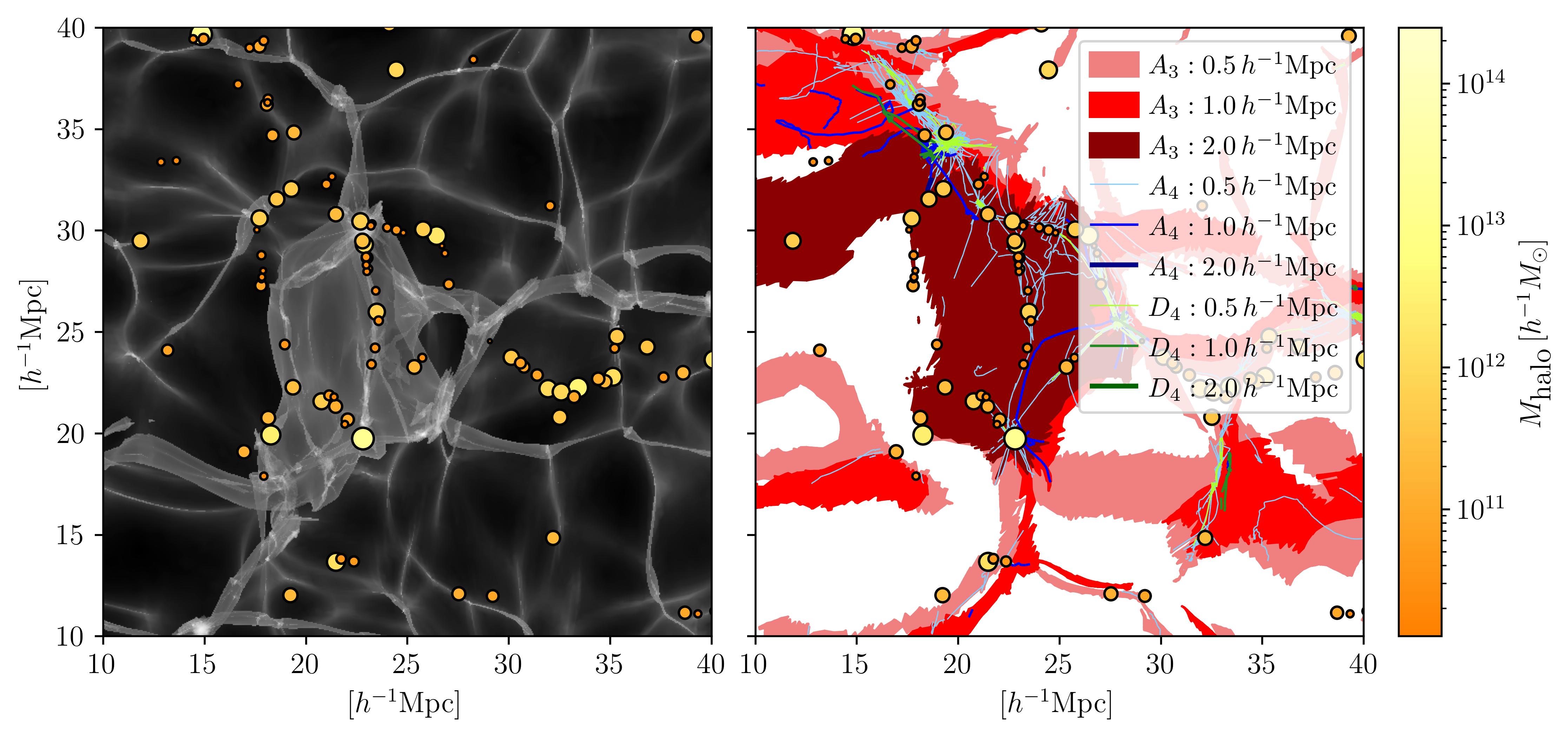}
    \includegraphics[width=0.9\textwidth]{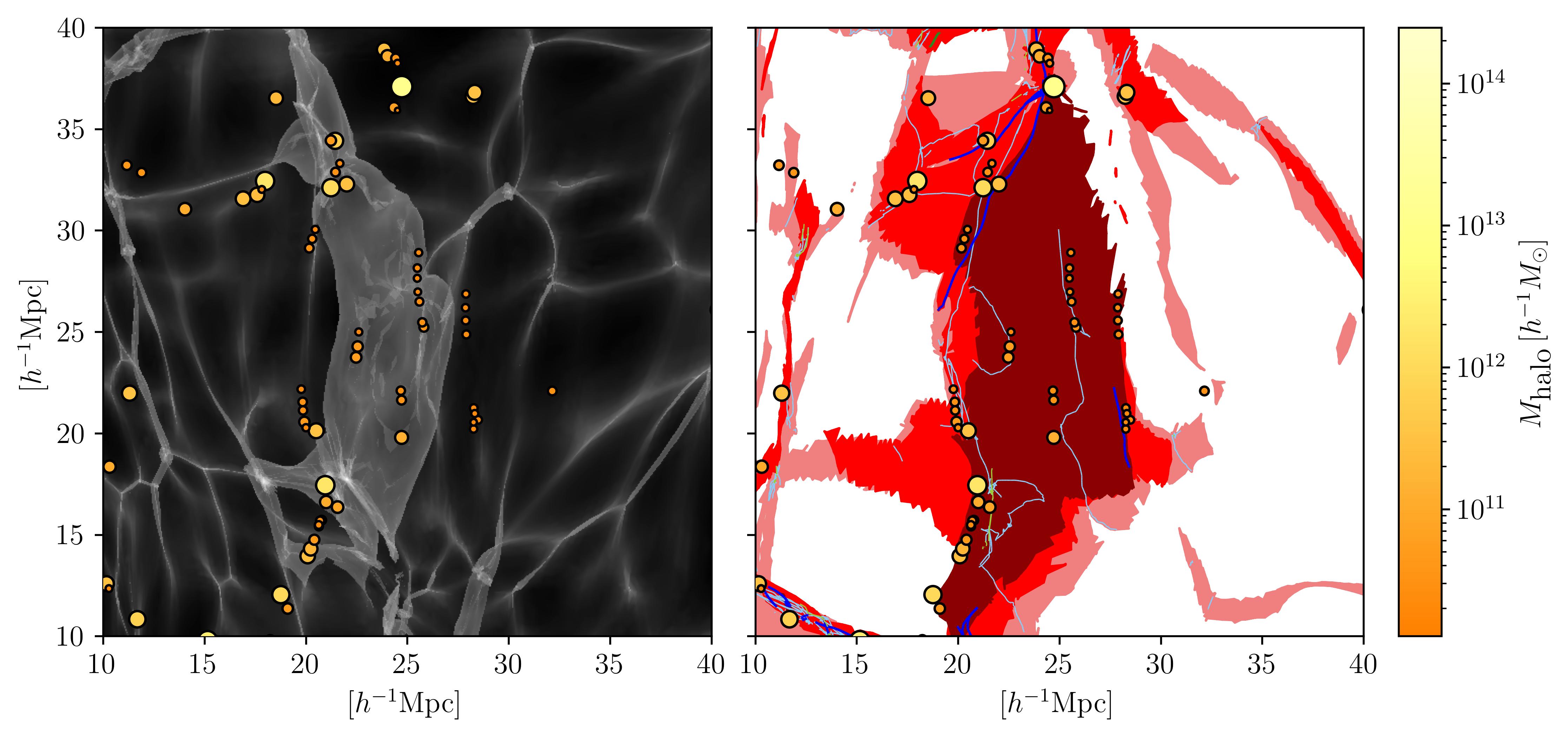}
    \includegraphics[width=0.9\textwidth]{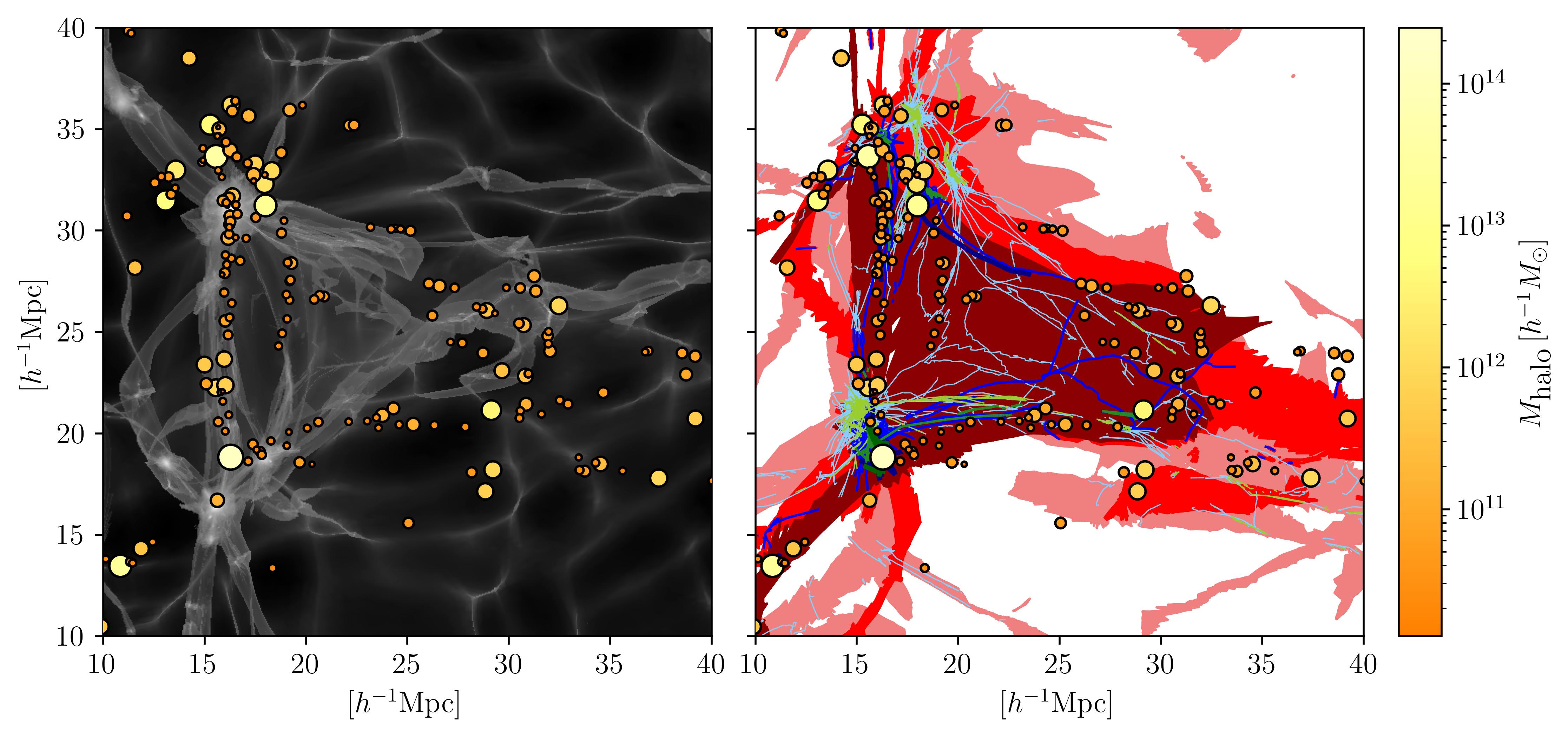}
    \caption{Exemplary realisations of high-resolution constrained simulations of cosmic walls along with their embedded dark matter haloes (same as \cref{fig:haloes_introduction}). The left panel shows the density field sliced along the face of the wall ($x$-slice), superimposed by the haloes identified in a thin volume ( (thickness $\epsilon = 1.5 \,h^{-1}\textrm{Mpc}$) around the wall and projected into the same slice. The caustic skeleton, projected from the same thin volume, is shown in the right panel, with the $A_3$ cusp walls in red, the $A_4$ swallowtail filaments in blue and the $D_4$ umbilic filaments in green respectively. The caustics are evaluated at different length scales $\sigma$.}
    \label{fig:haloes}
\end{figure*}

 \Cref{fig:haloes} displays three exemplary realisations of the cosmic walls with their embedded dark matter haloes. The density fields in the left panels show the face-on slices through the walls. Note that due to the sharpness of the PS-DTFE density field, the infinitesimal density slice is highly sensitive to the wall curving into or out of the plane. The perforated appearance of the walls is an artifact of this slicing, and the solid (yet curved) structure of the wall would be apparent in a three-dimensional visualisation. The right-hand panels show the corresponding caustic skeleton for different smoothing scales $\sigma$. The caustics are evaluated over a thin volume around the $x$-slice and projected into the slicing plane. This way, the cusp sheet appears a plane-like structure and the filaments traversing the wall (i.e. being approximately parallel to the slice) appear as line-like structures. Note that this visualisation is different from the caustic slices shown in \cref{sec:theory}. The solid cusp sheets confirm that the simulated walls are in fact solid objects, with the density field from the infinitesimal slices being in sufficient agreement.

Both the left- and right-hand panels of \cref{fig:haloes} also display the haloes identified in the same thin volume and projected down into the slicing plane. As expected, we find that the distribution of haloes is not random within the simulation volume, but highly dependent on the imposed wall geometry. Firstly, it is apparent that the projected halo positions within the wall are in approximate agreement with the density field, keeping in mind the infinitesimal density slicing. However, the haloes are not evenly distributed over the wall area, but rather follow filamentary patterns of varying extents. Numerous haloes are found near the large-scale filaments and superdense clusters that bound the sheet-like overdensity. Within the wall plane, the haloes appear in groups that extend over elongated regions. This distribution is explained by the smaller-scale caustics, which reveal that the cusp sheets are not structureless, but characterised by numerous small-scale creases. These correspond to the line-like swallowtail and umbilic caustics evaluated at small smoothing scales, e.g. $\sigma = 0.5 \,h^{-1}\textrm{Mpc}$. This observation is naturally understood in the caustic skeleton formalism, which predicts that the near scale-invariant primordial fluctuation spectrum seeds small-scale structures within the extended cosmological objects, as was illustrated in \cref{fig:sim_256_caustics}. The intrinsic connectivity of the scale-space caustic network is physically manifested by the emergence of overdense filaments and embedded haloes within the moderately overdense wall sheets. As we discussed in \cref{subsubsec:theory-caustics}, this observation enhances former notions of connectivity that are based solely on the multiplicity of filaments and clusters, notably \cite{CodisPogosyanPichon2018}.

We note here that the caustics at $\sigma=0.5  \,h^{-1}\textrm{Mpc}$ are numerically evaluated at a resolution close to the simulation grid size. The zigzag-like structure observed for some of the small-scale filaments is an artifact of this resolution limit. In future work, we plan to repeat our analysis with even higher-resolution simulations to calculate the small-scale caustics more smoothly, and to identify the lower mass haloes embedded in the same. Despite the numerical limitations, our constrained simulations clearly reveal, for the first time, the mathematical foundation of the filamentary halo patterns within cosmic walls.

\begin{figure*}
    \centering
     \centering
    \begin{subfigure}[b]{\textwidth}
        \includegraphics[width=\textwidth, clip, trim={0 0cm 0 0.5cm}]{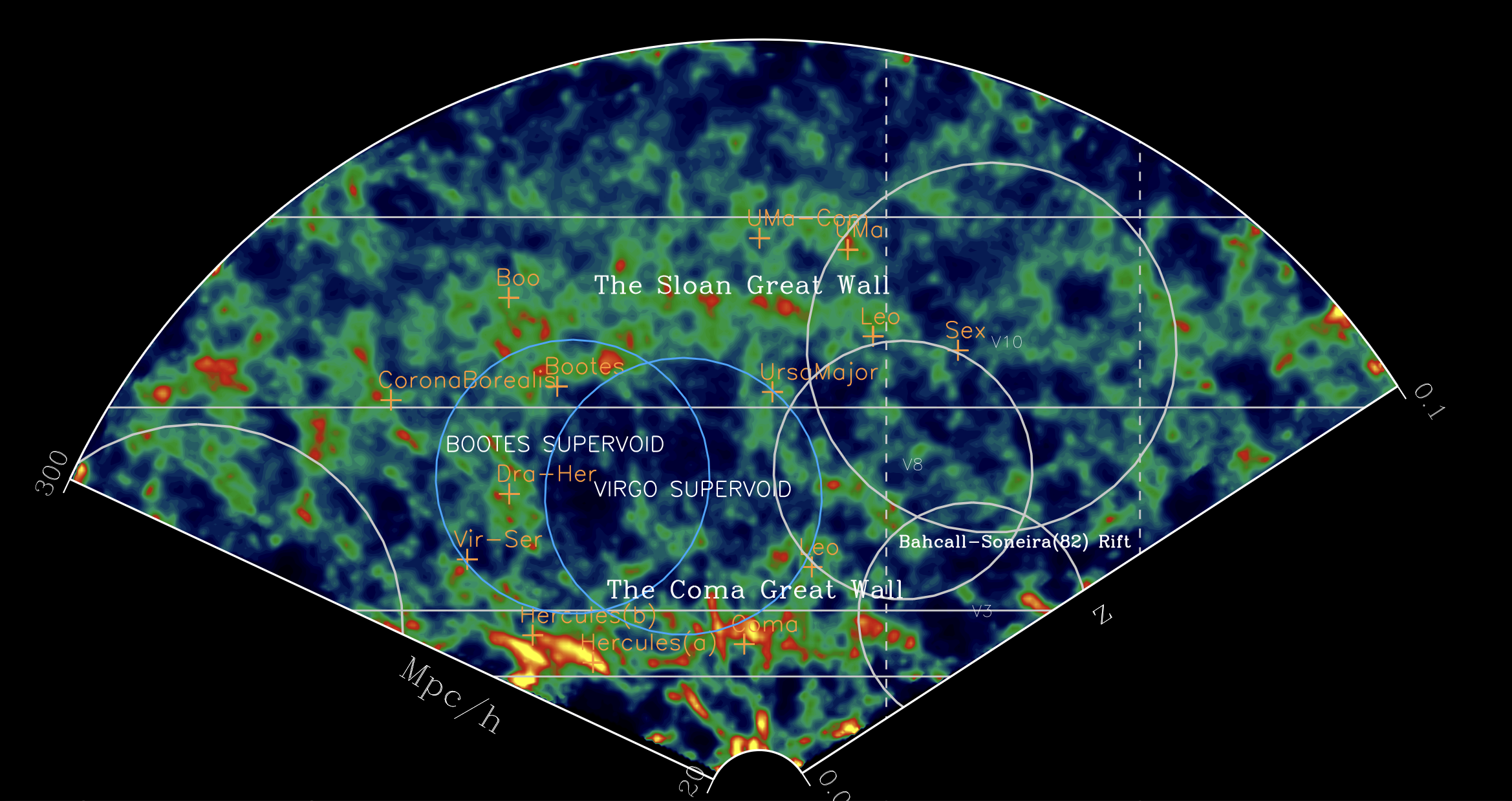}
    \end{subfigure}\vspace{0.5em}
    \begin{subfigure}[b]{\textwidth}
        \includegraphics[width=\textwidth, clip, trim={0 2cm 0 2cm}]{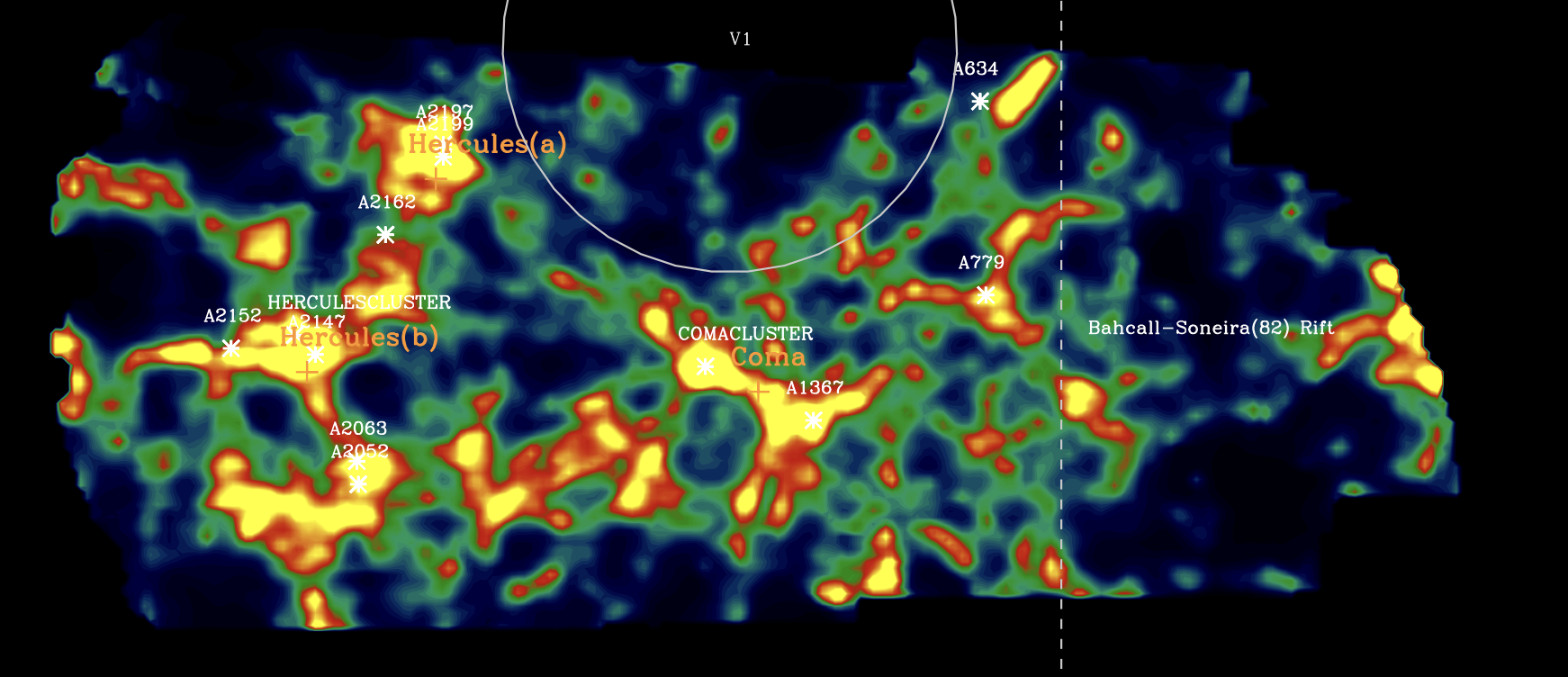}
    \end{subfigure}\vspace{0.5em}
    \begin{subfigure}[b]{\textwidth}
        \includegraphics[width=\textwidth, clip, trim={0 1.6cm 0 5cm}]{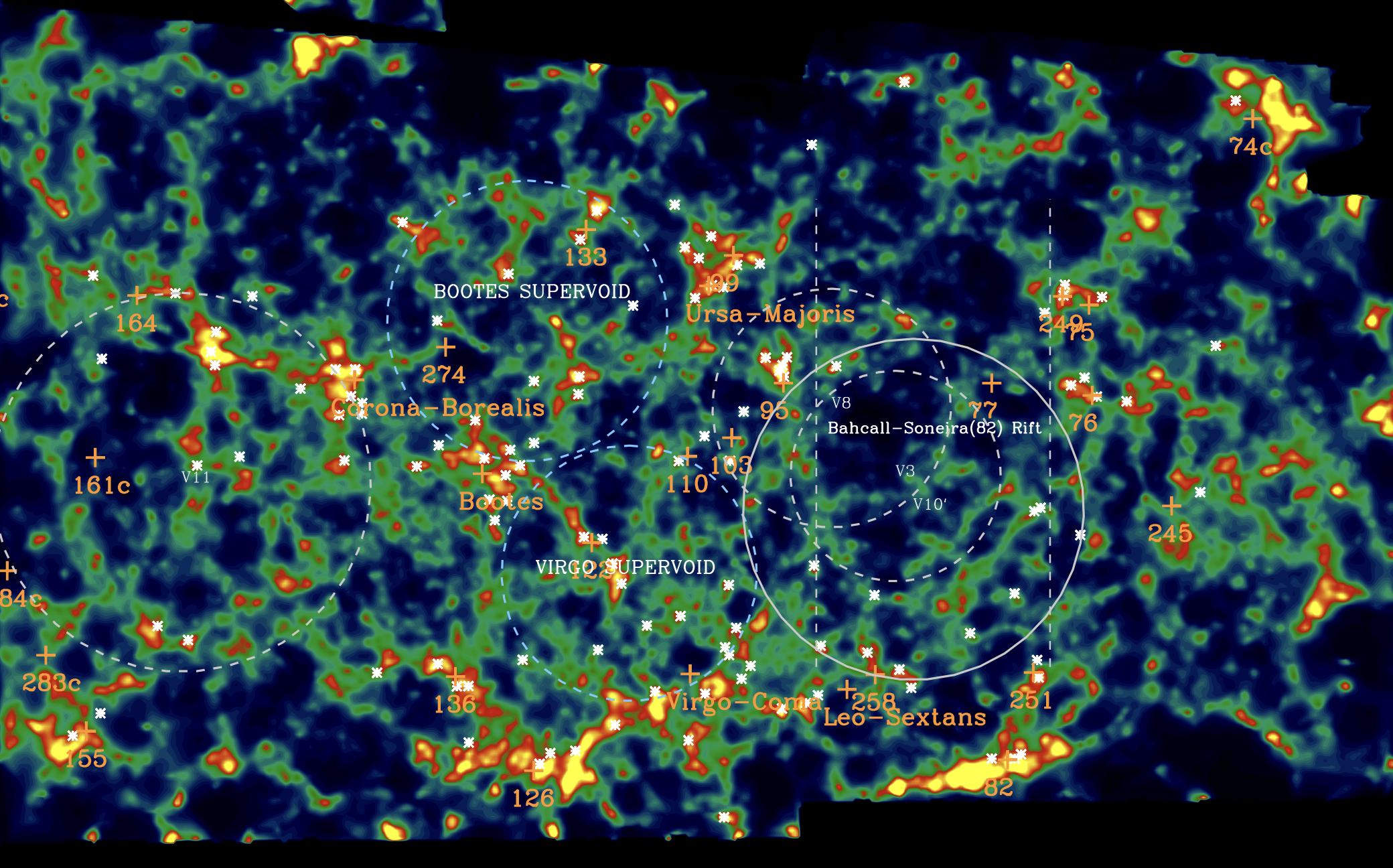}
    \end{subfigure}
    \caption{Galaxy density maps obtained from DTFE evaluations of the Sloan Digital Sky Survey (SDSS) data  \cite{Platen2009}. Shown are a top-view of the Great Walls and intermediate voids (upper panel), face-on view of the Coma Wall (middle panel) and face-on view of the Sloan Great Wall (lower panel). The positions and and sizes of the major cosmic voids (Boötes, Virgo, V3, V8, V11) in the line-of-sight foreground of the SGW (i.e. in between the Coma Wall and the SGW) are indicated by the white dotted circles. Image credit to Erwin Platen (2009) \cite{Platen2009}, created as follow-up work to the analysis published in \cite{Platen+2011}.}
    \label{fig:SDSS_maps}
\end{figure*}

Our findings are not only relevant to understanding halo distributions from $N$-body simulations, but they directly reflect the physical reality of the walls that are being identified in ongoing and upcoming cosmological surveys. To date, these are, notably, the Pisces-Perseus Supercluster \cite{GiovanelliHaynes1986}, the Coma Wall \cite{GellerHuchra1989}, the Sloan Great Wall (SGW) \cite{Gott+2005} and the recently identified BOSS Great Wall (BGW) \cite{Lietzen+2016}.\footnote{The identification of these large-scale objects as ``walls'' stems from flattened morphology of the galaxy distributions on a gigantic spatial extent of several hundreds of megaparsec. However, as surveys only observe the Eulerian configuration of galaxies, this classification has not yet been analysed in a rigorous phase space treatment. It is currently unclear whether the ``Great Walls'' actual form dynamical walls in the terminology of the present article. Instead, it is likely that individual ``superclusters'' \cite{Einasto+2011, Einasto+2017} within the Great Walls --- forming flattened distributions of galaxies on scales of tens to about a hundred megaparsec --- constitute sheet-like multistreaming regions in the sense that we are interested in here. This concerns, in particular, the substructure of the SGW \cite{Platen2009, Einasto+2011, EinastoM2025}, as well as the A, B, C and D superclusters in the BGW \cite{Lietzen+2016, Einasto+2017,  EinastoM2025}.}
The substructure \cite{Platen2009, Platen+2011, Einasto+2011, Lietzen+2016, Einasto+2017, EinastoM2025} of these gigantic objects is characterised by an intricate web-like geometry that is remarkably similar to the halo distributions that we find in our constraint simulations. Taken from the analysis of \cite{Platen2009}, the middle panel of \cref{fig:SDSS_maps} demonstrates that this can be observed for the entirety of the Coma Wall. With an extent of about $60 \times 170\,h^{-1}\textrm{Mpc}$, the Coma Wall is somewhat larger than, but comparable to, the objects we have simulated above. The SGW is significantly larger, with an extent of order $400 \,h^{-1}\textrm{Mpc}$ \cite{WeygaertBond2008}. Yet, the lower panel of \cref{fig:SDSS_maps} shows that similar features are prominent within the substructures ("superclusters") of the SGW, see also \cite{Platen2009, Einasto+2011, EinastoM2025}. Moreover, for BGW, the C and D superclusters \cite{Einasto+2017} in particular are of comparable extent ($\approx 50 \,h^{-1} \textrm{Mpc}$) to the walls we studied above, and are distinguished by an intricate filamentary pattern of embedded galaxies (see fig. 1 of \cite{Einasto+2017}) that is of striking resemblance to  \cref{fig:haloes_introduction} or the lower panel of \cref{fig:haloes}. These observations naturally lend themselves to a treatment in the caustic skeleton formalism, which may shed light on the distribution of galaxies from the scale-space singularities underlying the formation of the extended large-scale structure objects. In future work, we plan to make use of the recently developed Local Universe reconstruction techniques \cite{JascheWandelt2013, JascheLavaux2019, Valade+2022, Valade+2024, McAlpine+2025} to probe the caustic skeleton underlying the assembly of nearby walls in our cosmic neighbourhood. By identifying the dominant dynamical singularities, we plan to investigate the distribution and properties of the observed galaxies in the local cosmic web. Besides the Coma Wall and numerous structures in the SGW, we expect the results of the present article to be directly applicable to the C and D BGW superclusters: We speculate at this point that these objects may be formed by cusp sheets at scale $\sigma \approx 2.0 \,h^{-1} \textrm{Mpc}$, with the galaxy distribution traced by the swallowtail and umbilic caustics at $\sigma \approx 0.5 \textrm{--
} 1.0 \,h^{-1} \textrm{Mpc}$. We will unambiguously probe this hypothesis and, more generally, the reality of the caustic skeleton formalism in Local Universe reconstruction simulations in future work.

\subsection{Cosmic wall halo mass function}

Not only the positions of the haloes are sensitive to the caustic skeleton, but also their masses are dependent on the cosmic web environments they are embedded in. Within the scope of our simulations, \cref{fig:haloes} reveals that the most massive haloes ($M_{\textrm{halo}}\sim 10^{13}\text{--}10^{14}\,h^{-1
} M_{\odot}$) typically reside in the large-scale filaments and superdense clusters that bound the cosmic walls. The haloes located along the small-scale foldings within the cosmic wall are typically of lower mass ($M_{\textrm{halo}}\sim 10^{11}\text{--}10^{12}\,h^{-1} M_{\odot}$). This observation is consistent with the results of numerous previous studies \cite{Hahn+2007, Cautun+2014, Metuki+2015, AlonsoEardleyPeacock2015, MetukiLibeskindHoffman2016, Libeskind+2017}, which used different cosmic web tracing methods to study cosmic environments and found that haloes forming in cosmic walls are less massive than those forming in filaments in clusters. One such tracing method is the multi-scale morphology filter (MMF) \verb|NEXUS(+)| algorithm \cite{AragonCalvo+2007, Cautun+2012}, which identifies cosmic web objects based on the dominant signature of the eigenvalues of the Hessian of a tracer field evaluated over a range of smoothing scales. Using this method, \cite{Cautun+2014} found that the mass of function of haloes in cosmic walls is shifted to significantly lower values than the mass functions for the haloes embedded in filaments and clusters respectively.

\begin{figure*}
    \centering
    \includegraphics[width=0.85\textwidth]{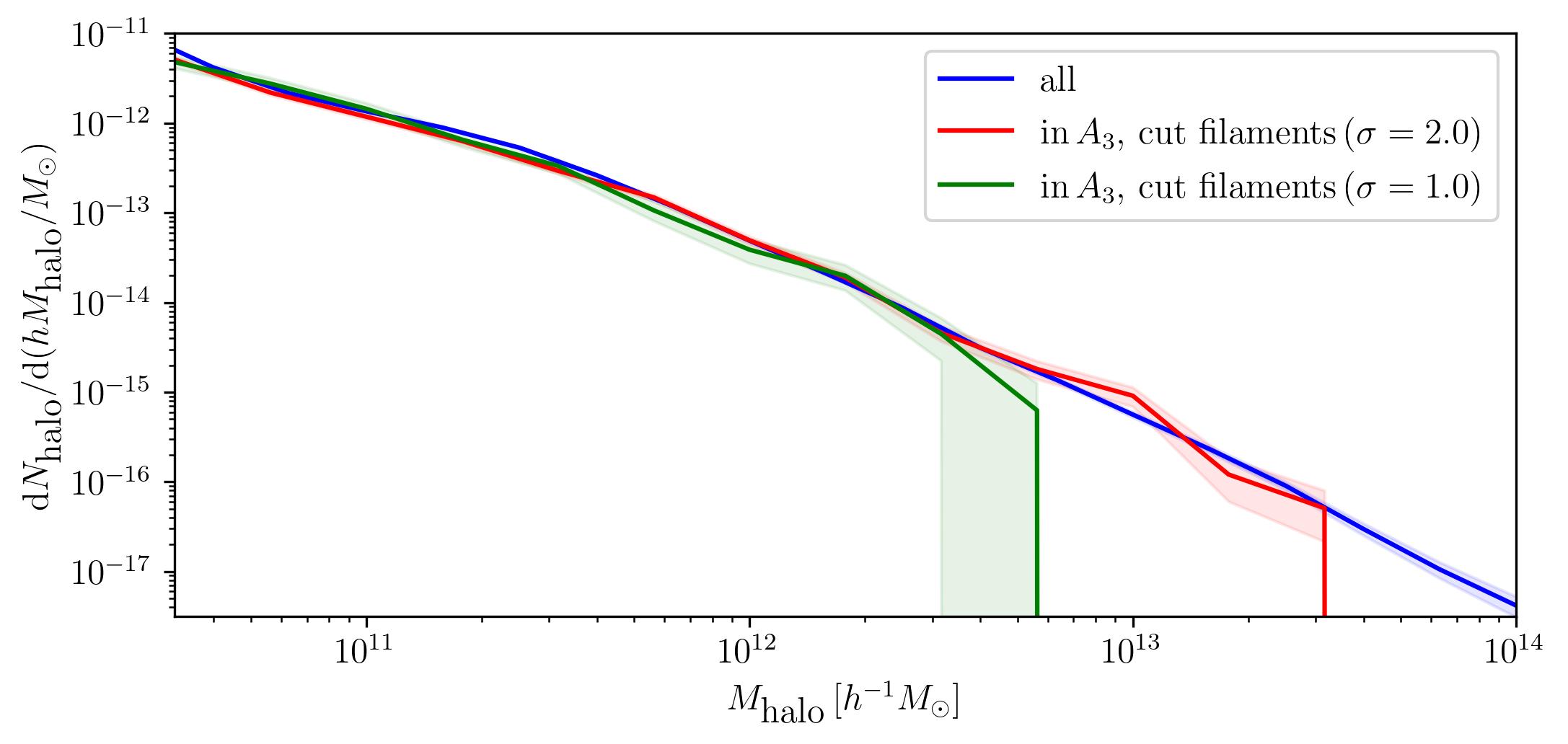}
    \caption{Normalised mass function of the wall haloes, obtained from the suite of 20 high-resolution constrained simulations. The blue curve shows the mass function of all haloes that reside in the large-scale ($\sigma=2.0\, h^{-1}\textrm{Mpc}$) cusp sheet in central region of the simulation box. The red and green curve show the mass function after removing those haloes that lie near any filamentary (swallowtail or umbilic) caustics at $\sigma=2.0\, h^{-1}\textrm{Mpc}$ and $\sigma=1.0\, h^{-1}\textrm{Mpc}$ respectively. Using the numerically computed caustic skeleton (see \cref{fig:haloes}), we assess whether a halo resides within $\epsilon = 1.0\, h^{-1}\textrm{Mpc}$ of any of the determined cusp sheet simplices, and not within $\epsilon = 1.0\, h^{-1}\textrm{Mpc}$ of any of the filamentary caustic simplices.  The 1$\sigma$-bands are estimated assuming Poissonian halo counts, such that $\sigma \sim 1/\sqrt{N_{\rm{halo}}}$.}
    \label{fig:halo_mass_function}
\end{figure*}

We now repeat this analysis with our constrained simulations and measure the mass function of haloes in the simulated cosmic walls. However, while traditional identification methods classify cosmological volumes as belonging to either voids, walls, filaments or clusters, the caustic skeleton model suggests that this notion is too simplistic to describe the connected scale-space cosmic web. In particular, the filaments, made of either swallowtail or umbilic caustics, are always connected to the cosmic walls formed by the cusp sheets. The classification of a halo as belonging either to a wall or a filament therefore becomes ambiguous. Considering the fractal-like structure of smaller-scale caustics within the cusp sheet, the caustic formalism suggests that it is not meaningful to consider ``wall-embedded'' as opposed to ``filament-embedded'' haloes, but rather to identify those haloes that lie within the cusp sheets, but sufficiently far away from any swallowtail and umbilic caustics, as evaluated on different smoothing scales. In \cref{fig:halo_mass_function}, we compare the mass function of all haloes in our simulations with those obtained from the haloes in the wall by cutting away those haloes lying near increasingly smaller-scale filaments. Despite the moderate number of data points, it is evident that the exclusion of haloes near the filaments (and therefore also clusters) of scale $\sigma = 2.0 \,h^{-1}\textrm{Mpc}$ removes the peak masses $M_{\textrm{halo}}\gtrsim 10^{13.5} \,h^{-1
} M_{\odot}$ from the mass function. The mass function is further shifted to the left when the filaments at  $\sigma = 1.0 \,h^{-1}\textrm{Mpc}$ are excluded, thus leaving only the small to moderate mass haloes $M_{\textrm{halo}}\lesssim 10^{12.5} \,h^{-1} M_{\odot}$. Clearly, these results are consistent with previous studies \cite{Hahn+2007, Cautun+2014, Metuki+2015, AlonsoEardleyPeacock2015, MetukiLibeskindHoffman2016, Libeskind+2017} in that we find the more massive haloes to lie near the large-scale filaments and clusters, whereas the masses of the wall haloes residing near only small-scale foldings are lower. In future work, it would be interesting to repeat this analysis with even higher resolutions to resolve in more detail the lowest mass haloes and the small-scale filamentary patterns, which may shed more light on how the near scale-invariant primordial perturbations seed the fractal-like web geometry \cite{Einasto+2020, Einasto2025} observed today.

\subsection{Halo densities near walls}

\begin{figure}
    \centering
    \begin{subfigure}[b]{\textwidth}
        \includegraphics[width=\textwidth]{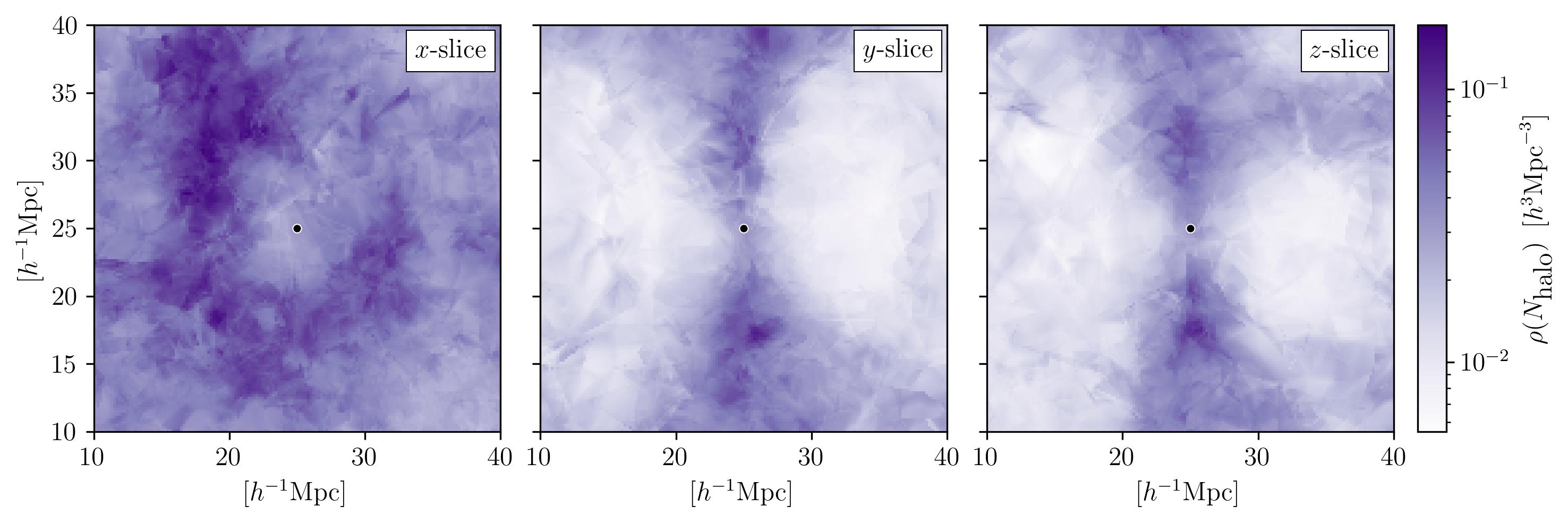}
        \caption{dark matter halo number density}
    \end{subfigure}\vspace{0.5em}
    \begin{subfigure}[b]{\textwidth}
        \includegraphics[width=\textwidth]{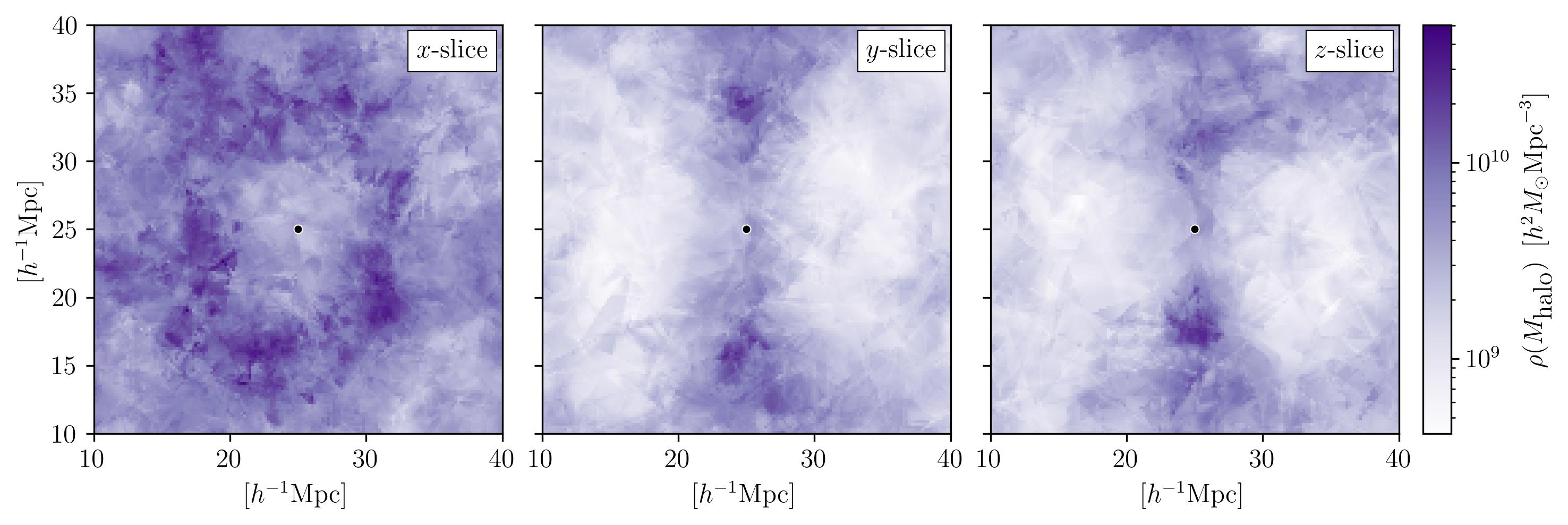}
        \caption{dark matter halo mass density}
    \end{subfigure}
    \caption{Median halo densities inferred from the suite of 20 high-resolution constraint simulations of cosmic wall formation. As before, shown are the $x$-, $y$- and $z$-slice (left, middle, right respectively) through the constraint point. The upper panels show the median halo number density, whereas the lower panels display median density obtained from the halo masses.}
    \label{fig:halo_densities}
\end{figure}

In recent years, the dependency of halo bias on the cosmic web \cite{Yang2017} has received increasing attention. While \cite{AlonsoEardleyPeacock2015} argued that the halo populations in the voids, walls, filaments and clusters can be can be accurately modelled from the dark matter density contrast alone, the environment-dependent halo bias found in \cite{Yang2017} suggests that the tidal field may play a decisive role in the halo assembly in the different cosmic web elements. We leave a detailed investigation of halo bias in constrained simulations to future work, but conclude this section with a brief discussion of the characteristic halo density around the simulated cosmic walls.

To this end, we infer the median halo number and mass densities for the suite of constrained simulations we have discussed above. We obtain these by evaluating the DTFE halo density for each simulation, and subsequently stacking them and taking taking the median, as was done in \cref{subsec:sims-fields} for the dark matter densities. \Cref{fig:halo_densities} shows that both the median halo number and halo mass densities reflect the imposed wall geometry, with the planar halo overdensity separating two voids whose halo occupation is much more sparse. The tessellated appearance of the plots is an artifact of the limited number of simulations and moderate number of haloes in each simulations; the inferred fields are therefore sensitive to the underlying Delaunay tessellations. Nevertheless, it is clear that the voids manifest themselves as regions near-empty of haloes, whereas the cosmic wall is about as dense in haloes as the full simulation volume, for which we infer the mean halo number density $\bar{\rho}(N_{\textrm{halo}}) \approx 0.02 \,h^{3}\textrm{Mpc}^{-3}$ and the mean halo mass density $\bar{\rho}(M_{\textrm{halo}}) \approx 1.4 \, h^2 M_{\odot}\textrm{Mpc}^{-3}$ respectively. The walls are about an order of magnitude more dense in haloes than the voids. Note, however, that the halo densities are somewhat arbitrary, as e.g. the identification of FoF groups rather than \verb|Subfind| haloes would be expected to result in fewer and in turn more massive haloes. Nonetheless, the identified \verb|Subfind| halo densities are in clear agreement with the discussion above: The distribution of haloes in the plane of the wall is not uniform, but peaks in a ring-like structure surrounding the sheet interior. Considering the random wall realisations, it is clear that these halo overdensities are due to the large-scale filaments bounding the cosmic walls, along which the numerous and typically more massive haloes are found. To illustrate this further, \cref{fig:halo_densities_cut} plots the wall-face median halo number densities restricted to low-, intermediate and high-mass haloes respectively. It is apparent that the most massive haloes lie far away from the centre of the wall, whereas low-mass haloes, while still typically lying in small-scale filaments, are found more within the wall sheet.

\begin{figure}
    \centering
    \includegraphics[width=\textwidth]{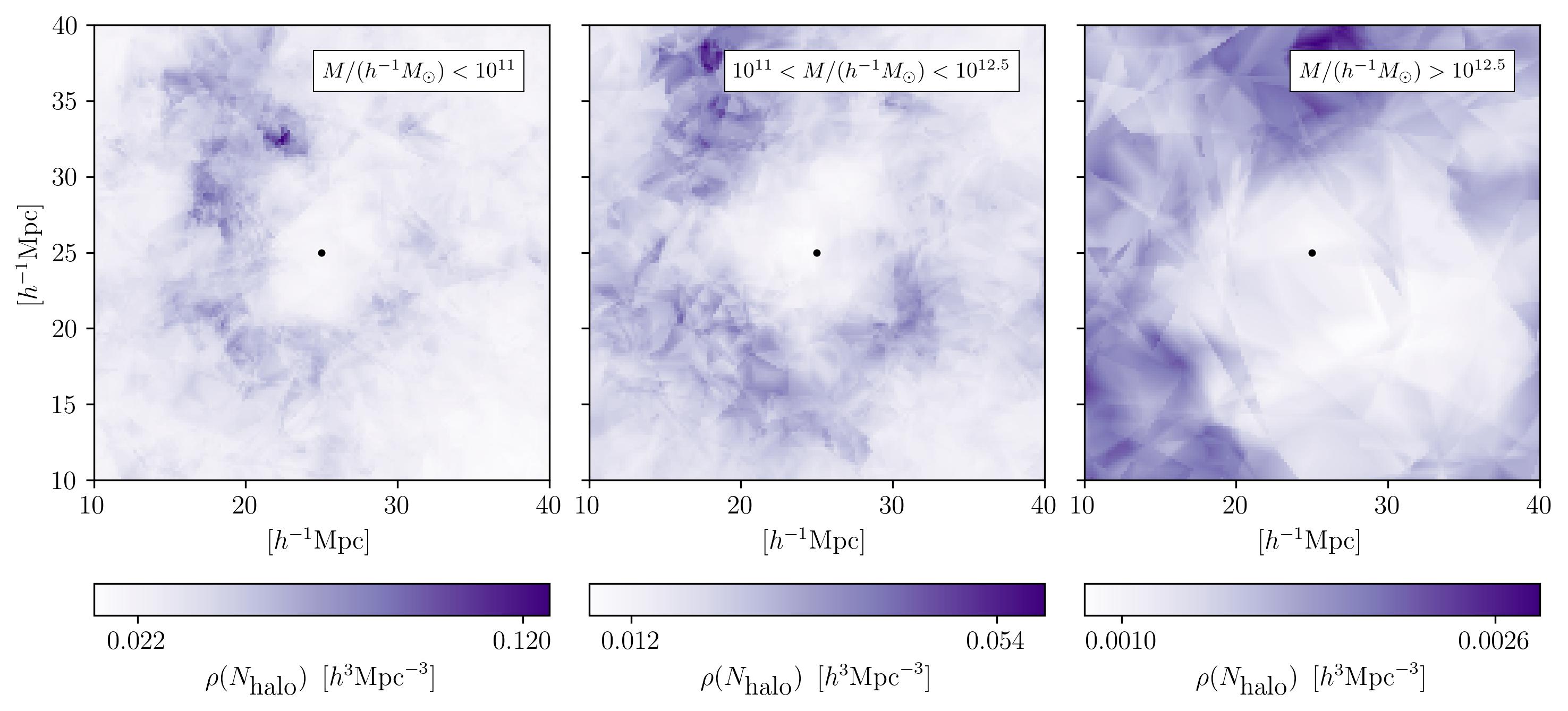}
    \caption{Median halo number densities for different halo mass ranges (low, intermediate and high mass haloes in the left, middle and right panel respectively). Shown are median densities in the wall plane ($x$-slice, see \cref{fig:halo_densities}) and restricted to those haloes that lie within the chosen mass range.}
    \label{fig:halo_densities_cut}
\end{figure}

Altogether, our studies confirm the observations of numerous previous studies \cite{Hahn+2007, Cautun+2014, Metuki+2015, AlonsoEardleyPeacock2015, MetukiLibeskindHoffman2016, Libeskind+2017} finding that haloes within cosmic walls are less massive than those found within the highly dense large-scale filaments and clusters. Moreover, with the consideration of small-scale caustics within the cusp sheet, we provide for the first time a natural explanation of the non-uniform, filamentary distribution of low-mass haloes within the cosmic walls. We will further investigate these findings in the context of Local Universe reconstructions in forthcoming studies.


\section{Constraint simulations of conventional saddle point conditions}
\label{sec:alternatives}

Ever since the seminal study of Bardeen et al. (1986) \cite{Bardeen1986}, the peaks and troughs of the primordial density fields have been of central interest to understanding the emergence of the cosmic web. Conventionally, the correspondence between the early-time fields and the late-time structures has been formulated from two distinct points of view. The first emphasises the role of the morphology of the primordial density perturbation $\delta$ as a proxy of the Eulerian density contrast in the linear regime and subsequent non-linear collapse. The second acknowledges the central role of the tidal fields in shaping the intricate geometry of the large-scale structure. In this dynamical formulation, the emergence of voids, walls, filaments and clusters is attributed to the tidal tensor $\mathcal{H}\phi$, and thus the morphology of the primordial potential perturbation $\phi$. In both cases, the critical points of the respective points have been proposed as the progenitors of the structural elements of the cosmic web. In this section, we compare our proposed proto-wall constraint derived from the caustic skeleton formalism to the conventional constraints based on the morphology of $\phi$ and $\delta$. The aim of this section is not to improve on the conventional constraints, but instead to provide a fair comparison of the simulations resulting from our novel constraint to those resulting from the constraints as have been studied in the literature. Despite the wide use of the saddle point constraints in cosmic web analyses, to our knowledge, the present study provides the first detailed exploration of respective constraint simulations, and is thus an important check even aside from the comparison with the caustic skeleton model.

The relevance of the tidal tensor $\mathcal{H}\phi$ and its eigenvalue fields has been appreciated early on since the seminal works by Zel'dovich (1970) \cite{Zeldovich1970} and Doroshkevich (1970) \cite{Doroshkevich1970}. Moreover, the T-web classification has been a widely used standard in the community, differentiating cosmic web elements through their signature in the tidal tensor. Preceding the modern identification methods, two studies by Haarlem \& van de Weygaert (1993) \cite{HaarlemWeygaert1993}  and van de Weygaert \& Bertschinger (1996) \cite{WeygaertBertschinger1996} proposed the correspondence of the progenitors of the cosmic web elements to the critical points of primordial potential perturbation $\phi$. Concretely, the late-time voids, walls, filaments and clusters were assumed to correspond to the $\phi^{(---)}$ maxima,  $\phi^{(+--)}$ saddles,  $\phi^{(++-)}$ saddles and  $\phi^{(+++)}$ minima respectively. In \cite{WeygaertBertschinger1996}, the authors implemented the respective constraints in random fields using the Bertschinger-Hoffman-Ribak algorithm \cite{Bertschinger1987,HoffmanRibak1991} to infer field realisations and statistical properties of the constraints. The choice of the $\phi$ critical points as constraints for the cosmic web progenitors is rather natural, as it directly reflects the role of the eigenvalue fields underlying the Zel'dovich theory \cite{Zeldovich1970} of structure formation. However, the downside of this approach is that the relation between the primordial potential perturbation $\phi(\bm{q})$ and the Eulerian density field $\rho(\bm{x})$ is not manifestly clear. We suspect that this is the reason why, to our knowledge, the critical points of $\phi$ have not been further employed in analytical investigations of cosmic web properties from the primordial conditions.

Instead, a series of papers \cite{NovikovColombiDore2006, Sousbie+2008, Pogosyan+2009} pioneered by Novikov et al. (2006) \cite{NovikovColombiDore2006} investigated the identification of the cosmic web from the critical points and associated critical lines of the density field. These studies can be considered a continuation of early works such as Bond et al. (1996) \cite{BondKofmanPogosyan1996}, and were further extended by the development of the topological morphology filter \verb|DisPerSE| \cite{Sousbie2011, Sousbie2011b}. The \verb|DisPerSE| method identifies the cosmic web elements from the Morse-Smale complex of the Eulerian density field, and has been widely applied by the community in recent years, see e.g. \cite{Libeskind+2017}. The underlying \textit{skeleton model} \cite{NovikovColombiDore2006, Pogosyan+2009} of the cosmic web attributes a central role to the saddle points of the density field $\delta$ and classifies the voids, walls, filaments and clusters with the $\delta^{(+++)}$ minima, $\delta^{(++-)}$ saddles, $\delta^{(+--)}$ saddles and $\delta^{(---)}$ maxima respectively; see also \cite{Sousbie+2008, CodisPichonPogosyan2015, CodisPogosyanPichon2018}. Note that \cite{NovikovColombiDore2006, Pogosyan+2009} investigate primarily the Gaussian random field case. It is unclear from these works whether the skeleton formalism is meant to be applied to the Gaussian primordial density perturbation or the highly non-Gaussian late-time density field. However, numerous comments in these and related works \cite{Sousbie+2008, CodisPichonPogosyan2015, CodisPogosyanPichon2018} suggest that the identification from the high-redshift near-Gaussian configuration holds for the late universe, as was brought forward early on e.g. in Bond et al. (1996) \cite{BondKofmanPogosyan1996}. The linear clustering phase indeed preserves the morphology inherited from the primordial fields. However, under the late-time non-linear gravitational evolution, the Eulerian density field is expected to become increasingly oblivious to the small-scale morphology of the primordial density perturbation $\delta$. This sheds doubt on the applicability of the respective saddle point conditions as the progenitors of the cosmic web elements. Nevertheless, to our understanding, the Lagrangian identification was made explicit in a series of studies that used the saddle points of  the primordial potential perturbation $\delta$ to infer various properties of the late-time cosmic web. In particular, \cite{CodisPichonPogosyan2015} calculated spin acquisition near filamentary-type saddle points, \cite{CodisPogosyanPichon2018} inferred the connectivity of filaments and clusters from the morphology of $\delta$, and \cite{Cadiou+2020} investigated the assembly of the cosmic web elements from the scale-space merger of the critical points. Despite the wide use of the $\delta$ saddle points in these quantitative analyses, it is unclear whether these primordial conditions are actually able to reproduce the observed cosmic web elements, as systematic constrained simulations have so far been absent from the literature. In the present article, we therefore commence at filling this gap and investigate the formation of cosmic walls in constrained simulation of the wall-type $\delta^{(++-)}$ saddle point.

\subsection{Saddle points of primordial displacement potential}
\label{subsec:alternatives-phi}

\begin{figure*}
    \includegraphics[width=\textwidth]{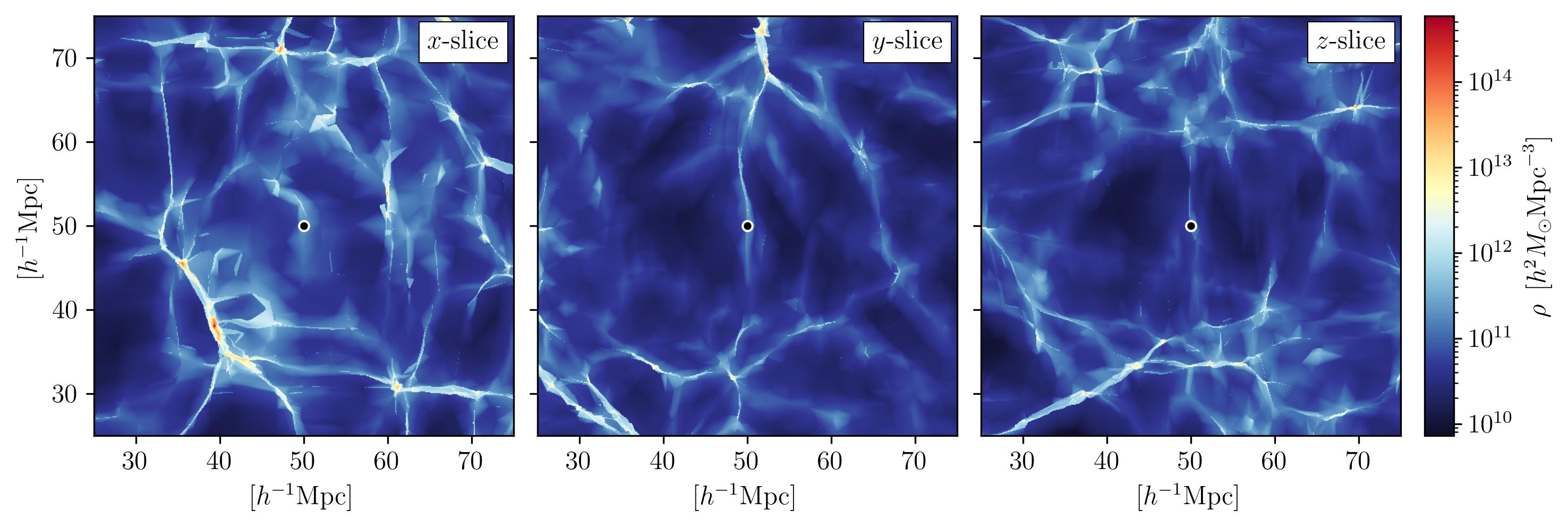}
    \includegraphics[width=\textwidth]{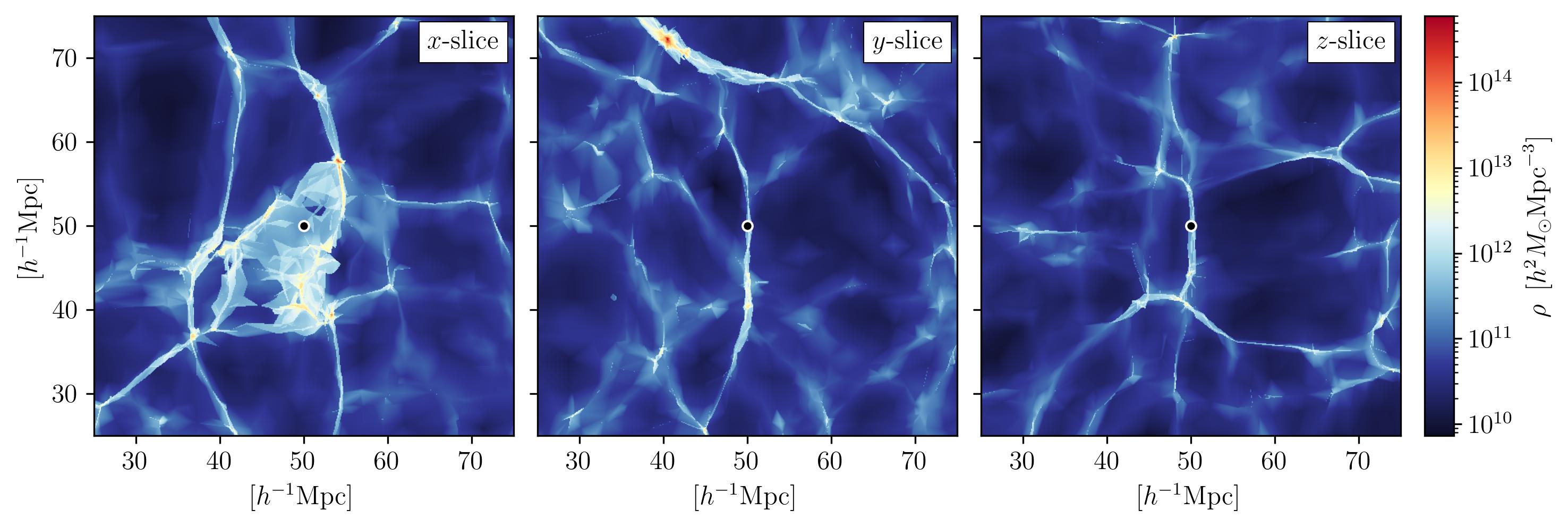}
    \includegraphics[width=\textwidth]{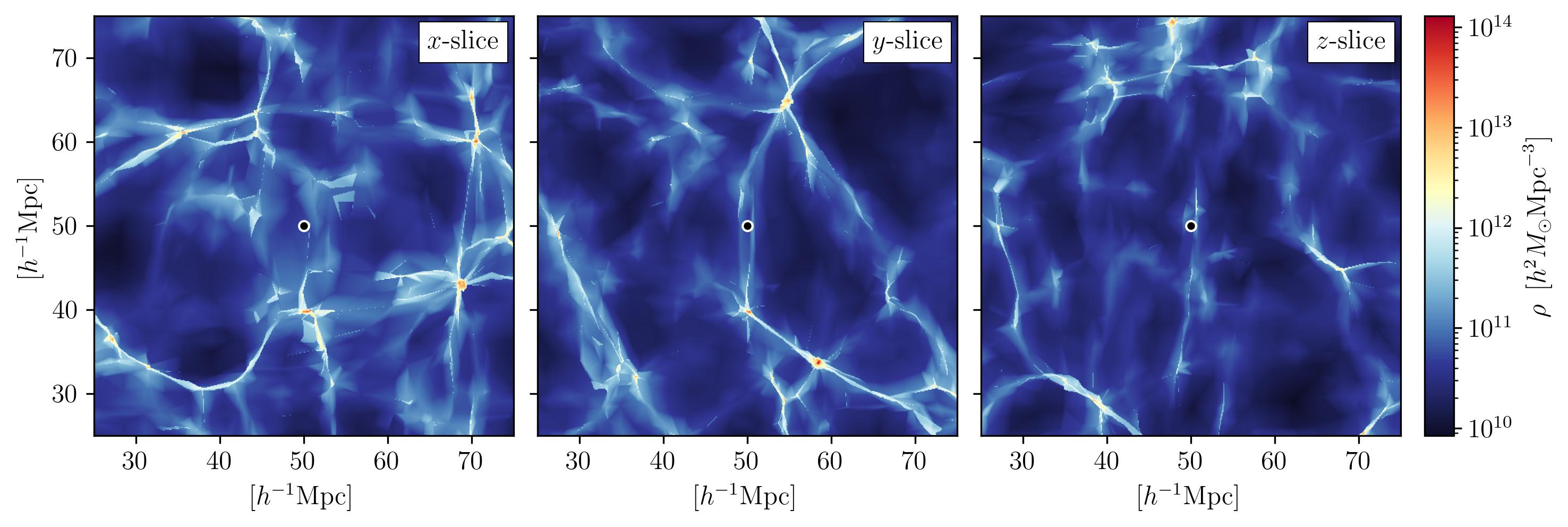}
    \caption{Exemplary density field realisations of the $\Psi^{(+--)}$ constraint for $\sigma = 2.0\, h^{-1}\textrm{Mpc}$.}
    \label{fig:phi_constraint_realisations}
\end{figure*}

\begin{figure*}
    \includegraphics[width=\textwidth]{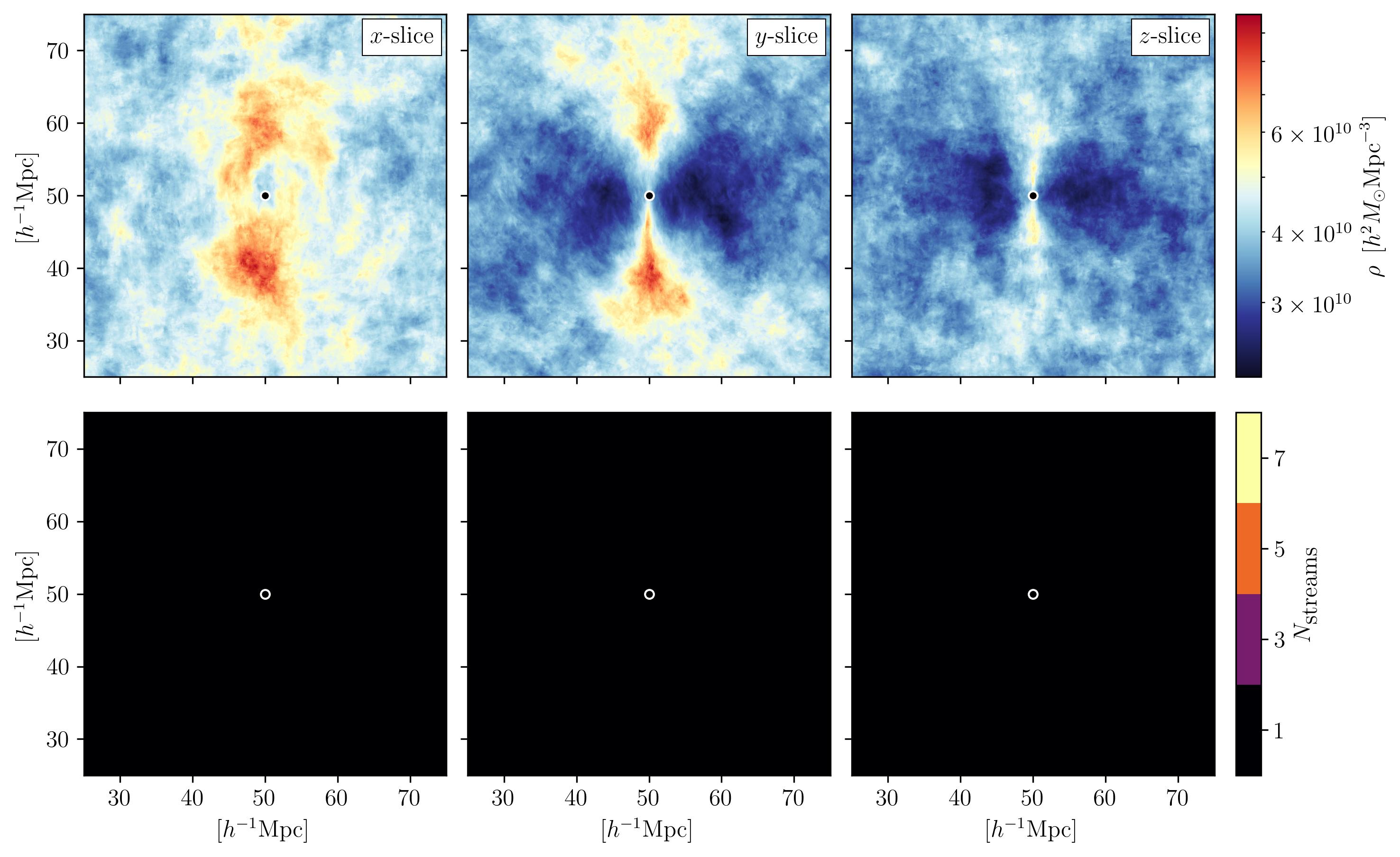}
    \caption{Median fields of the $\Psi^{(+--)}$ constraint for $\sigma = 2.0\, h^{-1}\textrm{Mpc}$. The upper and lower row show the median density and number of streams fields respectively.}
    \label{fig:phi_constraint_mean_fields}
\end{figure*}

\paragraph{Constraint}
In the discussion above, we followed \cite{HaarlemWeygaert1993, WeygaertBertschinger1996} in motivating the role of the critical points in the primordial potential perturbation $\phi$. To stay consistent with the convention of the preceding sections, however, we now choose to work instead with the primordial displacement potential $\Psi \propto \phi$. As the former is related to the latter by a linear proportionality, the discussion is remains unchanged and can be trivially translated.

The condition for the occurrence of a saddle point in the primordial displacement potential $\Psi$ is given by
\begin{equation}
    \nabla \Psi = \bm{0} = (t_1, t_2,t_3) \,,
\end{equation}
where on the right-hand side we evaluated the constraint in the first-order Cartesian field derivatives. To assert that the saddle point is of type $(+--)$, we evaluate the eigenvalues of the Hessian $\mathcal{H} \Psi$ and restrict the constraint sampling to configurations with one positive and two negative eigenvalues. We therefore sample constraint realisations
\begin{equation}
    \bm{C} = (0,\, 0,\, 0,\, t_{11},\, t_{12},\, t_{13},\, t_{22},\, t_{23},\, t_{33})
\end{equation}
in the space of first- and second-order field derivatives from the respective multivariate distribution (see \cref{subsec:recipe-grf_theory}).

Note that $\Psi$ (or equivalently $\phi$) is an arbitrary gauge field, and the critical value of the saddle point is not physical. As proposed in \cite{HaarlemWeygaert1993, WeygaertBertschinger1996}, the $\Psi^{(+--)}$ saddle point constraint therefore has only a single physical parameter, namely the smoothing scale $\sigma$ at which the field is evaluated. Comparing this to our proposed $A_3$ wall centre constraint, the $\Psi^{(+--)}$ constraint by construction cannot impose the collapse time of the hypothetical wall object. Nevertheless, we use the condition in this form as given in the literature \cite{HaarlemWeygaert1993, WeygaertBertschinger1996} for the fair comparison with the novel caustics-based analysis.

\paragraph{Constraint orientation}
We infer the orientation of a $\Psi^{(+--)}$ constraint realisation by evaluating the eigenvectors $\bm{v}_i$ of $\mathcal{H} \Psi$ from the sampled second-order derivatives $t_{ij}$. Choosing again the ordering $\lambda_1 > \lambda_2 > \lambda_3$ and assuming a right-handed eigenvector system, we calculate the Euler angles $(\alpha, \beta, \gamma)$ that rotate the eigenvectors $\{\bm{v}_i\}$ into $\{ \bm{\hat{x}},  \bm{\hat{y}},  \bm{\hat{z}}\}$. The Lagrangian orientation is then performed by imposing the rotated constraint 
\begin{equation}
    \bm{\tilde{C}} = R^{(1,2)}(\alpha, \beta, \gamma) \bm{C} \,,
\end{equation}
where the rotation matrix $R^{(1,2)}$ of the first- through second-order derivatives is constructed analogously to \cref{eq:rot_derivs_mat}. For the Eulerian orientation, note that $\bm{v}_{2, E}$ and $\bm{v}_{3, E}$ span the hypothetical wall plane. Following the arguments of \cref{subsec:recipe-orientation}, we therefore use the tangent vectors $\bm{v}_{2, E}$ and $\bm{\tilde{v}}_{3, E} = \bm{v}_{3, E} - \text{proj}_{\bm{v}_{2, E}} \bm{v}_{3, E}$, construct the normal vector $\bm{\tilde{v}}_{1, E} =\bm{v}_{2, E} \times \bm{\tilde{v}}_{3, E}$ and calculate the Euler angles that rotate $\{ \bm{\tilde{v}}_{1, E}, \bm{v}_{2, E}, \bm{\tilde{v}}_{3, E}\}$ into $\{ \bm{\hat{x}}, \bm{\hat{y}},  \bm{\hat{z}}\}$. By shifting the constraint into the centre of the box and rotating the Eulerian coordinates by the calculated Euler angles about the same point, the hypothesised wall resulting from the simulation is oriented in consistency with the simulations of \cref{subsec:sims-fields}.

\paragraph{Field realisations}
We run a suite of 100 simulations of the $\Psi^{(+--)}$ saddle point constraint with $64^3$ particles in a box of side length $100 \,h^{-1} \textrm{Mpc}$. \Cref{fig:phi_constraint_realisations} shows three exemplary constraint realisations, with the median density and number of streams fields of the simulation suite being shown in \cref{fig:phi_constraint_mean_fields}. We find that the $\Psi$ saddle constraint is successful at simulating planar overdensities that can be understood as progenitors to cosmic walls. This is clearly seen in the median density fields, where an extended overdense volume separates two proto-voids. However, compared to the density fields from our proposed $A_3$ wall centre constraint, \cref{fig:A3_constraint_mean_field}, we find that the overdensities induced by the $\Psi$ saddle points are significantly lower. This is most prominently seen in the upper and lower row of \cref{fig:phi_constraint_realisations}, where only a very weak planar overdensity (if any) is apparent. The reason for this is that while the $\Psi$ saddle points determine the local geometry of the gravitational potential, the first eigenvalue $\lambda_1$ remains unspecified, such that shell-crossing from the fold caustic $\lambda_1 = b_c^{-1}$ is not imposed. For fairness, we have included one shell-crossed realisation in the second-row of \cref{fig:phi_constraint_realisations}. However, with the unspecified level of the first eigenvalue being sampled from the distribution of the second derivatives $t_{ij}$, we find that the majority of $\Psi^{(+--)}$ constraint realisations do not shell-cross into multistreaming cosmic walls. This is confirmed in the median number of streams field, revealing that the constrained volume typically remains single-streaming. This also explains the quadrupolar structure of the density field: The constraint point is very weakly overdense, and the median density field is most sensitive to the higher overdensities from the filaments forming within or around the planar geometry.

Altogether, our simulations demonstrate that the $\Psi$ constraint results in a planar structure in the late-time density field, but is not able to induce typical overdense and shell-crossed cosmic walls without further constraints. It is at this point unclear to what extent the  $\Psi^{(+--)}$ saddle constraint together with fold caustic constraint $\lambda_1 = b_c^{-1}$ would succeed at simulating cosmic walls. In particular, compared to our proposed constraint based on the cusp geometry, it not clear that an extended $\Psi^{(+--)}$ saddle would correspond to a centre of a cosmic wall, well away from filaments that would unavoidably form within the planar overdensity from higher singularities.  In this spirit, it may also be interesting to investigate the correlation between the proposed wall centre constraint, \cref{eq:A3_centre_constraint}, and the $\Psi^{(+--)}$ saddle points. We leave such investigations to further studies.

\subsection{Saddle points of primordial density perturbation}
\label{subsec:alternatives-density}

\paragraph{Constraint and statistics}

\begin{figure}
    \centering
    \begin{subfigure}[b]{0.49\textwidth}
        \includegraphics[width=\textwidth]{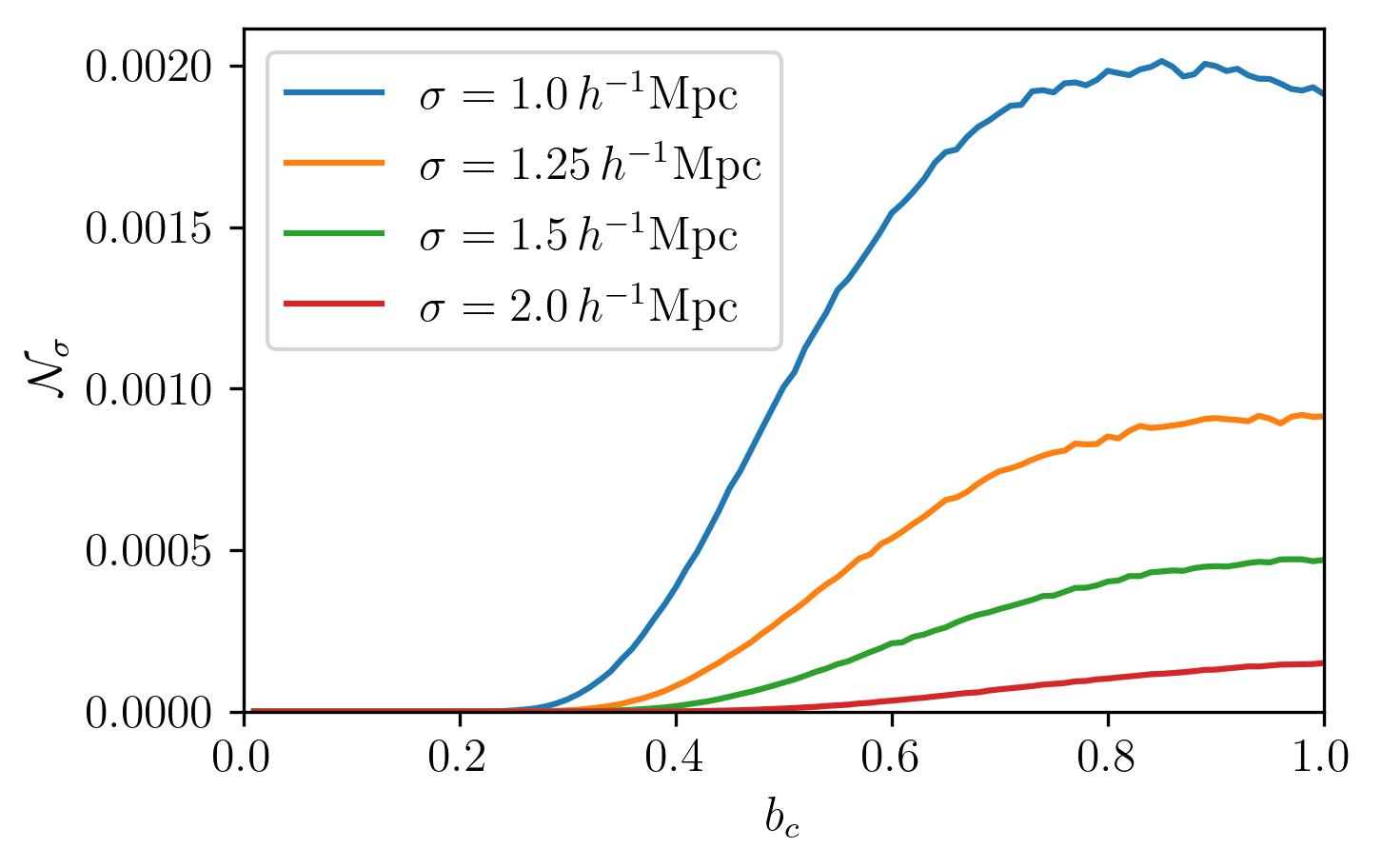}
        \caption{$\delta^{(++-)}$ saddle constraint}
    \end{subfigure}
    \begin{subfigure}[b]{0.49\textwidth}
    \includegraphics[width=\textwidth]{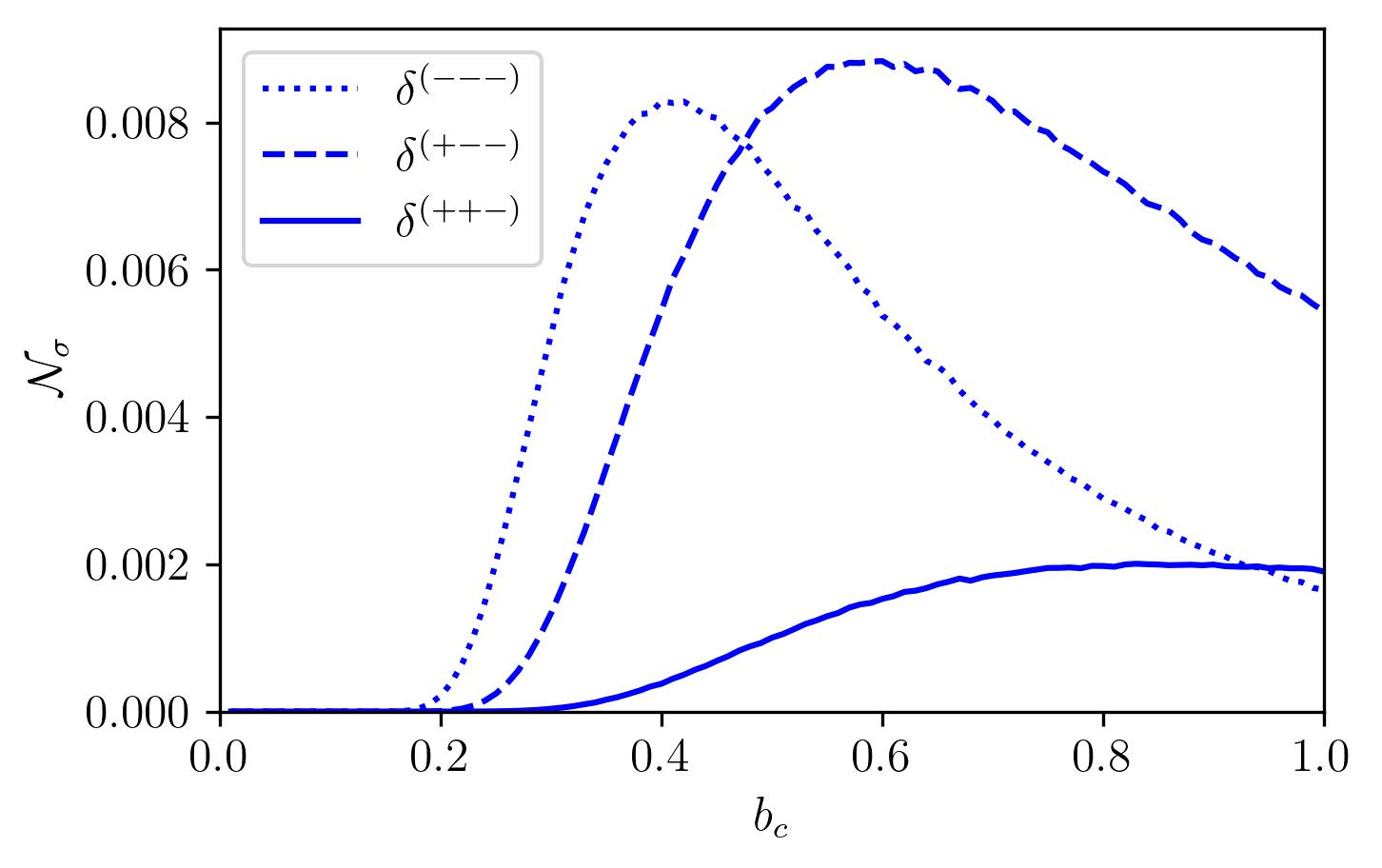}
        \caption{$\delta$ critical point constraints}
    \end{subfigure}
    \caption{Number density of the $\delta^{(++-)}$ saddle points as a function of $b_c$ for varying $\sigma$, evaluated using Rice's formula.}
    \label{fig:delta_constraint_number_density}
\end{figure}

The primordial density perturbation is given by the Laplacian of the displacement potential, $\delta = \nabla^2 \Psi$. In terms of the Cartesian field derivatives, we write
\begin{equation}
    \delta = t_{11} + t_{22} + t_{33} \,,
\end{equation}
so that the condition for a critical point in $\delta$ becomes
\begin{equation}
    \nabla \delta = \bm{0} = (T_{111} + T_{122} + T_{133}, \, T_{112} + T_{222} + T_{233},\, T_{133} + T_{223} + T_{333})
\end{equation}
We sample $\delta$-saddle points of type $(++-)$ by evaluating the Hessian $\mathcal{H} \delta = 
\mathcal{H} (T_{11} + T_{22} + T_{33})$ and restricting the sampling to those realisations for which the Hessian has two positive eigenvalues.

Contrary to the primordial potential perturbation, the level of the density perturbation is not an arbitrary gauge parameter and the critical value $\nu$ of the $\delta$ saddle point has a physical meaning. To make the comparison with the $A_3$ constraint as fair as possible, we hence consider the constraint in the two-dimensional parameter space $(\sigma, \nu)$.
In addition to the gradient $\nabla \delta$ vanishing, we impose the critical value $\delta = T_{11} + T_{22} + T_{33} = \nu$ and sample constraint realisations $\{\bm{C}_i\}$ on the space of second- through third-order derivatives from the constraint statistic
\begin{equation}
    \begin{split}
    \bm{Y} = (&t_{11},\, t_{12},\, t_{13},\, t_{22},\, t_{23},\,\nu - t_{11} - t_{22},\\
    &t_{111},\, t_{112},\, t_{113},\, t_{122},\, t_{123},\, -t_{111} - t_{122},\, t_{222},\, t_{223},\, -t_{112} - t_{222},\, -t_{133} - t_{223} )
    \end{split}
    \label{eq:delta_constraint_statistic}
\end{equation}

While the covariance of the field derivatives $t_{i\ldots k}$ unambiguously defines the distribution of the critical value of the saddle point for a given $\sigma$, the physical meaning of $\nu$ is not evident. To be consistent with the $A_3$ centre analysis, we wish to relate this value to the predicted formation time of the hypothetical cosmic wall object.
However, the $\delta$ saddle constraint lacks a rigorous mathematical foundation, and so there is at there is at present no universally accepted prescription for doing so within the community. Exemplarily, \cite{Cadiou+2020} conceptually replaces the formation time by the scale-dependence of formation through the excursion set formalism of the Morse-Smale complex (and so the saddle points) of $\delta$. Similarly, the spin alignment calculations of \cite{CodisPichonPogosyan2015} only consider the height of the assumed filamentary-type saddle point without any discussion of the implied cosmological formation time scales. To improve on these shortcomings and enable a fair comparison with our proposed caustic constraint, we adopt here a simple analytical model for the collapse time from the critical value $\nu$. We choose the spherical collapse model \cite{PressSchechter1974, Peebles1994} which predicts collapse to occur at the growing mode $b_c$ when the linearly extrapolated density crosses the threshold value
\begin{equation}
    b_c \nu = \delta_{c} \approx 1.69
    \label{eq:spherical_collapse}
\end{equation}
Note that the spherical collapse model is conceptually different from the ZA, as the threshold value $\delta_{c} \approx 1.69$ is derived from an Eulerian rather than Lagrangian treatment, see \cite{PressSchechter1974, Peebles1994}.

We remark here that that the spherical collapse model is not strictly applicable to the anisotropic situation we are considering here, as the value $\delta_{c} \approx 1.69$ is derived for the spherically symmetric case \cite{PressSchechter1974, Peebles1994}. Anisotropies are known to induce earlier collapse \cite{ShethMoTormen2001}, and one therefore might speculate about minor deviations in the threshold value $\delta_c$ for cosmic walls, filaments and clusters respectively. However, these quantitative considerations are negligible compared to the qualitative shortcomings of the $\delta^{(++-)}$ constraint. A small variation of the order-unity parameter $\delta_{c}$ does not modify our analysis on a qualitative level.

Before running simulations, we proceed as in \cref{subsubsec:sims-stochastic_geometry-N} and investigate the statistics of the $\delta^{(++-)}$ constraint in the parameter space $(\sigma, b_c)$. Again, we use Rice's formula to calculate the number density of the constraint as a function of the growing mode $b_c$ for fixed $\sigma$ to infer physically realistic parameter configurations. We insert \cref{eq:spherical_collapse} into the constraint statistic \cref{eq:delta_constraint_statistic}, and evaluate the number density
\begin{equation}
    \mathcal{N}_{\sigma}(b_c) = \frac{\delta_c}{b_c^2}\left\langle | \det \mathcal{H} \delta | \delta_D^{(1)}(\delta-\delta_c b_c^{-1}) \delta_D^{(3)}(\nabla \delta) \delta_D^{(1)}(\gamma(\mathcal{H}\delta)-2) \right\rangle
    \label{eq:delta_++-_number_density}
\end{equation}
in the second- through fourth-order Cartesian field derivatives, where the second- and third-order derivatives are given by \cref{eq:delta_constraint_statistic}. $\gamma(\mathcal{H}\delta)$ denotes the Morse index of the saddle point, restricting the sampling to realisations with two positive eigenvalues of $\mathcal{H}\delta$.

The left panel of \cref{fig:delta_constraint_number_density} shows the number density of the $\delta^{(++-)}$ constraint for a set of smoothing scales from $\sigma= 1.0\, h^{-1}\textrm{Mpc}$ to $\sigma= 2.0\, h^{-1}\textrm{Mpc}$. Comparing with \cref{fig:A3_number_density}, is apparent that the $\delta^{(++-)}$ constraint formation times peak much later than the $A_3$ centre constraints at all smoothing scales. Consistent with the hierarchical formation of structure, large-scale saddle points are found to peak later than the small-scale saddle points. However, Rice's formula suggests that only constraints at $\sigma=1.0\,h^{-1}\textrm{Mpc}$ peak significantly before the current Universe ($b_c = 1.0$). The late formation time peaks at the considered smoothing scales do not necessarily contradict the model of hierarchical structure formation. However, an interesting observation can be made by comparing the number densities of the supposed wall-type $\delta^{(++-)}$ constraints with those obtained for the filament-type $\delta^{(+--)}$ \cite{NovikovColombiDore2006, CodisPichonPogosyan2015, CodisPogosyanPichon2018} or cluster-type $\delta^{(---)}$ \cite{NovikovColombiDore2006, CodisPichonPogosyan2015, CodisPogosyanPichon2018} constraints. The right panel of   \cref{fig:delta_constraint_number_density} shows the respective number densities evaluated for $\sigma=2.0\,h^{-1}\textrm{Mpc}$. The calculation clearly shows the number density of the cluster- and filament-type saddle point peak significantly before the wall-type saddle points. Clearly, this in contradiction to the Zel'dovich theory of large-scale structure formation which predicts that the first objects to form are the flattened Zel'dovich pancakes or cosmic walls \cite{Zeldovich1970}. The plots of \cref{fig:delta_constraint_number_density} reverse this order by predicting that the first objects to form are clusters and that the walls form only after the filaments. This is also in stark contrast to the geometric discussion of \cref{subsubsec:theory-caustics}, where the caustic skeleton model revealed how the cusp sheets form the embedding for the formation of higher caustics corresponding to cosmic filaments in clusters. In the follow-up paper, we will analyse this in more detail and compare the number densities to the formation times predicted by the respective caustic skeleton constraints. For now, we do not further comment on these statistical inconsistencies, but continue by identifying the formation peaks $(\sigma, b_c)$ of the $\delta^{(++-)}$ constraint to sample physically realistic constraint configurations.

\paragraph{Constraint orientation}

Having sampled a $\delta^{(++-)}$ saddle constraint realisation $\bm{C}$, we proceed as before and orient the constraint in both Lagrangian and Eulerian space. For the Lagrangian orientation, we determine the eigenvectors $\{\bm{v}_i\}$ of $\mathcal{H} \delta = \mathcal{H} (T_{11} + T_{22} + T_{33})$ and calculate the Euler angles $(\alpha, \beta, \gamma)$ to rotate these into the basis $\{ \bm{\hat{y}}, \bm{\hat{z}}, \bm{\hat{x}}\}$, asserting a right-handed basis system. With the ordering from the two positive eigenvalues of $\mathcal{H} \delta$ --- rather than one positive eigenvalue as for the $A_3$ centre and $\phi^{(+--)}$ constraints --- this choice of orientation imposes the hypothetical planar collapse in the $yz$-plane, as we want for comparison with the simulations of \cref{subsec:sims-fields}. The constraint is rotated into 
\begin{equation}
    \bm{\tilde{C}} = R^{(2,3)}(\alpha, \beta, \gamma) \bm{C}
\end{equation}
and implemented into a random field realisation. After the $N$-body evolution, the vectors $\bm{v}_{1, E}$ and $\bm{v}_{2, E}$ span the hypothetical Eulerian wall plane. Similarly to \cref{subsec:recipe-orientation} and \cref{subsec:alternatives-phi}, we reject $\bm{\tilde{v}}_{2, E} = \bm{v}_{2, E} - \text{proj}_{\bm{v}_{1, E}} \bm{v}_{2, E}$, construct the normal vector $\bm{\tilde{v}}_{3, E} =\bm{v}_{1, E} \times \bm{\tilde{v}}_{2, E}$ and use the system $\{\bm{\tilde{v}}_{3, E}, \bm{v}_{1, E}, \bm{\tilde{v}}_{2, E} \}$ to calculate the Euler angles for rotation into $\{ \bm{\hat{x}}, \bm{\hat{y}}, \bm{\hat{z}}\}$. After shifting and rotating the Eulerian coordinates accordingly, the hypothesised wall object is oriented in accordance with the simulations discussed in \cref{subsec:sims-fields}.

\paragraph{Random field realisations and discussion}

As in the case of the $A_3$ centre constraint, we infer physically realistic parameter configurations for the $\delta^{(++-)}$ constraint from \cref{fig:delta_constraint_number_density}. The number densities peak at late times for all smoothing scales and we choose the parameter pairs $(\sigma, b_c) = \{ (1.0\,h^{-1}\textrm{Mpc}, \,0.8), (1.5\,h^{-1}\textrm{Mpc}, \,0.9), (2.0\,h^{-1}\textrm{Mpc}, \,0.95) \}$ to run suites of 50, 50 and 100 $N$-body simulations respectively. We make the field realisations available at the additional materials webpage.

\begin{figure*}
    \includegraphics[width=\textwidth]{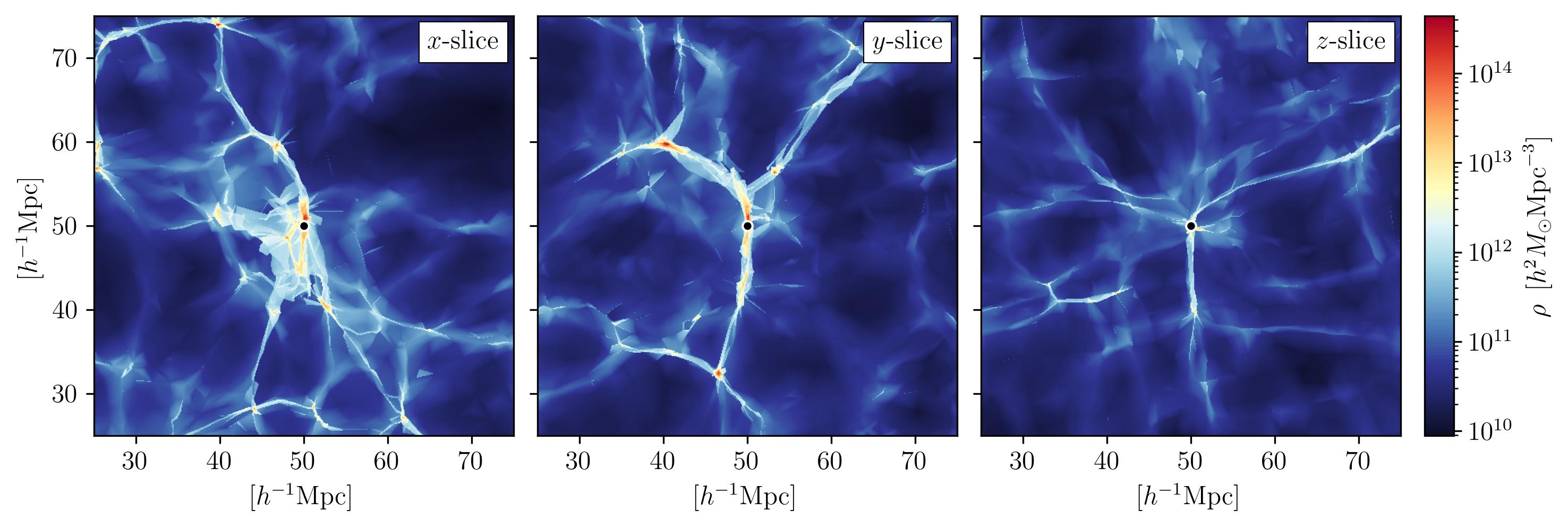}
    \includegraphics[width=\textwidth]{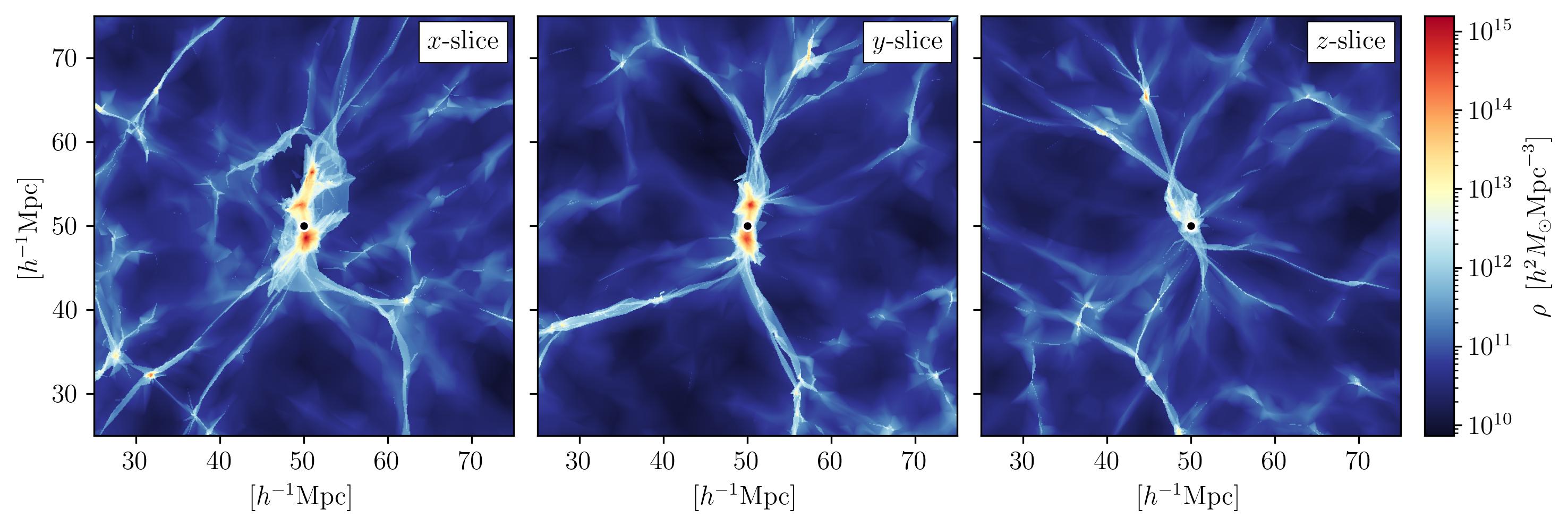}
    \includegraphics[width=\textwidth]{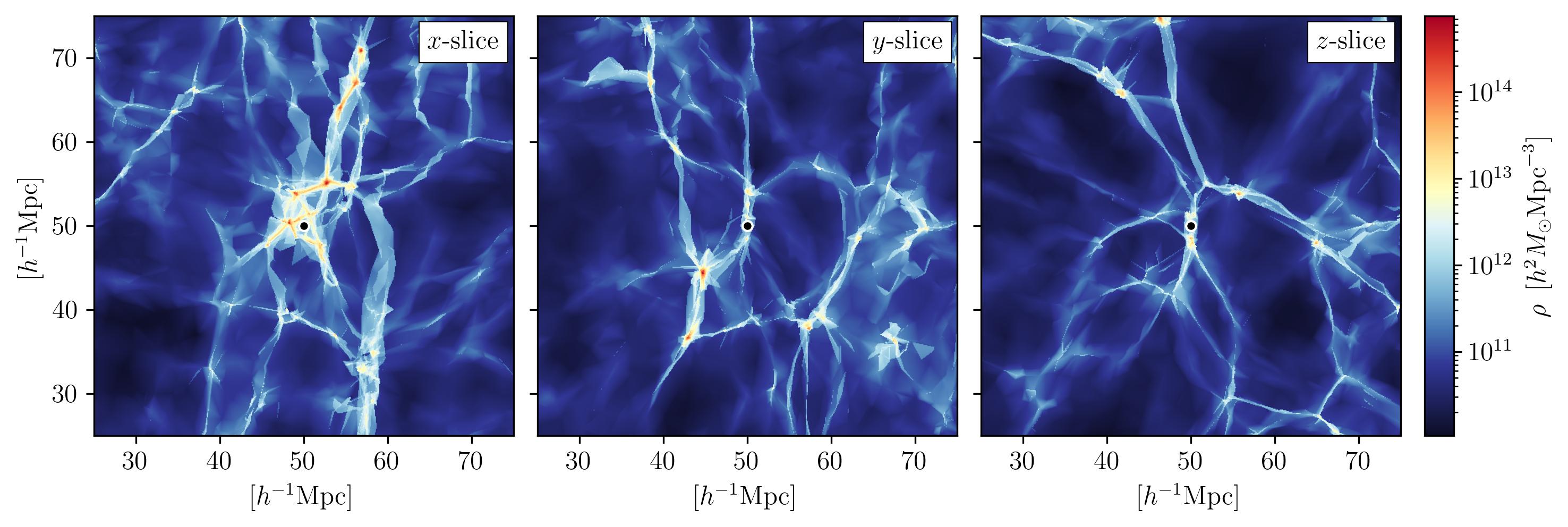}
    \caption{Exemplary density field realisations of the $\delta^{(++-)}$ constraint for $(\sigma, b_c) = (2.0\, h^{-1}\textrm{Mpc}, \,0.95)$.}
    \label{fig:delta_constraint_realisations}
\end{figure*}

In \cref{fig:delta_constraint_realisations}, we show the density fields of three exemplary random realisations for $(\sigma, b_c) = (2.0\,h^{-1}\textrm{Mpc},\, 0.9)$. It is apparent that the  $\delta^{(++-)}$ constraint results in collapsed structures that are oriented roughly vertically in $yx$- and $zx$-planes, as wanted. However, despite the large smoothing scaled and the late formation time, we find that the resulting structures are spatially compact with significant overdensities over the cosmic background. The objects do not appear as flattened, sheet-like overdensities, but rather as extended clusters with line-like trunks or a small-scale multistreaming surfaces (``walls'') attached. Importantly, the constraint point does not form the centre of the hypothesised wall, but typically resides in or very close to a very-high-density cluster within the moderately extended object. We observe the same for the smaller-scale realisations with $(\sigma, b_c) = (1.0\,h^{-1}\textrm{Mpc},\, 0.8)$ and $(\sigma, b_c) = (1.5\,h^{-1}\textrm{Mpc},\, 0.9)$.

\begin{figure*}
    \includegraphics[width=\textwidth]{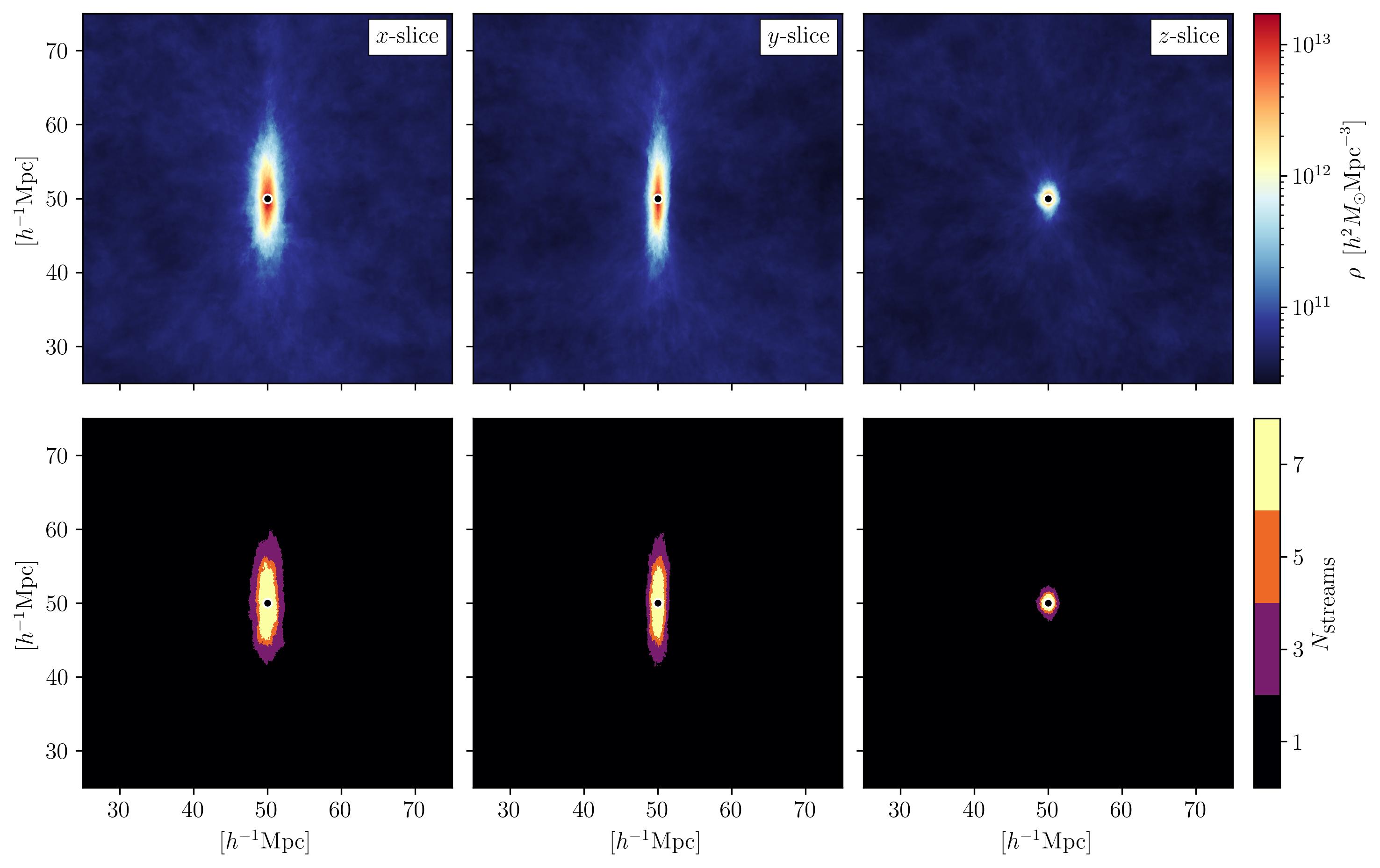}
    \caption{Median fields of the $\delta^{(++-)}$ constraint for $(\sigma, b_c) = (2.0\, h^{-1}\textrm{Mpc}, \,0.95)$. The upper and lower row show the median density and number of streams fields respectively.}
    \label{fig:delta_constraint_mean_field}
\end{figure*}

The resulting median density field for is shown in the upper panel of \cref{fig:delta_constraint_mean_field}. Evidently, the median density field is not sheet-like. Instead, to our surprise, we find that dominant structure of the median field is a short line-like trunk. This is a consequence of the consistent and accurate orientation of the simulated object that was described above. However, closer inspection of the random fields reveals that the variation of the individual density fields is large (see  \cref{fig:delta_constraint_realisations}), and the median density field is therefore less informative than in the $A_3$ case. Nonetheless, the median number of streams in the lower panel of \cref{fig:delta_constraint_mean_field} reveals a crucial property of the $\delta^{(++-)}$ constraint. The median number of streams is much larger than for the $A_3$ constraint. We have truncated the plotted colour scale at $N_{\textrm{streams}}=11$ here for clarity, but find that phase-mixing in clusters can contribute up to $N_{\textrm{streams}}=10^2 \textrm{--}10^3$ and thus increase the median field to the significantly higher value $N_{\textrm{streams}}=35$ (not shown in the figure). Clearly, the constraint point does not reside at the centre of a sheet-like three-streaming region. This is consistent with the high median density value at the constraint point, which is about an order of magnitude larger than the densities inferred from the $A_3$ simulations, and so about 100 times as dense as the cosmic background. Clearly, the high density contrast is inconsistent with the literature on the density of cosmic walls \cite{Forero-Romero+2009, AragonCalvo+2010,  ShandarinSalmanHeitmann2012, Hoffman+2012, Cautun+2014, Libeskind+2017}.

The theoretical reason for the compact nature of the resulting objects is that the power spectrum of primordial density field $\delta = \nabla^2 \Psi$ is given by $P_{\delta}(k) = k^{-4} P(k)$. The density perturbation therefore has a smaller correlation length than $\Psi$, and saddle point constraints in $\delta$ are expected to only influence their close cosmic vicinity. One might attempt to alleviate the situation by imposing the $\delta^{(++-)}$ constraint at a larger smoothing scale $\sigma$. While we speculate that this will indeed result in more extended structures, the constraint still lacks the geometric structure to seed a flat, sheet-like wall, well away from cosmic filaments. In particular, we expect that the constraint point would still result in a cluster-like overdensity within an amorphous extended object. Moreover, Rice's formula predicts near-flat formation time curves for constraints with $\sigma \gtrsim 2.0 \,h^{-1}\textrm{Mpc}$. Hypothetical walls from $\delta^{(++-)}$ saddle points at larger $\sigma$ would therefore not have a characteristic formation peak in the past Universe. This is in stark contrast to our measurements of the emerging wall areas and the number density of the wall centre points from the caustic skeleton model, see \cref{fig:A3_area} and \cref{fig:A3_number_density} respectively.

From our discussion, we find that the $\delta^{(++-)}$ constraint is not adequate for simulations of realistic cosmic walls. While the peak picture in the Eulerian density field is successful at predicting halo populations \cite{PressSchechter1974, Peebles1994}, this ability cannot be extrapolated to study the cosmic web from the primordial density field $\delta$. As will be further illustrated in our follow-up paper, the saddle point constraints in $\delta$ lack the geometric capacity to seed the different cosmic environments, and generally result but in amorphous extended clusters.

To summarise, the conventional ideas of primordial constraints for cosmic web formation are typically formulated either statistically in the morphology of the primordial density perturbation or dynamically in the primordial tidal force field. We have investigated both using constrained simulations and conclude that the conventional constraints are not as successful at creating physically realistic cosmic walls as our novel caustics-based wall centre constraint. The singularity morphogenesis underlying the wall formation is more complex than the structure of critical points in the primordial fields and requires a rigorous treatment based on the phase-space dynamics. These are imprinted in the higher derivatives $T_{ij\ldots k}$ determining the non-trivial configuration of the eigenvalue and eigenvector fields that give rise to the formation of extended cusp sheets connected to and interspersed by cosmic filaments.


\section{Conclusion}
\label{sec:conclusion}

The cosmic web the largest known structural pattern in our Universe. Its intricately connected and multiscale geometry is woven by the non-linear gravitational collapse of the dark and baryonic matter. The caustic skeleton model \cite{ArnoldShandarinZeldovich1982, Feldbrugge+2018} provides a mathematically rigorous and parameter-free classification of the geometric backbone of the emerging cosmic web. In a series of two articles, we follow up on the preceding studies \cite{FeldbruggeWeygaert2023, FeldbruggeWeygaert2024} and, for the first time, investigate the different cosmic web environments in the realistic three-dimensional Universe using constraint simulations from the caustic skeleton model. For the present article, we have studied the physical properties of the cosmic walls.

The walls form an integral part of the cosmic web. They are the first structures to form \cite{Zeldovich1970} and separate the near-empty void regions through multistreaming membranes \cite{ArnoldShandarinZeldovich1982, ShandarinZeldovich1989, Hidding+2013, Feldbrugge+2018}. Although less widely recognised, the walls also constitute the cradle for the formation of the prominent cosmic filaments. Despite their elusive nature in cosmic surveys — due to their low surface density seeding low-mass haloes and faint galaxies \cite{Metuki+2015} — the walls play a crucial role in the geometry and topology of the large-scale structure \cite{WeygaertBond2008}. Using caustic skeleton theory \cite{ArnoldShandarinZeldovich1982, Feldbrugge+2018, FeldbruggeWeygaert2023}, we for the first time investigate the cosmic walls associated with the cusp caustics as a function of their formation time and length scale. To this end, we construct the definition of the proto-wall in the primordial gravitational potential, and, by combining Hamiltonian Monte Carlo \cite{Duane+1987} techniques with the Bertschinger-Hoffman-Ribak method \cite{Bertschinger1987,HoffmanRibak1991,WeygaertBertschinger1996, FeldbruggeWeygaert2023}, develop a rigorous prescription for constrained simulations of wall formation. Applying these methods, we analyse their morphology in the present-day Universe. Alongside the constrained simulations, we extend the Rice formula \cite{Rice1945, AzaïsWschebor2009} to evaluate the statistical properties of the cosmic walls in terms of the primordial power spectrum from first principles; see also \cite{FeldbruggeYanWeygaert2023, FeldbruggeWeygaert2024}.
The constrained simulations show that the caustic definition of the progenitor of the cosmic wall outperforms previous proposals in terms of the saddle points of the primordial gravitational potential \cite{HaarlemWeygaert1993, WeygaertBertschinger1996} or density perturbation \cite{NovikovColombiDore2006, Sousbie+2008, Pogosyan+2009}. We find that the density in the walls exceeds the mean cosmic density by a factor $\rho/\bar{\rho} =2\textrm{--}10$,  in accordance with the existing literature \cite{Forero-Romero+2009, AragonCalvo+2010,  ShandarinSalmanHeitmann2012, Hoffman+2012, Cautun+2014, Libeskind+2017}.  Moreover, we show that the haloes in the walls are typically lighter than those found in filaments and clusters, which is again in agreement with preceding studies \cite{Hahn+2007, Cautun+2014, Metuki+2015, AlonsoEardleyPeacock2015, MetukiLibeskindHoffman2016, Libeskind+2017}. Crucially, however, our simulations reveal that the haloes are not uniformly distributed in the wall sheets, but follow  filamentary patterns that are traced by the scale-space caustic skeleton. This observation reflects the inherent connectivity of the emerging caustic network, and extends conventional notions of cosmic web environments as identified e.g. by \verb|NEXUS(+)| \cite{AragonCalvo+2007, Cautun+2012} or \verb|DisPerSE| \cite{Sousbie2011, Sousbie2011b}.

There are numerous future avenues for the methods and results developed within this article. Firstly, while we have focused on the theoretical foundations of the constraint simulations, our pancake recipe may be directly applied to detailed quantitative analyses of the large-scale dark matter fields that characterise the cosmic walls. This is particularly relevant for understanding mass transport in the cosmic web \cite{Cautun+2014} and the fate of the walls over cosmic time. Beyond the dark matter fields, constrained simulations also offer a well-suited and controlled framework for investigating halo populations across the different web environments, particularly with regard to the recently debated web-dependency of the halo bias \cite{AlonsoEardleyPeacock2015, Yang2017}.
Constrained simulations may be instrumental in distinguishing the impact of the density contrast and the tidal field on the halo bias in cosmic walls. Moreover, a long-standing prospect of the caustic skeleton has been its dressing with baryons: By including baryonic physics in very-high-resolution constrained simulations, we plan to systematically examine the properties of the notoriously elusive wall galaxies, specifically their mass function \cite{Metuki+2015}, metallicity \cite{Domíngues-Gómez+2023} and intrinsic alignments \cite{Codis+2015, Ganeshaiah+2018, Codis+2018}.

The present article has demonstrates that the caustic skeleton is key to understanding the observational reality of the physical cosmic web. The haloes identified in our constrained simulations of cosmic wall formation directly reflect the structures observed in recent galaxy surveys, notably the Coma Wall \cite{GellerHuchra1989}, the Sloan Great Wall \cite{Gott+2005, Einasto+2011} and the BOSS Great Wall \cite{Lietzen+2016, Einasto+2017}. Aside from the constraint simulations, in future studies, we plan to build on these results and will further examine the tracing of the dark matter haloes by the scale-space caustic skeleton down to even lower length and mass scales. This will be of particular relevance with regard to our Local Universe, which is currently being probed in ever-increasing detail by ongoing cosmic surveys. By applying the recently developed reconstruction techniques \cite{JascheWandelt2013, JascheLavaux2019, Valade+2022, Valade+2024, McAlpine+2025}, we aim to identify the dominant singularities that underlie the formation of the large-scale features in our cosmic neighbourhood. We anticipate that the the caustic skeleton will offer new perspectives and valuable insights into the dynamics of the nearby dark matter distribution and the properties of its embedded galaxies.


\acknowledgments

RvdW and JF dedicate this study to the memory of Sergei Shandarin. We owe the inspiration  and initiative of the Caustic Skeleton project to the great insights and enthusiasm of and numerous motivating discussions with Sergei. His Groningen months have become a dear memory in our lives, and a milestone in our career.

The authors thank Erwin Platen for his permission to re-produce the SDSS maps of \cref{fig:SDSS_maps}, and Yan-Chuan Chai for his input contracting walls. This work relied heavily on the computational facilities of the \textit{Eddie} supercomputer of the University of Edinburgh, the computing cluster of the University of Edinburgh's School of Physics and Astromomy and the \textit{Hábrók} supercomputer of the University of Groningen. The authors express their gratitude towards their respective technical support staff members, without whose help this work would not have been possible.  BH is supported by a Science and Technology Funding Council (STFC) PhD studentship, JF is supported by the STFC Consolidated Grant ‘Particle Physics at the Higgs Centre,’ and by a Higgs Fellowship, RvdW acknowledges funding from EU Hoirzon Europe (EXCOSM, grant nr. 101159513). For the purpose of open access, the authors have applied a Creative Commons Attribution (CC BY) license to any Author Accepted Manuscript version arising from this submission.


\bibliographystyle{plain}
\bibliography{bibliography.bib}

\begin{thebibliography}{100}

\bibitem{AbelHahnKaehler2012}
Tom Abel, Oliver Hahn, and Ralf Kaehler.
\newblock Tracing the dark matter sheet in phase space.
\newblock {\em Monthly Notices of the Royal Astronomical Society}, 427(1):61--76, 11 2012.

\bibitem{Adler1981}
Robert~J Adler.
\newblock {\em The Geometry of Random Fields}.
\newblock Society for Industrial and Applied Mathematics, 2010.

\bibitem{AdlerTaylor2009}
Robert~J. Adler and Jonathan~E. Taylor.
\newblock {\em Random Fields and Geometry}.
\newblock Springer Science \& Business Media, 2009.

\bibitem{Planck2018}
N.~Aghanim et~al.
\newblock {Planck 2018 results. VI. Cosmological parameters}.
\newblock {\em Astron. Astrophys.}, 641:A6, 2020.
\newblock [Erratum: Astron.Astrophys. 652, C4 (2021)].

\bibitem{AlonsoEardleyPeacock2015}
D.~Alonso, E.~Eardley, and J.~A. Peacock.
\newblock Halo abundances within the cosmic web.
\newblock {\em Monthly Notices of the Royal Astronomical Society}, 447(3):2683--2695, 01 2015.

\bibitem{AragonCalvo2024}
M~A Aragon-Calvo.
\newblock Hierarchical reconstruction of the cosmic web, the h-spine method.
\newblock {\em Monthly Notices of the Royal Astronomical Society}, 529(1):74--88, 02 2024.

\bibitem{AragonCalvo+2010}
Miguel~A. Aragón-Calvo, Rien van~de Weygaert, and Bernard J.~T. Jones.
\newblock Multiscale phenomenology of the cosmic web.
\newblock {\em Monthly Notices of the Royal Astronomical Society}, 408(4):2163--2187, 10 2010.

\bibitem{AragonCalvo+2007}
{Aragón-Calvo, M. A.}, {Jones, B. J. T.}, {van de Weygaert, R.}, and {van der Hulst, J. M.}
\newblock The multiscale morphology filter: identifying and extracting spatial patterns in the galaxy distribution.
\newblock {\em A\&A}, 474(1):315--338, 2007.

\bibitem{ArnoldGuseinZadeVarchenko2012}
V.~Arnold, S.~Gusein-Zade, and A.~Varchenko.
\newblock {\em Singularities of Differentiable Maps, Volume 1}.
\newblock Birkhäuser Boston, 05 2012.

\bibitem{Arnold1986}
V.~I Arnol'd.
\newblock Evolution of singularities of potential flows in collision-free media and the metamorphosis of caustics in three-dimensional space.
\newblock {\em Journal of Soviet Mathematics}, 32(3):229--258, February 1986.

\bibitem{Arnold1992}
V.~I Arnol'd.
\newblock {\em Catastrophe {{Theory}}}.
\newblock Springer, Berlin, Heidelberg, 1992.

\bibitem{ArnoldShandarinZeldovich1982}
V.~I. Arnol'd, S.~F. Shandarin, and {\relax Ya}.~B. Zel'dovich.
\newblock The large scale structure of the universe {{I}}. {{General}} properties. {{One-and}} two-dimensional models.
\newblock {\em Geophysical \& Astrophysical Fluid Dynamics}, 20(1-2):111--130, April 1982.

\bibitem{Arnold1975}
Vladimir~I Arnol'd.
\newblock {{Critical Points os Smooth Functions and Their Normal Forms}}.
\newblock {\em Russian Mathematical Surveys}, 30(5):1--75, October 1975.

\bibitem{AycoberryBarthelemyCodis2024}
{Ayçoberry, Emma}, {Barthelemy, Alexandre}, and {Codis, Sandrine}.
\newblock A theoretical view of the t-web statistical description of the cosmic web.
\newblock {\em A\&A}, 686:A276, 2024.

\bibitem{AzaïsWschebor2009}
Jean-Marc Azaïs and Mario Wschebor.
\newblock {\em Level Sets and Extrema of Random Processes and Fields}.
\newblock John Wiley \& Sons, Ltd, 2009.

\bibitem{Bardeen1986}
J.~M. {Bardeen}, J.~R. {Bond}, N.~{Kaiser}, and A.~S. {Szalay}.
\newblock {The Statistics of Peaks of Gaussian Random Fields}.
\newblock {\em APJ}, 304:15, May 1986.

\bibitem{Bardeen+1985}
James~M. Bardeen, J.~R. Bond, Nick Kaiser, and A.~S. Szalay.
\newblock {The Statistics of Peaks of Gaussian Random Fields}.
\newblock {\em Astrophys. J.}, 304:15--61, 1986.

\bibitem{Bermejo+2024}
Raul Bermejo, Georg Wilding, Rien van~de Weygaert, Bernard J~T Jones, Gert Vegter, and Konstantinos Efstathiou.
\newblock Topological bias: how haloes trace structural patterns in the cosmic web.
\newblock {\em Monthly Notices of the Royal Astronomical Society}, 529(4):4325--4353, 02 2024.

\bibitem{BernardeauWeygaert1996}
Francis Bernardeau and Rien van~de Weygaert.
\newblock A new method for accurate estimation of velocity field statistics.
\newblock {\em Monthly Notices of the Royal Astronomical Society}, 279(2):693--711, 03 1996.

\bibitem{Bertschinger1987}
E~Bertschinger.
\newblock Path integral methods for primordial density perturbations - sampling of constrained gaussian random fields.
\newblock {\em Astrophys. J.; (United States)}, 323, 12 1987.

\bibitem{BertschingerGelb1991}
Edmund {Bertschinger} and James~M. {Gelb}.
\newblock {Cosmological N-body simulations}.
\newblock {\em Computers in Physics}, 5:164--175, April 1991.

\bibitem{BeslMcKay1992}
Paul~J. Besl and Neil~D. McKay.
\newblock Method for registration of 3-{{D}} shapes.
\newblock In {\em Sensor {{Fusion IV}}: {{Control Paradigms}} and {{Data Structures}}}, volume 1611, pages 586--606. SPIE, April 1992.

\bibitem{Bond+1991}
J.~R. {Bond}, S.~{Cole}, G.~{Efstathiou}, and N.~{Kaiser}.
\newblock {Excursion Set Mass Functions for Hierarchical Gaussian Fluctuations}.
\newblock {\em Astrophysical Journal}, 379:440, October 1991.

\bibitem{BondMyers1996}
J.~R. {Bond} and S.~T. {Myers}.
\newblock {The Peak-Patch Picture of Cosmic Catalogs. I. Algorithms}.
\newblock {\em APJS}, 103:1, March 1996.

\bibitem{BondKofmanPogosyan1996}
J.~Richard {Bond}, Lev {Kofman}, and Dmitry {Pogosyan}.
\newblock {How filaments of galaxies are woven into the cosmic web}.
\newblock {\em Nature}, 380(6575):603--606, April 1996.

\bibitem{Bouchet+1995}
F.~R. Bouchet, S.~Colombi, E.~Hivon, and R.~Juszkiewicz.
\newblock Perturbative {{Lagrangian}} approach to gravitational instability.
\newblock {\em Astronomy and Astrophysics}, 296:575, April 1995.

\bibitem{Boehringer2025}
Hans Böhringer, Gayoung Chon, Joachim Trümper, Renee~C. Kraan-Korteweg, and Norbert Schartel.
\newblock Unveiling the largest structures in the nearby universe: Discovery of the quipu superstructure.
\newblock {\em Astronomy \& Astrophysics}, 695:A59, March 2025.

\bibitem{Cadiou+2020}
C~Cadiou, C~Pichon, S~Codis, M~Musso, D~Pogosyan, Y~Dubois, J-F Cardoso, and S~Prunet.
\newblock When do cosmic peaks, filaments, or walls merge? a theory of critical events in a multiscale landscape.
\newblock {\em Monthly Notices of the Royal Astronomical Society}, 496(4):4787--4821, 06 2020.

\bibitem{Catelan1995}
Paolo Catelan.
\newblock Lagrangian dynamics in non-flat universes and non-linear gravitational evolution.
\newblock {\em Monthly Notices of the Royal Astronomical Society}, 276(1):115--124, September 1995.

\bibitem{Cautun+2012}
Marius Cautun, Rien van~de Weygaert, and Bernard J.~T. Jones.
\newblock Nexus: tracing the cosmic web connection.
\newblock {\em Monthly Notices of the Royal Astronomical Society}, 429(2):1286--1308, 12 2012.

\bibitem{Cautun+2014}
Marius Cautun, Rien van~de Weygaert, Bernard J.~T. Jones, and Carlos~S. Frenk.
\newblock Evolution of the cosmic web.
\newblock {\em Monthly Notices of the Royal Astronomical Society}, 441(4):2923--2973, 05 2014.

\bibitem{Codis+2015}
S.~Codis, R.~Gavazzi, Y.~Dubois, C.~Pichon, K.~Benabed, V.~Desjacques, D.~Pogosyan, J.~Devriendt, and A.~Slyz.
\newblock Intrinsic alignment of simulated galaxies in the cosmic web: implications for weak lensing surveys.
\newblock {\em Monthly Notices of the Royal Astronomical Society}, 448(4):3391--3404, 03 2015.

\bibitem{Codis+2018}
S~Codis, A~Jindal, N~E Chisari, D~Vibert, Y~Dubois, C~Pichon, and J~Devriendt.
\newblock Galaxy orientation with the cosmic web across cosmic time.
\newblock {\em Monthly Notices of the Royal Astronomical Society}, 481(4):4753--4774, 09 2018.

\bibitem{CodisPichonPogosyan2015}
Sandrine Codis, Christophe Pichon, and Dmitry Pogosyan.
\newblock Spin alignments within the cosmic web: a theory of constrained tidal torques near filaments.
\newblock {\em Monthly Notices of the Royal Astronomical Society}, 452(4):3369--3393, 08 2015.

\bibitem{CodisPogosyanPichon2018}
Sandrine Codis, Dmitri Pogosyan, and Christophe Pichon.
\newblock On the connectivity of the cosmic web: theory and implications for cosmology and galaxy formation.
\newblock {\em Monthly Notices of the Royal Astronomical Society}, 479(1):973--993, 06 2018.

\bibitem{DESI2016}
DESI Collaboration and Amir~Aghamousa et~al.
\newblock The desi experiment part i: Science,targeting, and survey design, 2016.

\bibitem{Colless+2003}
Matthew Colless, Bruce~A. Peterson, Carole Jackson, John~A. Peacock, Shaun Cole, Peder Norberg, Ivan~K. Baldry, Carlton~M. Baugh, Joss Bland-Hawthorn, Terry Bridges, Russell Cannon, Chris Collins, Warrick Couch, Nicholas Cross, Gavin Dalton, Roberto~De Propris, Simon~P. Driver, George Efstathiou, Richard~S. Ellis, Carlos~S. Frenk, Karl Glazebrook, Ofer Lahav, Ian Lewis, Stuart Lumsden, Steve Maddox, Darren Madgwick, Will Sutherland, and Keith Taylor.
\newblock The 2df galaxy redshift survey: Final data release, 2003.

\bibitem{Vipers2013}
S.~et~al. {de la Torre}.
\newblock {The VIMOS Public Extragalactic Redshift Survey (VIPERS). Galaxy clustering and redshift-space distortions at $z \simeq 0.8$ in the first data release}.
\newblock {\em Astronomy \& Astrophysics}, 557:A54, September 2013.

\bibitem{LapparentGellerHuchra1986}
V.~{de Lapparent}, M.~J. {Geller}, and J.~P. {Huchra}.
\newblock {A Slice of the Universe}.
\newblock {\em Astrophysical Journal}, 302:L1, March 1986.

\bibitem{Vaucouleurs1958}
G.~{de Vaucouleurs}.
\newblock {Photoelectric photometry of the Andromeda Nebula in the UBV system.}
\newblock {\em Astrophysical Journal}, 128:465, November 1958.

\bibitem{Domíngues-Gómez+2023}
{Domínguez-Gómez, Jesús}, {Pérez, Isabel}, {Ruiz-Lara, Tomás}, {Peletier, Reynier F.}, {Sánchez-Blázquez, Patricia}, {Lisenfeld, Ute}, {Bidaran, Bahar}, {Falcón-Barroso, Jesús}, {Alcázar-Laynez, Manuel}, {Argudo-Fernández, María}, {Blázquez-Calero, Guillermo}, {Courtois, Hélène}, {Duarte Puertas, Salvador}, {Espada, Daniel}, {Florido, Estrella}, {García-Benito, Rubén}, {Jiménez, Andoni}, {Kreckel, Kathryn}, {Relaño, Mónica}, {Sánchez-Menguiano, Laura}, {van der Hulst, Thijs}, {van de Weygaert, Rien}, {Verley, Simon}, and {Zurita, Almudena}.
\newblock Stellar mass-metallicity relation throughout the large-scale structure of the universe: Cavity mother sample.
\newblock {\em A\&A}, 680:A111, 2023.

\bibitem{Doroshkevich1970}
A.~G. {Doroshkevich}.
\newblock {Spatial structure of perturbations and origin of galactic rotation in fluctuation theory}.
\newblock {\em Astrophysica}, 6(4):320--330, October 1970.

\bibitem{DoroshkevichShandarinSaar1978}
A.~G. {Doroshkevich}, S.~F. {Shandarin}, and E.~{Saar}.
\newblock {Spatial structure of protoclusters and the formation of galaxies.}
\newblock {\em Monthly Notices of the Royal Astronomical Society}, 184:643--660, September 1978.

\bibitem{Duane+1987}
Simon Duane, A.~D. Kennedy, Brian~J. Pendleton, and Duncan Roweth.
\newblock Hybrid {{Monte Carlo}}.
\newblock {\em Physics Letters B}, 195(2):216--222, September 1987.

\bibitem{Einasto1977}
J.~{Einasto}.
\newblock {Evolution of galaxies.}
\newblock In {\em Problems of Observational and Theoretical Astronomy}, pages 26--43, January 1977.

\bibitem{Einasto+2020}
J.~Einasto, G.~Hütsi, T.~Kuutma, and M.~Einasto.
\newblock Correlation function: biasing and fractal properties of the cosmic web.
\newblock {\em Astronomy \& Astrophysics}, 640:A47, August 2020.

\bibitem{Einasto2025}
Jaan Einasto.
\newblock Fractal properties of the cosmic web.
\newblock {\em Fractal and Fractional}, 9(9), 2025.

\bibitem{Einasto+2011}
M.~Einasto, L.~J. Liivamägi, E.~Tempel, E.~Saar, E.~Tago, P.~Einasto, I.~Enkvist, J.~Einasto, V.~J. Martínez, P.~Heinämäki, and P.~Nurmi.
\newblock The sloan great wall. morphology and galaxy content.
\newblock {\em The Astrophysical Journal}, 736(1):51, July 2011.

\bibitem{EinastoM2025}
Maret Einasto.
\newblock Galaxy superclusters and their complexes in the cosmic web.
\newblock {\em Universe}, 11(6), 2025.

\bibitem{Einasto+2017}
Maret Einasto, Heidi Lietzen, Mirt Gramann, Enn Saar, Elmo Tempel, Lauri~Juhan Liivamägi, Antonio~D. Montero-Dorta, Alina Streblyanska, Claudia Maraston, and José~Alberto Rubiño-Martín.
\newblock Boss great wall: morphology, luminosity, and mass.
\newblock {\em Astronomy \& Astrophysics}, 603:A5, June 2017.

\bibitem{EisensteinHu1998}
Daniel~J. Eisenstein and Wayne Hu.
\newblock Baryonic {{Features}} in the {{Matter Transfer Function}}.
\newblock {\em The Astrophysical Journal}, 496(2):605--614, April 1998.

\bibitem{Federer1959}
Herbert Federer.
\newblock Curvature measures.
\newblock {\em Transactions of the American Mathematical Society}, 93(3):418--491, 1959.

\bibitem{Feldbrugge2024}
Job Feldbrugge.
\newblock {Phase-space Delaunay tessellation field estimator}.
\newblock {\em Mon. Not. Roy. Astron. Soc.}, 536(1):807--815, 2024.

\bibitem{FeldbruggeHertzsch2025}
Job Feldbrugge and Benjamin Hertzsch.
\newblock Phasespacedtfe.jl -- julia implementation of the phase-space delaunay tessellation field estimator, aug 2025.

\bibitem{FeldbruggeHertzschWeygaert2025}
Job Feldbrugge, Benjamin Hertzsch, and Rien van~de Weygaert.
\newblock in preparation, 2025.

\bibitem{FeldbruggeWeygaert2023}
Job Feldbrugge and Rien {van de Weygaert}.
\newblock Cosmic web \& caustic skeleton: Non-linear constrained realizations --- {{2D}} case studies.
\newblock {\em \jcap}, 02:058, 2023.

\bibitem{FeldbruggeWeygaert2024}
Job Feldbrugge and Rien van~de Weygaert.
\newblock What makes a cosmic filament? the dynamical origin and identity of filaments – i. fundamentals in 2d.
\newblock {\em Monthly Notices of the Royal Astronomical Society}, 539(2):873--901, 04 2025.

\bibitem{Feldbrugge+2018}
Job Feldbrugge, Rien {van de Weygaert}, Johan Hidding, and Joost Feldbrugge.
\newblock Caustic {{Skeleton}} \& {{Cosmic Web}}.
\newblock {\em \jcap}, 05:027, 2018.

\bibitem{FeldbruggeYanWeygaert2023}
Job Feldbrugge, Yihan Yan, and Rien van de Weygaert.
\newblock Statistics of tidal and deformation eigenvalue fields in the primordial gaussian matter distribution: the two-dimensional case.
\newblock {\em Monthly Notices of the Royal Astronomical Society}, 526(4):5031--5053, 09 2023.

\bibitem{Forero-Romero+2009}
J.~E. Forero–Romero, Y.~Hoffman, S.~Gottlöber, A.~Klypin, and G.~Yepes.
\newblock A dynamical classification of the cosmic web.
\newblock {\em Monthly Notices of the Royal Astronomical Society}, 396(3):1815--1824, 06 2009.

\bibitem{Shandarin2011}
Sergei F.Shandarin.
\newblock The multi-stream flows and the dynamics of the cosmic web.
\newblock {\em Journal of Cosmology and Astroparticle Physics}, 2011(05):015, may 2011.

\bibitem{Ganeshaiah+2021}
Punyakoti Ganeshaiah~Veena, Marius Cautun, Rien van~de Weygaert, Elmo Tempel, and Carlos~S Frenk.
\newblock Cosmic ballet iii: Halo spin evolution in the cosmic web.
\newblock {\em Monthly Notices of the Royal Astronomical Society}, 503(2):2280--2299, 02 2021.

\bibitem{Ganeshaiah+2018}
Punyakoti Ganeshaiah~Veena, Marius Cautun, Rien van~de Weygaert, Elmo Tempel, Bernard J~T Jones, Steven Rieder, and Carlos~S Frenk.
\newblock The cosmic ballet: spin and shape alignments of haloes in the cosmic web.
\newblock {\em Monthly Notices of the Royal Astronomical Society}, 481(1):414--438, 08 2018.

\bibitem{GellerHuchra1989}
Margaret~J. {Geller} and John~P. {Huchra}.
\newblock {Mapping the Universe}.
\newblock {\em Science}, 246(4932):897--903, November 1989.

\bibitem{GiovanelliHaynes1986}
R.~{Giovanelli}, M.~P. {Haynes}, and G.~L. {Chincarini}.
\newblock {Morphological Segregation in the Pisces-Perseus Supercluster}.
\newblock {\em Astrophysical Journal}, 300:77, January 1986.

\bibitem{Gott+2005}
J.~Richard Gott~III, Mario Jurić, David Schlegel, Fiona Hoyle, Michael Vogeley, Max Tegmark, Neta Bahcall, and Jon Brinkmann.
\newblock A map of the universe.
\newblock {\em The Astrophysical Journal}, 624(2):463–484, May 2005.

\bibitem{Hahn+2015}
Oliver Hahn, Raul~E. Angulo, and Tom Abel.
\newblock The properties of cosmic velocity fields.
\newblock {\em Monthly Notices of the Royal Astronomical Society}, 454(4):3920--3937, 10 2015.

\bibitem{Hahn+2007}
Oliver Hahn, Cristiano Porciani, C.~Marcella Carollo, and Avishai Dekel.
\newblock Properties of dark matter haloes in clusters, filaments, sheets and voids.
\newblock {\em Monthly Notices of the Royal Astronomical Society}, 375(2):489--499, 01 2007.

\bibitem{Hanson2005}
A.~J. Hanson.
\newblock {\em {Visualizing Quaternions}}.
\newblock Morgan Kaufmann Publishers Inc., feb 2005.

\bibitem{HiddingWeygaertShandarin2016}
J.~{Hidding}, R.~{van de Weygaert}, and S.~{Shandarin}.
\newblock {The Zeldovich \& Adhesion approximations and applications to the local universe}.
\newblock In R.~{van de Weygaert}, S.~{Shandarin}, E.~{Saar}, and J.~{Einasto}, editors, {\em The Zeldovich Universe: Genesis and Growth of the Cosmic Web}, volume 308 of {\em IAU Symposium}, pages 69--76, October 2016.

\bibitem{Hidding+2013}
Johan Hidding, Sergei~F. Shandarin, and Rien van~de Weygaert.
\newblock The zel'dovich approximation: key to understanding cosmic web complexity.
\newblock {\em Monthly Notices of the Royal Astronomical Society}, 437(4):3442--3472, 12 2013.

\bibitem{HoffmanGelman2014}
Matthew~D Hoffman and Andrew Gelman.
\newblock The {{No-U-Turn Sampler}}: {{Adaptively Setting Path Lengths}} in {{Hamiltonian Monte Carlo}}.
\newblock {\em Journal of Machine Learning Research}, 15:1593--1623, 2014.

\bibitem{Hoffman+2012}
Yehuda Hoffman, Ofer Metuki, Gustavo Yepes, Stefan Gottlöber, Jaime~E. Forero-Romero, Noam~I. Libeskind, and Alexander Knebe.
\newblock A kinematic classification of the cosmic web.
\newblock {\em Monthly Notices of the Royal Astronomical Society}, 425(3):2049--2057, 09 2012.

\bibitem{HoffmanRibak1991}
Yehuda Hoffman and Erez Ribak.
\newblock Constrained {{Realizations}} of {{Gaussian Fields}}: {{A Simple Algorithm}}.
\newblock {\em The Astrophysical Journal}, 380:L5, October 1991.

\bibitem{Huchra+2012}
John~P. Huchra, Lucas~M. Macri, Karen~L. Masters, Thomas~H. Jarrett, Perry Berlind, Michael Calkins, Aidan~C. Crook, Roc Cutri, Pirin Erdoğdu, Emilio Falco, Teddy George, Conrad~M. Hutcheson, Ofer Lahav, Jeff Mader, Jessica~D. Mink, Nathalie Martimbeau, Stephen Schneider, Michael Skrutskie, Susan Tokarz, and Michael Westover.
\newblock The 2mass redshift survey—description and data release.
\newblock {\em The Astrophysical Journal Supplement Series}, 199(2):26, mar 2012.

\bibitem{Jaber+2023}
Mariana Jaber, Marius Peper, Wojciech~A Hellwing, Miguel~A Aragón-Calvo, and Octavio Valenzuela.
\newblock Hierarchical structure of the cosmic web and galaxy properties.
\newblock {\em Monthly Notices of the Royal Astronomical Society}, 527(2):4087--4099, 11 2023.

\bibitem{JascheLavaux2019}
J.~Jasche and G.~Lavaux.
\newblock Physical bayesian modelling of the non-linear matter distribution: New insights into the nearby universe.
\newblock {\em Astronomy \& Astrophysics}, 625:A64, May 2019.

\bibitem{JascheWandelt2013}
Jens Jasche and Benjamin~D. Wandelt.
\newblock Bayesian physical reconstruction of initial conditions from large-scale structure surveys.
\newblock {\em Monthly Notices of the Royal Astronomical Society}, 432(2):894--913, 04 2013.

\bibitem{Kanatani2020}
K.~Kanatani.
\newblock {\em {3D Rotations - Parameter Computation and Lie Algebra based Optimization}}.
\newblock Chapman and Hall/CRC, 2020.

\bibitem{KugelWeygaert2024}
Roi Kugel and Rien van~de Weygaert.
\newblock Cosmic web dynamics: Forces and strains, 2024.

\bibitem{Libeskind+2017}
Noam~I. Libeskind, Rien van~de Weygaert, Marius Cautun, Bridget Falck, Elmo Tempel, Tom Abel, Mehmet Alpaslan, Miguel~A. Aragón-Calvo, Jaime~E. Forero-Romero, Roberto Gonzalez, Stefan Gottlöber, Oliver Hahn, Wojciech~A. Hellwing, Yehuda Hoffman, Bernard J.~T. Jones, Francisco Kitaura, Alexander Knebe, Serena Manti, Mark Neyrinck, Sebastián~E. Nuza, Nelson Padilla, Erwin Platen, Nesar Ramachandra, Aaron Robotham, Enn Saar, Sergei Shandarin, Matthias Steinmetz, Radu~S. Stoica, Thierry Sousbie, and Gustavo Yepes.
\newblock Tracing the cosmic web.
\newblock {\em Monthly Notices of the Royal Astronomical Society}, 473(1):1195--1217, 08 2017.

\bibitem{Lietzen+2016}
H.~Lietzen, E.~Tempel, L.~J. Liivamägi, A.~Montero-Dorta, M.~Einasto, A.~Streblyanska, C.~Maraston, J.~A. Rubiño-Martín, and E.~Saar.
\newblock Discovery of a massive supercluster system atz~ 0.47.
\newblock {\em Astronomy \&; Astrophysics}, 588:L4, March 2016.

\bibitem{GAMA2015}
J.~et~al. Liske.
\newblock Galaxy and mass assembly (gama): end of survey report and data release 2.
\newblock {\em Monthly Notices of the Royal Astronomical Society}, 452(2):2087--2126, 07 2015.

\bibitem{LittleWeinbergPark1991}
Blane {Little}, David~H. {Weinberg}, and Changbom {Park}.
\newblock {Primordial fluctuations and non-linear structure}.
\newblock {\em Monthly Notices of the Royal Astronomical Socienty}, 253:295--306, November 1991.

\bibitem{Longuet-Higgins1957}
Michael~Selwyn Longuet-Higgins and George Edward~Raven Deacon.
\newblock The statistical analysis of a random, moving surface.
\newblock {\em Philosophical Transactions of the Royal Society of London. Series A, Mathematical and Physical Sciences}, 249(966):321--387, 1957.

\bibitem{McAlpine+2025}
Stuart McAlpine, Jens Jasche, Metin Ata, Guilhem Lavaux, Richard Stiskalek, Carlos~S Frenk, and Adrian Jenkins.
\newblock The manticore project i: a digital twin of our cosmic neighbourhood from bayesian field-level analysis.
\newblock {\em Monthly Notices of the Royal Astronomical Society}, 540(1):716--745, 05 2025.

\bibitem{Euclid2024}
Y.~et~al. Mellier.
\newblock Euclid: I. overview of the euclid mission.
\newblock {\em Astronomy \& Astrophysics}, 697:A1, April 2025.

\bibitem{MetukiLibeskindHoffman2016}
Ofer Metuki, Noam~I. Libeskind, and Yehuda Hoffman.
\newblock The abundance and environment of dark matter haloes.
\newblock {\em Monthly Notices of the Royal Astronomical Society}, 460(1):297--303, 04 2016.

\bibitem{Metuki+2015}
Ofer Metuki, Noam~I. Libeskind, Yehuda Hoffman, Robert~A. Crain, and Tom Theuns.
\newblock Galaxy properties and the cosmic web in simulations.
\newblock {\em Monthly Notices of the Royal Astronomical Society}, 446(2):1458--1468, 11 2014.

\bibitem{Neyrinck2012}
Mark~C. {Neyrinck}.
\newblock {Origami constraints on the initial-conditions arrangement of dark-matter caustics and streams}.
\newblock {\em \mnras}, 427(1):494--501, November 2012.

\bibitem{NovikovColombiDore2006}
Dmitri Novikov, Stéphane Colombi, and Olivier Doré.
\newblock Skeleton as a probe of the cosmic web: the two-dimensional case.
\newblock {\em Monthly Notices of the Royal Astronomical Society}, 366(4):1201--1216, 03 2006.

\bibitem{Paranjape2021}
Aseem Paranjape.
\newblock Analytical halo models of cosmic tidal fields.
\newblock {\em Monthly Notices of the Royal Astronomical Society}, 502(4):5210--5226, 02 2021.

\bibitem{ParanjapeHahnSheth2018}
Aseem Paranjape, Oliver Hahn, and Ravi~K Sheth.
\newblock Halo assembly bias and the tidal anisotropy of the local halo environment.
\newblock {\em Monthly Notices of the Royal Astronomical Society}, 476(3):3631--3647, 02 2018.

\bibitem{Paranjape+2013}
Aseem Paranjape, Emiliano Sefusatti, Kwan~Chuen Chan, Vincent Desjacques, Pierluigi Monaco, and Ravi~K. Sheth.
\newblock Bias deconstructed: unravelling the scale dependence of halo bias using real-space measurements.
\newblock {\em Monthly Notices of the Royal Astronomical Society}, 436(1):449--459, 09 2013.

\bibitem{Peebles1980}
P.~J.~E. {Peebles}.
\newblock {\em {The large-scale structure of the universe}}.
\newblock 1980.

\bibitem{Peebles1994}
P.~J.~E. Peebles.
\newblock {\em {Principles of Physical Cosmology}}.
\newblock Princeton University Press, 9 2020.

\bibitem{Peebles2023}
P~J~E Peebles.
\newblock Flat patterns in cosmic structure.
\newblock {\em Monthly Notices of the Royal Astronomical Society}, 526(3):4490--4501, 10 2023.

\bibitem{Platen2009}
Erwin {Platen}.
\newblock {\em {A void perspective of the Cosmic Web}}.
\newblock PhD thesis, University of Groningen, Netherlands, November 2009.

\bibitem{Platen+2011}
Erwin Platen, Rien van~de Weygaert, Bernard J.~T. Jones, Gert Vegter, and Miguel A.~Aragón Calvo.
\newblock Structural analysis of the sdss cosmic web – i. non-linear density field reconstructions.
\newblock {\em Monthly Notices of the Royal Astronomical Society}, 416(4):2494--2526, 09 2011.

\bibitem{Pogosyan+2009}
D.~Pogosyan, C.~Pichon, C.~Gay, S.~Prunet, J.~F. Cardoso, T.~Sousbie, and S.~Colombi.
\newblock The local theory of the cosmic skeleton.
\newblock {\em Monthly Notices of the Royal Astronomical Society}, 396(2):635--667, 06 2009.

\bibitem{PressSchechter1974}
William~H. Press and Paul Schechter.
\newblock {Formation of Galaxies and Clusters of Galaxies by Self-Similar Gravitational Condensation}.
\newblock {\em APJ}, 187:425--438, February 1974.

\bibitem{RamWeygaertFeldbrugge2025}
{van de Weygaert}~R. {Ram}, J. and J.~{Feldbrugge}.
\newblock Forthcoming.

\bibitem{RampfHahn2021}
Cornelius Rampf and Oliver Hahn.
\newblock Shell-crossing in a {{$\Lambda$CDM Universe}}.
\newblock {\em Monthly Notices of the Royal Astronomical Society: Letters}, 501(1):L71--L75, January 2021.

\bibitem{Rice1945}
S.~O. Rice.
\newblock Mathematical analysis of random noise.
\newblock {\em Bell System Technical Journal}, 24(1):46--156, 1945.

\bibitem{Saunders1980}
Peter~Timothy Saunders.
\newblock {\em An Introduction to Catastrophe Theory}.
\newblock Cambridge University Press, 1980.

\bibitem{Schaap2007}
W.~E. {Schaap}.
\newblock {\em {DTFE: the Delaunay Tessellation Field Estimator}}.
\newblock PhD thesis, University of Groningen, Kapteyn Astronomical Institute, January 2007.

\bibitem{SchaapWeygaert2000}
W.~E. Schaap and R.~van~de Weygaert.
\newblock Continuous fields and discrete samples: Reconstruction through delaunay tessellations, 2000.

\bibitem{ShandarinZeldovich1989}
S.~F. Shandarin and Ya.~B. Zeldovich.
\newblock The large-scale structure of the universe: Turbulence, intermittency, structures in a self-gravitating medium.
\newblock {\em Rev. Mod. Phys.}, 61:185--220, Apr 1989.

\bibitem{ShandarinSalmanHeitmann2012}
Sergei {Shandarin}, Salman {Habib}, and Katrin {Heitmann}.
\newblock {Cosmic web, multistream flows, and tessellations}.
\newblock {\em Physical Review D}, 85(8):083005, April 2012.

\bibitem{Shen+2006}
Jiajian Shen, Tom Abel, H.~J. Mo, and Ravi~K. Sheth.
\newblock An excursion set model of the cosmic web: The abundance of sheets, filaments, and halos.
\newblock {\em The Astrophysical Journal}, 645(2):783, jul 2006.

\bibitem{Sheth2004}
Jatush~V. Sheth.
\newblock Morphology of mock sdss catalogues.
\newblock {\em Monthly Notices of the Royal Astronomical Society}, 354(2):332--342, 10 2004.

\bibitem{ShethMoTormen2001}
Ravi~K. {Sheth}, H.~J. {Mo}, and Giuseppe {Tormen}.
\newblock {Ellipsoidal collapse and an improved model for the number and spatial distribution of dark matter haloes}.
\newblock {\em Monthly Notices of the Royal Astronomical Society}, 323(1):1--12, May 2001.

\bibitem{ShethWeygaert2006}
Ravi~K. Sheth and Rien van~de Weygaert.
\newblock A hierarchy of voids: much ado about nothing.
\newblock {\em Monthly Notices of the Royal Astronomical Society}, 350(2):517--538, 05 2004.

\bibitem{Sousbie2011}
T.~Sousbie.
\newblock The persistent cosmic web and its filamentary structure – i. theory and implementation.
\newblock {\em Monthly Notices of the Royal Astronomical Society}, 414(1):350--383, 06 2011.

\bibitem{Sousbie2011b}
T.~Sousbie.
\newblock The persistent cosmic web and its filamentary structure – ii. illustrations.
\newblock {\em Monthly Notices of the Royal Astronomical Society}, 414(1):384--403, 06 2011.

\bibitem{Sousbie+2008}
T.~Sousbie, C.~Pichon, S.~Colombi, D.~Novikov, and D.~Pogosyan.
\newblock The 3d skeleton: tracing the filamentary structure of the universe.
\newblock {\em Monthly Notices of the Royal Astronomical Society}, 383(4):1655--1670, 12 2007.

\bibitem{SpongHutchinsonVidyasagar2020}
M.~W. Spong, S.~Hutchinson, and M.~Vidyasagar.
\newblock {\em {Robot Modeling and Control}}.
\newblock John Wiley \& Sons, Ltd, feb 2005.

\bibitem{SpringelPakmorZier+2021}
Volker Springel, R{\"u}diger Pakmor, Oliver Zier, and Martin Reinecke.
\newblock Simulating cosmic structure formation with the gadget-4 code.
\newblock {\em Monthly Notices of the Royal Astronomical Society}, 506(2):2871--2949, September 2021.

\bibitem{Subhonenko+2011}
{Suhhonenko, I.}, {Einasto, J.}, {Liivamägi, L. J.}, {Saar, E.}, {Einasto, M.}, {Hütsi, G.}, {Müller, V.}, {Starobinsky, A. A.}, {Tago, E.}, and {Tempel, E.}
\newblock The cosmic web for density perturbations of various scales.
\newblock {\em A\&A}, 531:A149, 2011.

\bibitem{SunseriBayerLiu2025}
James Sunseri, Adrian~E. Bayer, and Jia Liu.
\newblock Power of the cosmic web.
\newblock {\em Phys. Rev. D}, 112:063516, Sep 2025.

\bibitem{SDSS2004}
Max et~al. {Tegmark}.
\newblock {Cosmological parameters from SDSS and WMAP}.
\newblock {\em Physical Review D}, 69(10):103501, May 2004.

\bibitem{Thom1972}
R.~Thom.
\newblock {\em Structural Stability and Morphogenesis: An Outline of a General Theory of Models}.
\newblock Advanced book program. W. A. Benjamin, 1975.

\bibitem{Tully1986}
R.~B. {Tully}.
\newblock {Alignment of Clusters and Galaxies on Scales up to 0.1 C}.
\newblock {\em Astrophysical Journal}, 303:25, April 1986.

\bibitem{Valade+2024}
A.~Valade, N.~I. Libeskind, D.~Pomarède, R.~B. Tully, Y.~Hoffman, S.~Pfeifer, and E.~Kourkchi.
\newblock Identification of basins of attraction in the local universe.
\newblock {\em Nature Astronomy}, 8(12):1610–1616, September 2024.

\bibitem{Valade+2022}
Aurélien Valade, Yehuda Hoffman, Noam~I Libeskind, and Romain Graziani.
\newblock Hamiltonian monte carlo reconstruction from peculiar velocities.
\newblock {\em Monthly Notices of the Royal Astronomical Society}, 513(4):5148–5161, May 2022.

\bibitem{WeygaertBond2008b}
R.~{van de Weygaert} and J.~R. {Bond}.
\newblock {\em {Clusters and the Theory of the Cosmic Web}}, volume 740, page 335.
\newblock 2008.

\bibitem{WeygaertBertschinger1996}
Rien {van de Weygaert} and Edmund {Bertschinger}.
\newblock {Peak and gravity constraints in Gaussian primordial density fields: An application of the Hoffman-Ribak method}.
\newblock {\em Monthly Notices of the Royal Astronomical Society}, 281:84, July 1996.

\bibitem{WeygaertBond2008}
Rien {van de Weygaert} and J.~R. {Bond}.
\newblock {Observations and Morphology of the Cosmic Web}.
\newblock In M.~{Plionis}, O.~{L{\'o}pez-Cruz}, and D.~{Hughes}, editors, {\em A Pan-Chromatic View of Clusters of Galaxies and the Large-Scale Structure}, volume 740, page~24. 2008.

\bibitem{HaarlemWeygaert1993}
Michiel {van Haarlem} and Rien {van de Weygaert}.
\newblock {Velocity Fields and Alignments of Clusters in Gravitational Instability Scenarios}.
\newblock {\em The Astrophysical Journal}, 418:544, December 1993.

\bibitem{Wilding2022}
Georg Wilding.
\newblock {\em Multi-scale Connectivity and Hierarchical Topology of the Cosmic Web}.
\newblock PhD thesis, University of Groningen, 2022.

\bibitem{Xu+2020}
Kai Xu, Hong Ge, Will Tebbutt, Mohamed Tarek, Martin Trapp, and Zoubin Ghahramani.
\newblock Advancedhmc. jl: A robust, modular and efficient implementation of advanced hmc algorithms.
\newblock In {\em Symposium on Advances in Approximate Bayesian Inference}, pages 1--10. PMLR, 2020.

\bibitem{Yang2017}
Xiaohu Yang, Youcai Zhang, Tianhuan Lu, Huiyuan Wang, Feng Shi, Dylan Tweed, Shijie Li, Wentao Luo, Yi~Lu, and Lei Yang.
\newblock Revealing the cosmic web-dependent halo bias.
\newblock {\em The Astrophysical Journal}, 848(1):60, October 2017.

\bibitem{Zeeman1977}
E.~C. Zeeman.
\newblock {\em Catastrophe {{Theory}}: {{Selected Papers}}, 1972-1977}.
\newblock Addison-Wesley Publishing Company, Advanced Book Program, 1977.

\bibitem{Zeldovich1970}
{\relax Ya}.~B. Zel'dovich.
\newblock Gravitational instability: An approximate theory for large density perturbations.
\newblock {\em Astronomy and Astrophysics}, 5:84--89, March 1970.

\end{thebibliography}


\begin{thebibliography}
\end{thebibliography}

\appendix
\section{Caustic skeleton formalism}
\label{app:caustic_skeleton}

We here review the caustic skeleton formalism first for a generic Hamiltonian cosmological flow and subsequently for the Zel'dovich approximation. In doing so, we summarise the caustic conditions presented in Feldbrugge et al. (2018) \cite{Feldbrugge+2018}.

\subsection{Caustic conditions in the general formalism}
\label{app:caustic_skeleton-general}

The first in the list of caustics is the \textit{fold caustic} $A_2$, given by the condition
\begin{equation}
    A_2(t) :\quad 1 + \mu_1(\bm{q},t)=0 \,.
    \label{eq:A2}
\end{equation}
The fold caustic characterises the particle mesh folding over, and thus bounds a multistream region. This is illustrated by the light blue lines in \cref{fig:caustics}. In Arndol'd's terminology \cite{Arnold1986, Arnold1992, ArnoldGuseinZadeVarchenko2012}, the instantaneous foldings (light blue lines) are the \textit{small} fold caustic, while the \textit{big} fold caustic,
\begin{equation}
    A_2^{(b)}(t) = \bigcup_{t^{\prime} \leq t}  A_2(t') \,,
    \label{eq:big_caustic}
\end{equation}
traces out small caustics up to time $t$, thus giving the multistreaming volume at time $t$ (the volume enclosed in the light blue lines). The distinction between the small and big caustics is important here and analogous for the following catastrophes. In \cref{fig:caustics} and the visualisations of \cite{Hidding+2013, Feldbrugge+2018, FeldbruggeWeygaert2023}, the fold caustic is generally illustrated through the small (instantaneous) caustic, whereas the higher catastrophes are shown through the big caustics that trace out the structural elements of the cosmic web over time.

The next in the list of catastrophes is the \textit{cusp caustic} $A_3$, which occurs on the fold caustic $A_2$ when
\begin{equation}
    A_3(t) :\quad \bm{v}_1(\bm{q},t) \cdot \nabla \mu_1(\bm{q},t)=0  \,.
    \label{eq:A3}
\end{equation}
In the first panel of \cref{fig:cusp_sketch}, the instantaneous cusp caustic is given by the cuspy edges of the multistream regions, giving the multistream region its characteristic inside-out silhouette. In the three-dimensional picture, the cusp points are to be understood as an ellipsoidal ring bounding the three-dimensional multistreaming pancake. Over time, the multistream region grows (see \cref{fig:cusp_sketch}), and the cusp points trace out the cusp sheet corresponding to a cosmic wall, which is illustrated by the red lines in the slices of \cref{fig:caustics}.

The \textit{swallowtail caustic} $A_4$ forms on the cusp sheet when
\begin{equation}
    A_4(t) :\quad \bm{v}_1(\bm{q},t) \cdot \nabla\left(\bm{v}_1(\bm{q},t) \cdot \nabla \mu_1(\bm{q},t) \right)=0  \,.
    \label{eq:A4}
\end{equation}
Geometrically, swallowtail caustics corresponds to the cusp sheet folding onto itself, as illustrated by blue dot in the second panel of \cref{fig:caustics}, with the outgoing walls exhibiting the characteristic swallowtail-like geometry. The dot in \cref{fig:caustics} illustrates a slice through the line-like structure that is traced out by the instantaneous $A_4(t)$ points over time. In the cosmic web, this line-like structure manifests itself as an overdense cosmic filament, and multiple slices through swallowtail filaments may be identified in the density and particle-mesh plots of \cref{fig:sim_256_triple_plot} and \cref{fig:sim_256_caustics}.

The last in the $A$-family of singularities is the \textit{butterfly catastrophe} $A_5$, which forms on the $A_4$  swallowtail filament when
\begin{equation}
    A_5(t) :\quad   \bm{v}_1(\bm{q},t) \cdot \nabla\left(\bm{v}_1(\bm{q},t) \cdot \nabla\left(\bm{v}_1(\bm{q},t) \cdot \nabla \mu_1(\bm{q},t) \right) \right)=0  \,.
    \label{eq:A5}
\end{equation}
The butterfly catastrophe occurs as the swallowtail folds onto itself, and thus forms a cluster in the cosmic web. For more details, we refer to \cite{Feldbrugge+2018}.

The $A$ family of caustics discussed above is characterised by the divergence of the Eulerian density, \cref{eq:Eulerian_density}, due to the first eigenvalue field attaining the value $\lambda_1 = b_c^{-1}$. Catastrophe theory reveals that there exists a second family of caustics, known as the $D$ family, for which the divergence occurs due to both the first and second eigenvalue field\footnote{For completeness, we note that catastrophe theory further reveals (see e.g. \cite{Saunders1980}) that in the three-dimensional Universe, there exist no stable singularities with the three eigenvalues coinciding as $\lambda_1 = \lambda_2 =\lambda_3 = b_c^{-1}$. This configuration can only occur as a unstable transient event, and so cannot contribute to the structural elements of the cosmic web.}. The first in this family is the \textit{umbilic caustic} $D_4$, given by
\begin{equation}
    D_4(t) :\quad 1+\mu_1(\bm{q},t)= 1+\mu_2(\bm{q},t) = 0  \,.
    \label{eq:D4}
\end{equation}
The geometry of the umbilic caustic is visualised in \cref{fig:caustics} by the green dot with the characteristic threefold symmetry of the outgoing walls. Over time, the $D_4$ points trace out a line-like structure that constitutes a second and geometrically distinct type of filament. A few umbilic filaments may be identified in \cref{fig:sim_256_caustics}. For completeness, we mention here that the $D_4$ caustic encompasses two mathematically distinct catastrophes, the $D_4^+$ \textit{hyperbolic} and the $D_4^-$ \textit{elliptic umbilic caustic} \cite{Feldbrugge+2018}, with different geometries. These are not of further relevance article, but  will be investigated in the follow-up paper on cosmic filament formation.

The last in the list of caustics is the \textit{parabolic umbilic caustic} $D_5$, which forms on the $D_4$ filament when
\begin{equation}
    D_5(t):\quad \bm{v}_1(\bm{q},t) \cdot \nabla\left(\mu_1(\bm{q},t) - \mu_2(\bm{q},t) \right) = \bm{v}_2(\bm{q},t) \cdot \nabla\left(\mu_1(\bm{q},t) - \mu_2(\bm{q},t) \right) = 0  \,.
\end{equation}
The parabolic umbilic corresponds to a second type of clusters in the cosmic web, and we refer to \cite{Feldbrugge+2018} for a more detailed discussion. This concludes the list of stable catastrophes making up the structural elements of the cosmic web in the three-dimensional Universe.

\subsection{Caustic conditions in the Zel'dovich approximation}
\label{app:caustic_skeleton-ZA}

The condition for the $A_2$ fold caustic to appear at growing mode $b_c$ is given by
\begin{equation}
    A_2(t) : \quad \lambda_1(\bm{q})=\frac{1}{b_c}
    \label{eq:A2_ZA}
\end{equation}
and the Lagrangian-space multistreaming volume at growing mode $b_c$ is defined by the condition $\lambda_1(\bm{q}) \geq b_c^{-1}$.
The $A_3$ cusp condition on the $A_2$ caustic becomes
\begin{equation}
    A_3(t) : \quad \bm{v}_1(\bm{q}) \cdot \nabla \lambda_1(\bm{q}) = 0  \,.
    \label{eq:A3_ZA}
\end{equation}
The $A_4$ swallowtail condition on the $A_3$ caustic becomes
\begin{equation}
    A_4(t) :
    \quad \bm{v}_1(\bm{q}) \cdot \nabla\left(\bm{v}_1(\bm{q}) \cdot \nabla \lambda_1(\bm{q}) \right)  =0  \,.
    \label{eq:A4_ZA}
\end{equation}
The $A_5$ butterfly condition on the $A_4$ caustic becomes
\begin{equation}
    A_5(t) :\quad \bm{v}_1(\bm{q}) \cdot \nabla \left(\bm{v}_1(\bm{q}) \cdot \nabla\left(\bm{v}_1(\bm{q}) \cdot \nabla \lambda_1(\bm{q}) \right) \right)=0  \,.
    \label{eq:A5_ZA}
\end{equation}
The $D_4$ umbilic condition becomes
\begin{equation}
    D_4(t) : \quad \lambda_1(\bm{q})=\lambda_2(\bm{q})=\frac{1}{b_c}
    \label{eq:D4_ZA}
\end{equation}
and, finally, the $D_5$ parabolic umbilic condition on the $D_4$ caustic becomes
\begin{equation}
    D_5(t):\quad \bm{v}_1 \cdot \nabla\left(\lambda_1(\bm{q}) - \lambda_2(\bm{q}) \right) = \bm{v}_2 \cdot \nabla\left(\lambda_1(\bm{q}) - \lambda_2(\bm{q}) \right) = 0
    \label{eq:D5_ZA}  \,.
\end{equation}
\Cref{tab:caustic_skeleton} summarises the relevant conditions of the ZA caustics along with their correspondence to the structural elements of the cosmic web.

At this point, it is instructive to briefly comment on the algebraic structure of the ZA caustic conditions. The fold condition, \cref{eq:A2_ZA}, acts as an isocontour on the eigenvalue field $\lambda_1(\bm{q})$, with the multistream regions given by the filter $\lambda_1(\bm{q}) \geq b_c^{-1}$. With increasing time $b(t)$, the isocontour is lowered and the multistream regions grow. Panel b) and c) of \cref{fig:eigenvalue_visualisation} visualise several isocontours of $\lambda_1(\bm{q})$ to illustrate the growth of cosmic structure. The cusp condition, \cref{eq:A3_ZA}, is a single constraint equation and geometrically defines a two-dimensional structure on the primordial potential perturbation. This is consistent with the identification of cosmic walls. The swallowtail condition, \cref{eq:A4_ZA}, acts as a second constraint equation on the cusp sheet, and thus defines a one-dimensional structure, consistent with the identification of the cosmic filaments. While the situation is more subtle for the umbilic constraint, \cref{eq:D4_ZA}, we will argue in the follow-up paper that the same is true for the umbilic filament family.

\section{Eigenframe identities}
\label{app:eigenframe}

We now list the eigenframe solution derived in \cref{sec:eigenframe} as the solution of the system of equations \crefrange{eq:invariants_eigenframe}{eq:eigenframe_system_3}. From the given construction, the solution constitutes a list of identities in up to fourth-order field derivatives $T_{ij \ldots k}$, as are relevant to the present study.

The eigenvalues are given by
\begin{equation}
    \begin{split}
        \lambda_1 &= T_{11} \\
        \lambda_2 &= T_{22} \\
        \lambda_3 &= T_{33} \\
    \end{split}
\end{equation}
with the corresponding eigenvectors
\begin{equation}
    \bm{v}_1 =
    \begin{pmatrix}
        1 \\ 0 \\ 0
    \end{pmatrix} \qquad
    \bm{v}_2 =
    \begin{pmatrix}
        0 \\ 1 \\ 0
    \end{pmatrix} \qquad
    \bm{v}_3 =
    \begin{pmatrix}
        0 \\ 0 \\ 1
    \end{pmatrix}
\end{equation}
The first-order derivatives of the eigenvalue fields are not corrected by the eigenframe, such that
\begin{equation}
    \nabla \lambda_1 =
    \begin{pmatrix}
        T_{111} \\ T_{112} \\ T_{113}
    \end{pmatrix} \qquad
    \nabla \lambda_2 =
    \begin{pmatrix}
        T_{122} \\ T_{222} \\ T_{223}
    \end{pmatrix} \qquad
    \nabla \lambda_3 =
    \begin{pmatrix}
        T_{133} \\ T_{233} \\ T_{333}
    \end{pmatrix}
\end{equation}
and so e.g. the cusp condition is $\bm{v}_1 \cdot \nabla \lambda_1 = T_{111}$, as was extensively used above.

The first-order derivatives of the eigenvector fields are
\begin{equation}
    \begin{split}
        \partial_1 \bm{v}_1 =
        \begin{pmatrix}
            0 \\ \frac{T_{112}}{T_{11}-T_{22}} \\ \frac{T_{113}}{T_{11}-T_{33}}
        \end{pmatrix} \qquad
        \partial_2 \bm{v}_1 =
        \begin{pmatrix}
            0 \\ \frac{T_{122}}{T_{11}-T_{22}} \\ \frac{T_{123}}{T_{11}-T_{33}}
        \end{pmatrix} \qquad
        \partial_3 \bm{v}_1 =
        \begin{pmatrix}
            0 \\ \frac{T_{123}}{T_{11}-T_{22}} \\ \frac{T_{133}}{T_{11}-T_{33}}
        \end{pmatrix} \\
        \partial_1 \bm{v}_2 =
        \begin{pmatrix}
            \frac{T_{112}}{T_{22}-T_{11}} \\ 0 \\ \frac{T_{123}}{T_{22}-T_{33}}
        \end{pmatrix} \qquad
        \partial_2 \bm{v}_2 =
        \begin{pmatrix}
            \frac{T_{122}}{T_{22}-T_{11}} \\ 0 \\ \frac{T_{223}}{T_{22}-T_{33}}
        \end{pmatrix} \qquad
        \partial_3 \bm{v}_2 =
        \begin{pmatrix}
            \frac{T_{123}}{T_{22}-T_{11}} \\ 0 \\ \frac{T_{233}}{T_{22}-T_{33}}
        \end{pmatrix} \\
        \partial_1 \bm{v}_3 =
        \begin{pmatrix}
            \frac{T_{113}}{T_{33}-T_{11}} \\ \frac{T_{123}}{T_{33}-T_{22}} \\ 0
        \end{pmatrix} \qquad
        \partial_2 \bm{v}_3 =
        \begin{pmatrix}
            \frac{T_{123}}{T_{33}-T_{11}} \\ \frac{T_{223}}{T_{33}-T_{22}} \\ 0
        \end{pmatrix} \qquad
        \partial_3 \bm{v}_3 =
        \begin{pmatrix}
            \frac{T_{133}}{T_{33}-T_{11}} \\ \frac{T_{233}}{T_{33}-T_{22}} \\ 0
        \end{pmatrix}
    \end{split}
\end{equation}

The second-order derivatives are given by
\begin{equation}
    \begin{split}
        \partial_{x}^2\lambda_1 &= \frac{2 T_{112}^2}{T_{11}-T_{22}}+\frac{2 T_{113}^2}{T_{11}-T_{33}}+T_{1111} \\
        \partial_{x}\partial_{y}\lambda_1 &= \frac{2 T_{112} T_{122}}{T_{11}-T_{22}}+\frac{2 T_{113} T_{123}}{T_{11}-T_{33}}+T_{1112}\\
        \partial_{x}\partial_{z}\lambda_1 &= \frac{2 T_{112} T_{123}}{T_{11}-T_{22}}+\frac{2 T_{113} T_{133}}{T_{11}-T_{33}}+T_{1113} \\
        \partial_{y}^2\lambda_1 &= \frac{2 T_{122}^2}{T_{11}-T_{22}}+\frac{2 T_{123}^2}{T_{11}-T_{33}}+T_{1122} \\
        \partial_{y}\partial_{z}\lambda_1 &= 2 T_{123} \left(\frac{T_{122}}{T_{11}-T_{22}}+\frac{T_{133}}{T_{11}-T_{33}}\right)+T_{1123} \\
        \partial_{z}^2 \lambda_1 &= \frac{2 T_{123}^2}{T_{11}-T_{22}}+\frac{2 T_{133}^2}{T_{11}-T_{33}}+T_{1133}
    \end{split}
\end{equation}
\begin{equation}
    \begin{split}
        \partial_{x}^2 \lambda_2 &=\frac{2 T_{112}^2}{T_{22}-T_{11}}+\frac{2 T_{123}^2}{T_{22}-T_{33}}+T_{1122} \\
        \partial_{x}\partial_{y} \lambda_2 &= \frac{2 T_{112} T_{122}}{T_{22}-T_{11}}+\frac{2 T_{123} T_{223}}{T_{22}-T_{33}}+T_{1222} \\
        \partial_{x}\partial_{z} \lambda_2 &= 2 T_{123} \left(\frac{T_{112}}{T_{22}-T_{11}}+\frac{T_{233}}{T_{22}-T_{33}}\right)+T_{1223} \\
        \partial_{y}^2 \lambda_2 &= \frac{2 T_{122}^2}{T_{22}-T_{11}}+\frac{2 T_{223}^2}{T_{22}-T_{33}}+T_{2222} \\
        \partial_{y}\partial_{z} \lambda_2 &=\frac{2 T_{122} T_{123}}{T_{22}-T_{11}}+\frac{2 T_{223} T_{233}}{T_{22}-T_{33}}+T_{2223} \\
        \partial_{z}^2 \lambda_2 &= \frac{2 T_{123}^2}{T_{22}-T_{11}}+\frac{2 T_{233}^2}{T_{22}-T_{33}}+T_{2233} \\
    \end{split}
\end{equation}
\begin{equation}
    \begin{split}
        \partial_{x}^2 \lambda_3 &= \frac{2 T_{113}^2}{T_{33}-T_{11}}+\frac{2 T_{123}^2}{T_{33}-T_{22}}+T_{1133} \\
        \partial_{x} \partial_{y} \lambda_3 &=2 T_{123} \left(\frac{T_{113}}{T_{33}-T_{11}}+\frac{T_{223}}{T_{33}-T_{22}}\right)+T_{1233} \\
        \partial_{x} \partial_{z} \lambda_3 &= \frac{2 T_{113} T_{133}}{T_{33}-T_{11}}+\frac{2 T_{123} T_{233}}{T_{33}-T_{22}}+T_{1333}\\
        \partial_{y}^2 \lambda_3 &= \frac{2 T_{123}^2}{T_{33}-T_{11}}+\frac{2 T_{223}^2}{T_{33}-T_{22}}+T_{2233} \\
        \partial_{y}  \partial_{z} \lambda_3 &= \frac{2 T_{123} T_{133}}{T_{33}-T_{11}}+\frac{2 T_{223} T_{233}}{T_{33}-T_{22}}+T_{2333} \\
        \partial_{z}^2 \lambda_3 &= \frac{2 T_{133}^2}{T_{33}-T_{11}}+\frac{2 T_{233}^2}{T_{33}-T_{22}}+T_{3333}
    \end{split}
\end{equation}
To evaluate the normal of the cusp condition and the expectation values of \cref{subsec:sims-formation_time}, we furthermore need the expressions
\begin{equation}
    \begin{split}
    \nabla \left(\bm{v}_1 \cdot \nabla \lambda_1 \right) &=
    \begin{pmatrix}
        \frac{2 T_{113}^2}{T_{33}-T_{11}}+\frac{2 T_{123}^2}{T_{33}-T_{22}}+T_{1133} \\
        2 T_{123} \left(\frac{T_{113}}{T_{33}-T_{11}}+\frac{T_{223}}{T_{33}-T_{22}}\right)+T_{1233} \\
        \frac{2 T_{113} T_{133}}{T_{33}-T_{11}}+\frac{2 T_{123} T_{233}}{T_{33}-T_{22}}+T_{1333} 
    \end{pmatrix} \\
    \nabla \left(\bm{v}_2 \cdot \nabla \lambda_2 \right) &= 
    \begin{pmatrix}
    \frac{3 T_{112} T_{122}}{T_{22}-T_{11}}+\frac{3 T_{123}T_{223}}{T_{22}-T_{33}}+T_{1222} \\
    \frac{3 T_{122}^2}{T_{22}-T_{11}}+\frac{3T_{223}^2}{T_{22}-T_{33}}+T_{2222}\\
    \frac{3 T_{122} T_{123}}{T_{22}-T_{11}}+\frac{3T_{223} T_{233}}{T_{22}-T_{33}}+T_{2223}
    \end{pmatrix} \\
    \nabla \left(\bm{v}_2 \cdot \nabla \lambda_3 \right) &= 
    \begin{pmatrix}
    \frac{T_{112} T_{133}}{T_{22}-T_{11}}+2 T_{123}\left(\frac{T_{113}}{T_{33}-T_{11}}+\frac{T_{223}}{T_{33}-T_{22}}\right)+\frac{T_{123}T_{333}}{T_{22}-T_{33}}+T_{1233}\\
    \frac{2 T_{123}^2}{T_{33}-T_{11}}+\frac{2T_{223}^2}{T_{33}-T_{22}}+\frac{T_{122} T_{133}}{T_{22}-T_{11}}+\frac{T_{223}T_{333}}{T_{22}-T_{33}}+T_{2233}\\
    \frac{T_{123} T_{133}}{T_{22}-T_{11}}+\frac{2 T_{123}T_{133}}{T_{33}-T_{11}}+\frac{2 T_{223} T_{233}}{T_{33}-T_{22}}+\frac{T_{233}T_{333}}{T_{22}-T_{33}}+T_{2333}
    \end{pmatrix} \\
    \nabla \left(\bm{v}_3 \cdot \nabla \lambda_2 \right) &= 
    \begin{pmatrix}
    \frac{T_{113} T_{122}}{T_{33}-T_{11}}+\frac{T_{123} T_{222}}{T_{33}-T_{22}}+2T_{123}\left(\frac{T_{112}}{T_{22}-T_{11}}+\frac{T_{233}}{T_{22}-T_{33}}\right)+T_{1223}\\
    \frac{2 T_{122} T_{123}}{T_{22}-T_{11}}+\frac{T_{122}T_{123}}{T_{33}-T_{11}}+\frac{T_{222} T_{223}}{T_{33}-T_{22}}+\frac{2 T_{223}T_{233}}{T_{22}-T_{33}}+T_{2223}\\
    \frac{2 T_{123}^2}{T_{22}-T_{11}}+\frac{2T_{233}^2}{T_{22}-T_{33}}+\frac{T_{122} T_{133}}{T_{33}-T_{11}}+\frac{T_{222}T_{233}}{T_{33}-T_{22}}+T_{2233}
    \end{pmatrix} \\
    \nabla \left(\bm{v}_3 \cdot \nabla \lambda_3 \right) &= 
    \begin{pmatrix}
    \frac{3 T_{113} T_{133}}{T_{33}-T_{11}}+\frac{3 T_{123}T_{233}}{T_{33}-T_{22}}+T_{1333}\\
    \frac{3 T_{123} T_{133}}{T_{33}-T_{11}}+\frac{3T_{223} T_{233}}{T_{33}-T_{22}}+T_{2333}\\
    \frac{3 T_{133}^2}{T_{33}-T_{11}}+\frac{3T_{233}^2}{T_{33}-T_{22}}+T_{3333}
    \end{pmatrix} 
    \end{split}
\end{equation}

\section{Derivative rotations}
\label{app:derivs_rotation}

Throughout this article, we work in the $ZYZ$-convention of Euler angles and express a spatial rotation by the Euler matrix $R(\alpha, \beta, \gamma)$ with Euler angles $(\alpha, \beta, \gamma)$,
\begin{equation}
    R(\alpha, \beta, \gamma) = \begin{pmatrix}
        c_{\alpha } c_{\beta } c_{\gamma }-s_{\alpha } s_{\gamma } & -c_{\alpha } c_{\beta } s_{\gamma
   }-c_{\gamma } s_{\alpha } & c_{\alpha } s_{\beta } \\
 c_{\beta } c_{\gamma } s_{\alpha }+c_{\alpha } s_{\gamma } & c_{\alpha } c_{\gamma }-c_{\beta }
   s_{\alpha } s_{\gamma } & s_{\alpha } s_{\beta } \\
 -c_{\gamma } s_{\beta } & s_{\beta } s_{\gamma } & c_{\beta } \\
    \end{pmatrix}
    \label{eq:rot_matrix_ZYZ} \,,
\end{equation}
where we have used the shorthand notation $s_\alpha = \sin \alpha$ and $c_\alpha = \cos \alpha$ and similarly for $\beta, \gamma$.

The matrix in \cref{eq:rot_matrix_ZYZ} is a representation of the $SO(3)$ group, under which the spatial coordinates $x_i$ transform contravariantly with
\begin{equation}
    x_i^{\prime} = R_{ij} x_j \,.
    \label{eq:trf_contraviant}
\end{equation}
In matrix form, this is succinctly written as 
\begin{equation}
    \bm{x}^{\prime} = R \bm{x} \,.
\end{equation}
Under the rotation of the spatial coordinates, the derivatives $\partial_{i} = \frac{\partial}{\partial x_i}$ transform covariantly with
\begin{equation}
    \partial_{i}^{\prime} = (R^{-1})_{ij} \partial_{i} = (R^{\mathrm{T}})_{ij} \partial_{i} \,,
    \label{eq:trf_covariant}
\end{equation}
where in the second equality we have used the defining property $R^{\mathrm{T}}R = I$ of $R \in SO(3)$.
The transformation law of the higher-rank tensors $\partial_i \partial_j \ldots \partial_k$ is consequently given by
\begin{equation}
    \partial_i^{\prime} \partial_j^{\prime} \ldots \partial_k^{\prime} =
    (R^{\mathrm{T}})_{i l} (R^{\mathrm{T}})_{jm}  \ldots (R^{\mathrm{T}})_{kn}  \partial_i \partial_j \ldots \partial_k  \,.
    \label{eq:trf_covariant_tensor}
\end{equation}
Explicitly, the second-order derivatives transform with
\begin{equation}
    \partial_i^{\prime} \partial_j^{\prime} =  (R^{\mathrm{T}})_{ik } (R^{\mathrm{T}})_{jl}  \partial_k \partial_l 
\end{equation}
and the transformed Cartesian field derivatives $t_{ij}^{\prime}$ are given by
\begin{equation}
    t_{ij}^{\prime} =  (R^{\mathrm{T}})_{ik } (R^{\mathrm{T}})_{jl}  \partial_k \partial_l \Psi = (R^{\mathrm{T}})_{ik } (R^{\mathrm{T}})_{jl} t_{kl} = (R^{\mathrm{T}})_{ik } t_{kl}  R_{lj} \,.
    \label{eq:trf_2nd_order_field_derivs}
\end{equation}
Denoting the matrix of second-order derivatives $t_{ij}$ as $H$, \cref{eq:trf_2nd_order_field_derivs} is written in matrix form as
\begin{equation}
    H^{\prime} =  R^{\mathrm{T}} H R
\end{equation}
and we have recovered the standard matrix rotation formula used in the eigendecomposition of \cref{eq:eigendecomposition}.

Similarly, the third and fourth-order derivatives transform as rank-3 and rank-4 tensors respectively with
\begin{equation}
    t_{ijk}^{\prime} = (R^{\mathrm{T}})_{i l} (R^{\mathrm{T}})_{jm}  (R^{\mathrm{T}})_{kn} t_{lmn} \qquad t_{ijkl}^{\prime} = (R^{\mathrm{T}})_{i m} (R^{\mathrm{T}})_{jn}  (R^{\mathrm{T}})_{ko} n(R^{\mathrm{T}})_{kp} t_{mnop} \,.
    \label{eq:trf_3rd_4th_order_field_derivs}
\end{equation}

In principle, one can evaluate the transformation of the relevant field derivatives $t_{ij\ldots k}$ under the spatial rotation directly from \cref{eq:trf_2nd_order_field_derivs} and  \cref{eq:trf_3rd_4th_order_field_derivs}. However, given the commutativity $\partial_i \partial_j = \partial_j \partial_i$, it is more feasible to work only with the independent elements of the tensors $t_{ij}$, $t_{ijk}$ and $t_{ijkl}$. Using the convention of \cref{eq:t_statistic}, these are given by
\begin{equation}
    \bm{t} = \{ t_{11}, t_{12}, t_{13}, t_{22}, \ldots, t_{3333}\}
\end{equation}
We now wish to derive a another representation of the rotation group $SO(3)$ that acts on the 31-dimensional vector $\bm{t}$ in matrix form as
\begin{equation}
    \bm{t}^{\prime} = R^{(2,3,4)} \bm{t}
\end{equation}
The entries of the $31 \times 31$ matrix  $R^{(2,3,4)}$ are found by evaluating the relevant Einstein summations in \cref{eq:trf_covariant_tensor}. As expected on consistency grounds, the rotations of the derivatives of different orders decouple, and the generalised rotation matrix takes block-diagonal form, so that
\begin{equation}
    \bm{t}^{\prime} = \begin{pmatrix} \bm{t}^{(2)\prime} \\ \bm{t}^{(3)\prime} \\ \bm{t}^{(4)\prime} \end{pmatrix}
    = R^{(2,3,4)} \begin{pmatrix} \bm{t}^{(2)} \\ \bm{t}^{(3)} \\ \bm{t}^{(4)} \end{pmatrix}
    = \begin{pmatrix}
        R^{(2)} & 0 & 0 \\
        0 & R^{(3)} & 0 \\
        0 & 0 & R^{(4)}
    \end{pmatrix} \begin{pmatrix} \bm{t}^{(2)} \\ \bm{t}^{(3)} \\ \bm{t}^{(4)} \end{pmatrix} \,.
    \label{eq:rot_derivs_mat}
\end{equation}
The entries of the generalised rotation block matrices $R^{(2)}$, $R^{(3)}$ and $R^{(4)}$ are lengthy and therefore omitted here. We present the full matrices in the \verb|julia| file \verb|derivs_rotation.jl| attached as supplementary material to this article.

\section{Constructing Euler angles from vector bases}
\label{app:euler_angles}

In order to orient the imposed caustic constraints along a given coordinate system, we wish to find the Euler angles $(\alpha, \beta, \gamma)$ that rotate orthonormal set of three basis vectors $\bm{v}_i$ into another given set of orthonormal basis vectors $\bm{\tilde{v}}_i$ of the same handedness through the $ZYZ$-convention rotation matrix \cref{eq:rot_matrix_ZYZ}. While rigid-body rotations are extensively studied in the mathematical literature, engineering applications typically use other conventions for the Euler angles, and we were not able to find the relevant equations for our given problem in the $ZYZ$ convention in the literature. To perform the rotation our desired convention, we therefore solve the problem by a mathematical trick: We first construct the rotation quaternion $\bm{q} = (q_r, q_x, q_y, q_z)$ that rotates $\bm{v}_i$ into $\bm{\tilde{v}}_i$ according to the conjugation composition
\begin{equation}
    (0, \bm{\tilde{v}}) = \bm{q} \cdot (0, \bm{v}) \cdot \bm{q}^{-1} \,.
    \label{eq:quat_rotation}
\end{equation}
Here, we denote by $(0, \bm{v})$ a quaternion with vector part $\bm{v}$, and the conjugation operation can be found it in numerous mathematics text books, e.g. \cite{Hanson2005, Kanatani2020}. From the rotation quaternion $\bm{q}$, we then construct the rotation matrix $R(\alpha, \beta, \gamma)$ and solve \cref{eq:rot_matrix_ZYZ} for the $ZYZ$-covention Euler angles $(\alpha, \beta, \gamma)$.

We follow \cite{BeslMcKay1992} for the construction of the rotation quaternion $\bm{q}$ that rotates  $\{\bm{v}_i\}$ into $\{\bm{\tilde{v}}_i\}$. In their study, the authors solve the general problem of finding the rotation that minimises the distance of a rotated set of vectors to a reference set of vectors, with the rotation being expressed through a quaternion $\bm{q}$. The two sets vectors are here of arbitrary but equal size. For two sets $(\{\bm{v}_i\}, \{\bm{\tilde{v}}_i\})$ of three orthogonal vectors of the same handedness, the minimisation problem identically reduces to the construction of the quaternion rotating the basis system $\{\bm{v}_i\}$ into the target system $\{\bm{\tilde{v}}_i\}$. We refer the reader to \cite{BeslMcKay1992} for the detailed derivation, and briefly summarise here the algorithm.

One starts by constructing the $3 \times 3$ matrix given by the sum of the outer products of the basis system and target system vectors,
\begin{equation}
    M = \sum_{i = 1,2,3} \bm{v}_i \otimes \bm{\tilde{v}}_i = \sum_{i = 1,2,3} \bm{v}_i (\bm{\tilde{v}}_i)^{\mathrm{T}} \,,
\end{equation}
where $\bm{v}_i$ denotes the $i$th vector in the basis, and we explicitly spelled out the Einstein summation for clarity.

From $M$, one then constructs the symmetric $4 \times 4$-matrix $N$ given by

\begin{equation}
    N = \begin{pmatrix}
        M_{11} + M_{22} + M_{33} & M_{23} - M_{32} & M_{31} - M_{13} & M_{12} - M_{21} \\
        M_{23} - M_{32} & M_{11} - M_{22} - M_{33} & M_{1 2} + M_{21} & M_{31} + M_{13} \\
        M_{31} - M_{13} & M_{12} + M_{21} &  -M_{11} + M_{22} - M_{33} &  M_{23} + M_{32} \\
        M_{12} - M_{21} & M_{31} + M_{13} & M_{23} + M_{32} &  -M_{11} - M_{22} + M_{33} 
    \end{pmatrix}
\end{equation}
The quaternion $\bm{q} =(q_r, q_x, q_y, q_z)$ yielding the desired rotation, \cref{eq:quat_rotation}, is given by the 4-dimensional eigenvector $\bm{w}_1$ of $N$ with maximal eigenvalue $\nu_1$, defined by the characteristic equation
\begin{equation}
    N \bm{w}_1 = \nu_1 \bm{w}_1 
\end{equation}

Having obtained the rotation quaternion $\bm{q} = (q_r, q_x, q_y, q_z)$, the corresponding rotation matrix is given by the well-known expression \cite{BeslMcKay1992, Hanson2005, Kanatani2020}
\begin{equation}
    R(\bm{q}) = \begin{pmatrix}
        1 - 2(q_y^2 + q_z^2) & 2(q_x q_y - q_r q_z) &  2(q_x q_z - q_r q_y) \\
        2(q_x q_y - q_r q_z) & 1 - 2(q_x^2 + q_z^2) & 2(q_y q_z - q_r q_x) \\
        2(q_x q_z - q_r q_y) & 2(q_y q_z - q_r q_x)  & 1 - 2(q_x^2 + q_y^2) 
    \end{pmatrix}
    \label{eq:rot_matrix_q}
\end{equation}
From the constructed rotation matrix $R = R(\bm{q})$, the Euler angles $(\alpha, \beta, \gamma)$ of a chosen convention are obtained by equating \cref{eq:rot_matrix_q} to the Euler matrix $R(\alpha, \beta, \gamma)$ of the same convention and solving for $(\alpha, \beta, \gamma)$. We are interested in the $ZYZ$-convention Euler angles of \cref{eq:rot_matrix_q}, and use the standard result of found in numerous engineering textbooks, e.g. \cite{SpongHutchinsonVidyasagar2020},
\begin{equation}
    \begin{split}
        \alpha &= \arctan \left( R_{23},  R_{13} \right) \\ 
        \beta &=  \arctan \left( \sqrt{R_{13}^2+ R_{23}^2}, R_{33} \right) \\
        \gamma &= \arctan \left( R_{32},  -R_{31} \right) \,.
    \end{split}
\end{equation}
Here, the $\arctan$-function denotes the 2-argument arctangent.

This concludes the construction of the Euler angles for the rotation of vector basis system into a target system.

\begin{figure*}
    \includegraphics[width=\textwidth]{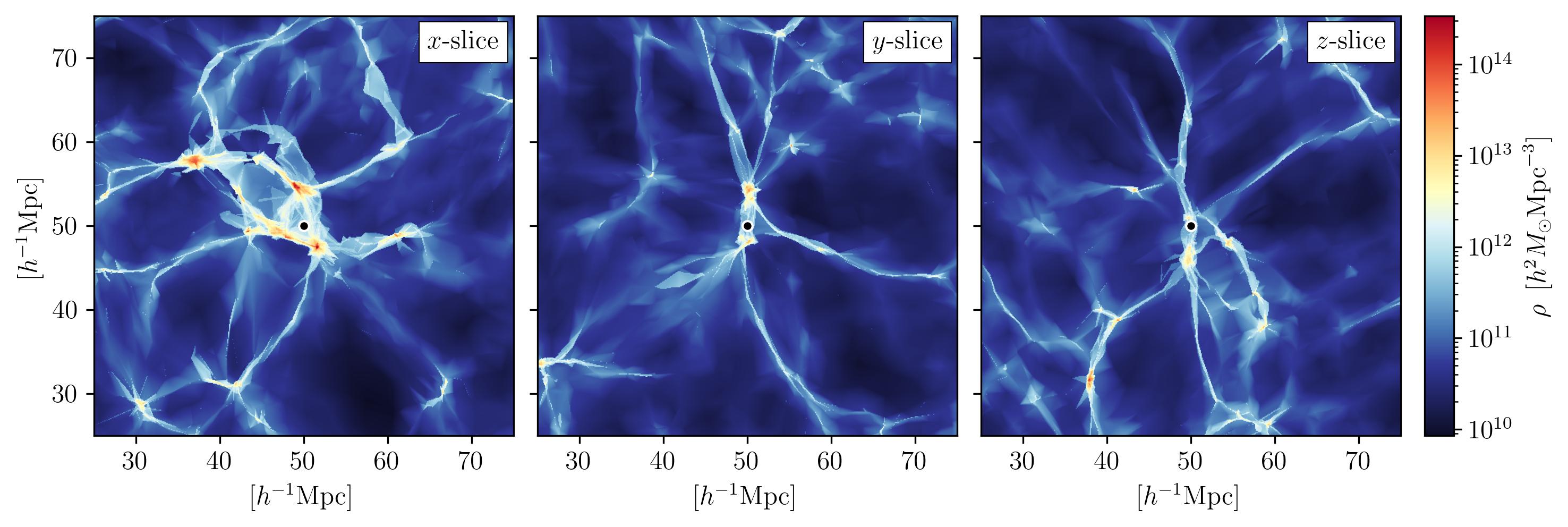}
    \includegraphics[width=\textwidth]{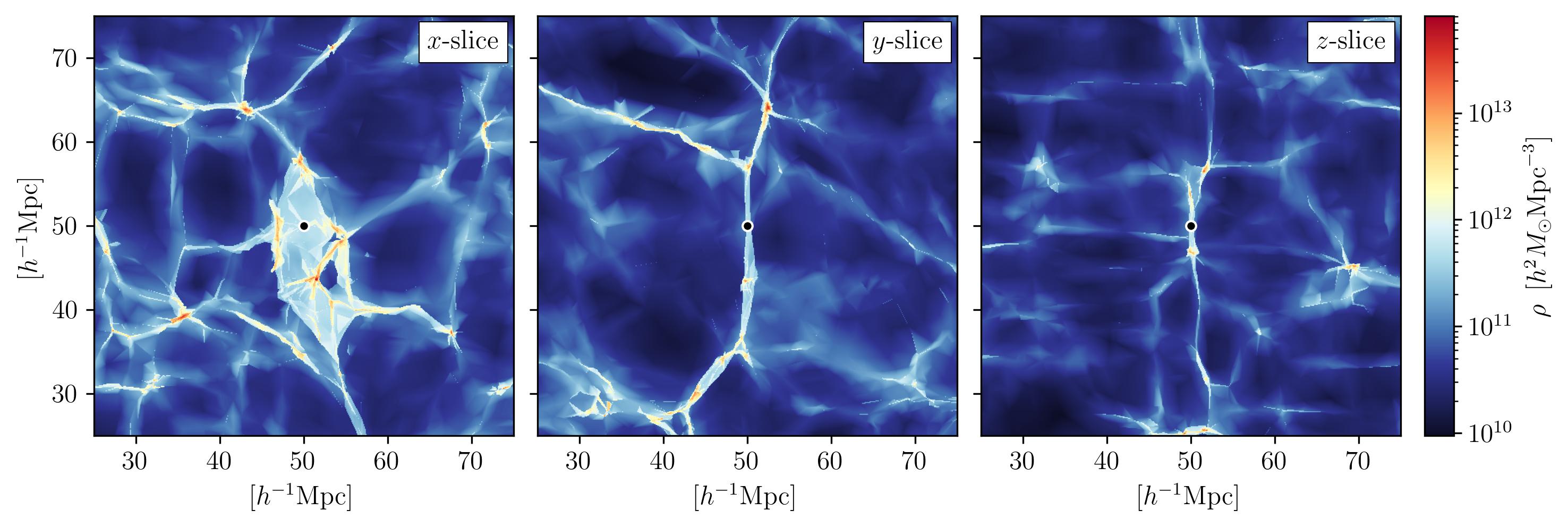}
    \includegraphics[width=\textwidth]{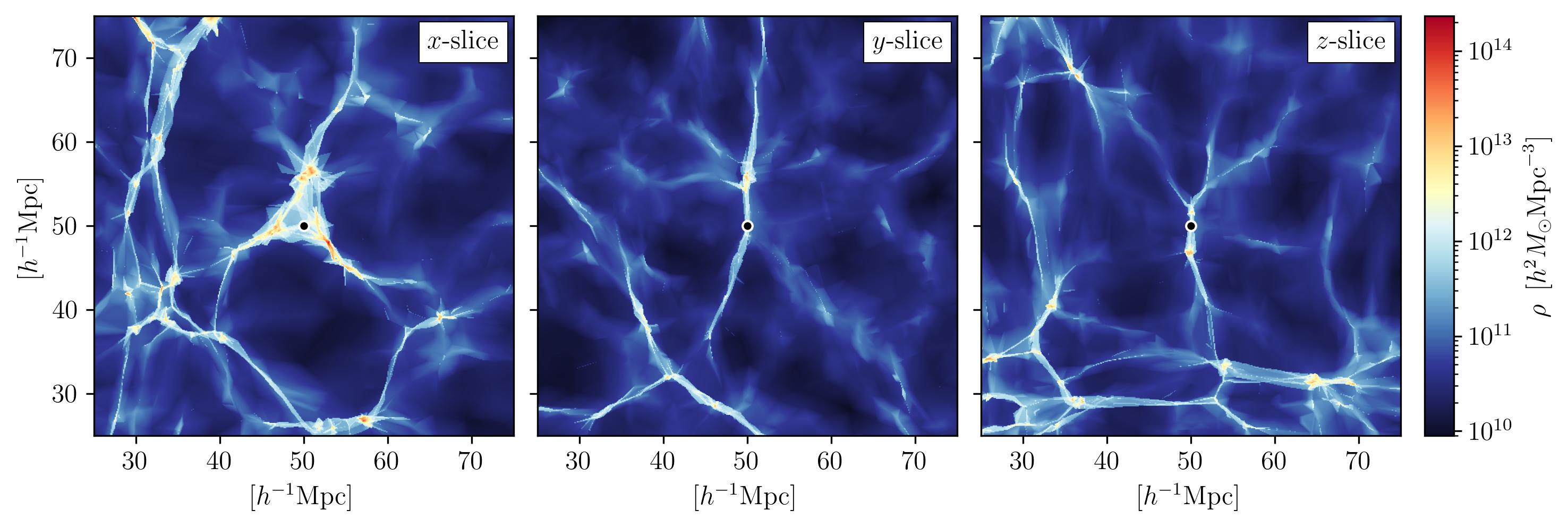}
    \caption{Exemplary density field realisations for the contracting $A_3$ wall centre constraint for $(\sigma, b_c)=(2.0\, h^{-1}\textrm{Mpc}, \, 0.8)$.}
    \label{fig:A3_contract_realisations}
\end{figure*}

\section{Contracting walls}
\label{app:contracting_walls}

In \cref{sec:A3_centre_constraint}, we proposed to simulate cosmic walls by imposing the novel cusp centre constraint, \cref{eq:A3_centre_constraint}, with the additional condition $\lambda_2 < 0$ for linear expansion of the wall sheet. The reason for doing so is that within this work, we are primarily interested in simulating extended objects to infer physical observables and halo properties within walls. However, walls in the observed cosmic web need not necessarily expand. In fact, the caustic skeleton formalism naturally suggests that collapsing walls seed higher catastrophes, which manifest themselves in the higher-density filaments and clusters. By \cref{eq:kappa}, the linear expansion within the cusp sheet is given by the summation of the second and third eigenvalue. Their marginal distribution is therefore a quantitative measure of the probability of walls expanding or contracting within the cosmic web. In particular, the second eigenvalue $\lambda_2$ is an approximate measure of contraction, with $\lambda_2 < 0$ corresponding to expansion and the limiting case $\lambda_2 = \lambda_1$ recovering the umbilic filament condition, corresponding to a fully collapsed wall. As seen in \cref{fig:sample_histogram}, a substantial part of the marginal distribution lives in the space  $\lambda_2 > 0$.

\begin{figure*}
    \includegraphics[width=\textwidth]{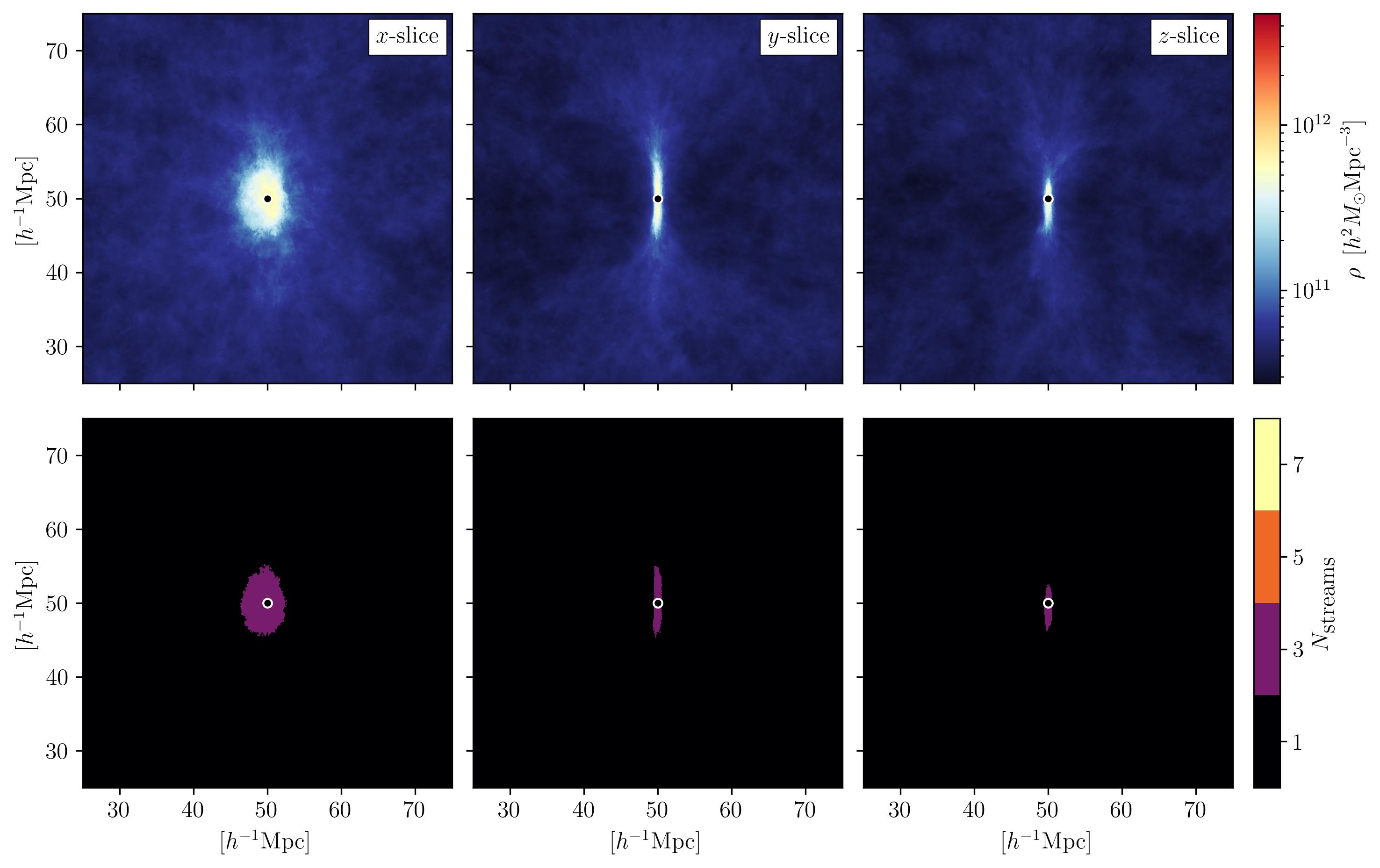}
    \caption{Median density and number of streams fields for the contracting $A_3$ wall centre constraint for $(\sigma, b_c)=(2.0\, h^{-1}\textrm{Mpc}, \, 0.8)$.}
    \label{fig:A3_contract_mean_fields}
\end{figure*}

\Cref{fig:A3_contract_realisations} shows random realisations of the wall centre constraint with the sampling restricted to the explicit contraction condition $\lambda_2 > 0$. Clearly, the shown density fields are qualitatively similar to those of \cref{fig:A3_realisations}, where we imposed the wall centre constraint with the explicit expansion condition $\lambda_2 <0$. We find, however, that the wall areas seen in the $x$-slices of \cref{fig:A3_contract_realisations} are typically less extended, as is expected from the contraction in the linear regime. Within our suite of 100 constraint realisations, we observe that the linearly contracting walls still form extended and accurately oriented planar overdensities. However, the overdensities are now both higher and more spatially compact. This is suitably illustrated in the median density field shown in \cref{fig:A3_contract_mean_fields}. While the spatial extent of the overdensity is somewhat less than in \cref{fig:A3_constraint_mean_field} ($\approx 8 \,h^{-1}\textrm{Mpc}$ rather than $\gtrsim 10 \,h^{-1}\textrm{Mpc}$), the overdensity increases by about a factor of two, reflecting the contraction of the wall into a smaller cosmological volume. Moreover, the thickness of the wall may be increased in some cases. Overall, we find that the density contrast of the contracting walls in still in good agreement with the values reported in the general literature on cosmic walls \cite{Forero-Romero+2009, AragonCalvo+2010,  ShandarinSalmanHeitmann2012, Hoffman+2012, Cautun+2014, Libeskind+2017}, keeping in mind the limitations of previous analyses (see \cref{subsec:sims-fields}).

When running a large suite of simulations on ergodically sampled constraint realisations, it is expected that a few constraints have a large second eigenvalue $\lambda_2 \approx \lambda_1$. Cosmic walls from these constraint realisations may contract beyond the typical sheet-like morphology of a cosmic wall, and appear either as a filamentary trunk or an extended cluster. Indeed, we find this to be case for a small number of simulations in our suite. This is not a numerical shortcoming of the pancake recipe, but a consequence of the realistic random sampling from the smooth constraint potential, i.e. the log-likelihood of the constrained eigenframe derivatives, given in \cref{eq:sampling_potential}. \Cref{fig:A3_contract_mean_fields} demonstrates that the impact of these realisation is still subdominant to the sheet-like nature of the $A_3$ centre constraint. Within the present article, we have nevertheless focused the positive stretch condition $\lambda_2 < 0$ to ensure that each simulation results in a well-extended cosmic wall, as was particularly relevant for the identification of wall haloes in the smaller suite of high-resolution simulations in \cref{sec:haloes}. Future studies on galaxy formation is cosmic walls may wish to relax the condition $\lambda_2$ in order to study cosmic walls in all their possible morphologies occurring in our Universe.

\end{document}